	\pgfplotsset{width=10cm}
\definecolor{blue1}{RGB}{17,66,115}
\definecolor{blue2}{RGB}{21,83,145}
\definecolor{blue3}{RGB}{27,102,177}
\definecolor{blue4}{RGB}{30,115,200}
\definecolor{blue5}{RGB}{40,132,223}
\definecolor{blue6}{RGB}{69,148,226}
\newcommand{\dd}{\mathrm{d}}
\newcommand{\supp}{\operatorname{supp}}
\newcommand{\End}{\operatorname{End}} 
\newcommand{\Ker}{\operatorname{Ker}}
\newcommand{\ImC}{\operatorname{Im}}
\newcommand{\ReC}{\operatorname{Re}}
\newcommand{\dvol}{\operatorname{dvol}}
\newcommand{\Tr}{\operatorname{Tr}}
\renewcommand{\vec}[1]{\mathbf{#1}}
\newcommand{\Interior}{\operatorname{Int}}
\newcommand{\sech}{\operatorname{sech}}
\theoremstyle{plain}
\newtheorem{theorem}{Theorem}[section]
\newtheorem{lemma}[theorem]{Lemma}
\newtheorem{proposition}[theorem]{Proposition}
\theoremstyle{definition}
\newtheorem{definition}[theorem]{Definition}
\theoremstyle{remark}
\newtheorem{remark}[theorem]{Remark}
\begin{document}

\setcounter{secnumdepth}{2}
\setcounter{tocdepth}{2}

\bibliographystyle{JHEP-2}

\pagestyle{empty}
\thispagestyle{empty}
\include{arxiv-title}
\clearpage\mbox{}\clearpage  

\setcounter{page}{1}
\renewcommand{\thepage}{\roman{page}}

\pagestyle{empty}
\thispagestyle{empty}
\begin{titlepage}

\Huge 
\begin{center}

\vspace*{3cm}

\textbf{Quantum field theory \\
on rotating black hole spacetimes}

\vspace{1.5cm}

\Large

Hugo Ricardo Cola\c{c}o Ferreira, MASt

\vspace{4cm}

\large

Thesis submitted to The University of Nottingham \\
for the degree of Doctor of Philosophy

\vspace{1cm}

December 2015

\vspace{2cm}

\begin{figure}[h]
\centering
\includegraphics[width=5cm]{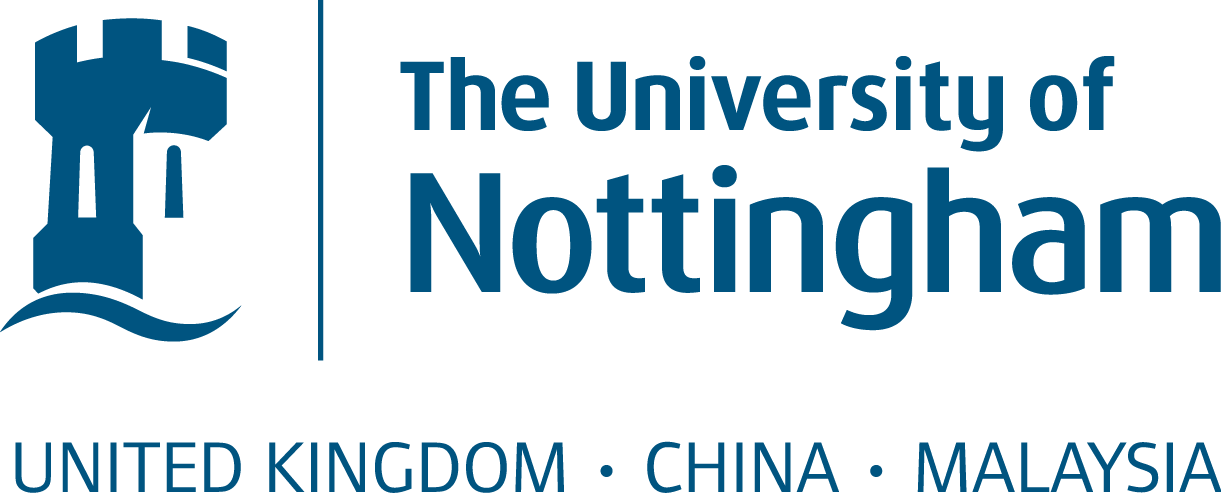}
\end{figure}

\end{center}

\end{titlepage}

\clearpage\mbox{}\clearpage  

\newpage
\pagestyle{plain}	
\thispagestyle{plain}	
\vspace*{5cm}
\begin{center}

\textbf{Abstract}
\end{center}

\normalsize

This thesis is concerned with the development of a general method to compute renormalised local observables for quantum matter fields, in a given quantum state, on a rotating black hole spacetime. The rotating black hole may be surrounded by a Dirichlet mirror, if necessary, such that a regular, isometry-invariant vacuum state can be defined. We focus on the case of a massive scalar field on a (2+1)-dimensional rotating black hole, but the method can be extended to other types of matter fields and higher-dimensional rotating black holes.

The Feynman propagator of the matter field in the regular, isometry-invariant state is written as a sum over mode solutions on the complex Riemannian section of the black hole. A Hadamard renormalisation procedure is implemented at the level of the Feynman propagator by expressing its singular part as a sum over mode solutions on the complex Riemannian section of rotating Minkowski spacetime. This allows us to explicitly renormalise local observables such as the vacuum polarisation of the quantum field.

The method is applied to the vacuum polarisation of a real massive scalar field on a (2+1)-dimensional warped AdS${}_3$ black hole surrounded by a mirror. Selected numerical results are presented, demonstrating the numerical efficacy of the method. The existence of classical superradiance and the classical linear mode stability of the warped AdS${}_3$ black hole to massive scalar field perturbations are also analysed.

\clearpage\mbox{}\clearpage  

\pagestyle{plain}
\thispagestyle{plain}

\begin{center}

\textbf{Acknowledgements}
\end{center}

\normalsize

First and foremost, I would like to thank my supervisor Dr.~Jorma Louko for all his help during the last four years and for reading this manuscript. I am very grateful for having had the opportunity to work with you!

I thank the Universitas 21 Network for awarding me a Universitas 21 Prize Scholarship for a month-long visit to the Gravitational Theory Group of the University of Maryland. I especially thank Prof.~Bei-Lok Hu for his hospitality.

Many thanks to Dr.~Vitor Cardoso, Dr.~Sam Dolan, Prof.~Christopher Fewster, Dr.~Carlos Herdeiro, Prof.~Bernard Kay, Prof.~Elizabeth Winstanley and Dr.~Helvi Witek for helpful discussions and comments during my PhD.

This thesis marks the end of a five-year adventure in the UK, which started in Cambridge and ended in Nottingham. I would like to thank Adam Fraser and Sagi Elster for a wonderful time in Cambridge, and also Emily Kirk, who has successfully endured my company for five years! I thank my fellow physicists and mathematicians Benito Aubry, Johnny Espin, James Gaunt, Sara Tavares and Carlos Scarinci for the great time discussing Physics (and sometimes other things) in Nottingham. And I also thank Andrew Yiakoumetti, Anja Andrejeva, Jyothika Kumar, Iker P\'{e}rez and Jennifer Kiefer for making my life in Nottingham much more enjoyable!

Finally, I wish to thank my parents Carlos and L\'{i}gia Ferreira for always being there and helping me achieve this important step in my life; my brother Andr\'{e} Ferreira for his friendship all these years; and my cousin Maria Em\'{i}lia Fernandes, with an academic greeting!

\vspace*{5ex}

The author acknowledges financial support from Funda\c{c}\~{a}o para a Ci\^{e}ncia e Tecnologia (FCT)-Portugal through Grant No.\ SFRH/BD/69178/2010.

\begin{center}
\includegraphics[width=\textwidth]{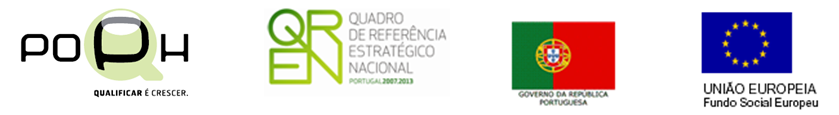}
\end{center}

\newpage
\phantomsection
\pagestyle{plain}
\thispagestyle{plain}
\tableofcontents
\cleardoublepage
\phantomsection

\newpage
\listoffigures

\newpage
\phantomsection
\addcontentsline{toc}{chapter}{Preface}
\addtocontents{toc}{\protect\vspace{-9pt}}
\pagestyle{plain}
\thispagestyle{plain}

\chapter*{Preface}

The research included in this thesis was carried out in the School of Mathematical Sciences of the University of Nottingham. Chapters~\ref{chap:localobservables} and \ref{chap:computation} are the outcomes of collaboration with Jorma Louko. Chapter~\ref{chap:classical-stability} is the outcome of the author's own research. 

The results of this research have lead to the following publications:
\begin{itemize}
\item H.~R.~C.~Ferreira, \emph{Stability of warped AdS${}_3$ black holes in Topologically
Massive Gravity under scalar perturbations}, \emph{Phys.Rev.} D87 (2013), no. 12
124013 (Ref.~\cite{Ferreira:2013zta}).
\item H.~R.~C.~Ferreira and J.~Louko, \emph{Renormalized vacuum polarization on rotating
warped AdS${}_3$ black holes}, \emph{Phys.Rev.} D91 (2015), no. 2 024038 (Ref.~\cite{Ferreira:2014ina}).
\item H.~R.~C.~Ferreira, \emph{Renormalized vacuum polarization of rotating black holes},
\emph{Int.J.Mod.Phys.} (2015) 1542007 (Ref.~\cite{Ferreira:2015ipa}).
\end{itemize}

\newpage
\phantomsection
\addcontentsline{toc}{chapter}{Notation}
\addtocontents{toc}{\protect\vspace{-9pt}}
\pagestyle{plain}
\thispagestyle{plain}

\chapter*{Notation}

In this thesis, we use metric signature $(-, +, \cdots, +)$. For the majority of the thesis, we use units such that $\hbar = c = G = k_B = 1$.

We will use abstract index notation, as presented in Section 2.4 of \cite{Wald:1984rg}. Greek indices $\mu$, $\nu$, etc.~refer to tensor components with respect to some coordinate basis, whereas abstract indices are Latin indices $a$, $b$, etc.~and are used to denote tensor equations which are valid in any basis.

The Riemann tensor, in a coordinate basis, is given by $${R^{\mu}}_{\nu\rho\sigma} = \partial_{\rho} \Gamma^{\mu}_{\nu\sigma} - \partial_{\sigma} \Gamma^{\mu}_{\nu\rho} + \Gamma^{\lambda}_{\nu\sigma} \Gamma^{\mu}_{\lambda\rho} - \Gamma^{\lambda}_{\nu\rho} \Gamma^{\mu}_{\lambda\sigma} \, , $$ and the Ricci tensor is defined by $R_{ab} = {R^{c}}_{acb}$.

The complex conjugate of a complex number $z$ is denoted by $\overline{z}$. The adjoint of an operator $T$ acting on a Hilbert space is denoted by $T^{\dagger}$. If $A$ and $B$ are sets, then $A \subset B$ means that $A$ is a subset of, or is included in, $B$.

Other notation and mathematical conventions are introduced in Chapter~\ref{chap:maths}.

\cleardoublepage

\pagestyle{myheadings}

\setcounter{page}{1}
\renewcommand{\thepage}{\arabic{page}}

\newpage
\phantomsection
\addcontentsline{toc}{chapter}{Introduction}
\addtocontents{toc}{\protect\vspace{-9pt}}

\chapter*{Introduction}
\label{chap:introduction}
\markboth{INTRODUCTION}{INTRODUCTION}

It would be an understatement to claim that the main principles of fundamental physics were completely overturned during the last century. At the time of writing up this thesis, the formulation of general relativity by Albert Einstein, which revolutionised the notions of time and space and replaced Newton's laws of gravitation, is celebrating its 100th anniversary. Moreover, starting during the 1930s, quantum field theory provided a new theoretical framework to understand the elementary constituents of matter and their interactions, which has culminated with the standard model of elementary particles in the 1970s. These theories have enjoyed a remarkable degree of experimental success and have allowed us to describe almost every single observation made to this day.

In spite of these major achievements, there remains a very important theoretical gap in our understanding of fundamental physics: in their current versions, general relativity and quantum field theory are not compatible and, as such, there is not currently a quantum theory of gravity. It has been proven prohibitively difficult to describe the gravitational field in the framework of quantum field theory, a strategy which was successful with the electromagnetic, weak and strong nuclear interactions.

During the last few decades, there have been several proposals for a theory of quantum gravity, most notably string gravity and loop quantum gravity. String theory claims to provide a unified description of the elementary particles and interactions, including the graviton and the gravitational interaction, having as the most basic physical constituent a one-dimensional object called a ``string'' \cite{becker2006string}. Loop quantum gravity attempts to describe the structure of spacetime as consisting of networks of finite loops, the so-called ``spin networks'' \cite{rovelli2004quantum}. Other approaches include asymptotic safety \cite{Niedermaier:2006wt} and causal dynamical triangulations \cite{Ambjorn:2013apa}.

One common feature of all these proposals is that they reduce to descriptions of quantised fields on classical curved backgrounds for energy levels way below the Planck scale,
\begin{equation*}
E_{\rm P} = \sqrt{\frac{\hbar c^5}{G}} \approx 1.22 \times 10^{16} \, \text{TeV} \, ,
\end{equation*}
where $\hbar$ is the reduced Planck's constant, $c$ is the speed of light in vacuum and $G$ is Newton's gravitational constant. The relevant regime for a full theory of quantum gravity is the one with energies of the order of the Planck energy or above and it concerns extreme situations such as neighbourhoods of black hole singularities and the Big Bang itself. On the other hand, the current limit of high energy experiments, such as the ones carried out in the Large Hadron Collider in CERN, is of the order of 10 TeV, about 15 orders of magnitude below the Planck scale. Therefore, for energy scales much smaller than the Planck scale it is natural to expect that the quantum effects of the gravitational field are negligible and that a description in which only the matter fields are quantised and the spacetime itself remains classical and fixed should provide a very good approximation to physical reality.

We can then think of \emph{quantum field theory on curved spacetimes} \cite{birrell1984quantum,fulling1989aspects,wald1994quantum} as a first step in the direction of formulating a theory of quantum gravity and an immediate generalisation of standard quantum field theory on flat spacetimes. The effects of the matter fields on the background geometry are ignored and, as such, the spacetime is fixed. We may improve the theory by including the backreaction effects of the matter fields on the background, which is the realm of \emph{semiclassical gravity}. Now, the spacetime is not fixed and its dynamics is given by the \emph{semiclassical Einstein equations},
\begin{equation*}
G_{ab} = \frac{8\pi G}{c^4} \langle T_{ab} \rangle \, ,
\end{equation*}
where $G_{ab}$ is the Einstein tensor and $\langle T_{ab} \rangle$ is the expectation value of the stress-energy tensor of a matter field in some quantum state, which acts as the source term. The computation of this local observable is then paramount in this framework. However, the stress-energy tensor is quadratic in the field operators, which are mathematically operator-valued distributions in the spacetime, hence, a renormalisation procedure is necessary to remove their short-distance singularity behaviour. We will return to this important point below.

Similarly to quantum field theory on curved spacetimes, semiclassical gravity breaks down at the Planck scale. But it also breaks down when the fluctuations of the stress-energy tensor become large, in which case the expectation value $\langle T_{ab} \rangle$ is no longer a good fit for the source term of the semiclassical Einstein equations. One expects that a new term encoding the stress-energy fluctuations should be added to the source term. A self-consistent approach to extend semiclassical theory to account for these quantum fluctuations is \emph{stochastic semiclassical gravity} \cite{Hu:2008rga}. This theory can be considered yet another step in the direction of quantum gravity.

In this thesis, we will focus on the framework of quantum field theory on curved spacetimes, with the intent of applying it to rotating black hole spacetimes. Historically, the study of quantum field theory on black hole backgrounds has mostly been restricted to asymptotically flat spacetimes, due to their relevance for astrophysics. Perhaps the most famous result is the celebrated Hawking effect \cite{Hawking:1974sw}, by which a black hole formed by stellar collapse emits thermal radiation. Recently, some attention has also been devoted to asymptotically anti-de Sitter (AdS) spacetimes, due to the AdS/CFT correspondence \cite{Aharony:1999ti}, but usually only in the classical regime, as this is sufficient in the context of the AdS/CFT correspondence, and hence few attempts have been made to study quantum field theory on these backgrounds. 

Besides the characteristics of the asymptotics of these black holes, a major part of the research has addressed static, spherical symmetrical geometries, where the isometries can be used to simplify computations. It was also in this setting that the first explicit calculations of renormalised local observables for a matter field on a black hole were performed, such as the vacuum polarisation and the expectation value of the stress-energy tensor \cite{Christensen:1976vb,Candelas:1980zt,Howard:1984qp,Anderson:1990jh,Anderson:1993if,Anderson:1994hg}. 

Static, spherical symmetric black holes have two key properties that can be utilised in the computation of local quantum observables. First, as we shall describe in detail in this thesis, for static spacetimes one can make use of the so-called ``Euclidean methods'' to simplify the computation of certain quantities such as the Feynman propagator for a given matter field. For instance, if one considers a scalar field $\Phi$, its Feynman propagator associated with a vacuum state $|0 \rangle$ is defined as
\begin{equation*}
G^{\rm F}(x,x') := i \, \langle 0 | \mathscr{T} \left( \Phi(x) \Phi(x') \right) | 0 \rangle \, ,
\end{equation*}
where $\mathscr{T}$ is the time-ordering operator. The Feynman propagator takes a crucial role in the renormalisation of the vacuum polarisation, $\langle \Phi^2(x) \rangle$, and of the expectation value of the stress-energy tensor, $\langle T_{ab} (x) \rangle$. The Euclidean method allows us to consider the Riemannian (or ``Euclidean'') section of the static spacetime (by means of a Wick rotation) on which the Green's distribution associated with the matter field equation is directly related to the Feynman propagator evaluated for a well defined state which is invariant under the isometries of the spacetime. In the specific case of a Schwarzschild black hole, this state is known as the Hartle-Hawking state \cite{Hartle:1976tp}. The Green's distribution is unique, due to the ellipticity of the matter field operator in the Riemannian manifold, and its computation may be done using standard techniques of the theory of Green's functions.

Second, and directly related to the previous point, a state like the Hartle-Hawking state in a Schwarzschild black hole is well known to exist for static black hole spacetimes \cite{Kay:1988mu}. Therefore, the method described above leads to the Feynman propagator evaluated for such a state, after which the renormalisation procedure can be applied to obtain the desired local observable. If we want the local observable to be evaluated with respect to another quantum state, it suffices to use the regular, isometry-invariant state as a reference and calculate the difference, which is finite without any further renormalisation.

Having said this, there have been attempts at considering stationary, but non-static, black hole spacetimes, with the main focus on Kerr \cite{Frolov:1982pi,Frolov:1984ra,Frolov:1986ut,Frolov:1989jh,Ottewill:2000qh,Duffy:2005mz,Casals:2012es}. In particular, the computation of the renormalised expectation value of the stress-energy tensor has proven to be very challenging and, so far, almost all calculations have only addressed the differences between expectation values for different quantum states \cite{Duffy:2005mz,Casals:2012es} and the large field mass limit \cite{Belokogne:2014ysa}. A notable exception is \cite{Steif:1993zv}, where the stress-energy tensor for the rotating BTZ black hole in 2+1 dimensions \cite{Banados:1992wn,Banados:1992gq} was renormalised with respect to AdS${}_3$, by using the fact that the black hole corresponds to AdS${}_3$ with discrete identifications, but this method cannot be used for more general classes of rotating black holes.

In comparison with the static, spherical symmetric case, we may summarise the main difficulties to compute renormalised local observables for matter fields on rotating black hole spacetimes in three points:
\begin{enumerate}[label={(\roman*)}]
\item the non-existence of generalisations of the Hartle-Hawking state, i.e.~a regular, isometry-invariant vacuum state;
\item the unavailability of Euclidean methods to simplify the computation of quantities such as the Feynman propagator;
\item the technical complexity of the computation to the lack of spherical symmetry.
\end{enumerate}

In this thesis we address each of the above points and provide a method to explicitly compute certain classes of local observables for quantised matter fields on a wide variety of rotating black hole spacetimes.

Concerning point (i), it has been shown that the Hartle-Hawking state for a scalar field in the Schwarszchild spacetime does not have a generalisation to the Kerr spacetime \cite{Kay:1988mu}. As reviewed in \cite{Ottewill:2000qh}, this is linked to the existence of a speed of light surface, outside of which no observer can co-rotate with the Kerr horizon, which does not exist in the Schwarszchild spacetime. An heuristic way to understand this point is to note that an observer on Schwarzschild co-rotating with the horizon would perform measurements with respect to the Hartle-Hawking state, which is, by definition, regular at the horizon. However, on Kerr, such observers cannot rotate with the same angular velocity as the Kerr horizon at and beyond the speed of light surface, as their worldlines would become null or spacelike. Given that the notion of a quantum state is a global notion, there cannot be a state which is regular at the horizon and defined everywhere in the exterior region of the black hole.

One way around this problem is to restrict the spacetime on which the matter field propagates so that it does not include the region from the speed of light surface to infinity. This can be done explicitly by inserting an appropriate timelike boundary which respects the isometries of the spacetime. The simplest example is a boundary on constant radial coordinate at which the matter field vanishes, i.e.~Dirichlet boundary conditions are imposed. We shall often call this boundary a ``mirror''. If the boundary is located between the horizon and the speed of light surface, then a vacuum state which is regular at the horizon and invariant under the isometries of the spacetime may be defined \cite{Duffy:2005mz}.

Regarding point (ii), the Euclidean methods used for static spacetimes to compute quantities such as the Feynman propagator, by performing the calculations on the Riemannian section of the spacetime, cannot be easily generalised to rotating spacetimes, since, in general, such a Riemannian section does not exist. This is the case for the Kerr black hole \cite{Woodhouse:1977-complex}. Note further that, even if such section with a positive definite metric existed, the Green's distribution associated with the matter field equation could not be related with the Feynman propagator evaluated for a regular, isometry-invariant vacuum state on the original spacetime, since such a state does not generally exist, cf.~point (i).

Nevertheless, even though Kerr and other rotating black holes do not admit a \emph{real} Riemannian section, the portions of their exterior regions between the horizon and the mirror we introduced previously do admit a \emph{complex} Riemannian section, which is obtained by means of a Wick rotation, but with no further analytical continuation of metric parameters \cite{Gibbons:1976ue,Brown:1990di,Moretti:1999fb} (a precise definition will be given in this thesis). The metric of the complex Riemannian section of the rotating black hole is complex-valued and the matter field operator is no longer elliptic as in the static case. However, analogously to the static case, the Green's distribution associated with the matter field equation which is regular at the horizon and satisfies the Dirichlet boundary condition at the mirror is unique and can be related to the Feynman propagator evaluated for the regular, isometry-invariant state.

Both in the static and stationary cases, the Green's distributions on the Riemannian sections are obtained as discrete sums over mode solutions of the defining differential equations. These sums are not convergent and a renormalisation procedure is required in order to subtract their short-distance (or high-frequency) divergences. An important property of these Green's distributions, and which is the basis of the so-called \emph{Hadamard renormalisation} \cite{Wald:1977up,Brown:1986tj,Decanini:2005eg}, is the fact they can be decomposed into a purely geometric part, which is singular in the coincidence limit, and a state-dependent part, which is regular in the coincidence limit. The idea is then to subtract the singular, purely geometric part, after which the renormalised local observables of interest, which involve the coincidence limit of the Green's distributions, can be obtained. 

Yet, this is easier said than done. The singular part of the Green's distribution is known in closed form for spacetimes of any dimension \cite{Decanini:2005eg}, whereas the full Green's distribution on a stationary spacetime is known only as a sum over mode solutions. It is, however, a highly non-trivial task to express the singular part of the Green's distribution as a mode sum, such that the short-distance divergences can be subtracted term by term. The strategy implemented in this thesis is to express the singular part of the Green's distribution as a sum over mode solutions on a spacetime for which the Green's distribution is known both in closed form (in terms of known functions) and as a mode sum, such as the Minkowski spacetime. This technical point will be fully explored in this thesis and we will argue that only the asymptotic approximations for the mode solutions for large values of the sum indices are needed in order to perform the subtraction. This is especially important for Kerr and higher-dimensional black holes for which the mode solutions have to be constructed fully numerically. In this way, one can remove the divergences of the Green's distribution for the rotating black hole spacetime and compute the renormalised local observables.

It remains to address point (iii) on the technical complexity of the computation. This is clearly manifest, for instance, on the fact that the partial differential equations describing matter fields propagating on Kerr can be separated into two ordinary differential equations, a radial and an angular part \cite{Teukolsky:1972my}, whereas only the radial part is necessary for fields propagating on Schwarzschild. In order to describe the method to compute local observables on rotating black holes discussed in this thesis without superfluous technical details, we will focus on rotating black hole spacetimes in 2+1 dimensions which are solutions of Einstein gravity or other modified theories of gravity.

(2+1)-dimensional gravity provides a convenient area to explore several aspects of black hole physics and quantum gravity \cite{carlip2003quantum}. Research on this field greatly increased after Einstein gravity in 2+1 dimensions was shown to be equivalent to a Chern-Simons gauge theory \cite{Achucarro:1987vz,Witten:1988hc}. The main advantage of focusing on this lower-dimensional setting is its technical simplicity and, in particular, the fact that many of quantities of interest can be obtained in closed form, such as the mode solutions of matter field equations. Even though Einstein gravity in 2+1 dimensions is a topological theory with no propagating degrees of freedom, it was possible to find a black hole solution, the Ba\~{n}ados-Teitelboim-Zanelli (BTZ) black hole, when the cosmological constant is negative \cite{Banados:1992wn,Banados:1992gq,Carlip:1995qv}. This spacetime is asymptotically AdS, and a vast amount of research has been done on it, partly inspired by the AdS/CFT correspondence after the late 1990s.

If one insists on having at least one propagating degree of freedom, one may consider a deformation of Einstein gravity called topologically massive gravity (TMG), which is obtained by adding a gravitational Chern-Simons term to the Einstein-Hilbert action with a negative cosmological constant \cite{Deser:1981wh,Deser:1982vy}. The resulting theory contains a massive propagating degree of freedom, although at the expense of being a third-order derivative theory. A very important property of this theory is that solutions of Einstein gravity, such as AdS${}_3$ and the BTZ black hole, are also solutions of TMG. Nevertheless, there are also new solutions and we focus on the warped AdS${}_3$ vacuum solutions and the warped AdS${}_3$ black hole solutions \cite{Nutku:1993eb,Gurses1994,Moussa:2003fc,Moussa:2008sj,Anninos:2008fx}. Mathematically, warped AdS${}_3$ spacetimes are Hopf fibrations of AdS${}_3$ over AdS${}_2$ where the fibre is the real line and the length of the fibre is ``warped'' \cite{Bengtsson:2005zj,Anninos:2008qb}. These solutions are thought to be perturbatively stable vacua of TMG in a wide region of the parameter space of the theory, in contrast to the AdS${}_3$ solution \cite{Anninos:2009zi}. Analogously to the BTZ black hole, the warped AdS${}_3$ black hole solutions are identifications of warped AdS${}_3$ vacuum solutions. In the limit in which the warping of spacetime vanishes, one recovers the BTZ black hole as a solution of TMG.

There are several reasons why the study of matter fields in warped AdS${}_3$ black hole spacetimes is interesting on its own right. These black holes are rotating (in fact, they do not have a static limit) and their causal structure resembles asymptotically flat spacetimes in the general case and AdS in the limit of no warping (which corresponds to the BTZ black hole) \cite{Jugeau:2010nq}. We then have at our disposal an example of a (2+1)-dimensional black hole whose asymptotic structure is very similar to Kerr and on which we can investigate the implementation of the method described in this thesis in a simpler setting. Note, however, that these black holes are not, strictly speaking, asymptotically flat, as they are asymptotic to the warped AdS${}_3$ vacuum solutions. Another particularly novel point is that these rotating black holes do not possess a stationary limit surface, but they nonetheless have a speed of light surface.

Henceforth, for the reasons given above, as an example on which to apply the general method to compute the renormalised local observables on rotating black holes, we will use the warped AdS${}_3$ black hole and consider a real massive scalar field propagating on this background. We will see that the use of this (2+1)-dimensional spacetime allows us to perform the calculations without having to deal with all the technical difficulties arising in its higher-dimensional analogues, namely most of the numerical computations --- the only numerics we will need is for the mode sums which give the Green's distributions associated with the scalar field equation. However, we emphasise that the implementation of our method does not require the knowledge of the mode solutions of the field equation in closed form, but only their asymptotic approximations for large values of the quantum numbers.

In closing, we should also note that this method is suitable to compute a wide class of local observables such as the vacuum polarisation of a scalar field, $\langle \Phi^2(x) \rangle$, but it turns out \emph{not} to be suitable for local observables such as the expectation value of the stress-energy tensor, $\langle T_{ab}(x) \rangle$. The main reason for this limitation is the impossibility of expressing the singular part of covariant derivatives of the Green's distributions for a rotating black hole as a sum over mode solutions, or derivatives of mode solutions, on Minkowski spacetime, for which the Green's distribution and its derivatives are known in closed form. As we will see in detail in the thesis, this comes essentially from the fact that the shift function of the metric of Minkowski written in some rotating coordinate system is a constant in spacetime, whereas the shift function of the metric of a rotating black hole is a function of the radial coordinate in some coordinate system. This makes the task of expressing the short-distance singular behaviour of covariant derivatives of the Green's distribution for the rotating black hole in terms of the short-distance singular behaviour of covariant derivatives of the Green's distribution for Minkowski impossible. We shall return to this point in the \hyperref[chap:conclusions]{Conclusions}.


\section*{Outline}

The outline of the thesis is as follows. It is divided in two main parts. Part I deals with the basics of quantum field theory on curved spacetimes, with particular focus on rotating black hole spacetimes, and the method to compute renormalised local observables for quantised matter fields propagating on rotating black holes. Part II introduces the (2+1)-dimensional warped AdS${}_3$ black hole solution and uses it as the background for an explicit computation of the vacuum polarisation for a massive scalar field on the Hartle-Hawking state.

Part I starts with Chapter~\ref{chap:maths}, which gives an overview of the mathematical tools used throughout the thesis, with the intent of establishing notation and stating the necessary essential results on the causal structure of spacetimes, stationary spacetimes, bi-tensors, symplectic and Hilbert spaces, distribution theory and hyperbolic and Green operators. 

In Chapter~\ref{chap:qftcst}, we present a detailed overview on quantum field theory on curved spacetimes. In particular, we explore the classical and quantum theories of a real scalar field on a globally hyperbolic spacetime in Section~\ref{sec:qftcst-realscalarfield}, before focusing on stationary spacetimes in Section~\ref{sec:qftcst-stationary} and on spacetimes with boundaries in Section~\ref{sec:qftcst-stwithboundaries}. We finish this chapter with a description of the Hadamard renormalisation procedure in Section~\ref{sec:qftcst-hadamardrenormalisation}.

Chapter~\ref{chap:localobservables} deals with the method to compute renormalised local observables in rotating black hole spacetimes and constitutes the most important new results in this thesis. In Section~\ref{sec:computation-scalarfield} we consider a massive scalar field on a (2+1)-dimensional rotating black hole surrounded with timelike boundaries (the ``mirrors'') and construct the regular, isometry-invariant vacuum state, which we call the Hartle-Hawking state. In order to obtain the Feynman propagator evaluated for this quantum state, in Section~\ref{sec:quasi-euclidean-method} we introduce the ``quasi-Euclidean method'' which allows us to obtain the complex Riemannian section of the exterior region of the rotating black hole, on which we get the Green's distribution associated with the scalar field equation, expressed as a sum over mode solutions. In Section~\ref{sec:renormalisation-procedure}, we implement the Hadamard renormalisation procedure to subtract the short-distance divergences of the Green's distribution. This is done by expressing the singular part of the Green's distribution on the rotating black hole as a sum over mode solutions on Minkowski. This culminates on Theorem~\ref{thm:matchingpolarisation}, where it is shown that the resulting mode sum is convergent in the coincidence limit. All this procedure allows us to obtain the vacuum polarisation for a scalar field on the (2+1)-dimensional rotating black hole, but we argue that this method can be straightforwardly extended to higher-dimensional rotating black hole spacetimes. We finish this chapter by explaining why this method is not suitable to renormalise the expectation value of the stress-energy tensor in Section~\ref{sec:stress-energy-tensor}.

Having developed the main method in the preceding chapters, in Part II the method is applied to explicitly compute the renormalised vacuum polarisation of a scalar field on a warped AdS${}_3$ black hole. 

We introduce the black hole solutions in Chapter~\ref{chap:wadsbh}, after a brief discussion of Einstein gravity and topologically massive gravity in 2+1 dimensions.

Before moving to the quantum theory, we first have a detailed look at some aspects of the classical theory of a scalar field on a warped AdS${}_3$ black hole in Chapter~\ref{chap:classical-stability}, in particular the existence of classical superradiance and the classical stability of the black hole to scalar field mode perturbations. We conclude that classical superradiance is indeed present, but that it does not lead to superradiant instabilities, even when the black hole is surrounded by a mirror, which is the case we are interested in the quantum theory. These stability results are new.

Finally, in Chapter~\ref{chap:computation} we use the method of Chapter~\ref{chap:localobservables} to compute the renormalised vacuum polarisation of a scalar field in the Hartle-Hawking state on a warped AdS${}_3$ black hole surrounded by a mirror. Selected numerical results are presented, demonstrating the numerical efficacy of the method.

The thesis is concluded with some final remarks about the research described above in the \hyperref[chap:conclusions]{Conclusions}. This is followed by four appendices, which deal with the complex Riemannian section of the Minkowski spacetime, WKB expansions, hypergeometric functions and classical black hole superradiance.

\pagestyle{headings}

\part{Fundamentals}


\chapter{Mathematical preliminaries}
\label{chap:maths}

The aim of this chapter is to present the mathematical tools which will be used throughout the thesis. It is assumed that the reader is familiar with the basic mathematics used in general relativity and quantum field theory. The intent here is to establish notation and present the necessary definitions and theorems, without many details and often without proofs. Relevant references to all the topics are provided.


\section{Spacetime and causal structure}

In this section, the basic ideas on the causal structure of spacetimes are presented, leading to the definition of a globally hyperbolic spacetime, the usual starting point for Quantum Field Theory. Standard references for this topic are chapter 8 of \cite{Wald:1984rg} and chapter 6 of \cite{hawking1973large}.

First, we start by recalling the basic definition of spacetime.

\begin{definition}
A \emph{spacetime} $(M,g,\mathfrak{o},\mathfrak{t})$ is a $d$-dimensional ($d \geq 2$) connected, orientable, time-orientable, smooth manifold $M$ equipped with a smooth Lorentzian metric $g$ of signature $(-,+,...,+)$, a choice of orientation $\mathfrak{o}$ and a choice of time orientation $\mathfrak{t}$.
\end{definition}

\begin{remark}
For convenience, a spacetime $(M,g,\mathfrak{o},\mathfrak{t})$ will often be denoted either by $(M,g)$ or more simply by $M$.
\end{remark}

The Lorentzian character of $g$ provides a causal structure to the spacetime $M$. For each point $p \in M$, denote the tangent space by $T_p M$. The following definitions concerning properties of tangent vectors and vector fields are standard.

\begin{definition} \label{def:causalityvectors}
A non-zero tangent vector $v_p \in T_p M$ is \emph{timelike} if $g(v_p,v_p) < 0$, \emph{null} if $g(v_p,v_p) = 0$ and \emph{spacelike} if $g(v_p,v_p) > 0$. A tangent vector is \emph{causal} if it is either timelike or null.
\end{definition}

\begin{remark}
Definition~\ref{def:causalityvectors} can be extended to vector fields if these satisfy the aforementioned properties for all $p \in M$.
\end{remark}

\begin{definition}
A Lorentzian manifold $M$ is said to be \emph{time-orientable} if there exists a smooth global timelike vector field on $M$.
\end{definition}

\begin{definition} \label{def:timeorientation}
Let $M$ be a time-orientable Lorentzian manifold. A \emph{time-orientation} $\mathfrak{t}$ is an equivalence class $[v]$ of timelike vector fields $v$ where $v \sim w$ if $g(v_p,w_p) < 0$ for all $p \in M$.
\end{definition}

\begin{definition}
Let $M$ be a time-orientable manifold and let $\mathfrak{t} = [v]$ be a time-orientation. A causal vector $u_p$ at $p \in M$ is \emph{future-directed} (resp.~\emph{past-directed}) if $g(u_p, v_p) < 0$ (resp.~$g(u_p, v_p) > 0$), for any $v \in \mathfrak{t}$.
\end{definition}

\begin{definition}
A smooth curve is called \emph{spacelike} (resp., \emph{timelike}, \emph{null}, \emph{causal}, \emph{future-directed}, \emph{past-directed}) if its tangent vector is everywhere spacelike  (resp., timelike, null, causal, future-directed, past-directed).
\end{definition}

\begin{remark} \label{rem:observer}
A timelike curve is sometimes referred to as a \emph{worldline} or a \emph{observer.}
\end{remark}

\begin{definition}
A point $p \in M$ is said to be the \emph{future} (resp., \emph{past}) \emph{endpoint} of a future- (resp., past-) directed curve $\gamma$ if for every neighbourhood $N$ of $p$ there exists a $t_0$ such that $\gamma(t) \in N$ for all $t > t_0$. The curve is said to be \emph{future} (resp., \emph{past}) \emph{inextendible} if it has no future (resp., past) endpoint.
\end{definition}

\begin{definition}
The \emph{chronological future} $I^+(U)$ (resp., \emph{chronological past} $I^-(U)$) of a subset $U \subset M$ is the set of all points which can be reached from $U$ by a future-directed (resp., past-directed) timelike curves.
\end{definition}

\begin{definition}
The \emph{causal future} $J^+(U)$ (resp., \emph{causal past} $J^-(U)$) of a subset $U \subset M$ is the set of all points which can be reached from $U$ by a future-directed (resp., past-directed) causal curves (see Fig.~\ref{fig:causal}). Their union $J(U) := J^+(U) \cup J^-(U)$ is called the \emph{causal shadow} of $U$.
\end{definition}

\begin{figure}[t!]
\begin{center}
\def\svgwidth{0.55\textwidth}
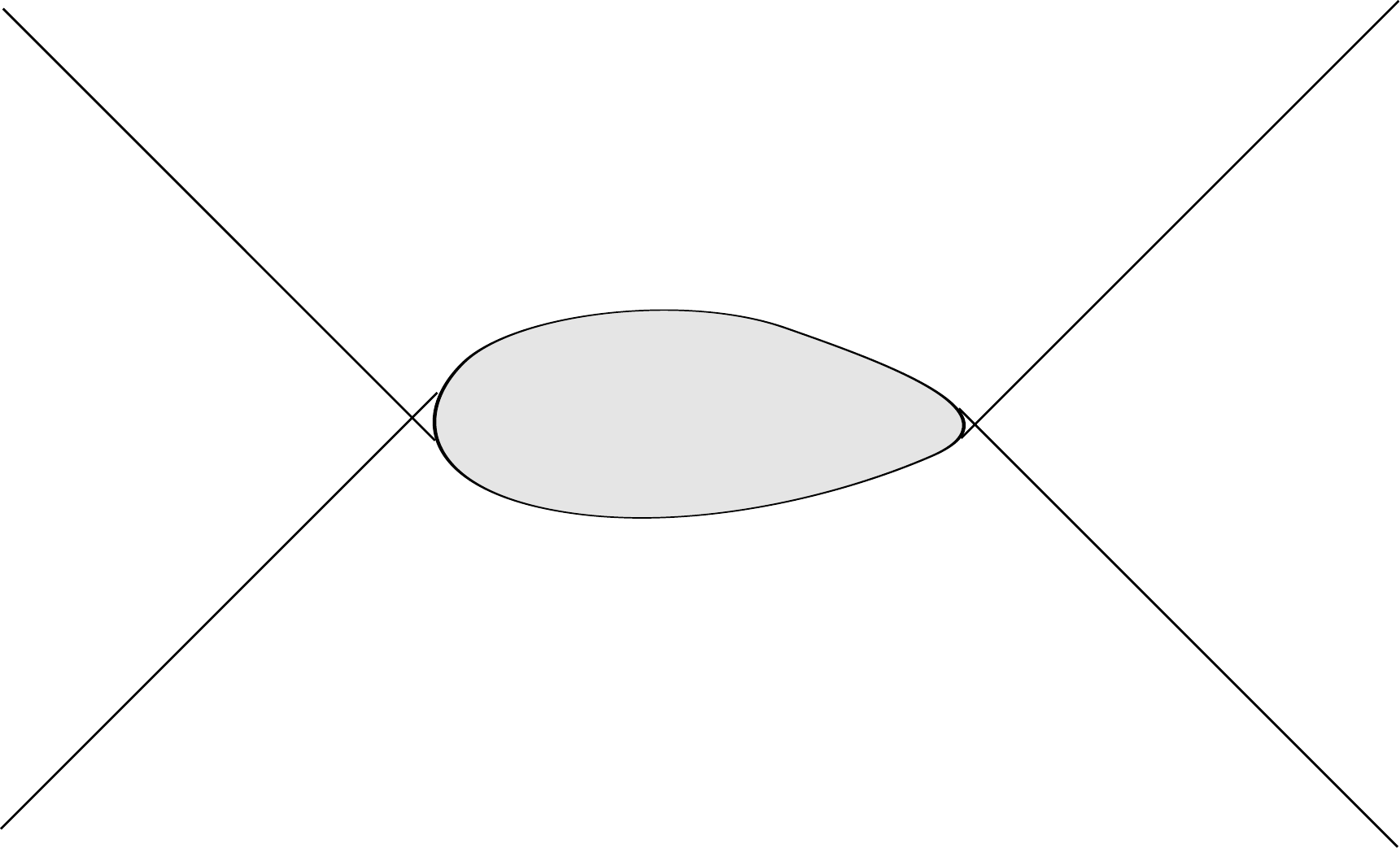
\caption[A set and its causal future and causal past.]{\label{fig:causal} A set $U \subset M$ and its causal future $J^+(U)$ and causal past $J^-(U)$.}
\end{center}
\end{figure}

\begin{remark}
Let $\dot{I}^{\pm}(U)$ and $\dot{J}^{\pm}(U)$ denote the boundaries of $I^{\pm}(U)$ and $J^{\pm}(U)$, respectively. It follows that $\overline{I^{\pm}(U)} = \overline{J^{\pm}(U)}$ and $\dot{I}^{\pm}(U) = \dot{J}^{\pm}(U)$.
\end{remark}

\begin{definition}
Two subsets $U$ and $V$ of $M$ are said to be \emph{causally separated} if $U \cap J(V) = \varnothing$.
\end{definition}

\begin{definition}
A subset $U \subset M$ is said to be \emph{achronal} if $U \cap I^+(U) = \varnothing$, i.e.~if each timelike curve in $M$ intersects $U$ at most once. 
\end{definition}

\begin{definition}
For any subset $U \subset M$, the \emph{future} (resp., \emph{past}) \emph{Cauchy development} or \emph{domain of dependence} $D^+(U)$ (resp., $D^-(U)$) of $U$ is the set of all points $p \in M$ such that every past (resp., future) inextendible causal curve through $p$ intersects $U$. Their union $D(U) := D^+(U) \cup D^-(U)$ is called \emph{Cauchy development} or \emph{domain of dependence} of $U$. 
\end{definition}

\begin{definition}
A closed achronal subset $\Sigma \subset M$ such that $D(\Sigma) = M$ is called a \emph{Cauchy surface}.  
\end{definition}

\begin{definition}
A spacetime is \emph{globally hyperbolic} if it has a Cauchy surface.
\end{definition}

As it will be seen in Section~\ref{sec:hyperbolicoperators}, well posed initial value problems for classical fields can be formulated when those fields propagate on globally hyperbolic spacetimes. Before stating a key theorem regarding the structure of globally hyperbolic spacetimes, the notion of a ``time function'' is introduced, which is also important for the definition of a stationary spacetime in Section~\ref{sec:stationaryspacetimes}.

\begin{definition} \label{def:timefunction}
A \emph{time function} is a continuous function $t : M \to \mathbb{R}$ such that $- \nabla^a t$ is a future-directed, timelike vector field.
\end{definition}

\begin{theorem} \label{thm:globhypspacetimefoliation}
Let $M$ be globally hyperbolic. Then, $M$ is isometric to $\mathbb{R} \times \Sigma$ endowed with the metric $\dd s^2 = - N^2 \, \dd t^2 + h_t$, where $t : \mathbb{R} \times \Sigma \to \mathbb{R}$ is a time function, $N$ is a smooth and strictly positive function on $\mathbb{R} \times \Sigma$, $t \mapsto h_t$ yields a one-parameter family of smooth Riemannian metrics and each $\{ t \} \times \Sigma$ is a spacelike smooth Cauchy surface of $M$.
\end{theorem}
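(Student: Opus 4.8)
The plan is to reduce the theorem to the existence of a single smooth \emph{Cauchy temporal function}, i.e.~a smooth $\tau : M \to \mathbb{R}$ whose gradient $\nabla^a\tau$ is everywhere timelike (so that $-\nabla^a\tau$ is future-directed, making $\tau$ a time function in the sense of Definition~\ref{def:timefunction}) and whose level sets $\Sigma_s := \tau^{-1}(s)$ are smooth spacelike Cauchy surfaces. Granting such a $\tau$, the desired splitting is produced by flowing along a suitably rescaled gradient vector field, so that the entire analytic difficulty is concentrated in the construction of $\tau$ itself.

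First I would recall Geroch's theorem, which asserts that a globally hyperbolic $M$ admits a \emph{continuous} time function with Cauchy level sets, yielding a homeomorphism $M \cong \mathbb{R}\times\Sigma$. The standard construction fixes a finite measure $m$ on $M$ and sets $\tau_0(p) = \ln\big(m(J^-(p))/m(J^+(p))\big)$; global hyperbolicity is used to check that $\tau_0$ is continuous, strictly increasing along future-directed causal curves, and that each level set is met exactly once by every inextendible causal curve, hence is a Cauchy surface. This already gives the topological product structure, but not the smooth orthogonal metric form, since $\tau_0$ is merely continuous and its level sets need be neither smooth nor spacelike. The main obstacle is therefore to upgrade $\tau_0$ to a smooth function whose gradient remains timelike everywhere: naive mollification destroys the causal gradient condition near the ``corners'' of the level sets. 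Here I would invoke the smoothing results of Bernal and S\'anchez, which produce a smooth Cauchy temporal function $\tau$ with exactly the required properties; I regard this as the heart of the matter and would cite it rather than reprove it.

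With such a $\tau$ in hand, set $N := \big(-g^{ab}\nabla_a\tau\,\nabla_b\tau\big)^{-1/2} > 0$, which is smooth because $\nabla\tau$ is timelike, and define the future-directed timelike vector field $T^a := -N^2\,\nabla^a\tau$. By construction $T(\tau) = -N^2\,g^{ab}\nabla_a\tau\,\nabla_b\tau = 1$, so $\tau$ increases at unit rate along the flow of $T$. Each integral curve of $T$ is an inextendible timelike curve, so by the Cauchy-surface property it meets every $\Sigma_s$ exactly once; this shows the flow is complete and that $\tau$ attains all real values along it. Fixing the reference surface $\Sigma := \Sigma_0$, the map $\Phi : \mathbb{R}\times\Sigma \to M$ sending $(s,x)$ to the time-$s$ flow of $T$ from $x$ is then a diffeomorphism, under which $t := \tau\circ\Phi$ is the projection onto the $\mathbb{R}$ factor and $\partial_s = T$.

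Finally I would read off the metric in these coordinates. Since $T$ is proportional to $\nabla\tau$, it is orthogonal to the level sets: for any $Y$ tangent to $\Sigma_s$ one has $Y(\tau)=0$, whence $g(T,Y) = -N^2\,Y(\tau) = 0$, so the cross terms vanish and the decomposition is genuinely block-diagonal. A direct computation gives $g(T,T) = N^4\,g^{ab}\nabla_a\tau\,\nabla_b\tau = -N^2$, while the restriction of $g$ to each $\{s\}\times\Sigma$ is the induced Riemannian metric $h_s$, positive definite because $\Sigma_s$ is spacelike and depending smoothly on $s$. Hence $\dd s^2 = -N^2\,\dd t^2 + h_t$ with $N$ smooth and strictly positive and each $\{t\}\times\Sigma$ a spacelike smooth Cauchy surface, as claimed. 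Beyond the construction of $\tau$, the only nontrivial points are the completeness of the flow (a consequence of global hyperbolicity via the Cauchy-surface property) and the orthogonality of $T$ to the level sets (immediate from $T \propto \nabla\tau$), both of which are routine once the smooth temporal function is available.
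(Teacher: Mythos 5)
Your proposal is correct and follows essentially the same route as the paper, which simply cites the Bernal--S\'anchez theorems \cite{Bernal:2004gm,Bernal:2005qf} for this result: you defer the genuinely hard analytic content (the existence of a smooth Cauchy temporal function, upgrading Geroch's continuous one) to the same references. The additional material you supply --- rescaling the gradient so that $T(\tau)=1$, completeness of the flow via the Cauchy-surface property, orthogonality of $T$ to the level sets, and reading off $\dd s^2 = -N^2\,\dd t^2 + h_t$ --- is the correct and standard derivation of the splitting from that temporal function, and is precisely what the cited works carry out.
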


\begin{proof}
See \cite{Bernal:2004gm,Bernal:2005qf}.
\end{proof}

\begin{figure}[t!]
\begin{center}
\def\svgwidth{0.65\textwidth}
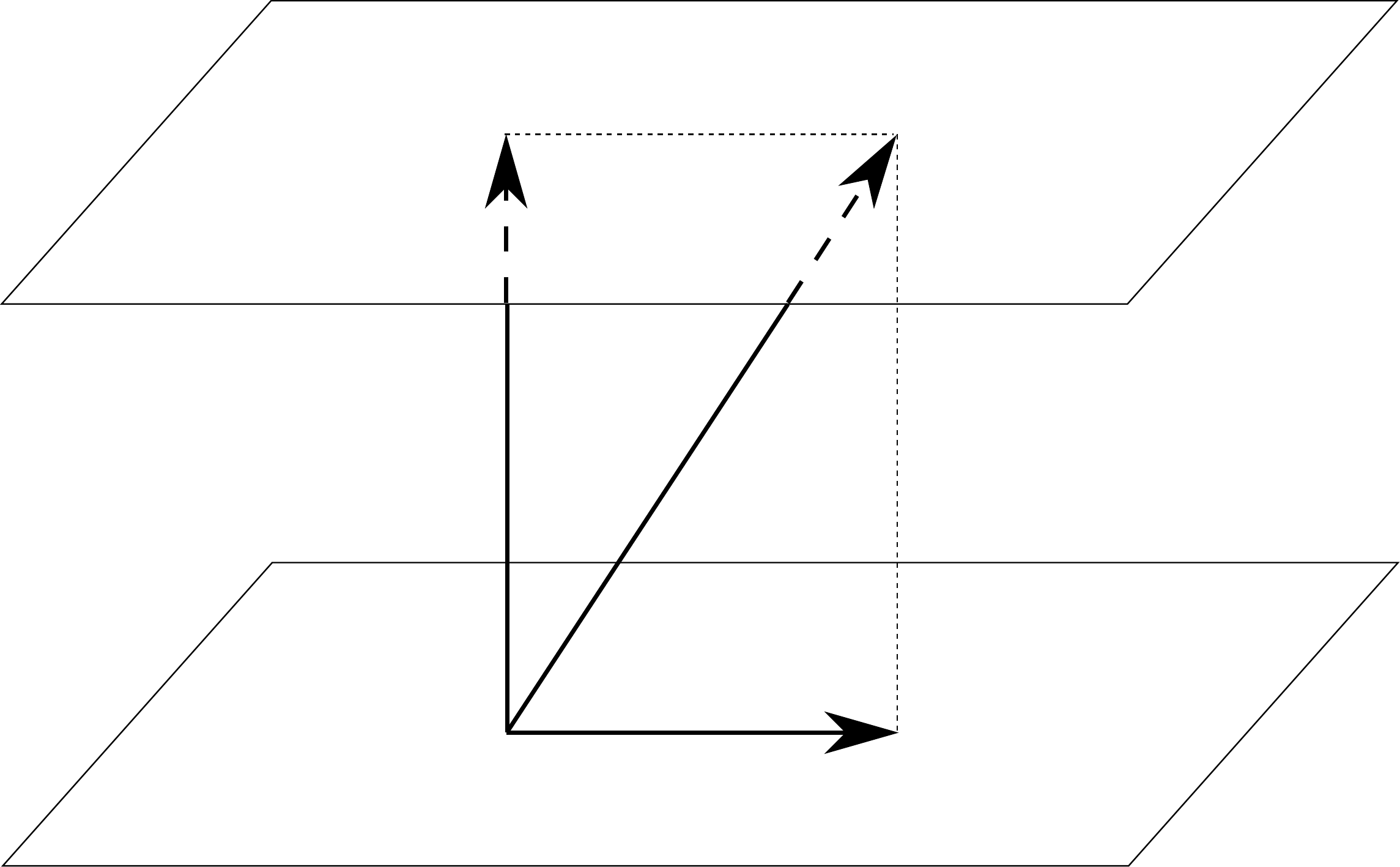
\caption[Normal and shift vectors to a spacelike surface.]{\label{fig:ADMvectors} A surface of constant $t$ and a surface of constant $t+\Delta t$, with the vectors $\xi^a$, $N n^a$ and $N^a$ in \eqref{eq:chinNN}.}
\end{center}
\end{figure}

\begin{remark}
This theorem allows us to perform the \emph{ADM decomposition} of globally hyperbolic spacetimes,
\begin{equation} \label{eq:ADMdecomposition}
\dd s^2 = - N^2 \, \dd t^2 + h_{ij} \left( \dd x^i + N^i \, \dd t \right) \left( \dd x^j + N^j \, \dd t \right) \, ,
\end{equation}
where the Latin indices $i, j = 1, \ldots , d-1$ are spatial indices. $N$ is called the \emph{lapse function} and $N^i$ is the \emph{shift vector}. If we denote $\xi^a = (\partial_t)^a$, we have that
\begin{equation} \label{eq:chinNN}
\xi^a = N n^a + N^a \, ,
\end{equation}
where $n^a$ is the future-directed unit normal vector to the Cauchy surfaces and $N^0 = 0$ (see Fig.~\ref{fig:ADMvectors}). For more details, see e.g.~chapter 10 of \cite{Wald:1984rg}.
\end{remark}

In this thesis, we will also be interested in the problem of constructing quantum field theories for certain non globally hyperbolic spacetimes, namely spacetimes with boundaries. Therefore, we relax the causality conditions on the spacetime, but still impose that the spacetime is stably causal.

\begin{definition} \label{def:stablycausal}
A spacetime $(M,g)$ is \emph{stably causal} if $g$ has a neighbourhood (see \cite{hawking1973large} for a precise definition) so that any spacetime $(M, \tilde{g})$, where $\tilde{g}$ belongs to such neighbourhood, does not contain any closed timelike curves.
\end{definition}

In other words, we require that arbitrarily small perturbations of the metric of a stably causal spacetime does not lead to spacetimes with closed timelike curves. Furthermore, the following can be shown.

\begin{proposition} \label{prop:stablycausaltimefunction}
A spacetime $(M,g)$ is stably causal if and only if there is a time function $t$ on $M$.
\end{proposition}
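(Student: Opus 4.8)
The statement is a biconditional, so I would prove the two implications separately; the direction ``a time function exists $\Rightarrow$ stably causal'' is elementary, while the converse is the substantial part and is the classical theorem of Hawking.

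For the easy direction I would argue as follows. Suppose $t$ is a time function, so that $W^a := -\nabla^a t$ is future-directed and timelike. Along any future-directed causal curve $\gamma$ with tangent $u^a$ one has $\frac{\dd t}{\dd \lambda} = u^a \nabla_a t = -g_{ab}\, u^a W^b = -g(u,W) > 0$, using that the inner product of a future-directed causal vector with a future-directed timelike vector is negative (by the definition of future-directedness stated earlier). Hence $t$ strictly increases along every future-directed causal curve, which already forbids closed causal curves. The point needed for \emph{stable} causality is that the defining condition $g^{ab}(\partial_a t)(\partial_b t) < 0$ is a strict, hence \emph{open}, condition on the metric in the $C^0$ topology (the topology in which the neighbourhood of Definition~\ref{def:stablycausal} is taken). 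Therefore there is a neighbourhood of $g$ such that for every $\tilde g$ in it the vector $-\nabla^a t$ remains future-directed timelike, so $t$ is still a time function for $(M,\tilde g)$; the monotonicity argument then shows $(M,\tilde g)$ has no closed causal, a fortiori no closed timelike, curves, which is exactly stable causality.

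For the converse my plan is Hawking's volume-function construction. By stable causality I would first fix a metric $\hat g$ with strictly wider null cones than $g$ such that $(M,\hat g)$ is still free of closed causal curves, and interpolate to obtain a one-parameter family $g_\lambda$, $\lambda \in [0,1]$, of causal metrics with cones widening in $\lambda$ and $g_0 = g$. After choosing a finite Borel measure $\mu$ on $M$ that is positive on open sets with $\mu(M) < \infty$ (built from a partition of unity), I would set
\begin{equation*}
t(p) := \int_0^1 \mu\!\left( I^-_\lambda(p) \right) \dd \lambda ,
\end{equation*}
where $I^-_\lambda$ denotes the chronological past computed in $g_\lambda$. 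Because the cones of $g_\lambda$ contain those of $g$, if $q$ lies to the future of $p$ then $I^-_\lambda(p) \subset I^-_\lambda(q)$ with the difference of positive $\mu$-measure, so $t$ is strictly increasing along future-directed causal curves of $g$, forcing $\dd t$ to be timelike and past-directed wherever $t$ is differentiable.

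The main obstacle is regularity. The individual functions $p \mapsto \mu(I^-_\lambda(p))$ are in general only lower semicontinuous and may jump across Cauchy horizons; the role of the integration over $\lambda$ is precisely to smear these jumps, since for a fixed point the discontinuities occur at different values of $\lambda$, yielding a \emph{continuous} $t$. Producing from this a genuinely \emph{smooth} function with an everywhere timelike gradient, as Definition~\ref{def:timefunction} demands, requires an additional smoothing step, which preserves timelikeness because that is an open condition. I would therefore organise the converse as: (a) construct the continuous monotone $t$ above, (b) verify strict monotonicity along future-directed causal curves, and (c) invoke a smoothing result to upgrade to a smooth time function, citing \cite{hawking1973large} for (a)--(b) and the approximation techniques of \cite{Bernal:2005qf} for (c).
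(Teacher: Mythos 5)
Your proposal is correct and is essentially the paper's proof: the thesis simply cites Proposition 6.4.9 of \cite{hawking1973large}, and what you have written out --- the openness of the condition $g^{ab}\,\partial_a t\,\partial_b t<0$ for the easy direction, and Hawking's volume-function construction $t(p)=\int_0^1 \mu\bigl(I^-_\lambda(p)\bigr)\,\dd\lambda$ over a family of metrics with widening cones for the converse --- is precisely that classical argument. Your remark that upgrading the continuous volume function to a genuinely smooth time function requires the later smoothing results of \cite{Bernal:2005qf} is an accurate and welcome refinement of the original reference.
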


\begin{proof}
See Proposition 6.4.9 of \cite{hawking1973large}.
\end{proof}

We finish this section by introducing nomenclature for different types of ``compact'' regions of a globally hyperbolic spacetime and spaces of functions with support on these regions (see Fig.~\ref{fig:timespacecompact}).

\begin{definition}
Let $M$ be a globally hyperbolic spacetime. A subset $U \subset M$ is
\begin{enumerate}[label={(\roman*)}]
\item \emph{timelike compact} if $U \cap J(K)$ is compact for each compact $K \subset M$;
\item \emph{spacelike compact} if it is closed and $\exists$ compact $K \subset M$ such that $U \subset J(K)$.
\end{enumerate}
\end{definition}

\begin{definition} \label{def:spacetimelikecompactfunctions}
Let $M$ be a globally hyperbolic spacetime. We denote by
\begin{enumerate}[label={(\roman*)}]
\item $C^{\infty}_0(M)$ the space of smooth functions with compact support;
\item $C^{\infty}_{\rm tc}(M)$ the space of smooth functions with timelike compact support;
\item $C^{\infty}_{\rm sc}(M)$ the space of smooth functions with spacelike compact support.
\end{enumerate}
\end{definition}

\begin{figure}[t!]
\begin{center}
\subfigure[Timelike compact set $U$.]{
\def\svgwidth{0.4\textwidth}
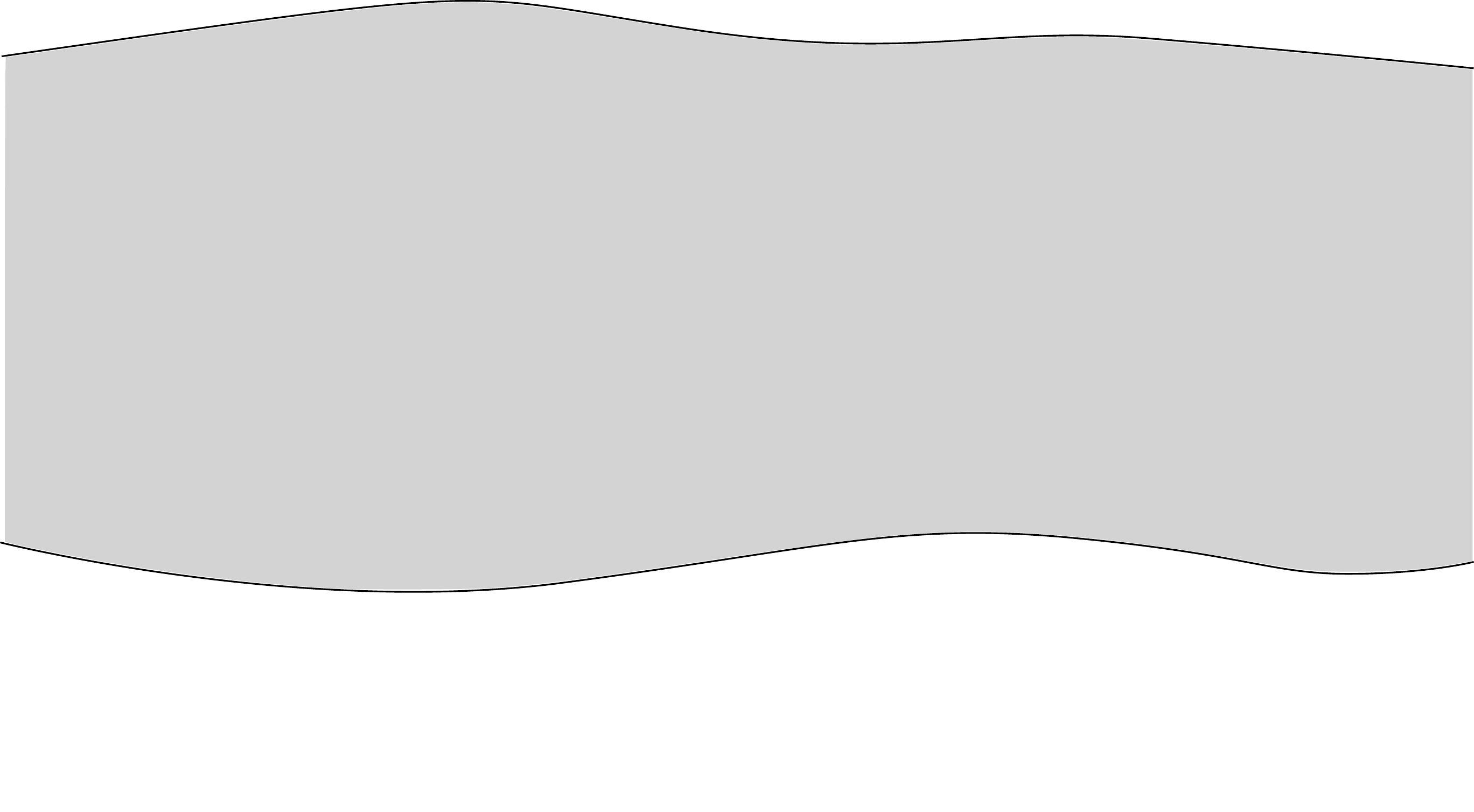 \vspace*{2ex}
}
\subfigure[Spacelike compact set $U$.]{ \hspace*{5ex}
\def\svgwidth{0.2\textwidth}
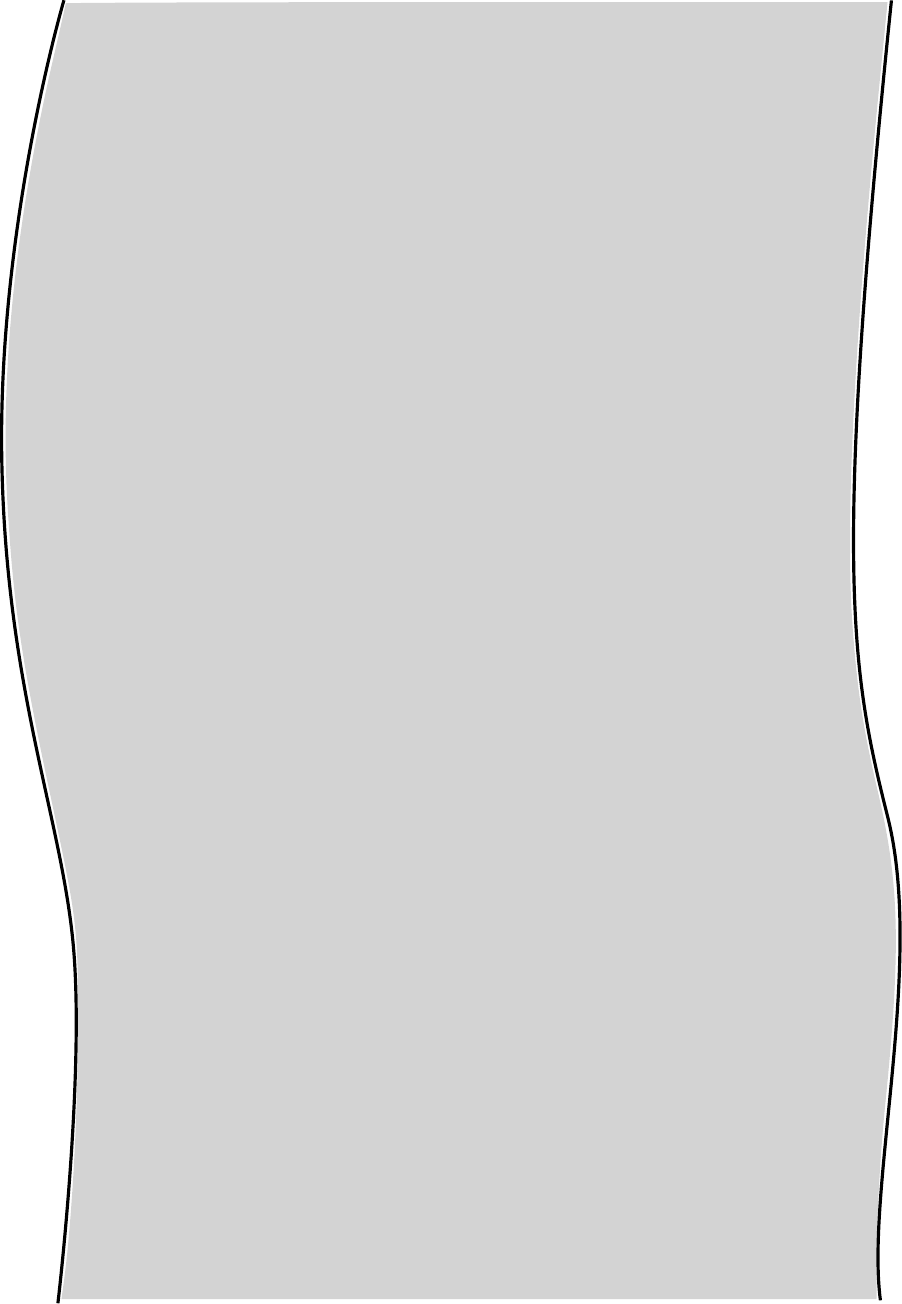 \hspace*{5ex}
}
\end{center}
\caption{\label{fig:timespacecompact} Timelike and spacelike compact sets.}
\end{figure}
%


\section{Stationary spacetimes}
\label{sec:stationaryspacetimes}

\subsection{Globally and locally stationary spacetimes}

It is important to clarify the definition of ``stationary'' spacetime used in this thesis. The strictest definition commonly found in the literature is the following.

\begin{definition} \label{def:globalstationary}
A spacetime $M$ is called \emph{globally stationary} if there exists a Killing vector field $\xi$ which is timelike everywhere in $M$.
\end{definition}

It then follows that, if $t$ is a time function (cf.~Definition~\ref{def:timefunction}) and $\xi = \partial_t$ is a Killing vector field, the metric of a stationary spacetime can be written as \eqref{eq:ADMdecomposition}, with $N$, $N^i$ and $h_{ij}$ being independent of $t$.

An important particular case of a stationary spacetime is a static spacetime.

\begin{definition}
A spacetime is \emph{static} if it is stationary and if there is a spacelike surface which is orthogonal to the orbits of the timelike Killing vector field.
\end{definition}

It follows then that the metric of a static spacetime can be written as \eqref{eq:ADMdecomposition} with vanishing $N^i$ and with $N$ and $h_{ij}$ being independent of $t$.

However, Definition~\ref{def:globalstationary} for a globally stationary spacetime needs to be relaxed if spacetimes such as the Kerr black hole is to be considered stationary, since it does not possess an everywhere timelike Killing vector field. It is common to relax the above definition for spacetimes which are asymptotically flat at null infinity by requiring that $\xi$ is timelike at least in a neighbourhood of null infinity. In this sense, the Kerr black hole is a stationary spacetime, although not globally.

A more general definition, that replaces the global assumption with a local assumption, is:

\begin{definition} \label{def:locallystationary}
A spacetime $M$ is called \emph{locally stationary} if, for any $p \in M$, there exists a neighbourhood $U \subset M$ centred in $p$ and a vector field $\xi$ which is Killing and timelike in $U$.
\end{definition}

In this thesis, we will be mostly interested in this more general class of stationary spacetimes.

\subsection{Riemannian sections of stationary spacetimes}
\label{sec:riemanniansections}

In many practical situations, given a Lorentzian manifold with a coordinate system which covers the whole manifold, it is convenient to perform analytical continuations in the coordinates such that one can consider a new manifold with a different signature, e.g.~a Riemannian manifold, where calculations are easier to carry out. If the analytical continuation exists and is well defined, one can then analytically continue the results back to the original manifold.

The basic idea is the following. One starts with a spacetime, i.e.~a real Lorentzian manifold, and performs an analytical continuation in one or more of the coordinates by allowing them to become complex-valued and by holomorphically extending the metric components into the complex domain. The resulting manifold is called the \emph{complexified spacetime}. One can then single out a subspace of interest which is a manifold on its own right. This subspace is called a \emph{section} of the complexified manifold.

Given a static spacetime, there is a natural section of its complexified manifold which is a real manifold and has Riemannian signature.

\begin{definition} \label{def:realriemanniansection}
Let $M$ be a static spacetime with analytic metric given by
\begin{equation}
\dd s^2 = g_{tt} \, \dd t^2 + h_{ij} \, \dd x^i \dd x^j \, ,
\end{equation}
where $t$ is a global time function (and, hence, $g_{tt}<0$). The \emph{real Riemannian section} is the manifold $M^{\mathbb{R}}$ with (real-valued) metric
\begin{equation}
\dd s^2_{\mathbb{R}} = g_{\tau\tau} \, \dd \tau^2 + h_{ij} \, \dd x^i \dd x^j \, ,
\end{equation}
where $\tau := it \in \mathbb{R}$ and, hence, $g_{\tau\tau} = - g_{tt} > 0$. The analytical continuation $t \to -i \tau$ is commonly known as \emph{Wick rotation}.
\end{definition}

\begin{remark}
In the literature, the real Riemannian section is also known as the \emph{Euclidean section}.
\end{remark}

The Wick rotation can be easily generalised to the case of stationary, but not static, spacetimes. However, the resulting section is not a real manifold anymore.

\begin{definition} \label{eq:complexRiemanniansection}
Let $M$ be a stationary, but not static, spacetime with analytic metric given by
\begin{equation}
\dd s^2 = g_{tt} \, \dd t^2 + h_{ij} \left( \dd x^i + N^i \, \dd t \right) \left( \dd x^j + N^j \, \dd t \right) \, ,
\end{equation}
where $t$ is a global time function (and, hence, $g_{tt}<0$). The \emph{complex Riemannian section} is the manifold $M^{\mathbb{C}}$ with (complex-valued) metric
\begin{equation}
\dd s^2_{\mathbb{C}} = g_{\tau\tau} \, \dd \tau^2 + h_{ij} \left( \dd x^i - i N^i \, \dd \tau \right) \left( \dd x^j - i N^j \, \dd \tau \right) \, ,
\end{equation}
where $\tau := it \in \mathbb{R}$ and, hence, $g_{\tau\tau} = - g_{tt} > 0$. 
\end{definition}

In this thesis, we will see that the complex Riemannian section allows us to vastly simplify calculations on stationary spacetimes, in a similar way to simplifications provided by the real Riemannian section of a static spacetime, as manifested by the numerous ``Euclidean methods'' found in the literature of quantum field theory of static spacetimes \cite{gibbons1993euclidean}.


\section{Bi-tensors}

In this section, we discuss \emph{bi-tensors}, which are objects that transform like tensors at $x$ and $x'$. A good reference for this topic is Chapter 2 of \cite{Poisson:2003nc}. 

We assume that $x$ and $x'$ belong to a geodesically convex neighbourhood.

\begin{definition} \label{def:geodconvexneighbourhood}
A \emph{geodesically convex neighbourhood} of $p \in M$ is a neighbourhood $N$ of $p$ such that, for all $q , \, q' \in N$, there exists a unique geodesic connecting $q$ and $q'$ which lies entirely within $N$.
\end{definition}

To the point $x$ we assign abstract indices $a$, $b$, etc, and to the point $x'$ we assign abstract indices $a'$, $b'$, etc. For instance, ${T_{ab'}}^{c'}(x,x')$ is a bi-tensor which transforms like a (0,1)-tensor at $x$ and like a (1,1)-tensor at $x'$. The same applies when taking covariant derivatives of bi-tensors. For example, if $T_{\cdots}(x,x')$ is a sufficiently regular bi-tensor, in $\nabla_a \nabla_{b'} T_{\cdots}(x,x')$, $\nabla_a$ corresponds to a covariant derivative with respect to $x$, while $\nabla_{b'}$ corresponds to a covariant derivative with respect to $x'$. Derivatives with respect to $x$ and $x'$ commute, i.e.~$\nabla_a \nabla_{b'} T_{\cdots}(x,x') = \nabla_{b'} \nabla_a T_{\cdots}(x,x')$.

We will be especially interested in the limit $x' \to x$ of a bi-tensor.

\begin{definition}
The \emph{coincidence limit} of a bi-tensor $T_{\cdots}(x,x')$, denoted by
\begin{equation} \label{eq:coincidencelimit}
[T_{\cdots}](x) := \lim_{x' \to x} T_{\cdots}(x,x') \, ,
\end{equation}
is a tensor at $x$, when such a limit exists, and is independent of the path $x' \to x$. (In \eqref{eq:coincidencelimit}, the \emph{Synge's bracket notation} is employed.)
\end{definition}

We consider two important bi-tensors.

\subsection{Synge's world function}

Let $\gamma_X : [0,1] \to M$ be the geodesic segment starting at a point $x \in M$, with $X \in T_x M$ being the tangent vector to the geodesic at $x$. Let $V \subset T_x M$ be the set of tangent vectors $X$ such that $\gamma_X(t)$ is well defined for $t \in [0,1]$.

\begin{definition} \label{def:exponentialmap}
The \emph{exponential map} is the map $\exp_x : V \to M$, $X \mapsto \gamma_X(1)$.
\end{definition}

Hence, with $\gamma_X(0) = x$ and $\gamma_X(1) = x'$, one has $\exp_x(X) = x'$.

\begin{definition}
The \emph{Synge's world function} $\sigma(x,x')$ is a bi-scalar given by
\begin{equation}
\sigma(x,x') := \frac{1}{2} \, g(x) \left( \exp_x^{-1}(x'), \exp_x^{-1}(x') \right) \, .
\end{equation}
\end{definition}

The Synge's world function gives the half squared geodesic distance between the points $x$ and $x'$. To see this, consider the geodesic segment $\gamma : [0,1] \to M$ connecting $x = \gamma(0)$ and $x' = \gamma(1)$, which is unique, since $x$ and $x'$ are assumed to be in a geodesically convex neighbourhood. The \emph{geodesic distance} between $x$ and $x'$ is given by
\begin{equation}
d(x,x') := \int_0^1 \dd t \, \sqrt{g(\gamma(t)) \left( \dot{\gamma}(t), \dot{\gamma}(t) \right)}
= \sqrt{g(x) \left( \dot{\gamma}(0), \dot{\gamma}(0) \right)} = \sqrt{g_{ab} \dot{\gamma}^a \dot{\gamma} ^b} \, ,
\end{equation}
since the integrand is constant along the geodesic. One has that $\exp_x \left(\dot{\gamma}(0)\right) = x'$, thus, it follows that
\begin{equation} \label{eq:Syngesworldfunction}
\sigma(x,x') = \frac{1}{2} \, d(x,x')^2 = \frac{1}{2} g_{ab} \dot{\gamma}^a \dot{\gamma}^b = \frac{1}{2} g_{a'b'} \dot{\gamma}^{a'} \dot{\gamma}^{b'} \, .
\end{equation}

Given this, the coincidence limit of the Synge's world function is
\begin{equation}
[\sigma] = 0 \, .
\end{equation}

Concerning the covariant derivatives of the Synge's world function, one has
\begin{equation} \label{eq:covderivativesigma}
\nabla_a \sigma(x,x') = g_{ab} \dot{\gamma}^b \, , \qquad
\nabla_{a'} \sigma(x,x') = g_{a'b'} \dot{\gamma}^{b'} \, ,
\end{equation}
and, consequently,
\begin{equation}
[\nabla_a \sigma] = [\nabla_{a'} \sigma] = 0 \, .
\end{equation}
By similar calculations, we also get
\begin{equation}
[\nabla_a \nabla_b \sigma] =  [\nabla_{a'} \nabla_{b'} \sigma] = g_{ab} \, , \qquad
[\nabla_a \nabla_{b'} \sigma] =  [\nabla_{a'} \nabla_b \sigma] = - g_{ab} \, .
\end{equation}

It also follows from \eqref{eq:covderivativesigma} that
\begin{equation} \label{eq:sigmamunorm}
\nabla_a \sigma \nabla^a \sigma = \nabla_{a'} \sigma \nabla^{a'} \sigma = 2 \sigma \, .
\end{equation}
Hence, $\nabla^a \sigma$ and $\nabla^{a'} \sigma$ are tangent vectors to the geodesic $\gamma$ at $x$ and $x'$, respectively, with length equal to the geodesic distance between $x$ and $x'$.

\begin{remark}
In the literature, it is common to find any of the following notations for $\nabla_a \sigma$: $\nabla_a \sigma = \sigma_{;a} = \sigma_a$, the last of which omits the semi-colon for the covariant derivative. In the rest of thesis, the notation $\sigma_{;a}$ is used, in order to avoid confusion.
\end{remark}

\subsection{Parallel propagator}

If $V \in T_{x'} M$ is a tangent vector at $x'$, it can be parallel transported to $x$ along the unique geodesic that links $x$ and $x'$. The parallel transported vector at $x$ is given by
\begin{equation} \label{eq:parallelpropagator}
V^a(x) =: {g^a}_{b'}(x,x') \, V^{b'} (x') \, .
\end{equation}
This relation defines the \emph{parallel propagator} ${g^a}_{b'}(x,x')$. Similarly,
\begin{equation}
V^{a'}(x') = {g^{a'}}_b(x,x') \, V^b (x) \, .
\end{equation}

The coincidence limit of the parallel propagator is given by
\begin{equation}
\left[ {g^a}_{b'} \right](x) = {g^a}_b(x) = \delta^a_b \, .
\end{equation}

\subsection{Covariant and non-covariant Taylor expansion}
\label{sec:maths-bitensorexpansions}

Not only we will be interested in the coincidence limit of bi-tensors, but we will also be interested in expanding a bi-tensor near the coincidence limit as a power series. There are two possible Taylor expansions: the covariant Taylor expansion, in which the expansion is performed in a covariant way, and the non-covariant Taylor expansion, which is expressed in terms of the coordinate separation of the points.

In curved spacetimes, instead of using the flat spacetime quantity $(x-x')^a$, the expansion near the coincidence limit can be done in powers of $\sigma^{;a}$, whose length coincides with the geodesic distance between $x$ and $x'$. One then defines the covariant Taylor expansion as follows.

\begin{definition}
The \emph{covariant Taylor expansion} of a bi-tensor $T_{\cdots}(x,x')$ is
\begin{equation} \label{eq:covarianttaylorexpansion}
T_{\cdots}(x,x') = \sum_{k=0}^{\infty} \frac{(-1)^k}{k!} \, t_{\cdots a_1 \cdots a_k}(x) \, \sigma^{;a_1}(x,x') \cdots \sigma^{;a_k}(x,x') \, ,
\end{equation}
where the coefficients $t_{\cdots a_1 \cdots a_k}(x)$ are tensors at $x$.
\end{definition}

\begin{proposition}
The first expansion coefficients in \eqref{eq:covarianttaylorexpansion} are given by
\begin{align}
t_{\cdots} &= \left[ T_{\cdots} \right] \, , \\
t_{\cdots a_1} &= - \left[ T_{\cdots ; a_1} \right] + t_{\cdots ; a_1} \, , \\
t_{\cdots a_1 a_2} &= \left[ T_{\cdots ; a_1 a_2} \right] - t_{\cdots ; a_1 a_2} + t_{\cdots a_1 ; a_2} + t_{\cdots a_2 ; a_1} \, .
\end{align}
\end{proposition}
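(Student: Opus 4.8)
The plan is to determine the coefficients recursively by acting on the defining expansion \eqref{eq:covarianttaylorexpansion} with covariant derivatives at $x$ (unprimed indices) and then taking the coincidence limit $x'\to x$. The whole computation is driven by the catalogue of coincidence limits of Synge's function already recorded above, namely $[\sigma]=0$, $[\sigma^{;a}]=0$ and $[\nabla_b\sigma^{;a}]=\delta^a_b$ (the latter being $[\nabla_a\nabla_b\sigma]=g_{ab}$ with an index raised), together with the standard higher-order limit $[\nabla_c\nabla_b\sigma^{;a}]=0$. This last limit is not listed in the excerpt but follows by repeatedly differentiating \eqref{eq:sigmamunorm} and taking coincidence limits (see Chapter 2 of \cite{Poisson:2003nc}), and I would either establish it first or cite it. The key structural observation is that every \emph{undifferentiated} factor $\sigma^{;a_i}$ has vanishing coincidence limit, so applying $k$ derivatives to the series can leave a non-vanishing contribution only from terms with at most $k$ factors of $\sigma^{;a_i}$. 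This truncates the sum at each order and makes the recursion finite.

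For the zeroth coefficient I would simply take the coincidence limit of \eqref{eq:covarianttaylorexpansion}: every $k\ge 1$ term carries an undifferentiated $\sigma^{;a_i}$ and drops out, leaving $t_{\cdots}=[T_{\cdots}]$. For the first coefficient I would apply a single derivative $\nabla_{a_1}$, expand by the Leibniz rule, and take the limit. Only the $k=0$ and $k=1$ terms can survive: the $k=0$ term gives $t_{\cdots;a_1}$, while in the $k=1$ term the derivative must land on the single factor $\sigma^{;a}$ (if it lands on $t$ the surviving undifferentiated $\sigma^{;a}$ kills the term), producing $-t_{\cdots a}[\nabla_{a_1}\sigma^{;a}]=-t_{\cdots a_1}$. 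Equating with $[T_{\cdots;a_1}]$ and solving yields the stated $t_{\cdots a_1}=-[T_{\cdots;a_1}]+t_{\cdots;a_1}$.

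For the second coefficient I would differentiate twice, $\nabla_{a_1}\nabla_{a_2}$, and again isolate the surviving terms. The $k=0$ term contributes $t_{\cdots;a_1 a_2}$. In the $k=1$ term the two derivatives distributed over $t_{\cdots a}\,\sigma^{;a}$ give several pieces; the two in which exactly one derivative hits $\sigma^{;a}$ survive as $-\bigl(t_{\cdots a_2;a_1}+t_{\cdots a_1;a_2}\bigr)$, whereas the piece in which both derivatives hit $\sigma^{;a}$ is killed precisely by $[\nabla_c\nabla_b\sigma^{;a}]=0$. In the $k=2$ term, carrying the prefactor $1/2$, only the two arrangements placing one derivative on each of the two $\sigma$ factors survive; using $[\nabla_b\sigma^{;a}]=\delta^a_b$ and the symmetry of $t_{\cdots a_1 a_2}$ in its last two indices these combine to $+t_{\cdots a_1 a_2}$. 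Collecting the pieces gives $[T_{\cdots;a_1 a_2}]=t_{\cdots;a_1 a_2}-t_{\cdots a_1;a_2}-t_{\cdots a_2;a_1}+t_{\cdots a_1 a_2}$, which rearranges to the claimed formula.

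The main obstacle is the bookkeeping of the Leibniz expansion: at order $k$ one must track exactly how the derivatives distribute among the coefficient and the $k$ factors $\sigma^{;a_i}$, and argue which distributions survive the coincidence limit. Two points need care. First, one must have all the required coincidence limits in hand, the non-trivial input at second order being $[\nabla_c\nabla_b\sigma^{;a}]=0$. Second, covariant derivatives at $x$ do not commute, so differentiating in a fixed order could in principle introduce curvature commutator terms; at these low orders, however, such terms either retain a surviving undifferentiated $\sigma^{;a}$ or feed only into higher coefficients, and the final coefficients are symmetric in $a_1,\dots,a_k$, so the ordering is immaterial. Making this symmetry/commutator argument clean is the one genuinely delicate step; the remainder is mechanical.
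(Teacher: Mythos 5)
Your proposal is correct and is exactly the paper's own method: the paper's proof consists of the single line ``the coefficients are obtained by repeated covariant differentiation of \eqref{eq:covarianttaylorexpansion} and taking the coincidence limit,'' which is precisely your recursion, and your Leibniz bookkeeping at orders $0$, $1$, $2$ reproduces the stated formulas. The only input you supply beyond the paper's text is the standard limit $[\nabla_c\nabla_b\sigma^{;a}]=0$ (correctly sourced to \cite{Poisson:2003nc}), and your observation that the symmetry of $t_{\cdots a_1\cdots a_k}$ renders derivative ordering immaterial is a legitimate way to dispose of the commutator issue.
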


\begin{proof}
The coefficients are obtained by repeated covariant differentiation of \eqref{eq:covarianttaylorexpansion} and taking the coincidence limit.
\end{proof}

\begin{remark}
As with conventional Taylor expansions, it is not true in general that the sum in the RHS of \eqref{eq:covarianttaylorexpansion} converges and that, when it does, it is equal to the LHS. In this thesis, we will only need the first few terms of the expansion and, hence, will treat it as an asymptotic expansion. In this way, we will not go into the details of the convergence of the covariant Taylor series which we will deal with.
\end{remark}

Sometimes, however, it is not practical, or even possible, to compute $\sigma^{;a}$ for a given curved spacetime, which makes it impossible to obtain the covariant Taylor expansion of a bi-tensor. Another possibility is to explicitly use a coordinate system in a chart that includes both $x$ and $x'$ and perform the expansion in the coordinate separation of the points.

\begin{definition}
The \emph{non-covariant Taylor expansion} of a bi-tensor $T_{\cdots}(x,x')$ is
\begin{equation} \label{eq:noncovarianttaylorexpansion}
T_{\cdots}(x,x') = \sum_{k=0}^{\infty} \frac{(-1)^k}{k!} \, \hat{t}_{\cdots \mu_1 \cdots \mu_k}(x) \, \Delta x^{\mu_1} \cdots \Delta x^{\mu_k} \, ,
\end{equation}
where $\hat{t}_{\cdots \mu_1 \cdots \mu_k}(x)$ are the components of tensors at $x$ and $\Delta x^{\mu} := x^{\mu} - x^{\mu'}$.
\end{definition}

As an example, one can express the bi-scalar $\sigma(x,x')$ in terms of a non-covariant Taylor expansion.

\begin{proposition} \label{prop:noncovariantexpansionsigma}
The non-covariant Taylor expansion of $\sigma(x,x')$ is
\begin{align} \label{eq:sigma-coord}
\sigma &= \tilde{\sigma}_{\alpha \beta} \, \Delta x^\alpha \Delta x^\beta
	 + \tilde{\sigma}_{\alpha \beta \gamma} \, \Delta x^\alpha \Delta x^\beta \Delta x^\gamma
	 + \tilde{\sigma}_{\alpha \beta \gamma \delta} \, \Delta x^\alpha \Delta x^\beta \Delta x^\gamma \Delta x^\delta \notag \\
&\quad+ \tilde{\sigma}_{\alpha \beta \gamma \delta \epsilon} \, \Delta x^\alpha \Delta x^\beta \Delta x^\gamma \Delta x^\delta  \Delta x^\epsilon
	 + \ldots
\end{align}
where $\tilde{\sigma}_{\mu_1 \cdots \mu_k} := \frac{(-1)^k}{k!} \, \hat{\sigma}_{\mu_1 \cdots \mu_k}$ are given by
\begin{subequations}
\begin{align}
\tilde{\sigma}_{\alpha \beta} &= \frac{1}{2} \, g_{\alpha \beta} \, , \\
\tilde{\sigma}_{\alpha \beta \gamma} &= - \frac{1}{4} \, g_{(\alpha \beta , \gamma)} \, , \\
\tilde{\sigma}_{\alpha \beta \gamma \delta} 
&= - \frac{1}{3} \left[ \tilde{\sigma}_{(\alpha \beta \gamma , \delta)} 
+ g^{\mu \nu} \left(\frac{1}{8} g_{(\alpha \beta ,|\mu|} g_{\gamma \delta) ,\nu}
			+ \frac{3}{2} g_{(\alpha \beta , |\mu|} \tilde{\sigma}_{|\nu| \gamma \delta)}
			+ \frac{9}{2} \tilde{\sigma}_{\mu (\alpha \beta} \tilde{\sigma}_{|\nu| \gamma \delta)}\right) \right] \, , \\
\tilde{\sigma}_{\alpha \beta \gamma \delta \epsilon}	
&= -\frac{1}{4} \Bigg[ \tilde{\sigma}_{(\alpha \beta \gamma \delta , \epsilon)} 
+ g^{\mu \nu} \Bigg(12 \tilde{\sigma}_{\mu (\alpha \beta} \tilde{\sigma}_{|\nu| \gamma \delta \epsilon)}
			+ 3 \tilde{\sigma}_{\alpha (\alpha \beta} \tilde{\sigma}_{\gamma \delta \epsilon) ,\beta}
			+ 2 g_{(\alpha \beta , |\mu|} \tilde{\sigma}_{|\nu| \gamma \delta \epsilon)} \notag \\
			&\hspace*{24ex} + \frac{1}{2} \tilde{\sigma}_{(\alpha \beta \gamma , |\mu|} g_{\delta \epsilon) ,\nu}\Bigg)\Bigg] \, .
\end{align}
\end{subequations}
\end{proposition}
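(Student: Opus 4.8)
The plan is to derive the expansion directly from the first-order nonlinear identity satisfied by the world function, namely the eikonal-type relation \eqref{eq:sigmamunorm}, $\sigma_{;a}\sigma^{;a}=2\sigma$. Since $\sigma$ is a bi-scalar, its first covariant derivative with respect to $x$ coincides with the coordinate derivative, so in a common chart for $x$ and $x'$ this identity reads
\[
g^{\alpha\beta}(x)\,\partial_\alpha\sigma\,\partial_\beta\sigma = 2\sigma \, ,
\]
where $\partial_\alpha$ denotes differentiation with respect to $x^\alpha$ at fixed $x'$. This single equation is enough to fix all the coefficients $\tilde{\sigma}_{\mu_1\cdots\mu_k}(x)$ recursively.

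First I would substitute the ansatz \eqref{eq:sigma-coord} into this identity. Differentiating $\sigma=\sum_{k\ge 2}\tilde{\sigma}_{\mu_1\cdots\mu_k}(x)\,\Delta x^{\mu_1}\cdots\Delta x^{\mu_k}$ with respect to $x^\alpha$ produces two kinds of terms: one in which the derivative lands on a factor $\Delta x^{\mu}$, using $\partial_\alpha\Delta x^{\mu}=\delta^{\mu}_\alpha$ together with the total symmetry of the coefficients (this lowers the degree by one and supplies a combinatorial factor $k$), and one in which the derivative lands on the $x$-dependent coefficient, producing $\partial_\alpha\tilde{\sigma}_{\mu_1\cdots\mu_k}(x)$ and hence, through the lower-order relations, derivatives of the metric at $x$. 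Inserting both pieces into $g^{\alpha\beta}\partial_\alpha\sigma\,\partial_\beta\sigma$ and collecting by homogeneous degree in $\Delta x$ converts the PDE into a hierarchy of purely algebraic identities, one at each order.

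The core of the argument is then to match the coefficient of the symmetric monomial $\Delta x^{\mu_1}\cdots\Delta x^{\mu_n}$ at each order $n$. Because every such monomial is totally symmetric, only the symmetric part of each tensor equation survives, which is precisely the origin of the symmetrization brackets in the stated formulas. At order two the identity reduces to $2\,g^{\mu\nu}\tilde{\sigma}_{\mu(\alpha}\tilde{\sigma}_{\beta)\nu}=\tilde{\sigma}_{\alpha\beta}$, whose relevant solution is $\tilde{\sigma}_{\alpha\beta}=\tfrac12 g_{\alpha\beta}$ (the spurious root $\tilde{\sigma}_{\alpha\beta}=0$ is discarded because $[\sigma_{;ab}]=g_{ab}\ne 0$). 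Feeding this back, order three yields $\tilde{\sigma}_{\alpha\beta\gamma}=-\tfrac14 g_{(\alpha\beta,\gamma)}$, and iterating the same scheme at orders four and five produces the remaining coefficients, the metric derivatives entering exactly through the $\partial_\alpha\tilde{\sigma}_{\cdots}$ contributions from the previous steps.

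The main obstacle I anticipate is organisational rather than conceptual: at orders four and five the algebraic identity contains cross terms between the two types of derivatives, products of distinct lower-order coefficients, and several nested symmetrizations (with indices such as $\mu,\nu$ excluded from the brackets, as indicated by the $|\cdot|$ notation). Keeping track of the combinatorial prefactors generated by $\partial_\alpha\Delta x^{\mu}$ and symmetrizing correctly so that the contraction with $\Delta x^{\mu_1}\cdots\Delta x^{\mu_n}$ reproduces each stated term is where the bookkeeping is delicate; this is, however, a finite and entirely mechanical computation once the order-by-order hierarchy has been set up.
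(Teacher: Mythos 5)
Your proposal is correct and follows essentially the same route as the paper: the paper likewise differentiates the ansatz \eqref{eq:sigma-coord} to obtain the expansion \eqref{eq:sigmaderiv-coord} of $\sigma_{;\mu}$, substitutes both into the identity $\sigma_{;a}\sigma^{;a}=2\sigma$ of \eqref{eq:sigmamunorm}, and equates powers of $\Delta x^\alpha$ to fix the coefficients recursively in terms of the lower-order ones. Your additional remark on discarding the spurious root at quadratic order (via the coincidence limit $[\sigma_{;ab}]=g_{ab}$) is a sensible way to make explicit a normalisation the paper leaves implicit.
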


\begin{proof}
One has that
\begin{align}
\label{eq:sigmaderiv-coord}
\sigma_{;\mu} &= 2 \tilde{\sigma}_{\mu \alpha} \Delta x^\alpha
	 + \left( \tilde{\sigma}_{\alpha \beta ,\mu} + 3 \tilde{\sigma}_{\mu \alpha \beta} \right)\Delta x^\alpha \Delta x^\beta
	 + (\tilde{\sigma}_{\alpha \beta \gamma ,\mu} + 4 \tilde{\sigma}_{\mu \alpha \beta \gamma}) \Delta x^\alpha \Delta x^\beta \Delta x^\gamma \notag \\
	&\quad + (\tilde{\sigma}_{\alpha \beta \gamma \delta ,\mu} + 5 \tilde{\sigma}_{\mu \alpha \beta \gamma \delta} )\Delta x^\alpha \Delta x^\beta \Delta x^\gamma \Delta x^\delta
	 + \ldots .
\end{align}
Substituting expansions (\ref{eq:sigma-coord}) and (\ref{eq:sigmaderiv-coord}) into \eqref{eq:sigmamunorm} and equating powers of $\Delta x^\alpha$, we get expressions for each coefficient in terms of the lower order coefficients. 
\end{proof}

Proposition~\ref{prop:noncovariantexpansionsigma} allows us to relate the coefficients of the covariant Taylor expansion \eqref{eq:covarianttaylorexpansion} and the coefficients of the non-covariant Taylor expansion \eqref{eq:noncovarianttaylorexpansion}.

\begin{proposition}
The first coefficients $\hat{t}_{\cdots \mu_1 \cdots \mu_k}$ of the non-covariant Taylor expansion \eqref{eq:noncovarianttaylorexpansion} can be expressed in terms of the coefficients $t_{\cdots \mu_1 \cdots \mu_k}$ of the covariant Taylor expansion \eqref{eq:covarianttaylorexpansion} as
\begin{subequations}
\begin{align}
\hat{t} &= t \, , \\
\hat{t}_{\alpha} &= t_{\alpha} \, , \\
\hat{t}_{\alpha\beta} &= t_{\alpha\beta} + t_{\mu} \Gamma^{\mu}_{\alpha \beta} \, , \\
\hat{t}_{\alpha\beta\gamma} &= t_{\alpha\beta\gamma} + 3 t_{(\alpha|\mu|} \Gamma^{\mu}_{\beta \gamma)} + 6 t_{\mu} \left( {\tilde{\sigma}_{(\alpha \beta \gamma)}}{}^{,\mu} + 4 {\tilde{\sigma}^{\mu}}{}_{\alpha \beta \gamma} \right) \, . 
\end{align}
\end{subequations}
\end{proposition}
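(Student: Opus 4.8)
The plan is to use the fact that \eqref{eq:covarianttaylorexpansion} and \eqref{eq:noncovarianttaylorexpansion} are two power-series representations of the \emph{same} bi-tensor $T_{\cdots}(x,x')$ in a fixed coordinate chart about $x$: the covariant series is organised in powers of $\sigma^{;a}$, the non-covariant one in powers of $\Delta x^{\mu}$. Since $\sigma^{;\mu}$ is itself a bi-vector vanishing in the coincidence limit, it admits a non-covariant expansion in $\Delta x$. First I would substitute that expansion into \eqref{eq:covarianttaylorexpansion}, thereby re-expressing the covariant series as a series in $\Delta x^{\mu}$, and then match it term by term against \eqref{eq:noncovarianttaylorexpansion} to read off the $\hat{t}$'s in terms of the $t$'s. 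The passive tensor indices denoted collectively by $\cdots$ are tensors at $x$ and are untouched by this reshuffling, so it suffices to carry out the argument as if $T$ were a bi-scalar.

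The only external input needed is the expansion of $\sigma^{;\mu}$ in $\Delta x$. Starting from \eqref{eq:sigmaderiv-coord} in Proposition~\ref{prop:noncovariantexpansionsigma} and raising the index with $g^{\mu\nu}(x)$ (a constant with respect to the expansion variable $\Delta x$), and using $2\,\tilde{\sigma}_{\mu\alpha} = g_{\mu\alpha}$ which follows from $\tilde{\sigma}_{\alpha\beta} = \tfrac{1}{2}g_{\alpha\beta}$, one obtains
\begin{align*}
\sigma^{;\mu} &= \Delta x^{\mu} + g^{\mu\nu}\bigl(\tilde{\sigma}_{\alpha\beta,\nu} + 3\,\tilde{\sigma}_{\nu\alpha\beta}\bigr)\,\Delta x^{\alpha}\Delta x^{\beta} \\
&\quad + g^{\mu\nu}\bigl(\tilde{\sigma}_{\alpha\beta\gamma,\nu} + 4\,\tilde{\sigma}_{\nu\alpha\beta\gamma}\bigr)\,\Delta x^{\alpha}\Delta x^{\beta}\Delta x^{\gamma} + \cdots .
\end{align*}
In particular $\sigma^{;\mu} = \Delta x^{\mu} + O(\Delta x^2)$, which immediately yields the two lowest relations: at orders $\Delta x^0$ and $\Delta x^1$ only the $k=0$ and $k=1$ terms of \eqref{eq:covarianttaylorexpansion} contribute and $\sigma^{;\mu}$ is indistinguishable from $\Delta x^{\mu}$, so $\hat{t} = t$ and $\hat{t}_{\alpha} = t_{\alpha}$.

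At order $\Delta x^2$ two pieces combine: the quadratic part of the $k=1$ term $-t_{\mu}\sigma^{;\mu}$, namely $-t_{\mu}g^{\mu\nu}(\tilde{\sigma}_{\alpha\beta,\nu}+3\,\tilde{\sigma}_{\nu\alpha\beta})\Delta x^{\alpha}\Delta x^{\beta}$, together with the leading part $\tfrac12 t_{\alpha\beta}\Delta x^{\alpha}\Delta x^{\beta}$ of the $k=2$ term. Comparing with $\tfrac12\hat{t}_{\alpha\beta}\Delta x^{\alpha}\Delta x^{\beta}$ and inserting $\tilde{\sigma}_{\alpha\beta}=\tfrac12 g_{\alpha\beta}$ and $\tilde{\sigma}_{\nu\alpha\beta}=-\tfrac14 g_{(\nu\alpha,\beta)}$ collapses the quadratic coefficient of $\sigma^{;\mu}$ to exactly $-\tfrac12\Gamma^{\mu}_{\alpha\beta}$, reproducing $\hat{t}_{\alpha\beta}=t_{\alpha\beta}+t_{\mu}\Gamma^{\mu}_{\alpha\beta}$. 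The order $\Delta x^3$ relation is obtained the same way, now gathering three contributions: the cubic piece of the $k=1$ term, which supplies $6\,t_{\mu}(\tilde{\sigma}_{(\alpha\beta\gamma)}{}^{,\mu}+4\,\tilde{\sigma}^{\mu}{}_{\alpha\beta\gamma})$; the cross term of the $k=2$ term (one leading factor $\Delta x^{\rho}$ and one quadratic factor $-\tfrac12\Gamma^{\sigma}_{\alpha\beta}\Delta x^{\alpha}\Delta x^{\beta}$, counted with multiplicity two), which supplies $3\,t_{(\alpha|\mu|}\Gamma^{\mu}_{\beta\gamma)}$; and the leading part of the $k=3$ term, which supplies $t_{\alpha\beta\gamma}$.

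I expect the only genuine obstacle to be the index bookkeeping at cubic order: one must symmetrise carefully over $\alpha,\beta,\gamma$ (legitimate since the $\Delta x$ factors are symmetric, so every coefficient may be freely symmetrised in its lower indices), keep track of which $\tilde{\sigma}$'s reduce to Christoffel symbols and which are left intact, and confirm that the combinatorial factor from the $k=2$ cross term is exactly what builds the $3\,t_{(\alpha|\mu|}\Gamma^{\mu}_{\beta\gamma)}$ term while the $k=1$ cubic piece builds the remaining $6\,t_{\mu}(\tilde{\sigma}_{(\alpha\beta\gamma)}{}^{,\mu}+4\,\tilde{\sigma}^{\mu}{}_{\alpha\beta\gamma})$. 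No identities beyond those already displayed for the $\tilde{\sigma}$ coefficients in Proposition~\ref{prop:noncovariantexpansionsigma} are required, so once the substitution is set up the computation is entirely mechanical, if tedious.
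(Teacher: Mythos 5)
Your proposal is correct and follows essentially the same route as the paper's proof: substitute the coordinate expansion of $\sigma$ (equivalently, of $\sigma^{;\mu}$ via \eqref{eq:sigmaderiv-coord}) into the covariant expansion \eqref{eq:covarianttaylorexpansion} and equate the result order by order with \eqref{eq:noncovarianttaylorexpansion}. Your intermediate identifications, in particular that the quadratic coefficient of $\sigma^{;\mu}$ collapses to $-\tfrac{1}{2}\Gamma^{\mu}_{\alpha\beta}$, are exactly right and reproduce all four stated relations.
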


\begin{proof}
The relations are obtained by substituting \eqref{eq:sigma-coord} into \eqref{eq:covarianttaylorexpansion} and equating the resulting expansion with \eqref{eq:noncovarianttaylorexpansion}.
\end{proof}

\subsection{The case of complex Riemannian manifolds}
\label{sec:nbhdcomplexRiemannian}

In this thesis, we will need to consider bi-tensors in complex Riemannian manifolds, which are obtained from real Lorentzian manifolds, as described in Section~\ref{sec:riemanniansections}. Here, we verify that the local geodesic structure of the Lorentzian manifold is preserved when going to the complex Riemannian section and, therefore, it is possible to generalise the concepts above. See \cite{Moretti:1999fb} for a detailed discussion.

Consider a complex Riemannian manifold $M^{\mathbb{C}}$ with metric $g^{\mathbb{C}}$, which was obtained from a real Lorentzian manifold $M$ with real analytic metric $g$, such that the metric component $g_{00} < 0$ and the inverse $g^{00} < 0$ in a coordinate system. The geodesic equations admit locally a unique solution with parameter $t \in \mathbb{C}$ satisfying given initial conditions. If we restrict $t$ to the real domain, $t \in \mathbb{R}$, we obtain a \emph{real-parameter geodesic segment} (the corresponding complex-parameter geodesic segment is obtained by analytical continuation).

We want to define an analogous notion of a geodesically convex neighbourhood introduced in Definition~\ref{def:geodconvexneighbourhood} which is valid for the the complex Riemannian manifold. For that, we need a series of intermediate definitions. Let $\gamma_X(t)$, $t \in [0,1]$, be the real-parameter geodesic segment starting at a point $p \in M^{\mathbb{C}}$, with $X \in T_p(M^{\mathbb{C}})$ being the tangent vector to the geodesic at $p$. Let $V \subset T_p(M^{\mathbb{C}})$ be the set of vectors $X$ such that $\gamma_X(t)$ is well defined for $t \in [0,1]$. The exponential map is defined as in Definition~\ref{def:exponentialmap} as the map $\exp_p : V \to M^{\mathbb{C}}$, $X \mapsto \gamma_X(1)$.

\begin{definition}
An open \emph{star-shaped neighbourhood} about $0$ of a vector space is such that, if $X$ belongs to the neighbourhood, then $\lambda X$, with $\lambda \in [0,1]$, also belongs to the neighbourhood.
\end{definition}

\begin{definition}
A \emph{normal neighbourhood} of $p \in M^{\mathbb{C}}$ is an open neighbourhood of $p$ with the form $N_p = \exp_p(S)$, with $S \subset V \subset T_p(M^{\mathbb{C}})$ an open star-shaped neighbourhood of $0 \in T_p(M^{\mathbb{C}})$.
\end{definition}

\begin{definition}
A \emph{totally normal neighbourhood} of $p \in M^{\mathbb{C}}$ is a neighbourhood of $p$, $O_p \subset M^{\mathbb{C}}$, such that, if $q \in O_p$, there is a normal neighbourhood of $q$, $N_q$, with $O_p \subset N_q$.
\end{definition}

We can now define the desired class of neighbourhoods.

\begin{definition} \label{def:convexnbhdcsection}
A \emph{geodesically linearly convex neighbourhood} of $p \in M^{\mathbb{C}}$ is a totally normal neighbourhood of $p$, $N_p \subset M^{\mathbb{C}}$, such that, for any $q, q' \in N_p$, there is only one real-parameter geodesic segment which links $q$ and $q'$ and which lies completely in $N_p$.
\end{definition}

\begin{proposition}
Given a complex Riemannian manifold with the properties described above, for any given point, there is always a geodesically linearly convex neighbourhood.
\end{proposition}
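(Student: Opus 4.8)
The plan is to mimic the classical Whitehead argument that every point of a (pseudo-)Riemannian manifold admits a totally normal, geodesically convex neighbourhood, verifying at each step that the reasoning survives the passage to a complex-valued metric; the technical estimates are those of \cite{Moretti:1999fb}. The proof rests on three ingredients: (a) that $\exp_p$ is a local diffeomorphism near the origin of $T_p(M^{\mathbb{C}})$; (b) the existence of a totally normal neighbourhood of $p$; and (c) the upgrade to the linear convexity demanded in Definition~\ref{def:convexnbhdcsection}.

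First I would establish (a). Because the original Lorentzian metric is real analytic and is extended holomorphically, the Christoffel symbols are (complex-valued) real-analytic functions of the real coordinates, and the standing assumptions $g_{00}<0$ and $g^{00}<0$ guarantee that the Wick-rotated metric remains non-degenerate, so the connection is well defined. The geodesic equation is then a second-order ODE with analytic coefficients, and standard existence, uniqueness and analytic dependence on initial data make $\exp_p$ (as in Definition~\ref{def:exponentialmap}) an analytic map with $d(\exp_p)_0 = \mathrm{id}$. The inverse function theorem then produces a normal neighbourhood $N_p = \exp_p(S)$ for a suitable open star-shaped $S$.

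Next, for (b), I would consider the map $F:(q,X)\mapsto(q,\exp_q X)$ defined on a neighbourhood of the zero section of $T(M^{\mathbb{C}})$. Its differential at $(p,0)$ is invertible, so $F$ is a local diffeomorphism; shrinking appropriately yields a neighbourhood $O_p$ and a radius $\delta>0$ such that each $q\in O_p$ has a normal neighbourhood containing $O_p$, i.e.\ a totally normal neighbourhood.

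The hard part is (c): showing that $q,q'$ close enough are joined by exactly one real-parameter geodesic segment lying entirely inside the neighbourhood. In the positive-definite case one introduces the squared distance from $p$ and shows its Hessian along geodesics is positive near $p$, confining geodesics to sublevel sets; this is precisely where positive-definiteness is used, and where the complex-valued metric obstructs a direct translation, since the relevant quadratic form is now complex and ``convexity'' of a $g^{\mathbb{C}}$-quantity is not meaningful. My plan is to decouple convexity from $g^{\mathbb{C}}$ by fixing an auxiliary positive-definite Riemannian metric $\hat g$ on the underlying real manifold and working in normal coordinates for the complex connection centred at $p$: there the geodesics through $p$ are straight Euclidean lines, and for a geodesic segment the second derivative of the $\hat g$-distance-squared from the centre equals its Euclidean value plus corrections of order the coordinate separation, the corrections being controlled uniformly by the bounded, analytic complex Christoffel symbols. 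On a small enough $\hat g$-ball the Euclidean term dominates, forcing strict convexity with respect to $\hat g$ and hence uniqueness and containment of the connecting real-parameter geodesic. I would invoke \cite{Moretti:1999fb} for the precise form of these estimates, which is exactly the step I expect to be the main obstacle.
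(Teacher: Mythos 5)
Your proposal is correct in outline and in substance coincides with the paper's: the thesis offers no independent proof of this proposition, simply citing Theorem 23 of \cite{Moretti:1999fb}, whose proof is exactly the Whitehead-type argument you sketch — steps (a) and (b) by analytic ODE theory and the local-diffeomorphism trick for $(q,X)\mapsto(q,\exp_q X)$, and step (c) by measuring convexity with a real auxiliary functional precisely because convexity with respect to the complex-valued $g^{\mathbb{C}}$ is meaningless — so your deferral of the key estimates to the same reference leaves no divergence between the two routes. The only point to sharpen is that real-parameter geodesics with complex initial data generically leave the real section into the complexification, so your auxiliary ball and the normal coordinates in step (c) must be taken in the complexified chart $\mathbb{C}^n \cong \mathbb{R}^{2n}$, where the convexity estimate reads $\tfrac{\dd^2}{\dd t^2}\|x\|^2 = 2\|\dot{x}\|^2 + 2\,\ReC\left(\overline{x}\cdot\ddot{x}\right) \geq 2\|\dot{x}\|^2\left(1 - C\|x\|^2\right)$ using $\Gamma(0)=0$ in those coordinates — a refinement rather than a gap, and one already contained in the estimates of \cite{Moretti:1999fb} that you invoke.
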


\begin{proof}
See Theorem 23 of \cite{Moretti:1999fb}.
\end{proof}

Therefore, we can extend the Synge's world function bi-scalar to a complex Riemannian manifold.

\begin{definition} \label{def:SyngesworldfunctionCsection}
Given a geodesically linearly convex neighbourhood $N \subset M^{\mathbb{C}}$, the \emph{complex Synge's world function} $\sigma(x,x')$ is given by
\begin{equation}
\sigma (x, x') := \frac{1}{2} g(x) \left( \exp_x^{-1} (x'), \exp_x^{-1} (x') \right) \, .
\end{equation}
\end{definition}

This reduces to the usual definition for real Riemannian and Lorentzian manifolds. In particular, suppose we choose $x$ and $x'$ in a way such that some of their coordinates in a given coordinate system are the same and the induced metric on the submanifold defined by this condition is either real Riemannian or Lorentzian. Then, we can use the usual definition as half of the square of the geodesic distance between $x$ and $x'$.


\section{Symplectic and Hilbert spaces}
\label{sec:qftcst-symplecticspaces}

A classical field theory on a globally hyperbolic spacetime is completely specified in terms of functions with values on an appropriate vector space, together with
\begin{enumerate}[label={(\roman*)}]
\item a non-degenerate bilinear or sesquilinear form, which specifies the kinematics; 
\item a partial differential operator, which specifies the dynamics.
\end{enumerate}
In this section, we will briefly describe the basics of point (i), leaving point (ii) to Section~\ref{sec:hyperbolicoperators}.

The main concern of this thesis will be with a scalar field on a curved spacetime. Classically, the phase spaces of neutral bosons (resp., charged bosons) are symplectic spaces (resp., charged symplectic spaces). We characterise these spaces below. We also give a brief characterisation of Hilbert spaces, which are crucial to the quantum field theory.

Much of this section follows closely \cite{dereziński2013mathematics}. More details on Hilbert spaces can also be found in \cite{reed1980methods}.

\subsection{Symplectic spaces}

Let $V$ be a vector space over the field $\mathbb{K}$ ($\mathbb{R}$ or $\mathbb{C}$).

\begin{definition} \label{def:bilinearform}
A \emph{bilinear form} on $V$ is a bilinear map $\nu : V \times V \to \mathbb{K}$,
\begin{equation}
(v_1, v_2) \mapsto \nu(v_1, v_2) \, .
\end{equation} 
\end{definition}

\begin{definition}
A bilinear form $\nu$ is \emph{non-degenerate} if $\Ker \nu = \{ 0 \}$.
\end{definition}

\begin{definition}
A bilinear form $\nu$ is \emph{symmetric} if
\begin{equation}
\nu(v_1, v_2) = \nu(v_2, v_1) \, , \qquad
v_1, v_2 \in V \, ,
\end{equation}
whereas it is \emph{anti-symmetric} if
\begin{equation}
\nu(v_1, v_2) = - \nu(v_2, v_1) \, , \qquad
v_1, v_2 \in V \, .
\end{equation}
\end{definition}

\begin{definition}
A symmetric form $\nu$ on a real vector space is \emph{positive definite} if $\nu(v, v) > 0$ for $v \neq 0$.
\end{definition}

\begin{remark}
A positive definite symmetric form is always non-degenerate.
\end{remark}

\begin{definition}
A non-degenerate anti-symmetric bilinear form is called a \emph{symplectic form}.
\end{definition}

\begin{definition} \label{def:symplecticspace}
The pair $(V, \nu)$, where $V$ is a vector space over $\mathbb{K}$ and $\nu$ is a symplectic form, is called a \emph{symplectic space}.
\end{definition}

\subsection{Charged symplectic spaces}

Let $V$ be a vector space over $\mathbb{C}$.

\begin{definition}
A \emph{sesquilinear form} on $V$ is a map $\beta : V \times V \to \mathbb{C}$,
\begin{equation}
(v_1, v_2) \mapsto \beta(v_1, v_2)
\end{equation} 
which is anti-linear in the first argument and linear in the second argument.
\end{definition}

\begin{remark}
In the definition above, the so-called ``physicist's convention'' was adopted. In the ``mathematician's convention'', the first argument would be linear, while the second argument would be anti-linear.
\end{remark}

\begin{definition}
A sesquilinear form $\beta$ is \emph{non-degenerate} if $\Ker \beta = \{ 0 \}$.
\end{definition}

\begin{definition}
A sesquilinear form $\beta$ is \emph{Hermitian} if
\begin{equation}
\overline{\beta(v_1, v_2)} = \beta(v_2, v_1) \, , \qquad
v_1, v_2 \in V \, ,
\end{equation}
whereas it is \emph{anti-Hermitian} if
\begin{equation}
\overline{\beta(v_1, v_2)} = - \beta(v_2, v_1) \, , \qquad
v_1, v_2 \in V \, .
\end{equation}
\end{definition}

\begin{remark} \label{rem:HermitianandantiHermitian}
If $\beta$ is Hermitian, then $i \beta$ is anti-Hermitian.
\end{remark}

\begin{definition} \label{def:scalarproduct}
A Hermitian form $\beta$ is \emph{positive definite} if $\beta(v, v) > 0$ for $v \neq 0$. In this case, it is also known as a \emph{scalar product} or an \emph{inner product} and denoted by $\langle \cdot | \cdot \rangle$.
\end{definition}

\begin{remark}
A positive definite Hermitian form is always non-degenerate.
\end{remark}

\begin{definition}
A non-degenerate anti-Hermitian form is called a \emph{charged symplectic form}.
\end{definition}

\begin{definition}
The pair $(V, \beta)$, where $V$ is a vector space over $\mathbb{C}$ and $\beta$ is a charged symplectic form, is called a \emph{charged symplectic space}.
\end{definition}

\subsection{Hilbert spaces}

The notion of a charged symplectic space $(V, \beta)$ is very closely related to the space $(V, i \beta)$, cf.~Remark~\ref{rem:HermitianandantiHermitian}, where $i \beta$ is an Hermitian form. If this form is positive definite, the space is an inner product space.

\begin{definition}
The pair $(V, \beta)$, where $V$ is a vector space over $\mathbb{C}$ and $\beta$ is a non-degenerate Hermitian form, is called a \emph{pseudo-unitary space}. If $\beta$ is positive definite, then $(V, \beta)$ is called a \emph{unitary space} or an \emph{inner product space}.
\end{definition}

\begin{definition}
An inner product space $(\mathscr{H}, \beta)$ is called a \emph{Hilbert space} if $\mathscr{H}$ is complete in the norm induced by the inner product $\beta$.
\end{definition}

\begin{remark}
All finite-dimensional inner product spaces are Hilbert spaces. If an infinite-dimensional inner product $V$ space fails to be complete, there exists a unique Hilbert space $\mathscr{H}$ such that $V$ is isomorphic to a dense subspace of $\mathscr{H}$. The Hilbert space $\mathscr{H}$ is called the \emph{Hilbert space completion} of $V$.
\end{remark}

\begin{remark}
From now on, we denote a Hilbert space $(\mathscr{H}, \beta)$ simply by $\mathscr{H}$ and write the endowed scalar product $\beta ( \cdot , \cdot) = \langle \cdot | \cdot \rangle$.
\end{remark}

Next, we describe the direct sum and the tensor product of Hilbert spaces.

\begin{definition}
The \emph{direct sum} of the Hilbert spaces $\mathscr{H}_1$ and $\mathscr{H}_2$ is the Hilbert space $\mathscr{H}_1 \oplus \mathscr{H}_2$ consisting of pairs $(u,v)$ with $u \in \mathscr{H}_1$ and $v \in \mathscr{H}_2$ and scalar product
\begin{equation} \label{eq:directsumscalarproduct}
\left\langle (u_1, v_1) | (u_2,v_2) \right\rangle := \langle u_1 | u_2\rangle_{\mathscr{H}_1} + \langle v_1 | v_2 \rangle_{\mathscr{H}_2} \, .
\end{equation}
To construct countable direct sums, let $\{\mathscr{H}_n\}_{n=1}^{\infty}$ be a sequence of Hilbert spaces and let $\mathscr{H}$ denote the set of sequences $\{u_n\}_{n=1}^{\infty}$, with $u_n \in \mathscr{H}_n$, which satisfy
\begin{equation}
\sum_{n=1}^{\infty} \langle u_n | u_n \rangle_{\mathscr{H}_n} < \infty \, .
\end{equation}
Then, $\mathscr{H}$ is a Hilbert space with the scalar product which is the natural generalisation of \eqref{eq:directsumscalarproduct}. $\mathscr{H}$ is denoted by
\begin{equation}
\mathscr{H} = \bigoplus_{n=1}^{\infty} \mathscr{H}_n \, .
\end{equation}
\end{definition}

To define the tensor product of two Hilbert spaces $\mathscr{H}_1$ and $\mathscr{H}_2$, first consider, for each $u_1 \in \mathscr{H}_1$ and $u_2 \in \mathscr{H}_2$, the bi-antilinear form $u_1 \otimes u_2 : \mathscr{H}_1 \times \mathscr{H}_2 \to \mathbb{C}$,
\begin{equation}
(u_1 \otimes u_2)(v_1,v_2) := \langle v_1 | u_1 \rangle \langle v_2 | u_2 \rangle \, .
\end{equation}
Let $\mathscr{S}$ denote the set of finite linear combinations of these bi-antilinear forms. Define a scalar product
\begin{equation}
\langle u_1 \otimes u_2 | v_1 \otimes v_2 \rangle := \langle u_1 | v_1 \rangle \langle u_2 | v_2 \rangle \, ,
\label{eq:tensorproductscalarproduct}
\end{equation}
and extend it by linearity to $\mathscr{S}$. $\mathscr{S}$ is now an inner product space.

\begin{definition}
The \emph{tensor product} $\mathscr{H}_1 \otimes \mathscr{H}_2$ of the Hilbert spaces $\mathscr{H}_1$ and $\mathscr{H}_2$ is defined as the completion of $\mathscr{S}$ under the scalar product defined in \eqref{eq:tensorproductscalarproduct}. By induction, the above construction can be extended to define the tensor product $$\bigotimes_{n=1}^{m} \mathscr{H}_n$$ of finitely many Hilbert spaces $\mathscr{H}_1, \, ... , \, \mathscr{H}_m$.
\end{definition}

An important application of the concepts of direct sum and tensor product is the definition of the Fock space.

\begin{definition} \label{def:Fockspace}
The \emph{Fock space} $\mathscr{F}(\mathscr{H})$ associated with a Hilbert space $\mathscr{H}$ is the Hilbert space
\begin{equation}
\mathscr{F}(\mathscr{H}) := \bigoplus_{n=0}^{\infty} \left( {\bigotimes^n} \mathscr{H} \right) \, ,
\end{equation}
where $\bigotimes^0 \mathscr{H} := \mathbb{C}$. The \emph{symmetric Fock space} $\mathscr{F}_{\rm s}(\mathscr{H})$ associated with a Hilbert space $\mathscr{H}$ is the subspace of $\mathscr{F}(\mathscr{H})$ defined by
\begin{equation}
\mathscr{F}_{\rm s}(\mathscr{H}) := \bigoplus_{n=0}^{\infty} \left( {\bigotimes^n}_{\rm s} \mathscr{H} \right) \, ,
\end{equation}
whereas the \emph{anti-symmetric Fock space} $\mathscr{F}_{\rm a}(\mathscr{H})$ associated with a Hilbert space $\mathscr{H}$ is the subspace of $\mathscr{F}(\mathscr{H})$ defined by
\begin{equation}
\mathscr{F}_{\rm a}(\mathscr{H}) := \bigoplus_{n=0}^{\infty} \left( {\bigotimes^n}_{\rm a} \mathscr{H} \right) \, .
\end{equation}
Here, $\bigotimes_{\rm s}$ and $\bigotimes_{\rm a}$ stand for the symmetrised and anti-symmetrised tensor product, respectively.
\end{definition}

Finally, we introduce the notion of orthonormal decomposition of a Hilbert space.

\begin{definition} \label{def:orthonormaldecomposition}
Let $\mathscr{H}$ be a Hilbert space. If $\mathscr{U} \subset \mathscr{H}$, then $\mathscr{U}^{\perp}$ denotes the \emph{orthogonal complement} of $\mathscr{U}$,
\begin{equation}
\mathscr{U}^{\perp} := \left\{ v \in \mathscr{H} : \langle u | v \rangle = 0, \, \text{for all} \, u \in \mathscr{U}  \right\} \, .
\end{equation}
\end{definition}

\begin{theorem} \label{thm:projectiontheorem}
Let $\mathscr{H}$ be a Hilbert space. If $\mathscr{U}$ is a closed vector subspace of $\mathscr{H}$, then $(\mathscr{U}^{\perp})^{\perp} = \mathscr{U}$ and, for any $v \in \mathscr{H}$, there exists a unique $u \in \mathscr{U}$ and $u' \in \mathscr{U}^{\perp}$ such that $v = u + u'$.
\end{theorem}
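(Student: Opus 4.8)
The plan is to establish the orthogonal decomposition $v = u + u'$ first and then to deduce $(\mathscr{U}^{\perp})^{\perp} = \mathscr{U}$ as a consequence. The heart of the argument is a minimal-distance construction. Given $v \in \mathscr{H}$, I would set $\delta := \inf_{w \in \mathscr{U}} \| v - w \|$ and pick a minimising sequence $(w_n)$ in $\mathscr{U}$ with $\| v - w_n \| \to \delta$. The key tool is the parallelogram law $\| a + b \|^2 + \| a - b \|^2 = 2 \| a \|^2 + 2 \| b \|^2$, applied to $a = v - w_n$ and $b = v - w_m$. Since $\tfrac{1}{2}(w_n + w_m) \in \mathscr{U}$, one has $\| v - \tfrac{1}{2}(w_n + w_m) \|^2 \ge \delta^2$, and rearranging the identity gives $\| w_n - w_m \|^2 \le 2\| v - w_n \|^2 + 2\| v - w_m \|^2 - 4\delta^2 \to 0$. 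Hence $(w_n)$ is Cauchy; completeness of $\mathscr{H}$ supplies a limit $u$, and the closedness of $\mathscr{U}$ guarantees $u \in \mathscr{U}$, with $\| v - u \| = \delta$.

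Next I would show that $u' := v - u$ lies in $\mathscr{U}^{\perp}$. For any $w \in \mathscr{U}$ and any scalar $t \in \mathbb{C}$, minimality of $u$ gives $\| u' - t w \|^2 \ge \| u' \|^2$. Writing $c := \langle w | u' \rangle$ and expanding the square yields $-t\,\overline{c} - \overline{t}\,c + |t|^2 \| w \|^2 \ge 0$; choosing $t$ to be a small positive real multiple of $c$ makes the linear term dominate and forces $c = 0$. Thus $\langle w | u' \rangle = 0$ for all $w \in \mathscr{U}$, i.e.\ $u' \in \mathscr{U}^{\perp}$, and $v = u + u'$ is the desired decomposition. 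For uniqueness, suppose $v = u_1 + u_1' = u_2 + u_2'$ with $u_i \in \mathscr{U}$ and $u_i' \in \mathscr{U}^{\perp}$. Then $u_1 - u_2 = u_2' - u_1'$ belongs to both $\mathscr{U}$ and $\mathscr{U}^{\perp}$, so it is orthogonal to itself; positive-definiteness of the inner product (Definition~\ref{def:scalarproduct}) then forces it to vanish, giving $u_1 = u_2$ and $u_1' = u_2'$.

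Finally, for $(\mathscr{U}^{\perp})^{\perp} = \mathscr{U}$, the inclusion $\mathscr{U} \subseteq (\mathscr{U}^{\perp})^{\perp}$ is immediate from the definition of the orthogonal complement. For the reverse inclusion, take $v \in (\mathscr{U}^{\perp})^{\perp}$ and decompose $v = u + u'$ as above. Since $v \in (\mathscr{U}^{\perp})^{\perp}$ and $u \in \mathscr{U} \subseteq (\mathscr{U}^{\perp})^{\perp}$, the difference $u' = v - u$ also lies in $(\mathscr{U}^{\perp})^{\perp}$; but $u' \in \mathscr{U}^{\perp}$ as well, so $u'$ is orthogonal to itself and vanishes, whence $v = u \in \mathscr{U}$. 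I expect the main obstacle to be the existence of the minimiser in the very first step: this is the only place where completeness of $\mathscr{H}$ and closedness of $\mathscr{U}$ are genuinely needed, and it is precisely the parallelogram-law estimate that converts the minimising property into a Cauchy condition. Everything downstream is routine algebra with the inner product.
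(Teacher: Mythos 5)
Your proof is correct, and it is essentially the same argument as the paper's source: the paper does not prove this theorem itself but cites Theorem II.3 of Reed--Simon, whose proof is exactly your closest-point construction via the parallelogram law, followed by the orthogonality, uniqueness, and double-complement steps you give. All details check out, including the variational argument forcing $\langle w \,|\, u' \rangle = 0$ and the use of positive-definiteness to kill $\mathscr{U} \cap \mathscr{U}^{\perp}$.
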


\begin{proof}
See e.g.~Theorem II.3 of \cite{reed1980methods}.
\end{proof}

\begin{remark}
Theorem~\ref{thm:projectiontheorem} is usually known as the \emph{projection theorem}. This provides a natural isomorphism between $\mathscr{U} \oplus \mathscr{U}^{\perp}$ and $\mathscr{H}$ given by $(u, u') \mapsto u + u'$. For simplicity, one writes $\mathscr{H} = \mathscr{U} \oplus \mathscr{U}^{\perp}$, the \emph{orthonormal decomposition} of $\mathscr{H}$.
\end{remark}

\vspace*{1ex}

\subsection{Operators on Hilbert spaces}

\begin{definition}
Let $\mathscr{H}_1$ and $\mathscr{H}_2$ be Hilbert spaces. An \emph{operator} $T : \mathscr{H}_1 \to \mathscr{H}_2$ is a linear map from a linear subspace $D(T) \subset \mathscr{H}_1$ to $\mathscr{H}_2$. The subspace $D(T)$ is called the \emph{domain} of the operator $T$ and we assume that it is dense in $\mathscr{H}_1$.
\end{definition}

\begin{definition}
Let $T : \mathscr{H}_1 \to \mathscr{H}_2$ be a densely defined operator. Let
\begin{equation}
D(T^{\dagger}) := \left\{ v \in \mathscr{H}_2 : \forall u \in D(T) \; \exists w \in \mathscr{H}_1 \; \langle v | T u \rangle_{\mathscr{H}_2} = \langle w | u \rangle_{\mathscr{H}_1} \right\} \, .
\end{equation}
For each such $v \in D(T^{\dagger})$, one can define the \emph{adjoint} operator $T^{\dagger}$ by $T^{\dagger} v := w$, i.e.
\begin{equation}
\langle v | T u \rangle_{\mathscr{H}_2} = \langle T^{\dagger} v | u \rangle_{\mathscr{H}_1}
\end{equation}
for all $u \in D(T)$.
\end{definition}

\begin{definition}
An operator $T : \mathscr{H} \to \mathscr{H}$ is called \emph{symmetric} or \emph{Hermitian} if $T \subset T^{\dagger}$, i.e.~if $D(T) \subset D(T^{\dagger})$ and $T u = T^{\dagger} u$, for all $u \in D(T)$, or equivalently if
\begin{equation}
\langle T u | v \rangle = \langle u | T v \rangle \, , \qquad 
\text{for all $u, \, v \in D(T)$.}
\end{equation}
\end{definition}

\begin{definition}
An operator $T : \mathscr{H} \to \mathscr{H}$ is called \emph{self-adjoint} if $T = T^{\dagger}$, i.e.~if $D(T) = D(T^{\dagger})$ and $T$ is Hermitian.
\end{definition}

\begin{definition}
The \emph{spectrum} of an operator $T : D(T) \subset \mathscr{H} \to \mathscr{H}$ is the set
\begin{equation}
\sigma(T) := \left\{ \lambda \in \mathbb{C} : T - \lambda \mathbb{I} \text{ is not a bijection with bounded inverse} \right\} \, .
\end{equation}
The spectrum can be decomposed into three disjoint sets.
\begin{enumerate}[label={(\roman*)}]
\item The \emph{point spectrum} is the set
\begin{equation}
\sigma_{\rm p}(T) := \left\{ \lambda \in \sigma(T) : T - \lambda \mathbb{I} \text{ is not injective} \right\} \, .
\end{equation}
Equivalently, $\lambda \in \sigma_{\rm p}(T)$ if there exists non-zero $u \in D(T)$ such that $T u = \lambda u$. $\lambda$ is called an \emph{eigenvalue} and $u$ is called an \emph{eigenvector}.
\item The \emph{continuous spectrum} is the set
\begin{align}
\sigma_{\rm c}(T) &:= \left\{ \lambda \in \sigma(T) : T - \lambda \mathbb{I} \text{ is not surjective and} \right. \notag \\ 
&\quad\;\; \left. \text{ $(T - \lambda \mathbb{I}) D(T)$ is dense on $D(T)$} \right\} \, .
\end{align}
\item The \emph{residual spectrum} is the set
\begin{align}
\sigma_{\rm r}(T) &:= \left\{ \lambda \in \sigma(T) : T - \lambda \mathbb{I} \text{ is not surjective and} \right. \notag \\ 
&\quad\;\; \left. \text{ $(T - \lambda \mathbb{I}) D(T)$ is not dense on $D(T)$} \right\} \, .
\end{align}
\end{enumerate}
\end{definition}

\begin{remark}
If $T$ is an operator acting on a finite-dimensional space, then the continuous and the residual spectrum of the operator are empty and its spectrum consists only of eigenvalues.
\end{remark}

If the spectrum for an operator is known, then the spectrum for its adjoint can be easily obtained.

\begin{proposition} \label{prop:spectrumadjoint}
If $T : D(T) \subset \mathscr{H} \to \mathscr{H}$ has spectrum $\sigma(T)$, then
\begin{equation}
\sigma(T^{\dagger}) = \{ \lambda : \overline{\lambda} \in \sigma(T) \} \, .
\end{equation}
\end{proposition}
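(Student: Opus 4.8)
The plan is to pass to the resolvent sets $\rho(T) := \mathbb{C} \setminus \sigma(T)$ and $\rho(T^{\dagger})$ and to prove the equivalent statement $\rho(T^{\dagger}) = \{ \lambda : \overline{\lambda} \in \rho(T) \}$; taking complements in $\mathbb{C}$ then yields the claim for the spectra. By definition, $\lambda \in \rho(T)$ means that $T - \lambda \mathbb{I} : D(T) \to \mathscr{H}$ is a bijection with bounded, everywhere-defined inverse. The whole argument rests on two ingredients: the algebraic identity $(T - \lambda \mathbb{I})^{\dagger} = T^{\dagger} - \overline{\lambda} \mathbb{I}$, and a lemma asserting that a densely defined operator $S$ is a bijection with bounded inverse if and only if $S^{\dagger}$ is, in which case $(S^{\dagger})^{-1} = (S^{-1})^{\dagger}$.

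First I would establish the adjoint identity. Since $\mathbb{I}$ is bounded and everywhere defined, the sum rule for adjoints gives $(T - \lambda \mathbb{I})^{\dagger} = T^{\dagger} + (- \lambda \mathbb{I})^{\dagger}$, and the one-line computation $\langle v | \lambda u \rangle = \lambda \langle v | u \rangle = \langle \overline{\lambda} v | u \rangle$ (using the anti-linearity of the inner product in its first slot) shows $(\lambda \mathbb{I})^{\dagger} = \overline{\lambda} \mathbb{I}$, whence $(T - \lambda \mathbb{I})^{\dagger} = T^{\dagger} - \overline{\lambda} \mathbb{I}$. Applying the lemma with $S = T - \overline{\lambda} \mathbb{I}$, so that $S^{\dagger} = T^{\dagger} - \lambda \mathbb{I}$, immediately converts the statement ``$\overline{\lambda} \in \rho(T)$'' into ``$\lambda \in \rho(T^{\dagger})$'' and conversely, which is exactly what is needed.

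The heart of the matter is the lemma. Writing $R := S^{-1}$, which is bounded and everywhere defined, I would take as the starting point the relations $RS = \mathbb{I}$ on $D(S)$ and $SR = \mathbb{I}$ on $\mathscr{H}$. Because $R$ is bounded and everywhere defined, the composition rule for adjoints gives the \emph{exact} identity $(RS)^{\dagger} = S^{\dagger} R^{\dagger}$, so taking adjoints of $RS = \mathbb{I}$ yields $S^{\dagger} R^{\dagger} = \mathbb{I}$ on all of $\mathscr{H}$; taking adjoints of $SR = \mathbb{I}$ together with the general inclusion $(SR)^{\dagger} \supseteq R^{\dagger} S^{\dagger}$ yields $R^{\dagger} S^{\dagger} w = w$ for every $w \in D(S^{\dagger})$. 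These two facts exhibit $R^{\dagger} = (S^{-1})^{\dagger}$, which is bounded with $\| R^{\dagger} \| = \| R \|$, as a two-sided inverse of $S^{\dagger}$, proving one direction; the converse follows by running the same computation on $S^{\dagger}$ and invoking $S^{\dagger\dagger} = S$.

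I expect the main obstacle to lie in the careful domain bookkeeping in the two composition rules for adjoints — in particular distinguishing the exact identity $(RS)^{\dagger} = S^{\dagger} R^{\dagger}$, available precisely because $R$ is bounded and everywhere defined, from the mere inclusion $(SR)^{\dagger} \supseteq R^{\dagger} S^{\dagger}$ for the unbounded factor $S$ — and in justifying the reverse implication of the lemma. For the latter one uses that $T^{\dagger}$ is automatically closed and that an operator with nonempty resolvent set is necessarily closed, so that the double-adjoint identity $S^{\dagger\dagger} = S$ may be invoked; thus the natural setting for the proposition is that of closed densely defined $T$, which I would state explicitly.
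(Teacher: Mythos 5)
Your proof is correct, but there is no proof in the paper to compare it against: Proposition~\ref{prop:spectrumadjoint} sits in the background chapter, which states results ``without many details and often without proofs'' and defers to references such as \cite{reed1980methods}. Your argument --- passing to resolvent sets, the identity $(T-\lambda\mathbb{I})^{\dagger} = T^{\dagger} - \overline{\lambda}\mathbb{I}$, and the lemma that a densely defined $S$ has a bounded, everywhere-defined two-sided inverse if and only if $S^{\dagger}$ does, with $(S^{\dagger})^{-1} = (S^{-1})^{\dagger}$ --- is the standard textbook route, and your domain bookkeeping is exactly right: equality $(RS)^{\dagger} = S^{\dagger}R^{\dagger}$ holds because the left factor $R$ is bounded and everywhere defined, whereas only the inclusion $(SR)^{\dagger} \supseteq R^{\dagger}S^{\dagger}$ is available when the left factor is the unbounded $S$; the two facts together exhibit $R^{\dagger}$ as a bounded two-sided inverse of $S^{\dagger}$. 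The computation $(\lambda\mathbb{I})^{\dagger} = \overline{\lambda}\mathbb{I}$ is also consistent with the paper's physicist's convention (anti-linearity in the first slot).

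Your closing remark is not pedantry but a genuine correction to the statement. As written, the proposition fails for densely defined operators that are not closed: if $T-\lambda\mathbb{I}$ has a bounded everywhere-defined inverse then $T-\lambda\mathbb{I}$, hence $T$, is closed, so a non-closed $T$ has $\sigma(T)=\mathbb{C}$; yet $\sigma(T^{\dagger})$ can be much smaller. Concretely, let $T$ be the restriction of the identity to a dense proper subspace of $\mathscr{H}$: then $\sigma(T)=\mathbb{C}$ while $T^{\dagger}=\mathbb{I}$ on all of $\mathscr{H}$, so $\sigma(T^{\dagger})=\{1\}$, and the claimed equality fails. Thus the hypothesis ``closed and densely defined'' --- which you need anyway in the converse direction, both for $D(T^{\dagger})$ to be dense and to invoke $S^{\dagger\dagger}=S$ --- is precisely what makes the proposition true, and stating it explicitly is the right call.
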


In the cases in which the operators are Hermitian or self-adjoint, one can say more about their spectrum.

\begin{proposition}
If $T : D(T) \subset \mathscr{H} \to \mathscr{H}$ is Hermitian, then
\begin{enumerate}[label={(\roman*)}]
\item all eigenvalues of $T$ are real;
\item eigenvectors of $T$ corresponding to distinct eigenvalues are orthogonal;
\item the continuous spectrum of $T$ is real.
\end{enumerate}
If furthermore $T$ is self-adjoint, then 
\begin{enumerate}[label={(\roman*)}] \setcounter{enumi}{3}
\item the residual spectrum is empty.
\end{enumerate}
\end{proposition}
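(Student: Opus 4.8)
The plan is to prove each claim by returning to the defining property of Hermiticity, $\langle Tu \mid v\rangle = \langle u \mid Tv\rangle$ for all $u,v \in D(T)$, together with the scalar-product conventions fixed in Definition~\ref{def:scalarproduct} (antilinear in the first slot). For claim (i), I would suppose $Tu = \lambda u$ with $u \neq 0$ and evaluate the two sides of the Hermiticity identity at $v = u$: the left gives $\langle Tu \mid u\rangle = \overline{\lambda}\,\|u\|^2$ and the right gives $\langle u \mid Tu\rangle = \lambda\,\|u\|^2$. Since $\|u\|^2 > 0$, equality forces $\lambda = \overline{\lambda}$. For claim (ii), I would take $Tu = \lambda u$ and $Tv = \mu v$ with $\lambda \neq \mu$; expanding $\langle Tu \mid v\rangle = \langle u \mid Tv\rangle$ and using that $\lambda,\mu$ are real by (i) yields $(\lambda - \mu)\langle u \mid v\rangle = 0$, whence $\langle u \mid v\rangle = 0$.

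The heart of the argument is claim (iii). First I would establish the basic estimate that for any $\lambda = a + ib$ with $b = \ImC\lambda \neq 0$ and any $u \in D(T)$,
\begin{equation*}
\|(T - \lambda\mathbb{I})u\|^2 = \|(T - a\mathbb{I})u\|^2 + b^2\,\|u\|^2 \geq b^2\,\|u\|^2 .
\end{equation*}
This follows by writing $(T - \lambda\mathbb{I})u = (T - a\mathbb{I})u - ibu$ and expanding the norm: because $T - a\mathbb{I}$ is Hermitian, $\langle (T - a\mathbb{I})u \mid u\rangle = \langle u \mid (T - a\mathbb{I})u\rangle$, so the two purely imaginary cross terms cancel. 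Hence $\|(T-\lambda\mathbb{I})u\| \geq |\ImC\lambda|\,\|u\|$, so for non-real $\lambda$ the operator $T - \lambda\mathbb{I}$ is injective and bounded below, and its inverse on the range is bounded by $1/|\ImC\lambda|$. A bounded inverse on a dense range is incompatible with $\lambda$ lying in the continuous spectrum, where $(T-\lambda\mathbb{I})^{-1}$ must be unbounded on its dense domain; therefore every $\lambda \in \sigma_{\rm c}(T)$ satisfies $\ImC\lambda = 0$.

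For claim (iv), I would assume $T$ self-adjoint and suppose for contradiction that $\lambda \in \sigma_{\rm r}(T)$, so $(T - \lambda\mathbb{I})D(T)$ is not dense. Then there is a non-zero $w$ orthogonal to the range, i.e.\ $\langle w \mid (T - \lambda\mathbb{I})u\rangle = 0$ for all $u \in D(T)$, equivalently $\langle w \mid Tu\rangle = \langle \overline{\lambda}\,w \mid u\rangle$. By the definition of the adjoint this says $w \in D(T^{\dagger})$ with $T^{\dagger}w = \overline{\lambda}\,w$; since $T = T^{\dagger}$, the vector $w$ is an eigenvector of $T$ with eigenvalue $\overline{\lambda}$. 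By (i) that eigenvalue is real, so $\lambda = \overline{\lambda} \in \sigma_{\rm p}(T)$, contradicting the disjointness of the point and residual spectra. Hence $\sigma_{\rm r}(T) = \varnothing$.

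I expect the only genuinely delicate step to be (iii): the estimate itself is a short computation, but converting ``bounded below with dense range'' into ``not in the continuous spectrum'' relies on reading $\sigma_{\rm c}(T)$ as the set where $(T-\lambda\mathbb{I})^{-1}$ is densely defined but unbounded. One should either invoke this characterisation directly or, for self-adjoint (hence closed) $T$, note that boundedness below forces the range to be closed, so a dense range would be all of $\mathscr{H}$ and $\lambda$ would lie in the resolvent set. Claims (i), (ii) and (iv) are then routine consequences of Hermiticity, the reality of eigenvalues, and the adjoint relation.
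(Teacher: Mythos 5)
Your proof is correct, and it fills a genuine gap: the paper states this proposition without any proof (the chapter explicitly presents results ``often without proofs''), so there is no argument of record to compare against. Parts (i), (ii) and (iv) are the standard computations and you execute them correctly under the paper's physicist convention (antilinearity in the first slot): in particular your step $\langle Tu \mid u\rangle = \overline{\lambda}\,\|u\|^2 = \lambda\,\|u\|^2$ and your identification $T^{\dagger}w = \overline{\lambda}\,w$ from orthogonality to the range both respect that convention, and your use of disjointness of $\sigma_{\rm p}$ and $\sigma_{\rm r}$ matches the paper's stated decomposition of the spectrum into three disjoint sets. The one point deserving emphasis is exactly the one you flagged in (iii). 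The identity $\|(T-\lambda\mathbb{I})u\|^2 = \|(T-a\mathbb{I})u\|^2 + b^2\|u\|^2$ is correct — the cross terms are $-ib\langle (T-a\mathbb{I})u \mid u\rangle + ib\langle u \mid (T-a\mathbb{I})u\rangle$, which cancel by Hermiticity of $T - a\mathbb{I}$ — but note that the paper's literal definition of $\sigma_{\rm c}(T)$ demands only non-surjectivity plus dense range, with no mention of unboundedness of the inverse; read that literally, a merely Hermitian, non-closed operator (e.g.\ essentially self-adjoint but not closed) would have every non-real $\lambda$ in its ``continuous spectrum'', since $T-\lambda\mathbb{I}$ is then injective with dense but proper range. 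So your estimate alone does not contradict the paper's verbatim definition, and your proposed repairs are precisely what is needed: either adopt the standard characterisation of $\sigma_{\rm c}$ as the set where $(T-\lambda\mathbb{I})^{-1}$ is densely defined but unbounded (your bound $\|(T-\lambda\mathbb{I})^{-1}\| \leq 1/|\ImC\lambda|$ on the range then excludes non-real $\lambda$), or restrict to closed (in particular self-adjoint) $T$, where boundedness below forces closed range, so dense range would mean full range and $\lambda$ in the resolvent set. Either way the argument closes; you were right to identify this as the only delicate step.
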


\begin{remark} \label{rem:antihermitianoperator}
Note that if $T$ is anti-Hermitian, i.e.~$T u = - T^{\dagger} u$ for all $u \in D(T)$, then it follows from Proposition~\ref{prop:spectrumadjoint} that all eigenvalues of $T$ are purely imaginary. It is also true that eigenvectors of $T$ corresponding to distinct eigenvalues are orthogonal.
\end{remark}

As seen above, not even all self-adjoint operators have a spectrum composed of only eigenvalues, as in finite-dimensional Hilbert spaces. However, there is a class of operators which enjoys this property.

\begin{definition}
An operator $T : \mathscr{H}_1 \to \mathscr{H}_2$ is called \emph{compact} if it takes bounded subsets of $\mathscr{H}_1$ into subsets of $\mathscr{H}_2$ whose closure is compact.
\end{definition}

\begin{proposition}
If $T : \mathscr{H} \to \mathscr{H}$ is a compact operator, then, except for the possible value 0, the spectrum of $T$ is entirely point spectrum.
\end{proposition}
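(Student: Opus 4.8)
The plan is to prove the equivalent statement obtained by a mild reformulation: for $\lambda \neq 0$, if $T - \lambda\mathbb{I}$ is injective then it is in fact a bijection with bounded inverse, so that $\lambda \notin \sigma(T)$. Taking the contrapositive, every nonzero $\lambda \in \sigma(T)$ must fail to be injective and is therefore an eigenvalue, i.e.\ $\lambda \in \sigma_{\rm p}(T)$. Writing $S := T - \lambda\mathbb{I}$ throughout, the argument amounts to establishing three properties of $S$ under the injectivity hypothesis: that $S$ is bounded below, that its range is closed, and that its range is all of $\mathscr{H}$.

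First I would show that injectivity of $S$ (with $\lambda \neq 0$) forces $S$ to be bounded below, i.e.\ $\|Su\| \geq c\|u\|$ for some $c > 0$. If this failed there would be a sequence $u_n$ with $\|u_n\| = 1$ and $\|Su_n\| \to 0$; compactness of $T$ yields a subsequence along which $Tu_n$ converges, and since $\lambda u_n = Tu_n - Su_n$, the vectors $u_n$ themselves converge to a unit vector $u$ with $Su = 0$, contradicting injectivity. The lower bound immediately gives that $\mathrm{ran}(S)$ is closed, because a Cauchy image sequence pulls back to a Cauchy sequence in the domain; the same estimate shows $S^n$ is bounded below, so each iterated range $S^n\mathscr{H}$ is closed as well.

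The main obstacle is surjectivity, which I would prove by contradiction using Riesz's almost-orthogonality lemma. If $\mathrm{ran}(S) = S\mathscr{H}$ were a proper closed subspace, then injectivity propagates strictness down the descending chain $\mathscr{H} \supsetneq S\mathscr{H} \supsetneq S^2\mathscr{H} \supsetneq \cdots$ of closed subspaces. Choosing unit vectors $u_n \in S^n\mathscr{H}$ with $\mathrm{dist}(u_n, S^{n+1}\mathscr{H}) \geq 1/2$, and writing $Tu_n = Su_n + \lambda u_n$, one checks for $n > m$ that the combination $Su_m - Su_n - \lambda u_n$ lies in $S^{m+1}\mathscr{H}$, whence $\|Tu_m - Tu_n\| \geq |\lambda|\,\mathrm{dist}(u_m, S^{m+1}\mathscr{H}) \geq |\lambda|/2$. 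Thus $\{Tu_n\}$ has no convergent subsequence, contradicting compactness of $T$ applied to the bounded sequence $\{u_n\}$ and forcing $S\mathscr{H} = \mathscr{H}$.

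Finally, combining injectivity, surjectivity and the lower bound, $S = T - \lambda\mathbb{I}$ is a bijection whose inverse is bounded — the bound $\|Su\| \geq c\|u\|$ gives $\|S^{-1}\| \leq c^{-1}$ directly, so no appeal to the open mapping theorem is even needed. Hence $\lambda \notin \sigma(T)$, and the contrapositive yields the claim. I expect the delicate bookkeeping in the Riesz-lemma step — verifying that the correction term genuinely sits in $S^{m+1}\mathscr{H}$ so that the almost-orthogonality estimate applies — to be the only place demanding real care, the remaining steps being routine consequences of compactness and the lower bound.
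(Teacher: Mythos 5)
Your proof is correct. The paper states this proposition without proof --- it is quoted as standard Riesz--Schauder theory in the functional-analytic preliminaries (cf.\ the treatment in \cite{reed1980methods}) --- so there is no in-text argument to compare against; what you have written is the classical Banach-space proof, and every step checks out: injectivity plus compactness forces the lower bound $\|Su\| \geq c\|u\|$ (note only that it is a \emph{subsequence} of the $u_n$ that converges there, a purely cosmetic point), the lower bound closes the ranges $S^n\mathscr{H}$, and the bookkeeping in the Riesz-lemma step is right --- for $n > m$ one indeed has $Su_m \in S^{m+1}\mathscr{H}$, $Su_n \in S^{n+1}\mathscr{H} \subset S^{m+1}\mathscr{H}$ and $\lambda u_n \in S^{n}\mathscr{H} \subset S^{m+1}\mathscr{H}$, giving $\|Tu_m - Tu_n\| \geq |\lambda|/2$. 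The one assertion you pass over quickly, strictness of the chain, deserves its one-line justification: if $x \in \mathscr{H} \setminus S\mathscr{H}$, then $S^n x \in S^{n+1}\mathscr{H}$ would give $S^n(x - Sy) = 0$ and hence $x = Sy$ by injectivity of $S^n$, a contradiction. Two remarks on what the Hilbert-space setting buys you, since your argument is really a Banach-space argument: first, the almost-orthogonality lemma is unnecessary here, as the projection theorem (Theorem~\ref{thm:projectiontheorem}) lets you choose unit vectors $u_n \in S^n\mathscr{H} \cap \left(S^{n+1}\mathscr{H}\right)^{\perp}$, for which $\operatorname{dist}(u_n, S^{n+1}\mathscr{H}) = 1$ exactly; second, an alternative route to surjectivity is via the adjoint, since $\overline{\operatorname{ran} S} = (\Ker S^{\dagger})^{\perp}$ and $S^{\dagger} = T^{\dagger} - \overline{\lambda}\,\mathbb{I}$ with $T^{\dagger}$ compact, though that route needs the full Fredholm alternative to link injectivity of $S$ to injectivity of $S^{\dagger}$, so your direct chain argument is arguably cleaner. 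Generality is the corresponding payoff of your version: it proves the result verbatim for compact operators on any Banach space.
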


Finally, we generalise the notion of trace of a matrix to the trace of an operator.

\begin{definition} \label{def:traceoperator}
Let $\mathscr{H}$ be a separable Hilbert space (i.e.~$\mathscr{H}$ contains a countable dense subset) and $\{ u_i \}_{i \in \mathscr{I}}$ be an orthonormal basis, where $\mathscr{I}$ is an index set. The \emph{trace} of a positive operator $T : \mathscr{H} \to \mathscr{H}$ is defined as
\begin{equation} \label{eq:traceoperator}
\Tr T := \sum_{i \in \mathscr{I}} \langle u_i | T u_i \rangle \, .
\end{equation}
The trace is independent of the orthonormal basis chosen. The operator $T$ is of \emph{trace class} if $\Tr \sqrt{T^{\dagger} T} < \infty$.
\end{definition}

The trace of a compact operator of trace class always exists.

\begin{proposition}
Let $T : \mathscr{H} \to \mathscr{H}$ be a compact operator of trace class. Then, the sum on the RHS of \eqref{eq:traceoperator} converges absolutely.
\end{proposition}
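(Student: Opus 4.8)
The plan is to reduce the question to two applications of the Cauchy--Schwarz inequality after writing $T$ through its polar decomposition. First I would record the structural facts about $T$: since $T$ is compact, the positive operator $|T| := \sqrt{T^{\dagger} T}$ is compact and self-adjoint, so the spectral theorem for compact self-adjoint operators supplies an orthonormal basis $\{\phi_n\}$ of eigenvectors with eigenvalues (the singular values) $\mu_n \geq 0$ tending to $0$. The trace-class hypothesis says precisely that $\Tr |T| = \sum_n \mu_n < \infty$; since, by the definition \eqref{eq:traceoperator}, the trace of the positive operator $|T|$ is basis-independent, this gives $\sum_{i} \langle u_i \,|\, |T| \, u_i \rangle = \Tr |T|$ for the arbitrary orthonormal basis $\{u_i\}_{i\in\mathscr{I}}$ fixed in \eqref{eq:traceoperator}. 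In the same step I would invoke the polar decomposition $T = U \, |T|$, where $U$ is a partial isometry with $\|U\| \leq 1$.

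Next I would factor $|T| = |T|^{1/2} |T|^{1/2}$ and rewrite each diagonal matrix element as an inner product of two vectors,
\[
\langle u_i \,|\, T u_i \rangle = \langle u_i \,|\, U |T|^{1/2} |T|^{1/2} u_i \rangle = \langle |T|^{1/2} U^{\dagger} u_i \,\big|\, |T|^{1/2} u_i \rangle .
\]
Applying Cauchy--Schwarz pointwise in $i$ and then Cauchy--Schwarz to the resulting sum over $i$ gives
\[
\sum_{i \in \mathscr{I}} \bigl| \langle u_i \,|\, T u_i \rangle \bigr| \leq \Bigl( \sum_{i} \big\| |T|^{1/2} U^{\dagger} u_i \big\|^2 \Bigr)^{1/2} \Bigl( \sum_{i} \big\| |T|^{1/2} u_i \big\|^2 \Bigr)^{1/2} .
\]

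It then remains to bound the two factors. The second is exactly $\sum_i \langle u_i \,|\, |T| u_i \rangle = \Tr |T| < \infty$; equivalently, this says $|T|^{1/2}$ is Hilbert--Schmidt with $\big\| |T|^{1/2} \big\|_{\mathrm{HS}}^2 = \Tr |T|$. For the first factor I would use the operator--Hilbert--Schmidt inequality $\| A B \|_{\mathrm{HS}} \leq \| A \|_{\mathrm{HS}} \, \| B \|$ with $A = |T|^{1/2}$ and $B = U^{\dagger}$, together with $\|U^{\dagger}\| = \|U\| \leq 1$, to conclude $\sum_i \big\| |T|^{1/2} U^{\dagger} u_i \big\|^2 = \big\| |T|^{1/2} U^{\dagger} \big\|_{\mathrm{HS}}^2 \leq \Tr |T|$. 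Both factors are therefore finite, and the displayed estimate yields $\sum_i \bigl| \langle u_i | T u_i \rangle \bigr| \leq \Tr |T| < \infty$, which is the claimed absolute convergence.

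The main obstacle is conceptual rather than computational: the bare definition \eqref{eq:traceoperator} only renders the sum meaningful through cancellation, so the essential move is to split $T$ into two factors each of which is individually Hilbert--Schmidt --- which is exactly where the trace-class condition, as opposed to mere compactness, is used --- so that Cauchy--Schwarz applies. Care is needed to carry out every manipulation with respect to the single fixed basis $\{u_i\}$ and to justify the basis-independence of $\Tr |T|$, since the singular-value eigenbasis of $|T|$ need not coincide with $\{u_i\}$.
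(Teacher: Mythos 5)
The paper states this proposition without proof (the chapter explicitly presents its functional-analytic background ``often without proofs''), so there is no in-paper argument to compare against; your proof is the standard textbook argument (cf.\ Reed--Simon) and it is correct. The chain polar decomposition $T = U|T|$, splitting $|T| = |T|^{1/2}|T|^{1/2}$, pointwise Cauchy--Schwarz, then Cauchy--Schwarz on the sum, gives exactly the sharp bound $\sum_{i} |\langle u_i \,|\, T u_i\rangle| \leq \Tr|T| < \infty$, and you correctly isolate where trace class (rather than mere compactness) enters: it makes $|T|^{1/2}$ Hilbert--Schmidt. Two small points are worth making fully explicit. First, when you identify $\sum_i \bigl\| |T|^{1/2} U^{\dagger} u_i \bigr\|^2$ with $\bigl\| |T|^{1/2} U^{\dagger} \bigr\|_{\mathrm{HS}}^2$ you are implicitly using that the sum $\sum_i \|C u_i\|^2$ is independent of the orthonormal basis; this follows from Parseval, via $\sum_i \|C u_i\|^2 = \sum_{i,j} |\langle v_j \,|\, C u_i\rangle|^2 = \sum_j \|C^{\dagger} v_j\|^2$, and should be stated since your whole computation is otherwise tied to the fixed basis $\{u_i\}$. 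Second, your appeal to the spectral theorem for the compact operator $|T|$ is harmless but not actually needed: the paper's Definition~\ref{def:traceoperator} already asserts basis-independence of the trace of a positive operator, so $\sum_i \langle u_i \,|\, |T| u_i\rangle = \Tr|T| < \infty$ follows directly from the trace-class hypothesis, and the argument goes through verbatim for any bounded trace-class $T$; compactness plays no essential role in the convergence estimate itself.
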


In the case of a compact positive operator, the trace is just the sum of all eigenvalues.

\begin{proposition} \label{prop:compactpositiveoperator}
Let $T : \mathscr{H} \to \mathscr{H}$ be a compact positive operator with non-zero eigenvalues $\lambda_i$, $i \in \mathbb{N}$. Then, its trace is given by
\begin{equation}
\Tr T = \sum_{i \in \mathbb{N}} \lambda_i \, .
\end{equation}
\end{proposition}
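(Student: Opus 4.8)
The plan is to exploit the basis-independence of the trace, which is already asserted in Definition~\ref{def:traceoperator}, by evaluating $\Tr T$ in an orthonormal basis tailored to $T$ itself. The right choice is a basis made up of eigenvectors of $T$: in such a basis the defining sum $\sum_i \langle u_i | T u_i \rangle$ collapses term by term, since each $u_i$ is then simply scaled by its eigenvalue. So the whole argument reduces to producing such an eigenbasis and then reading off the sum.

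First I would note that a bounded positive operator on a complex Hilbert space is automatically self-adjoint: positivity gives $\langle u | T u \rangle \geq 0$, in particular this number is real for every $u$, and a standard polarisation identity then forces $\langle u | T v \rangle = \langle T u | v \rangle$, i.e.\ $T = T^{\dagger}$. With self-adjointness in hand I would invoke the spectral theorem for compact self-adjoint operators to upgrade the earlier proposition (spectrum is pure point spectrum apart from the possible value $0$) into the sharper statement I actually need, namely that $\mathscr{H}$ possesses an orthonormal basis $\{ e_j \}_{j \in \mathscr{I}}$ consisting entirely of eigenvectors of $T$. Concretely, the non-zero eigenvalues furnish an at-most-countable orthonormal system of eigenvectors---each eigenvalue having \emph{finite} multiplicity because $T$ is compact---whose closed span is exactly $(\Ker T)^{\perp}$; by the projection theorem, Theorem~\ref{thm:projectiontheorem}, one has $\mathscr{H} = (\Ker T)^{\perp} \oplus \Ker T$, so adjoining any orthonormal basis of the kernel $\Ker T$ (the eigenspace for eigenvalue $0$) completes this to an orthonormal eigenbasis of all of $\mathscr{H}$. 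For the completeness claim I would cite a standard reference such as \cite{reed1980methods}. This completeness step is the only substantive part of the proof, and hence the main obstacle: the previously stated results guarantee that the non-zero spectrum consists of eigenvalues, but turning that into an actual orthonormal basis of eigenvectors is precisely the content of the spectral theorem and must be imported.

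With the eigenbasis fixed, the computation is then purely formal. Writing $T e_j = \mu_j e_j$ with $\mu_j \geq 0$, I would evaluate $\Tr T = \sum_{j \in \mathscr{I}} \langle e_j | T e_j \rangle = \sum_{j \in \mathscr{I}} \mu_j \langle e_j | e_j \rangle = \sum_{j \in \mathscr{I}} \mu_j$. The terms with $\mu_j = 0$ (the basis vectors drawn from $\Ker T$) contribute nothing, so the sum restricts to those $e_j$ with non-zero eigenvalue; counting each non-zero eigenvalue with its finite multiplicity, this is exactly $\sum_{i \in \mathbb{N}} \lambda_i$, which is the claim. I would also remark that because $T$ is positive, every summand $\mu_j$ is non-negative, so the series is a sum of non-negative terms converging unconditionally in $[0,\infty]$; this makes the identity meaningful and its rearrangement unproblematic even when $T$ fails to be of trace class, so no separate convergence argument beyond the monotone structure is required.
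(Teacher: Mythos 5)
The paper states this proposition without proof---it sits in the preliminaries chapter, where results are quoted from the functional-analysis literature---so there is no in-paper argument to compare against. Your proof is correct and is the standard one: positivity gives self-adjointness by polarisation, the Hilbert--Schmidt spectral theorem supplies an orthonormal eigenbasis (eigenvectors for the non-zero eigenvalues, each of finite multiplicity, spanning $(\Ker T)^{\perp}$, completed by any orthonormal basis of $\Ker T$ via Theorem~\ref{thm:projectiontheorem}), and evaluating the defining sum of Definition~\ref{def:traceoperator} in that basis---legitimate because basis-independence is asserted there---collapses it to $\sum_{i} \lambda_i$. Your closing remark is also apt: since every summand is non-negative, the identity holds unconditionally in $[0,\infty]$ with no trace-class hypothesis, which matches the proposition as stated; the only imported ingredient is the completeness of the eigensystem, correctly attributed to the spectral theorem in \cite{reed1980methods}.
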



\section{Complexification of real vector spaces}

The procedure of quantising a classical field theory characterised by a real vector space with some additional structure (such as a symplectic structure) involves the complexification of the real vector space. Here, we present a very brief description of this procedure, which can be found e.g. in \cite{roman1992advanced}.

Let $V$ be a complex vector space, i.e.~a vector space over $\mathbb{C}$. If one restricts the scalars to be real, the resulting vector space $V_{\mathbb{R}}$ is a real vector space and is called the \emph{real form} of $V$.

Conversely, to each real vector space $V$, i.e.~a vector space over $\mathbb{R}$, one can associate a complex vector space $V^{\mathbb{C}}$.

\begin{definition} \label{def:complexification}
Let $V$ be a real vector space. The \emph{complexification} of $V$ is the complex vector space $V^{\mathbb{C}} := V \oplus V$ of ordered pairs, with
\begin{enumerate}[label={(\roman*)}]
\item addition
\begin{equation}
(v_1, v_2) + (w_1, w_2) = (v_1 + w_1, v_2 + w_2) \, ,
\end{equation}
\item scalar multiplication over $\mathbb{C}$ defined by
\begin{equation}
(x+iy) (v_1, v_2) = (x v_1 - y v_2, x v_2 + y v_1) \, ,
\end{equation}
\end{enumerate}
where $x, \, y \in \mathbb{R}$ and $v_1, \, v_2, \, w_1, \, w_2 \in V$.
\end{definition}

It is convenient to introduce the notation $v + i w$ for $(v,w) \in V^{\mathbb{C}}$, such that one can regard the complexification of $V$ as
\begin{equation}
V^{\mathbb{C}} = V \oplus i V = \left\{ v + i w : v, \, w \in V \right\} \, .
\end{equation}
Addition now resembles addition of complex numbers,
\begin{equation}
(v_1 + i v_2) + (w_1 + i w_2) = (v_1 + w_1) + i (v_2 + w_2) \, ,
\end{equation}
and the scalar multiplication resembles multiplication of complex numbers,
\begin{equation}
(x + iy) (v_1 + i v_2) = (x v_1 - y v_2) + i (x v_2 + y v_1) \, .
\end{equation}

Now consider a real vector space $V$ endowed with a bilinear form $\nu$, cf.~Definition~\ref{def:bilinearform}. The complexified space $V^{\mathbb{C}}$ can be endowed with a natural sesquilinear form $\nu_{\mathbb{C}}$.

\begin{definition} \label{def:canonicalsesquilinearextension}
Given a real vector space $V$ endowed with a bilinear form $\nu : V \times V \to \mathbb{R}$, the \emph{canonical sesquilinear extension} of $\nu$ to the complexified vector space $V^{\mathbb{C}}$ is the sesquilinear form $\nu_{\mathbb{C}} : V^{\mathbb{C}} \times V^{\mathbb{C}} \to \mathbb{C}$ defined by
\begin{equation}
\nu_{\mathbb{C}}(v_1 + i v_2, w_1 + i w_2) := \nu(v_1,w_1) + \nu(v_2,w_2) + i \left[ \nu(v_1,w_2) - \nu(w_1,v_2) \right] \, .
\end{equation}
\end{definition}

This extension maps (anti-)symmetric bilinear forms on $V$ to (anti-)Hermitian sesquilinear forms on $V^{\mathbb{C}}$.


\section{Distributions}
\label{sec:maths-distributions}

In this section, we present a brief overview of theory of distributions (or generalised functions). A more complete discussion can be found e.g.~in \cite{hormander1990analysis}. All the mathematical objects are assumed to be defined in an open set $U \subset \mathbb{R}^d$, which can be thought as a chart on the manifold $M$.

\begin{definition}
A \emph{distribution} is a continuous linear functional on $C_0^{\infty}(U)$, i.e.~a mapping of the form $f \mapsto \Phi(f)$, with $f \in C_0^{\infty}(U)$, such that, for every compact set $K \subset U$ there exists constants $C$ and $k$ such that 
\begin{equation}
\left| \Phi(f) \right| \leq C \sum_{|\alpha| \leq k} \sup \left| \partial^{\alpha} f \right| \, .
\end{equation}
Here, $\alpha = (\alpha_1, ..., \alpha_{d}) \in \mathbb{N}_0^d$ is a multi-index, $|\alpha| := \sum_{\mu=1}^{d} \alpha_{\mu}$, $\partial^{\alpha} := \prod_{\mu=1}^{d} \partial_{\mu}^{\alpha_{\mu}}$ and $\partial_{\mu}$ is the partial derivative with respect to $x^{\mu}$. The space of all distributions on $U$ is denoted by $C_0^{\infty}(U)^*$, the dual of $C_0^{\infty}(U)$.
\end{definition}

\begin{remark} \label{rem:distributionnotation}
Any locally integrable function $\Phi \in L^1_{\rm loc}(U)$ can be identified with a distribution by
\begin{equation} \label{eq:distributioneq}
f \mapsto \Phi(f) = \int_U \dd^d x \, f(x) \Phi(x) \, , \qquad f \in C_0^{\infty}(U) \, ,
\end{equation}
A distribution of this form is called a \emph{regular distribution}. However, not all distributions can be represented in this way. The most famous example is the Dirac delta distributions, $f \mapsto \delta(f)$, defined by
\begin{equation}
\delta(f) = f(0) \, , \qquad f \in C_0^{\infty}(U) \, .
\end{equation}
This is an example of a \emph{singular distribution}. However, it is useful to continue to represent distributions as in \eqref{eq:distributioneq}, so we pretend that there exists an ``object'' $\delta(x)$ such that
\begin{equation}
f \mapsto \delta(f) = \int_U \dd^d x \, f(x) \delta(x) \, , \qquad f \in C_0^{\infty}(U) \, .
\end{equation}
The ``object'' $\delta(x)$ is not a function and cannot be evaluated pointwise! It should only be thought as convenient notation, which allows us to use the language of ordinary functions when referring to distributions.
\end{remark}

\begin{remark} \label{rem:notationdistributions}
Another convenient notation for a distribution $\Phi \in C_0^{\infty}(U)^*$ is
\begin{equation}
f \mapsto \Phi(f) = (\Phi, f) \, , \qquad f \in C_0^{\infty}(U) \, .
\end{equation}
This pairing of the distribution $\Phi$ with the compactly supported function $f$ takes the same form as the scalar product in the Lebesgue space $L^2(U)$.
\end{remark}

\begin{definition}
If $\Phi \in C_0^{\infty}(U)^*$, then the \emph{support} of $\Phi$, denoted $\supp \Phi$, is the smallest closed subset $V \subset U$ such that $\Phi|_{U \setminus V} = 0$.
\end{definition}

\begin{remark}
If $\Phi \in L^1_{\rm loc}(U)$, the expression
\begin{equation}
\Phi(f) = \int_U \dd^d x \, f(x) \Phi(x) \, , \qquad f \in C_0^{\infty}(U) \, .
\end{equation}
is well defined for any $f \in C^{\infty}(U)$ if $\supp \Phi \cap \supp f$ has compact closure and is contained in $U$. It can furthermore be shown that the space of distributions in $U$ with compact support is the dual space of $C^{\infty}(U)$.
\end{remark}

\begin{definition} \label{def:distributionscompactsupp}
The space of distributions in $U$ with compact support is denoted by $C^{\infty}(U)^*$, the dual of $C^{\infty}(U)$.
\end{definition}

\begin{remark}
One has the following inclusions
\begin{gather}
C_0^{\infty}(U) \subset C^{\infty}(U)^* \subset C_0^{\infty}(U)^* \, , \\
C^{\infty}(U) \subset C_0^{\infty}(U)^* \, .
\end{gather}
It can further be shown that $C_0^{\infty}(U)$ is dense in $C^{\infty}(U)^*$ and $C_0^{\infty}(U)^*$ (see \cite{hormander1990analysis} for more details).
\end{remark}

Finally, we want to define differentiation of distributions. If $\Phi$ is such that $\partial_{x_i} \Phi$ is a regular distribution of the form \eqref{eq:distributioneq}, we have that
\begin{equation} 
\int_U \dd^d x \, f(x) \partial_{x_i} \Phi(x) = - \int_U \dd^d x \, \partial_{x_i}  f(x) \Phi(x)
\end{equation}
by integration by parts, since $f$ has compact support. For an arbitrary distribution, we have the following.

\begin{definition}
The \emph{partial derivative} $\partial_{x_i} \Phi$ of $\Phi \in C_0^{\infty}(U)^*$ is defined by
\begin{equation}
\left(\partial_{x_i} \Phi \right)(f) := - \Phi \left( \partial_{x_i} f \right) \, , \qquad f \in C_0^{\infty}(U) \, .
\end{equation}
\end{definition}


\section{Hyperbolic differential operators and Green operators}
\sectionmark{Hyperbolic operators and Green operators}
\label{sec:hyperbolicoperators}

A classical field on a fixed spacetime will obey a wave-like partial differential equation, subject to initial or boundary conditions, which specifies the dynamics of the field theory. In this section, a brief description of a  subclass of hyperbolic partial differential equations is given, of which the Klein-Gordon equation is an important example. There is a vast amount of literature on this topic, of which \cite{Bar:2007zz,Benini:2013fia} are just two examples which have been used here.

\begin{remark}
Formally, a classical field can be thought as a section of a vector bundle $E$ over the spacetime manifold $M$ with fibre $V$. In this section, we will only consider the case in which $M$ is globally hyperbolic and the vector bundle is trivial $E = M \times V$, such that the space of sections is isomorphic to $C^{\infty}(M;V)$. For example, for a real untwisted scalar field on a globally hyperbolic spacetime $M$, the relevant vector bundle is the line bundle $E = M \times \mathbb{R}$ and the scalar field is then a real-valued function on $M$. Therefore, we will not consider the most general case of non-trivial vector bundles, which is treated in detail in \cite{Bar:2007zz,Benini:2013fia}.
\end{remark}

\subsection{Normally hyperbolic operators}

Let $M$ be a $d$-dimensional globally hyperbolic spacetime and let $V$ be a vector space over $\mathbb{R}$ (the generalisation to $\mathbb{C}$ is straightforward).

\begin{definition}
A smooth $V$-valued function on $M$, $\Phi \in C^{\infty}(M;V)$, will be called a \emph{classical field}.
\end{definition}

The dynamics of a linear classical field will be given by a linear partial differential equation, whose building block is a linear partial differential operator.

\begin{definition}
A \emph{linear partial differential operator} of order at most $k \in \mathbb{N}_0$ is a linear map $L : C^{\infty}(M;V) \to C^{\infty}(M;V)$ such that, for all $p \in M$, there exists a coordinate chart $(U, \phi)$ centred at $p$ and a collection of smooth maps $A_{\alpha} : U \to \End(V)$ for which, given any $f \in C^{\infty}(M;V)$, one has
\begin{equation}
L f = \sum_{|\alpha| \leq k} A_{\alpha} \, \partial^{\alpha} f \quad \text{on $U$.}
\end{equation}
Here, $\alpha = (\alpha_0, ..., \alpha_{d-1}) \in \mathbb{N}_0^d$ is a multi-index, $|\alpha| := \sum_{\mu=0}^{d-1} \alpha_{\mu}$, $\partial^{\alpha} := \prod_{\mu=0}^{d-1} \partial_{\mu}^{\alpha_{\mu}}$ and $\partial_{\mu}$ is the partial derivative with respect to the coordinate $x^{\mu}$ from the chart $(U, \phi)$.
\end{definition}

\begin{remark}
A more general definition of a linear partial differential operator would be of a a linear map $L : C^{\infty}(M;V) \to C^{\infty}(M;V')$, where $V'$ is another vector space, but for our purposes, having $V' = V$ is enough.
\end{remark}

An important notion is the one of the formal adjoint of an operator. To define that, one adds an additional structure to the space $C^{\infty}(M;V)$.

\begin{definition}
A non-degenerate pairing $(\cdot, \cdot) : C_0^{\infty}(M;V) \times C^{\infty}(M;V) \to \mathbb{R}$ is
\begin{equation}
(f, g) := \int_M \dvol_M f \cdot g \, ,
\label{eq:pairingCinftyMV}
\end{equation}
where $\cdot : C_0^{\infty}(M;V) \times C^{\infty}(M;V) \to \mathbb{R}$ is a non-degenerate bilinear form and $\dvol_M$ is the metric-induced volume form on $M$.
\end{definition}

\begin{remark}
The pairing in \eqref{eq:pairingCinftyMV} can also be defined for $f, \, g \in C^{\infty}(M;V)$ for which $\supp f \cap \supp g$ is compact, so that the integral is well-defined.
\end{remark}

An example of such a pairing is $\langle \cdot | \cdot \rangle : C_0^{\infty}(M;\mathbb{R}) \times C^{\infty}(M;\mathbb{R})$ given by
\begin{equation}
\langle f | g \rangle = \int_M \dvol_M(x) \, f(x) \, g(x) \, .
\end{equation}
This takes the same form as the scalar product in the Lebesgue space $L^2(M, \dvol_M)$.

\begin{definition} \label{def:formaladjoint}
Given a linear partial differential operator $L : C^{\infty}(M;V) \to C^{\infty}(M;V)$, the \emph{formal adjoint} of $L$ is the linear partial differential operator $L^* : C^{\infty}(M;V) \to C^{\infty}(M;V)$ such that
\begin{equation}
\left( L^* f, g \right) = (f, L g)
\end{equation}
for all $f, \, g \in C^{\infty}(M;V)$ for which $\supp f \cap \supp g$ is non-empty and compact. If $L = L^*$, we call $L$ \emph{formally self-adjoint}.
\end{definition}

In the context of field theory, the focus is on linear partial differential operators which can be associated with an initial value problem. The class of operators of interest is the class of normally hyperbolic operators. To define these, one needs the concept of the principal symbol of a differential operator.

\begin{definition}
Let $L : C^{\infty}(M;V) \to C^{\infty}(M;V)$ be a linear partial differential operator of order $k$. Given $p \in M$ and a coordinate chart $(U, \phi)$ centred at $p$, the \emph{principal symbol} $S_L : T_p^*M \to \End(V)$ is defined locally as
\begin{equation}
S_L (\zeta) := \sum_{|\alpha| = k} A_{\alpha}(p) \, \zeta^{\alpha} \, .
\end{equation}
Here, $\zeta \in T_p^*M$, $\zeta^{\alpha} := \prod_{\mu=0}^{d-1} \zeta^{\alpha_{\mu}}_{\mu}$ and $\zeta_{\mu}$ are the components of $\zeta$ with respect to the chart $(U, \phi)$. 
\end{definition}

\begin{definition}
Given a Lorentzian manifold $(M,g)$, a second order linear differential operator $P : C^{\infty}(M;V) \to C^{\infty}(M;V)$ is called \emph{normally hyperbolic} if $S_L(\zeta) = g^{-1}(\zeta, \zeta) \, \mathbb{I}_V$ for all $\zeta \in T_p^*M$.
\end{definition}

\begin{remark}
In a given coordinate chart $(U, \phi)$, a normally hyperbolic operator $P$ is such that, for any $f \in C^{\infty}(M;V)$,
\begin{equation}
P f = g^{\mu\nu} \mathbb{I}_V \partial_{\mu} \partial_{\nu} f + A^{\mu} \partial_{\mu} f + A f \quad \text{on $U$,}
\end{equation}
where $A$, $A^{\mu} \in \End(V)$, $\mu = 0, ..., d-1$.
\end{remark}

\begin{remark}
The d'Alembert operator $\nabla^2 = g^{\mu\nu} \nabla_{\mu} \nabla_{\nu}$ and the Klein-Gordon operator $\nabla^2 - m^2 \mathbb{I}_V$, $m \in \mathbb{R}$, are examples of normally hyperbolic operators.
\end{remark}

\subsection{Cauchy problem}

The importance of normally hyperbolic operators in field theory is that, if a classical field is defined on a globally hyperbolic spacetime and if the partial differential operator associated to its field equation is normally hyperbolic, then one has a well-posed Cauchy problem, as described in the next theorem.

\begin{theorem} \label{thm:initialvalueproblem}
Let $M$ be a globally hyperbolic spacetime and let $\Sigma$ be a spacelike Cauchy surface whose future-directed unit normal vector field is denoted by $n$. Furthermore, let $V$ be a vector space and $P : C^{\infty}(M;V) \to C^{\infty}(M;V)$ a normally hyperbolic operator. Then, for any $j, \, \Phi_0, \, \Phi_1 \in C_0^{\infty}(M;V)$, the following Cauchy problem,
\begin{equation}
\left\{
\begin{array}{ll}
P \Phi = j \, , \\
\Phi \big|_{\Sigma} = \Phi_0 \, , \\
\nabla_n \Phi \big|_{\Sigma} = \Phi_1 \, ,
\end{array}
\right.
\label{eq:initialvalueproblem}
\end{equation}
admits a unique solution $\Phi \in C^{\infty}(M;V)$, such that
\begin{equation}
\supp \Phi \subset J \left( \supp \Phi_0 \cup \supp \Phi_1 \cup \supp j \right) \, .
\end{equation}
\end{theorem}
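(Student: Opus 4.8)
The plan is to follow the standard theory of normally hyperbolic operators on globally hyperbolic spacetimes (as developed in \cite{Bar:2007zz,Benini:2013fia}), organising the argument around three ingredients: an energy estimate yielding uniqueness together with the finite-propagation (support) property, local solvability, and a globalisation step built on the Cauchy foliation of Theorem~\ref{thm:globhypspacetimefoliation}. Because the problem \eqref{eq:initialvalueproblem} is linear, I would first reduce to the homogeneous case: it suffices to produce one particular solution of $P\Phi = j$ and to solve $P\Phi = 0$ with the prescribed Cauchy data $(\Phi_0,\Phi_1)$ on $\Sigma$. The support inclusion for the general problem then follows by superposition, the relevant set being $J(\supp j \cup \supp \Phi_0 \cup \supp \Phi_1)$.

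The heart of the argument is an energy estimate, which I would use to establish both uniqueness and the support property simultaneously. Writing $P = \nabla^2 \mathbb{I}_V + A^\mu \nabla_\mu + A$ and fixing an auxiliary inner product on $V$, I would attach to a solution a future-directed causal current whose flux through a spacelike slice is a positive-definite energy. Applying the divergence theorem on a truncated causal region $J^-(p) \cap J^+(\Sigma)$ and bounding the bulk term by the energy itself (the coefficients $A^\mu, A$ being smooth, hence locally bounded) yields a Gr\"onwall-type inequality. If the Cauchy data vanish on $\Sigma \cap J^-(p)$, this forces the energy, and hence $\Phi$, to vanish throughout $J^-(p) \cap J^+(\Sigma)$. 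Running the same estimate towards the past gives the full statement: in the homogeneous case $\Phi(p) = 0$ whenever $p \notin J(\supp \Phi_0 \cup \supp \Phi_1)$, which is exactly finite propagation speed and, on letting the data vanish, uniqueness.

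For local existence I would work in a small causal, geodesically convex neighbourhood (Definition~\ref{def:geodconvexneighbourhood}), where the Synge world function is available, and build advanced and retarded fundamental solutions of $P$ from Riesz distributions. Convolving these against $j$ and the Cauchy data produces a local solution of \eqref{eq:initialvalueproblem}; equivalently, one may recast the second-order equation as a first-order symmetric hyperbolic system and invoke the standard local existence theory for such systems. Either route yields, on each sufficiently small slab, a smooth solution with the expected causal support.

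Finally, I would globalise using the isometry $M \cong \mathbb{R} \times \Sigma$ of Theorem~\ref{thm:globhypspacetimefoliation}. Exhausting $M$ by slabs $[t_i, t_{i+1}] \times \Sigma$ on which the local theory applies, I would solve successively, taking the Cauchy data for each slab from the restriction of the solution on the previous one; uniqueness from the energy estimate guarantees that the local pieces agree on overlaps and patch into a single smooth $\Phi$ on all of $M$. I expect the main obstacle to be the energy estimate itself: establishing positive-definiteness of the energy for a general normally hyperbolic $P$ (not merely $\nabla^2$) and controlling the bulk term uniformly on the truncated causal region requires care with the lower-order coefficients and with the geometry of the slicing, and it is precisely this estimate that underwrites uniqueness, finite propagation speed, and the consistency of the globalisation.
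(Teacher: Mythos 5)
Your proposal is correct and is essentially an outline of the argument the paper invokes: the paper's own ``proof'' is a one-line citation to Theorems 3.2.11 and 3.2.12 of \cite{Bar:2007zz}, whose proof proceeds exactly along the lines you describe --- reduction by linearity, local existence via fundamental solutions built from Riesz distributions on geodesically convex neighbourhoods, and globalisation by patching along the Cauchy foliation $M \cong \mathbb{R} \times \Sigma$ of Theorem~\ref{thm:globhypspacetimefoliation}, with uniqueness guaranteeing consistency of the patches. The only notable variation is that the cited reference obtains uniqueness and the causal support property by a duality argument, testing against local fundamental solutions of the formal adjoint operator, whereas you use a Gr\"onwall-type energy estimate on truncated causal regions; both routes are standard and yours is sound, provided one checks (as you note) compactness of the regions $J^-(p) \cap J^+(\Sigma)$, which is where global hyperbolicity enters.
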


\begin{proof}
See e.g.~Theorems 3.2.11 and 3.2.12 of \cite{Bar:2007zz}.
\end{proof}

\begin{remark}
Even though the Cauchy problem given in \eqref{eq:initialvalueproblem} has a non-vanishing source term, in this thesis only the $j = 0$ case will be considered.
\end{remark}

\subsection{Green operators}

One important consequence of Theorem~\ref{thm:initialvalueproblem} is the existence and uniqueness of the so-called Green operators associated with a normally hyperbolic operator on a globally hyperbolic spacetime.

\begin{definition} \label{def:Greenoperators}
Let $L : C^{\infty}(M;V) \to C^{\infty}(M;V)$ be a linear partial differential operator. The linear maps $G_{\rm ret}, \, G_{\rm adv} : C_{\rm tc}^{\infty}(M;V) \to C^{\infty}(M;V)$ are the \emph{retarded} and \emph{advanced Green operators} for $L$, respectively, if, for any $f \in C_{\rm tc}^{\infty}(M;V)$,
\begin{enumerate}[label={(\roman*)}]
\item $L G_{\rm ret} f = L G_{\rm adv} f = f$;
\item $G_{\rm ret} L f = G_{\rm adv} L f = f$;
\item $\supp \left( G_{\rm ret} f \right) \subset J^+ \left( \supp f\right)$ and $\supp \left( G_{\rm adv} f \right) \subset J^- \left( \supp f\right)$.
\end{enumerate}
\end{definition}

\begin{figure}[t!]
\begin{center}
\def\svgwidth{0.55\textwidth}
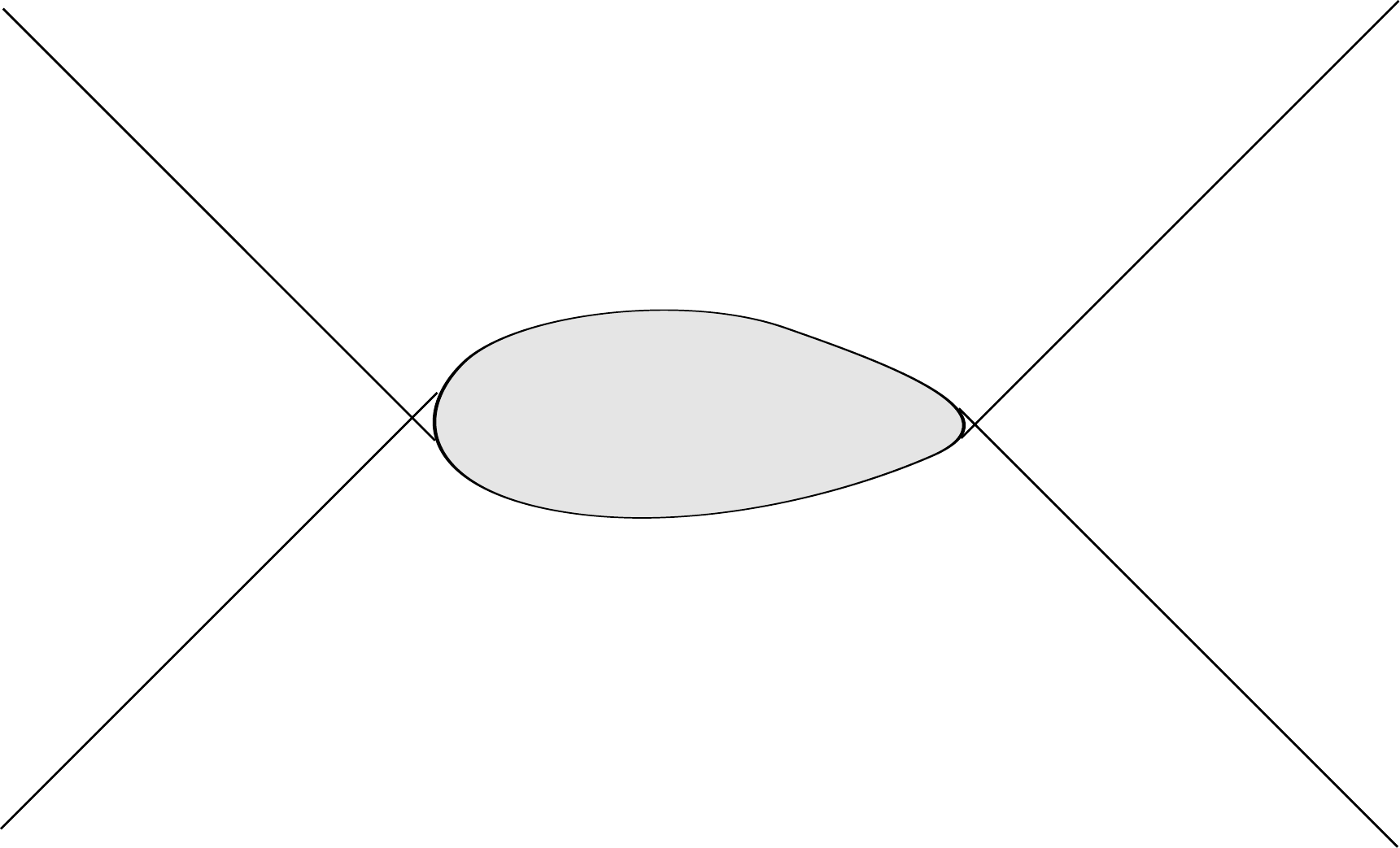
\caption{\label{fig:causalsuppf} The causal future and causal past of $\supp f$ for $f \in C^{\infty}_0(M)$.}
\end{center}
\end{figure}

Not all linear partial differential operators have Green operators.

\begin{definition}
A linear partial differential operator $L : C^{\infty}(M;V) \to C^{\infty}(M;V)$ is called \emph{Green hyperbolic} if it admits advanced and retarded Green operators.
\end{definition}

\begin{remark} \label{rem:uniqueGreenoperators}
A linear partial differential operator $L$ has unique Green operators if $L$ and its formal adjoint $L^*$ are Green hyperbolic. In particular, the Green operators are unique if $L$ is formally self-adjoint and Green hyperbolic. It can be shown that $G_{\rm ret}^* = G_{\rm adv}$ and $G_{\rm adv}^* = G_{\rm ret}$.
\end{remark}

Another convenient Green operator is the causal propagator.

\begin{definition} \label{def:causalpropagator}
The operator $G := G_{\rm adv} - G_{\rm ret}$ is called the \emph{causal propagator}.
\end{definition}

\begin{proposition} \label{prop:causalpropagator}
The causal propagator $G : C_{\rm tc}^{\infty}(M;V) \to C^{\infty}(M;V)$ satisfies, for any given $f \in C_{\rm tc}^{\infty}(M;V)$:
\begin{enumerate}[label={(\roman*)}]
\item $L G f = 0$;
\item $G L f = 0$;
\item $\supp \left( G f \right) \subset J \left( \supp f\right)$.
\end{enumerate}
\end{proposition}

\begin{proof}
It follows directly from the properties (i)-(iii) in Definition~\ref{def:Greenoperators} satisfied by the advanced and retarded Green operators.
\end{proof}

\begin{remark} \label{rem:Goncompactsuppfunctions}
Note that property (iii) in Proposition~\ref{prop:causalpropagator} implies that the causal propagator $G$ maps functions with compact support to functions with spacelike compact support (cf.~Definition~\ref{def:spacetimelikecompactfunctions}, see also Fig.~\ref{fig:causalsuppf}).
\end{remark}

\begin{remark} \label{rem:adjointGreenoperator}
It follows from Remark~\ref{rem:uniqueGreenoperators} that $G^* = - G$.
\end{remark}

\begin{remark} \label{rem:Gbidistribution}
The causal propagator can be regarded as a bi-distribution, $G \in C_{\rm tc}^{\infty}(M \times M;V)^*$, so that, for $f_1, \, f_2 \in C_{\rm tc}^{\infty}(M;V)$,
\begin{equation}
G(f_1,f_2) = \int_M \dvol_M (x) f_1(x) (G f_2) (x) \, ,
\end{equation}
where
\begin{equation} \label{eq:Gfx}
(G f)(x) := \int_M \dvol_M (x') \, G(x,x') f(x') \, ,
\end{equation}
and $G(x,x')$ is to be understood in the sense of distributions.
\end{remark}

It is an important fact that a normally hyperbolic operator is automatically a Green hyperbolic operator.

\begin{proposition}
Let $M$ be a globally hyperbolic spacetime. If $P : C^{\infty}(M;V) \to C^{\infty}(M;V)$ is normal hyperbolic, then it is also Green hyperbolic.
\end{proposition}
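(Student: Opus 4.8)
The plan is to construct the retarded and advanced Green operators directly from the well-posed Cauchy problem of Theorem~\ref{thm:initialvalueproblem}, thereby exhibiting $P$ as Green hyperbolic in the sense of Definition~\ref{def:Greenoperators}. The retarded operator will be obtained by solving the inhomogeneous equation $P u = f$ with trivial Cauchy data on a spacelike slice lying to the past of the source, so that the unique solution vanishes before the source ``switches on''; the advanced operator is then obtained by the time-reversed argument. The foliation $M \cong \mathbb{R} \times \Sigma$ supplied by Theorem~\ref{thm:globhypspacetimefoliation} provides the family of Cauchy surfaces $\Sigma_{t}$ needed to make this precise.

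First I would treat a source $f \in C^\infty_0(M;V)$ of compact support. Using the foliation, choose $t_0$ so that $\supp f \subset I^+(\Sigma_{t_0})$, and let $u$ be the unique solution of the Cauchy problem $P u = f$, $u|_{\Sigma_{t_0}} = 0$, $\nabla_n u|_{\Sigma_{t_0}} = 0$ furnished by Theorem~\ref{thm:initialvalueproblem}; set $G_{\rm ret} f := u$. Property~(i) of Definition~\ref{def:Greenoperators}, namely $P G_{\rm ret} f = f$, then holds by construction. For the support property, the restriction of $u$ to the region lying to the past of $\Sigma_{t_0}$ solves the homogeneous equation with vanishing Cauchy data and hence vanishes by uniqueness, giving $\supp u \subset J^+(\Sigma_{t_0})$. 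Property~(ii), $G_{\rm ret} P f = f$, follows because $G_{\rm ret}(P f)$ and $f$ both solve $P v = P f$ and both vanish in the far past, so their difference is a homogeneous solution with trivial data and therefore vanishes identically. The advanced operator $G_{\rm adv}$ is defined in the same way, with $I^+$ replaced by $I^-$ and the slice taken to the future of $\supp f$.

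Next I would promote the construction to timelike compact sources $f \in C^\infty_{\rm tc}(M;V)$, as required by the domain in Definition~\ref{def:Greenoperators}. Writing $f = \sum_i f_i$ as a locally finite sum of compactly supported pieces via a partition of unity, I would set $G_{\rm ret} f := \sum_i G_{\rm ret} f_i$. The timelike compactness of $\supp f$ (Definition~\ref{def:spacetimelikecompactfunctions}) ensures that for any compact neighbourhood $\bar N$ the set $\supp f \cap J^-(\bar N)$ is compact, so by local finiteness of the partition only finitely many of the futures $J^+(\supp f_i)$ meet $\bar N$; the sum is therefore locally finite and defines a smooth field, and properties~(i)--(iii) pass over term by term. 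This simultaneously establishes existence of both Green operators and hence that $P$ is Green hyperbolic.

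The main obstacle I anticipate is sharpening the support from the two-sided causal shadow $J(\supp f)$, which is all that Theorem~\ref{thm:initialvalueproblem} delivers directly, down to the one-sided $J^+(\supp f)$ required by property~(iii): the inclusion $\supp u \subset J^+(\Sigma_{t_0}) \cap J(\supp f)$ still leaves the region between $\Sigma_{t_0}$ and $\supp f$ to be excluded. This needs the finite-propagation-speed (domain of dependence) property, to the effect that the value of $u$ at a point $p$ is determined solely by the source and data in $J^-(p)$, so that $u(p)=0$ whenever $J^-(p) \cap \supp f = \varnothing$. A secondary technical point is checking that the construction is independent of the auxiliary Cauchy surface and that the extension to timelike compact support is genuinely well defined; both reduce to the uniqueness half of Theorem~\ref{thm:initialvalueproblem}.
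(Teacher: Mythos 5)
Your construction is correct and is essentially the proof the paper invokes: the thesis itself gives no in-text argument but defers to Corollary 3.4.3 of \cite{Bar:2007zz}, where the Green operators are built in exactly this way, by solving the inhomogeneous Cauchy problem with trivial data on a slice to the past (resp.\ future) of the source and sharpening the support to $J^{\pm}(\supp f)$ via the domain-of-dependence property. Your extension from $C^{\infty}_0$ to $C^{\infty}_{\rm tc}$ by a locally finite partition of unity, with well-definedness reduced to uniqueness, is likewise the standard completion needed to match the domain in Definition~\ref{def:Greenoperators}.
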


\begin{proof}
See Corollary 3.4.3 of \cite{Bar:2007zz}.
\end{proof}

\begin{remark}
The converse is not true in general. An important example is the Dirac operator, which is Green hyperbolic, but not normally hyperbolic (see e.g.~\cite{Benini:2013fia}).
\end{remark}

The importance of the Green operators comes from the fact that a solution $\Phi$ of a partial differential equation $P \Phi = 0$, where $P$ is Green hyperbolic, can be written as $\Phi = G f$, i.e., 
\begin{equation}
\Phi (x) = (G f)(x) = \int_M \dvol_M (x') \, G(x,x') f(x') \, ,
\end{equation}
where $G$ is the causal propagator associated with $P$ and $f$ is a function on the manifold. More precisely:

\begin{theorem}
Let $M$ be a globally hyperbolic spacetime, $V$ be a vector space and $P : C^{\infty}(M;V) \to C^{\infty}(M;V)$ be a Green hyperbolic operator with Green hyperbolic formal adjoint $P^*$, such that the associated causal propagator $G : C_{\rm tc}^{\infty}(M;V) \to C^{\infty}(M;V)$ is unique (cf.~Remark~\ref{rem:uniqueGreenoperators}). One has that
\begin{equation}
\Ker P = \ImC G = G \left[ C_{\rm tc}(M;V) \right] \, .
\end{equation}
\end{theorem}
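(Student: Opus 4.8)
The plan is to establish the two inclusions $\ImC G \subseteq \Ker P$ and $\Ker P \subseteq \ImC G$ separately; since $\ImC G = G\bigl[C^{\infty}_{\rm tc}(M;V)\bigr]$ holds by the very definition of the image of the operator $G$ on its domain $C^{\infty}_{\rm tc}(M;V)$, the three sets then coincide. The inclusion $\ImC G \subseteq \Ker P$ is immediate from Proposition~\ref{prop:causalpropagator}(i): for any $f \in C^{\infty}_{\rm tc}(M;V)$ one has $P(Gf)=0$, so every element $Gf$ of the image lies in $\Ker P$.

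The reverse inclusion is the substantial part. Given $\Phi \in C^{\infty}(M;V)$ with $P\Phi = 0$, the goal is to exhibit an explicit $f \in C^{\infty}_{\rm tc}(M;V)$ with $\Phi = Gf$. Using the foliation $M \cong \mathbb{R}\times\Sigma$ of Theorem~\ref{thm:globhypspacetimefoliation}, I would choose two Cauchy surfaces $\Sigma_1, \Sigma_2$ with $\Sigma_2 \subset I^+(\Sigma_1)$ together with a smooth cutoff $\chi$ satisfying $\chi \equiv 0$ on $J^-(\Sigma_1)$ and $\chi \equiv 1$ on $J^+(\Sigma_2)$, and then set $f := -P(\chi\Phi)$. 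Because $\chi$ is locally constant outside the slab $J^+(\Sigma_1)\cap J^-(\Sigma_2)$ and $P\Phi = 0$ there, the function $f$ is supported inside this slab. One checks the slab is timelike compact — for compact $K$, both $J^+(K)\cap J^-(\Sigma_2)$ and $J^-(K)\cap J^+(\Sigma_1)$ are compact in a globally hyperbolic spacetime — so that indeed $f \in C^{\infty}_{\rm tc}(M;V)$.

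The crux is to evaluate $Gf = G_{\rm adv}f - G_{\rm ret}f$ and show it equals $\Phi$. The key observations are that $\chi\Phi$ has past-compact support while $(\chi-1)\Phi$ has future-compact support, and that $P(\chi\Phi) = -f$ as well as $P\bigl((\chi-1)\Phi\bigr) = -f$ (the latter using $P\Phi = 0$). Applying the relation $G_{\rm ret}P = \mathrm{id}$ to the past-compact function $\chi\Phi$ should give $G_{\rm ret}f = -\chi\Phi$, and applying $G_{\rm adv}P = \mathrm{id}$ to the future-compact function $(\chi-1)\Phi$ should give $G_{\rm adv}f = (1-\chi)\Phi$; subtracting yields $Gf = (1-\chi)\Phi + \chi\Phi = \Phi$, completing the inclusion $\Ker P \subseteq \ImC G$.

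I expect the main obstacle to lie precisely in justifying these last two identities, because $\chi\Phi$ and $(\chi-1)\Phi$ are \emph{not} timelike compact and so fall outside the domain $C^{\infty}_{\rm tc}(M;V)$ on which the Green operators were defined in Definition~\ref{def:Greenoperators}. Making this rigorous requires extending $G_{\rm ret}$ (resp.\ $G_{\rm adv}$) to functions with past-compact (resp.\ future-compact) support and checking that $G_{\rm ret}P = PG_{\rm ret} = \mathrm{id}$ persists there. This extension, and the identification of $-\chi\Phi$ and $(1-\chi)\Phi$ with $G_{\rm ret}f$ and $G_{\rm adv}f$, ultimately rest on the uniqueness of solutions to the Cauchy problem (Theorem~\ref{thm:initialvalueproblem}) combined with the causal support property (iii) of Definition~\ref{def:Greenoperators}; the careful support bookkeeping needed to pin down these candidate solutions is where the argument must be made precise.
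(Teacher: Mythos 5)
Your proposal is correct and is essentially the proof the paper relies on: the paper itself gives no argument, deferring to Theorem 3.4.7 of \cite{Bar:2007zz}, whose proof is exactly your cutoff construction $f = -P(\chi\Phi)$ with $G_{\rm ret}f = -\chi\Phi$ and $G_{\rm adv}f = (1-\chi)\Phi$ (and you have even matched this paper's sign convention $G = G_{\rm adv} - G_{\rm ret}$, as well as verified that $\supp f$ lies in a timelike compact slab). The technical step you flag --- extending $G_{\rm ret}$ and $G_{\rm adv}$ to sections with past- and future-compact support and checking that the identities of Definition~\ref{def:Greenoperators} persist, which rests on uniqueness for the Cauchy problem together with the causal support property (iii) --- is precisely the ingredient supplied in the cited reference, so your account of where the remaining work lies is accurate.
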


\begin{proof}
See Theorem 3.4.7 of \cite{Bar:2007zz}.
\end{proof}

In other words, the space of smooth solutions of $P \Phi = 0$ is given by the image of the causal propagator $G$.

\begin{definition}
The space of smooth solutions of $P \Phi = 0$ will be denoted by $\mathscr{S}$, i.e.
\begin{equation}
\mathscr{S} := \left\{ \Phi \in C^{\infty}(M;V) : P \Phi = 0 \right\} \, .
\end{equation}
\end{definition}

When trying to endow the space of solutions with additional structure, such as a symplectric structure, it will be important to consider a vector subspace of $\mathscr{S}$ on which such structure can be well defined. According to Remark~\ref{rem:Goncompactsuppfunctions}, if one acts $G$ on $C^{\infty}_0(M;V)$, instead of $C^{\infty}_{\rm tc}(M;V)$, one obtains functions with spacelike compact support, i.e.~$G \left[C_0^{\infty}(M;V) \right] \subset C_{\rm sc}^{\infty}(M;V)$. Therefore, the space of smooth solutions with spacelike compact of $P \Phi = 0$ can be defined.

\begin{definition} \label{def:spacesolutionsspacelikecompactsupport}
The space of smooth solutions of $P \Phi = 0$ with spacelike compact support will be denoted by $\mathscr{S}_{\rm sc}$, i.e.
\begin{equation} 
\mathscr{S}_{\rm sc} := \left\{ \Phi \in C^{\infty}_{\rm sc}(M;V) : P \Phi = 0 \right\} \subset \mathscr{S} \, .
\end{equation}
\end{definition}

A solution $\Phi \in \mathscr{S}_{\rm sc}$ can then be written as $\Phi = G f$, with $f \in C_0^{\infty}(M;V)$, whereas a solution $\Phi \in \mathscr{S}$ can be written as $\Phi = G f'$, with $f' \in C_{\rm tc}^{\infty}(M;V)$. 

The subspace of solutions with spacelike compact support $\mathscr{S}_{\rm sc}$ can be endowed with additional structure, which in the case of a scalar field is a symplectic structure, as described in section~\ref{sec:qftcst-symplecticspaces}.


\chapter{Quantum field theory on curved spacetimes}
\label{chap:qftcst}

In this chapter, the classical and quantum theories of a real scalar field are described. Here, we will take a more mathematical and formal approach to the topic in comparison to the standard treatment given in physics textbooks such as \cite{birrell1984quantum}, and closer in spirit to \cite{fulling1989aspects,wald1994quantum,dereziński2013mathematics,Benini:2013fia}. In particular, we describe the classical theory in terms of the symplectic space of real solutions of the Klein-Gordon equation and its closely related phase space. We can then introduce the space of classical observables of the theory, which can be endowed with an algebraic structure, the Poisson bracket, and show how the Poisson bracket of two fields is given in terms of the symplectic structure of the space of solutions. The quantisation procedure then consists of finding an appropriate Hilbert space, the Fock space, and field operators which acts on elements of this space (the ``states'') and which obey a commutation relation which is analogue to the Poisson bracket of the classical theory.

As is well known, the choice of Hilbert space for the quantum theory is not unique and, worse than that, different choices are, in general, unitarily inequivalent. In Minkowski spacetime this is remedied by requiring that the ``vacuum states'' of the Hilbert space are invariant under the time translation invariance of the theory and something analogous can be done in the case of stationary spacetimes. In a general curved spacetime, however, no such natural choice is available. This is what is often meant by the lack of a natural definition of ``particles'' in quantum field theory on curved spacetimes.

One way to solve this theoretical problem is to modify our quantisation procedure by basically inverting the order of the steps described above. We could have started by constructing observables, such as the quantum fields, as elements of an abstract algebra, instead of operators acting on a Hilbert space. We then could have defined states as objects which associate with each observable a real number. This approach would have allowed us to treat all states on equal footing, even those arising from unitarily inequivalent choices of Hilbert spaces in the original approach. This approach to quantum field theory is known as \emph{algebraic quantum field theory} (for recent reviews see \cite{Benini:2013fia,Hollands:2014eia}).

In this thesis, the spacetimes will be interested in are stationary spacetimes, for which there are natural choices of Hilbert spaces, selected by the time translation symmetry of these spacetimes. Because of this, we will not take the more theoretically satisfying algebraic approach to the construction of the quantum field theory and instead use the more traditional Hilbert space approach.


\section{Real scalar field}
\label{sec:qftcst-realscalarfield}

In this section, we restrict our attention to the classical and quantum theories of a real scalar field on a generic globally hyperbolic spacetime, in which case the theories are very well understood and rigorous proofs are available. The case of a spacetime with boundaries, which is not as well understood and ultimately is the one of relevance for this thesis, will be treated in Section~\ref{sec:qftcst-stwithboundaries}.

\subsection{Classical field theory}
\label{sec:qftcst-classicaltheory}

Let $\Phi$ be a real scalar field on a globally hyperbolic spacetime $(M,g)$. The classical action is given by
\begin{equation}
S = \int_M \dvol_M \, \left( - \frac{1}{2} \nabla_a \Phi \nabla^a \Phi - \frac{1}{2} \left(m^2 + \xi R \right) \Phi^2 \right) \, ,
\end{equation}
where $m$ is the mass of the field, $R$ is the Ricci scalar, $\xi \in \mathbb{R}$ is the curvature coupling parameter and $\dvol_M(x) = \sqrt{-g} \, \dd^d x$. The field equation is the Klein-Gordon equation,
\begin{equation}
P \Phi := \left( \nabla^2 - m^2 - \xi R \right) \Phi = 0 \, ,
\label{eq:KGequation}
\end{equation}
where we have defined the differential operator $P := \nabla^2 - m^2 - \xi R$. A solution of \eqref{eq:KGequation} is fully determined by its Cauchy data at a Cauchy surface $\Sigma$. We then have the Cauchy problem
\begin{equation} \label{eq:scalarCauchyproblem}
\left\{
\begin{array}{ll}
P \Phi = 0 \, , \\
\Phi \big|_{\Sigma} = \Phi_0 \, , \\
\nabla_n \Phi \big|_{\Sigma} = \Phi_1 \, ,
\end{array}
\right.
\end{equation}
where $\Phi_0 , \, \Phi_1 \in C_0^{\infty}(\Sigma)$ and $n$ is the future-directed unit normal vector on $\Sigma$.

\subsubsection{Space of solutions}

As seen in Theorem~\ref{thm:initialvalueproblem}, the support of the solutions of the Klein-Gordon equation is contained in $J \left( \supp \Phi_0 \cup \supp \Phi_1 \right)$. It follows that a natural space of solutions to consider is the space $\mathscr{S}_{\rm sc}$ of smooth (real-valued) functions with spacelike compact support, as introduced in Definition~\ref{def:spacesolutionsspacelikecompactsupport}. We can endow $\mathscr{S}_{\rm sc}$ with a symplectic structure, $\sigma : \mathscr{S}_{\rm sc} \times \mathscr{S}_{\rm sc} \to \mathbb{R}$,
\begin{equation}
\sigma (\Phi_1, \Phi_2) := \int_{\Sigma} \dvol_{\Sigma} \, n_a J^a \left(\Phi_1, \Phi_2 \right) \, ,
\label{eq:symplecticform}
\end{equation}
with
\begin{equation}
J^a \left(\Phi_1, \Phi_2 \right) := \Phi_1 \nabla^a \Phi_2 - \Phi_2 \nabla^a \Phi_1 \, ,
\end{equation}
where $\Sigma$ is a spacelike Cauchy surface and $n$ is the future-directed unit normal vector on $\Sigma$. The pair $(\mathscr{S}_{\rm sc}, \sigma)$ is a symplectic space, cf.~Definition~\ref{def:symplecticspace}.

\begin{remark}
One has that
\begin{equation}
\nabla^a J_a = \Phi_1 \nabla^2 \Phi_2 - \Phi_2 \nabla^2 \Phi_1
= \left(m^2 + \xi R \right) \left( \Phi_1 \Phi_2 - \Phi_2 \Phi_1 \right) 
= 0 \, ,
\end{equation}
where the field equation \eqref{eq:KGequation} was used. Therefore, by the divergence theorem, the symplectic form $\sigma$ does not depend on the choice of Cauchy surface $\Sigma$. The vector-valued form $J^a$ is sometimes called a \emph{conserved current}.
\end{remark}

\begin{remark}
The restriction to the space $\mathscr{S}_{\rm sc}$ guarantees that the symplectic form as defined in \eqref{eq:symplecticform} is well defined.
\end{remark}

It is possible to relate the symplectic structure of the space of solutions to the causal propagator $G$, regarded as a bi-distribution (see Remark~\ref{rem:Gbidistribution}), as follows.

\begin{lemma} \label{lemma:relationGandsigma}
Let $\Phi = G f$ and $\Phi' = G f'$ be two solutions of the Klein-Gordon equation, $P \Phi = 0$, with $f, \, f' \in C_0^{\infty}(M)$. Then
\begin{equation}
G(f,f') = \sigma(\Phi, \Phi') \, .
\end{equation}
\end{lemma}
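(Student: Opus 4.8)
The plan is to prove the identity $G(f,f') = \sigma(\Phi,\Phi')$ by computing both sides explicitly and comparing them, exploiting the defining properties of the causal propagator and the freedom to choose the Cauchy surface $\Sigma$ on which the symplectic form is evaluated. First I would recall the key structural facts: we have $\Phi = Gf$ and $\Phi' = Gf'$, where $G = G_{\mathrm{adv}} - G_{\mathrm{ret}}$, and that for any test function $f \in C_0^\infty(M)$ the support decomposition $\supp(G_{\mathrm{ret}} f) \subset J^+(\supp f)$ and $\supp(G_{\mathrm{adv}} f) \subset J^-(\supp f)$ holds (Definition~\ref{def:Greenoperators}). The strategy is to write $G(f,f')$ as a spacetime integral using the bi-distribution representation (Remark~\ref{rem:Gbidistribution}) and then convert it, via integration by parts and the divergence theorem, into a boundary integral over a Cauchy surface, which is precisely the symplectic form.

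**Key steps in order.**

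First I would express $G(f,f') = \int_M \dvol_M\, f\,(Gf') = \int_M \dvol_M\, f\,\Phi'$, using the definition of $G$ as a bi-distribution together with $\Phi' = Gf'$. Next I would use property (ii) of the Green operators, $G_{\mathrm{ret}}P = G_{\mathrm{adv}}P = \mathrm{id}$ on suitable functions, to write $f$ in terms of $\Phi$: specifically, since $P G_{\mathrm{ret}} f = f$ and $P G_{\mathrm{adv}} f = f$, one can replace $f$ by $P$ applied to a Green operator acting on $f$. The cleaner route is to split $G(f,f') = \int_M f\,G_{\mathrm{adv}}f' - \int_M f\, G_{\mathrm{ret}} f'$ and to replace $f$ inside each integral using $f = P(G_{\mathrm{ret}}f)$ in one term and $f = P(G_{\mathrm{adv}}f)$ in the other, chosen so that the supports overlap compactly and integration by parts is justified. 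Moving the operator $P$ across via its formal self-adjointness (the Klein-Gordon operator $P = \nabla^2 - m^2 - \xi R$ is formally self-adjoint), the bulk terms involving $m^2 + \xi R$ cancel, leaving only the contribution from the second-derivative part $\nabla^2$. Integrating the $\nabla^2$ terms by parts twice produces a total divergence of the current $J^a(\Phi,\Phi') = \Phi\nabla^a\Phi' - \Phi'\nabla^a\Phi$.

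**Converting to the boundary integral.**

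The difference $G_{\mathrm{adv}}f - G_{\mathrm{ret}}f = \Phi$ has spacelike compact support, and the two Green-operator contributions localize the integration to the past and future of $\supp f$ respectively; subtracting them converts the total spacetime integral into an integral of $\nabla_a J^a$ over the region between two Cauchy surfaces. Applying the divergence theorem, this reduces to a flux integral $\int_\Sigma \dvol_\Sigma\, n_a J^a(\Phi,\Phi')$ across a single Cauchy surface $\Sigma$, where the boundary contributions at spatial infinity vanish because of the spacelike compact support of the solutions. This boundary integral is exactly $\sigma(\Phi,\Phi')$ as defined in \eqref{eq:symplecticform}, completing the identification.

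**The main obstacle.**

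The hard part will be handling the support properties carefully enough to justify the integration by parts and the application of the divergence theorem. Because $G_{\mathrm{ret}}f$ and $G_{\mathrm{adv}}f$ individually have noncompact (timelike) support, one cannot naively integrate by parts over all of $M$; the manipulations must be organized so that every integrand encountered actually has compact support in the relevant direction, guaranteeing that boundary terms at infinity vanish and that all integrals converge. The cleanest way to control this is to insert a smooth cutoff adapted to a foliation by Cauchy surfaces (invoking Theorem~\ref{thm:globhypspacetimefoliation}) and to track how the future/past support conditions of $G_{\mathrm{ret}}$ and $G_{\mathrm{adv}}$ confine each term to one side of $\Sigma$. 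Once this bookkeeping is done, the algebraic cancellation of the mass and curvature terms and the emergence of the conserved current $J^a$ are routine, and the independence of $\sigma$ from the choice of $\Sigma$ (already noted in the remark following \eqref{eq:symplecticform}) guarantees the final equality is well posed.
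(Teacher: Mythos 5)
Your overall strategy --- represent $G(f,f')$ as a spacetime integral, trade $f$ for $P$ acting on a Green operator, integrate by parts twice, and let the retarded/advanced support properties turn the result into a flux of $J^a$ through a Cauchy surface --- is the same family of argument as the paper's. But the specific decomposition you commit to does not deliver the symplectic form. If you split $G(f,f') = \int_M f\,G_{\rm adv}f' - \int_M f\,G_{\rm ret}f'$ and pair $f = P(G_{\rm ret}f)$ with $G_{\rm adv}f'$ and $f = P(G_{\rm adv}f)$ with $G_{\rm ret}f'$, then each integrand, and every cross term produced by integration by parts, is supported in the compact set $J^+(\supp f)\cap J^-(\supp f')$, resp.\ $J^-(\supp f)\cap J^+(\supp f')$. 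Precisely because of that compactness, integrating by parts over all of $M$ produces \emph{no} surface term anywhere; and the bulk terms do \emph{not} vanish, since $G_{\rm adv}f'$ and $G_{\rm ret}f'$ solve the inhomogeneous equation, $P(G_{\rm adv}f') = P(G_{\rm ret}f') = f'$. Carrying your computation to the end gives $\int_M (G_{\rm ret}f)\,f' - \int_M (G_{\rm adv}f)\,f' = -\int_M (Gf)\,f'$, i.e.\ the antisymmetry $G(f,f') = -G(f',f)$ (equivalently $G^* = -G$, cf.~Remark~\ref{rem:adjointGreenoperator}) --- a true identity, but not Lemma~\ref{lemma:relationGandsigma}: no Cauchy surface ever appears. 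Your closing paragraph gestures at ``an integral of $\nabla_a J^a$ over the region between two Cauchy surfaces'', but that is inconsistent with the pairing you chose, because with that pairing the divergence identity carries non-vanishing bulk sources.

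The repair, which is the paper's actual proof, is to keep the second factor as the full homogeneous solution $\Phi' = Gf'$ (so that $P\Phi' = 0$ kills every bulk term) and to make the asymmetric choice at the level of the \emph{domain} rather than the propagator: split $M = J^+(\Sigma)\cup J^-(\Sigma)$ for an arbitrary Cauchy surface $\Sigma$, and write $f = P(G_{\rm adv}f)$ on $J^+(\Sigma)$ but $f = P(G_{\rm ret}f)$ on $J^-(\Sigma)$. In each region the relevant Green function then has compact support within that region ($J^-(\supp f)\cap J^+(\Sigma)$ is compact, and likewise for the past piece), so the double integration by parts produces boundary terms only on $\Sigma$, and these combine --- using $\Phi = G_{\rm adv}f - G_{\rm ret}f$ on $\Sigma$ --- into $\int_\Sigma \dvol_\Sigma \left( \Phi \nabla_n \Phi' - \Phi' \nabla_n \Phi \right) = \sigma(\Phi,\Phi')$. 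With this arrangement the cutoff functions you propose are unnecessary: the per-region choice of advanced versus retarded Green operator is exactly what removes the temporal-infinity boundary terms you correctly identified as the main obstacle.
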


\begin{proof}
By exploiting the support properties of the advanced and retarded Green operators and by integrating by parts twice, one obtains
\begin{align}
G(f,f') &= \int_M \dvol_M f \, G f' \notag \\
&= \int_{J^+(\Sigma)} \dvol_M f \, \Phi' + \int_{J^-(\Sigma)} \dvol_M f \, \Phi' \notag \\
&= \int_{J^+(\Sigma)} \dvol_M \left(P G_{\rm adv} f \right) \Phi' + \int_{J^-(\Sigma)} \dvol_M \left(P G_{\rm ret} f \right) \Phi' \notag \\
&= - \int_{\Sigma} \dvol_{\Sigma} \nabla_n \left(G_{\rm adv} f \right) \Phi' + \int_{\Sigma} \dvol_{\Sigma} G_{\rm adv} f \, \nabla_n \Phi' \notag \\
&\quad + \int_{\Sigma} \dvol_{\Sigma} \nabla_n \left(G_{\rm ret} f \right) \Phi' - \int_{\Sigma} \dvol_{\Sigma} G_{\rm ret} f \, \nabla_n \Phi' \notag \\ 
&= \int_{\Sigma} \dvol_{\Sigma} \left( \Phi \nabla_n \Phi' - \Phi' \nabla_n \Phi \right) \notag \\
&= \sigma (\Phi, \Phi') \, .
\end{align}
\end{proof}

Lemma~\ref{lemma:relationGandsigma} can alternatively be written as
\begin{equation}
\int_M \dvol_M f \, G f' = \int_{\Sigma} \dvol_{\Sigma} \big( Gf \nabla_n (Gf') - Gf' \nabla_n (Gf) \big)
\end{equation}
for $f, \, f' \in C_0^{\infty}(M)$ and it will prove to very useful in the following to easily pass from ``integrals over $M$'' to ``integrals over $\Sigma$'' and vice-versa.

\subsubsection{Solution in terms of initial data}

It is possible to express the solution $\Phi$ of the Cauchy problem \eqref{eq:scalarCauchyproblem} in terms of the Cauchy data. Given a Cauchy surface $\Sigma$, one can define the operators
\begin{align}
\rho_0 &: C^{\infty}(M) \to C^{\infty}(\Sigma) \, , \qquad \rho_0(\Phi) := \Phi|_{\Sigma} \, , \\
\rho_1 &: C^{\infty}(M) \to C^{\infty}(\Sigma) \, , \qquad \rho_1(\Phi) := \nabla_n \Phi|_{\Sigma} \, .
\end{align}
Given a solution $\Phi$ of the Cauchy problem, these maps give the initial data $\Phi_0 = \rho_0(\Phi)$ and $\Phi_1 = \rho_1(\Phi)$. These operators have adjoints, $\rho_0^* , \, \rho_1^* : C^{\infty}(\Sigma)^* \to C^{\infty}(M)^*$, such that
\begin{equation} \label{eq:adjointoperators}
( \Psi , \rho_i \Phi)_{\Sigma} = ( \rho_i^* \Psi, \Phi)_M \, , \qquad
\Psi \in C^{\infty}(\Sigma), \, \Phi \in C^{\infty}(M) \, ,
\quad (i=0,1) \, ,
\end{equation}
where $(\cdot, \cdot)_{\Sigma}$ and $(\cdot, \cdot)_M$ are the pairings introduced in Remark~\ref{rem:notationdistributions} which define distributions on $\Sigma$ and $M$, respectively; and $C^{\infty}(\Sigma)^*$ and $C^{\infty}(M)^*$ are the spaces of compactly supported distributions on $\Sigma$ and $M$, respectively, cf.~Definition~\ref{def:distributionscompactsupp}.

The ``smeared field'' $\Phi(f) = (\Phi, f)_M$, thought as a distribution, is
\begin{align}
(\Phi, f)_M &= \int_M \dvol_M f \Phi \notag \\
&= \int_{\Sigma} \dvol_{\Sigma} \left[ Gf \nabla_n \Phi - \Phi \nabla_n (Gf) \right] \notag \\
&= \int_{\Sigma} \dvol_{\Sigma} \left[ \rho_0(Gf) \Phi_1 - \Phi_0 \rho_1(Gf) \right] \notag \\
&= \left( \Phi_1, \rho_0(Gf) \right)_{\Sigma} - \left( \Phi_0, \rho_1(Gf) \right)_{\Sigma} \, ,
\label{eq:smearedfieldinterm}
\end{align}
where Lemma~\ref{lemma:relationGandsigma} was used. Note that, in the last line, $\Phi_0 , \, \Phi_1 \in C_0^{\infty}(\Sigma)$ and also $\rho_0(Gf) , \, \rho_1(Gf) \in C_0^{\infty}(\Sigma)$, since $f \in C_0^{\infty}(M)$ implies that $Gf \in C_{\rm sc}^{\infty}(M)$. Now, using the adjoint operators $\rho_0^* , \, \rho_1^*$ and \eqref{eq:adjointoperators},
\begin{align}
(\Phi, f)_M = - \left( \rho_1^* \Phi_0, Gf \right)_{M} + \left( \rho_0^* \Phi_1, Gf \right)_{M} \, .
\end{align}

Remark~\ref{rem:adjointGreenoperator} tells us that the formal adjoint of $G$ with respect to some pairing \eqref{eq:pairingCinftyMV} is equal to $G^* = - G$. If we choose the pairing \eqref{eq:pairingCinftyMV} to be the one introduced in Remark~\ref{rem:notationdistributions} in the context of distributions, then the formal adjoint coincide with the one in \eqref{eq:adjointoperators} if we extend $G^*$ so that $G^* : C^{\infty}(M)^* \to C^{\infty}_0(M)^*$. Then, we can write
\begin{align}
(\Phi, f)_M &= \left( G \rho_1^* \Phi_0, f \right)_{M} - \left( G \rho_0^* \Phi_1, f \right)_{M} \, .
\end{align}
Hence, the solution $\Phi$ can be expressed in terms of its initial data as
\begin{equation} \label{eq:solutionintermsofic}
\Phi = G \rho_1^* \Phi_0 - G \rho_0^* \Phi_1 \, ,
\end{equation}
in the sense of distributions. Note, however, since $\Phi$ is smooth by Theorem~\ref{thm:initialvalueproblem}, \eqref{eq:solutionintermsofic} also holds in the sense of smooth functions.

\begin{remark}
This result can also be obtained in the more familiar ``unsmeared'' form. Starting with \eqref{eq:smearedfieldinterm} and using \eqref{eq:Gfx},
\begin{align}
\Phi(f) &= \int_{\Sigma} \dvol_{\Sigma}(x) \left[ {-\Phi(x)} \nabla_n (Gf)(x) + (Gf)(x) \nabla_n \Phi(x) \right] \notag \\
&= \int_{\Sigma} \dvol_{\Sigma}(x) \int_M \dvol_M(y) \left[ {-\Phi(x)} \nabla_n G(x,y) f(y) - G(x,y) f(y) \nabla_n \Phi(x) \right] \, ,
\end{align}
from which
\begin{equation}
\Phi(x) = \int_{\Sigma} \dvol_{\Sigma}(x') \, n^{a'} \left[ {-\nabla_{a'}}  G(x,x') \Phi(x') + G(x,x') \nabla_{a'} \Phi(x') \right] \, .
\end{equation}
\end{remark}

\subsubsection{Phase space and classical observables}

Having given a brief description of the space of solutions of the classical theory, we now discuss the phase space of the classical theory.

Since the spacetime $M$ under consideration is globally hyperbolic, Theorem~\ref{thm:globhypspacetimefoliation} guarantees that there exists a foliation of $M$ such that its metric can be given by
\begin{equation} \label{eq:ADMmetricchap2}
\dd s^2 = - N^2 \, \dd t^2 + h_{ij} \left(\dd x^i + N^i \, \dd t \right) \left(\dd x^j + N^j \, \dd t \right) \, .
\end{equation}
One can take the Cauchy surface $\Sigma$ to be a surface of constant $t$, with future-directed unit normal vector field $n$, and such that the metric on it is given by $h$. One has that $n = - N \dd t$, $\dvol_{\Sigma} = \sqrt{h} \, \dd^{d-1} x$ and $\sqrt{-g} = N \sqrt{h}$.

\begin{definition}
The \emph{canonical conjugate momentum} to $\Phi$ is the density
\begin{equation}
\Pi(x) := \frac{\delta S}{\delta (\partial_t \Phi(x))} \, ,
\end{equation}
evaluated at the Cauchy surface $\Sigma$.
\end{definition}

It follows that, at $\Sigma$,
\begin{equation}
\Pi = - \sqrt{-g} \, g^{t \mu} \partial_{\mu} \Phi = - N \sqrt{h} \, (dt)_{\nu} g^{\nu\mu} \partial_{\mu} \Phi = \sqrt{h} \, \nabla_n \Phi \, .
\end{equation}

The phase space is then the space described by the variables $(\Phi, \Pi)$.

\begin{definition}
The \emph{phase space} is the space $\mathscr{P} := C_0^{\infty}(\Sigma) \times \sqrt{h} \, C_0^{\infty}(\Sigma)$, where $\sqrt{h} \, C_0^{\infty}(\Sigma)$ denotes the space of smooth densities of compact support on $\Sigma$ of the form $\sqrt{h} \, f$, with $f \in C_0^{\infty}(\Sigma)$, such that a point in phase space corresponds to a specification of $\Phi(x)$ and $\Pi(x)$ on $\Sigma$.
\end{definition}

A classical observable can be thought as a functional on the phase space $\mathscr{P}$.

\begin{definition}
A \emph{classical observable} is a functional $\tilde{F}_f : \mathscr{P} \to \mathbb{R}$, labelled by a function $f \in C_0^{\infty}(M)$. For our purposes, we consider a class of classical observables of the form
\begin{equation} \label{eq:classobservablephasespace}
\tilde{F}_f(\Phi, \Pi) = \int_{\Sigma} \dd^{d-1} x \left( \Pi \, Gf - \Phi \sqrt{h} \, \nabla_n (Gf) \right) \, ,
\end{equation}
where $G$ is the causal propagator.
\end{definition}

By the well-posedness of the Cauchy problem \eqref{eq:scalarCauchyproblem}, every point $(\Phi, \Pi) \in \mathscr{P}$ of the phase space uniquely determines a solution $\Phi \in \mathscr{S}_{\rm sc}$. Therefore, a classical observable can be equivalently thought as a functional on the space of solutions $\mathscr{S}_{\rm sc}$. Using Lemma~\ref{lemma:relationGandsigma}, one can write classical observables of the form \eqref{eq:classobservablephasespace}, regarded as functionals on $\mathscr{S}_{\rm sc}$, as
\begin{equation}
F_f(\Phi) = \int_M \dvol_M(x) f(x) \Phi(x) \, .
\end{equation}

An important example of a classical observable of this class is the so-called ``smeared field'' $\mathcal{O}_f : \mathscr{S}_{\rm sc} \to \mathbb{R}$, $\Phi \mapsto \Phi(f) := F_f(\Phi)$. The ``smeared field'' $\Phi(f)$ has the interpretation of being the spacetime average of $\Phi(x)$, weighted by $f$. From another point of view, one may treat
\begin{equation}
f \mapsto \Phi(f) = \int_M \dvol_M(x) f(x) \Phi(x)
\end{equation}
as a distribution, $\Phi \in C^{\infty}(M)^*$, in which case $\Phi(x)$ is called the ``unsmeared field''.

The space of all classical observables can be endowed with an algebraic structure, the \emph{Poisson bracket}, which is induced by the symplectic structure of $\mathscr{S}_{\rm sc}$.

\begin{definition}
The \emph{Poisson bracket} of two classical observables $\tilde{F}_f , \, \tilde{F}_{f'} : \mathscr{P} \to \mathbb{R}$ is given by
\begin{equation}
\left\{ \tilde{F}_f, \tilde{F}_{f'} \right\} := \int_{\Sigma} \dd^{d-1} x \left( \frac{\delta \tilde{F}_f}{\delta \Phi} \frac{\delta \tilde{F}_{f'}}{\delta \Pi} - \frac{\delta \tilde{F}_f}{\delta \Pi} \frac{\delta \tilde{F}_{f'}}{\delta \Phi} \right) \, .
\end{equation}
\end{definition}

\begin{lemma}
One has
\begin{equation}
\left\{ \tilde{F}_f, \tilde{F}_{f'} \right\} = \sigma(Gf, Gf') = G(f,f') \, .
\end{equation}
\end{lemma}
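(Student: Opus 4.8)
The plan is to evaluate the bracket directly from its definition by first computing the four functional derivatives of $\tilde{F}_f$ and $\tilde{F}_{f'}$ on the phase space $\mathscr{P}$, then recognising the resulting surface integral as the symplectic form $\sigma(Gf, Gf')$; the remaining equality with $G(f,f')$ is immediate from Lemma~\ref{lemma:relationGandsigma}.

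First I would read off the functional derivatives of the observable $\tilde{F}_f(\Phi,\Pi) = \int_{\Sigma} \dd^{d-1}x\,(\Pi\,Gf - \Phi\sqrt{h}\,\nabla_n(Gf))$. Since $\tilde{F}_f$ is linear and local in the phase space variables $(\Phi,\Pi)$, the derivatives are just the coefficients: $\delta\tilde{F}_f/\delta\Pi = Gf$ and $\delta\tilde{F}_f/\delta\Phi = -\sqrt{h}\,\nabla_n(Gf)$, and analogously for $\tilde{F}_{f'}$. Here I would be careful that $\Phi$ is a function whereas $\Pi$ is a density, so that the factors of $\sqrt{h}$ land exactly where the volume element $\dvol_{\Sigma}=\sqrt{h}\,\dd^{d-1}x$ is eventually wanted.

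Next I would substitute these four expressions into the definition of $\{\tilde{F}_f,\tilde{F}_{f'}\}$. The two cross terms combine into $\int_{\Sigma}\dd^{d-1}x\,\sqrt{h}\,(Gf\,\nabla_n(Gf') - Gf'\,\nabla_n(Gf))$, which is precisely $\int_{\Sigma}\dvol_{\Sigma}\,n_a J^a(Gf,Gf')$ once the conserved current $J^a(\Phi_1,\Phi_2)=\Phi_1\nabla^a\Phi_2-\Phi_2\nabla^a\Phi_1$ is contracted with the future-directed unit normal, noting $n_a\nabla^a = \nabla_n$. This identifies the bracket with $\sigma(Gf,Gf')$. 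Finally, applying Lemma~\ref{lemma:relationGandsigma} with $\Phi = Gf$ and $\Phi' = Gf'$ yields $\sigma(Gf,Gf') = G(f,f')$, establishing both equalities.

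There is no substantive obstacle: the computation is essentially a one-line substitution once the functional derivatives are in hand. The only point demanding care is the bookkeeping of the density factor $\sqrt{h}$ and the sign conventions in $J^a$ and $\nabla_n$, so that the surface integral reproduces the definition of $\sigma$ verbatim rather than up to an overall sign.
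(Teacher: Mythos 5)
Your proof is correct and follows exactly the route the paper intends: its own proof is just the instruction ``Use \eqref{eq:classobservablephasespace} and Lemma~\ref{lemma:relationGandsigma}'', and your computation of the functional derivatives $\delta\tilde{F}_f/\delta\Pi = Gf$, $\delta\tilde{F}_f/\delta\Phi = -\sqrt{h}\,\nabla_n(Gf)$, followed by substitution into the Poisson bracket and an appeal to that lemma, is precisely the fleshed-out version of that argument. Your attention to the density factor $\sqrt{h}$ converting $\dd^{d-1}x$ into $\dvol_{\Sigma}$ is exactly the bookkeeping the paper leaves implicit.
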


\begin{proof}
Use \eqref{eq:classobservablephasespace} and Lemma~\ref{lemma:relationGandsigma}.
\end{proof}

It then follows that the Poisson bracket of two smeared fields $\Phi(f)$ and $\Phi(f')$ is
\begin{equation} \label{eq:Poissobracket}
\left\{ \Phi(f), \Phi(f') \right\} = G(f,f') \, .
\end{equation}
In terms of the ``unsmeared fields'',
\begin{equation}
\left\{ \Phi(x), \Phi(x') \right\} = G(x,x') \, .
\end{equation}
The aim of the quantisation procedure will be to find an analogous relation which is satisfied by the quantised scalar field.

\subsection{Quantum field theory}
\label{sec:qftcst-quantumtheory}

We now seek to find the quantum Klein-Gordon field theory. To do that, we apply Dirac's quantisation prescription, which consists of finding operators which act on a suitable Hilbert space.

In more detail, the aim of this prescription is to find operator-valued distributions $\Phi(f)$, with $f \in C_0^{\infty}(M)^{\mathbb{C}} \cong C_0^{\infty}(M;\mathbb{C})$, such that
\begin{enumerate}[label={(\roman*)}]
\item $f \mapsto \Phi(f)$ is linear;
\item $\Phi(Pf) = 0$ for all $f \in C_0^{\infty}(M;\mathbb{C})$;
\item $\Phi(f)^{\dagger} = \Phi(\overline{f})$ for all $f \in C_0^{\infty}(M;\mathbb{C})$;
\item $\left[ \Phi(f), \Phi(f') \right] = i G(f,f') \mathbb{I}$ for all $f, \, f' \in C_0^{\infty}(M;\mathbb{C})$.
\end{enumerate}

\vspace*{2ex}

\begin{remark}
Recall that 
\begin{equation}
C_0^{\infty}(M)^{\mathbb{C}} := C_0^{\infty}(M) \oplus i C_0^{\infty}(M) \cong C_0^{\infty}(M;\mathbb{C})
\end{equation}
is the complexification of the real vector space $C_0^{\infty}(M)$ (see Definition~\ref{def:complexification}).
\end{remark}

\vspace*{2ex}

\begin{remark}
The operator-valued distributions $\Phi(f)$ can be interpreted as the quantisation of the ``smeared fields'' $\Phi(f)$, which are real-valued distributions. Point (iv) above is then the result of the standard ``curly-bracket-to-square-bracket'' prescription from the Poisson bracket \eqref{eq:Poissobracket}. In terms of the quantised ``unsmeared fields'', one can rewrite the last three properties above as
\begin{enumerate}[label={(\roman*)}] \setcounter{enumi}{1}
\item $P \Phi(x) = 0$;
\item $\Phi(x)^{\dagger} = \Phi(x)$;
\item $\left[ \Phi(x), \Phi(x') \right] = i G(x,x') \mathbb{I}$.
\end{enumerate}
As noted in Remark~\ref{rem:distributionnotation}, the ``unsmeared fields'' $\Phi(x)$ should always be understood in the distribution sense.
\end{remark}

\subsubsection{Construction of the one-particle Hilbert space and Fock space}

The main problem to solve at this point is to identify the appropriate Hilbert space on which the operator-valued distributions $\Phi(f)$ act. In order to do that, we start with the symplectic space $(\mathscr{S}_{\rm sc},\sigma)$ of the classical theory, in which the classical solutions live.

The first step in our construction is the complexification of $\mathscr{S}_{\rm sc}$, 
\begin{equation}
\mathscr{S}_{\rm sc}^{\mathbb{C}} := \mathscr{S}_{\rm sc} \oplus i \mathscr{S}_{\rm sc} \, .
\end{equation}
This is isomorphic to the space of smooth complex-valued solutions with spacelike compact support of the Klein-Gordon equation. Then, one canonically extends the symplectic form $\sigma : \mathscr{S}_{\rm sc} \times \mathscr{S}_{\rm sc} \to \mathbb{R}$ defined in \eqref{eq:symplecticform} to $\sigma_{\mathbb{C}} : \mathscr{S}_{\rm sc}^{\mathbb{C}} \times \mathscr{S}_{\rm sc}^{\mathbb{C}} \to \mathbb{C}$, cf.~Definition~\ref{def:canonicalsesquilinearextension}. 
The canonical extension $\sigma_{\mathbb{C}}$ is anti-Hermitian.
 
It is convenient to define the Hermitian form $\tilde{\sigma}_{\mathbb{C}} : \mathscr{S}_{\rm sc}^{\mathbb{C}} \times \mathscr{S}_{\rm sc}^{\mathbb{C}} \to \mathbb{C}$,
\begin{equation} \label{eq:quantumHermitianform}
\tilde{\sigma}_{\mathbb{C}} (\Phi_1 , \Phi_2) := i \sigma_{\mathbb{C}} (\Phi_1, \Phi_2) \, .
\end{equation}
%
%
By using \eqref{eq:symplecticform}, one can show that
\begin{equation}
\tilde{\sigma}_{\mathbb{C}}(\Phi_1 , \Phi_2) = i \sigma \left(\overline{\Phi_1}, \Phi_2 \right) = i \int_{\Sigma} \dvol_{\Sigma} \, n_a J^a \left(\overline{\Phi_1}, \Phi_2 \right) \, ,
\end{equation}
where $\sigma$ has been extended to $\mathscr{S}_{\rm sc}^{\mathbb{C}}$ by linearity in each variable.

The Hermitian form $\tilde{\sigma}_{\mathbb{C}}$ is \emph{not} a scalar product (cf.~Definition~\ref{def:scalarproduct}) as it generally fails to be positive definite on $\mathscr{S}_{\rm sc}^{\mathbb{C}}$. Instead, consider a closed subspace $\mathscr{S}_{\rm sc}^{\mathbb{C}+} \subset \mathscr{S}_{\rm sc}^{\mathbb{C}}$ such that 
\begin{enumerate}[label={(\roman*)}]
\item $\tilde{\sigma}_{\mathbb{C}}$ is positive definite on $\mathscr{S}_{\rm sc}^{\mathbb{C}+}$;
\item $\mathscr{S}_{\rm sc}^{\mathbb{C}}$ is the span of $\mathscr{S}_{\rm sc}^{\mathbb{C}+}$ and $\overline{\mathscr{S}_{\rm sc}^{\mathbb{C}+}}$;
\item given any $\Phi^+ \in \mathscr{S}_{\rm sc}^{\mathbb{C}+}$ and $\Phi^- \in \overline{\mathscr{S}_{\rm sc}^{\mathbb{C}+}}$, then $\tilde{\sigma}_{\mathbb{C}}(\Phi^+,\Phi^-)=0$.
\end{enumerate}

From (i), $\tilde{\sigma}_{\mathbb{C}}$ is a scalar product on $\mathscr{S}_{\rm sc}^{\mathbb{C}+}$, and we denote $\tilde{\sigma}_{\mathbb{C}}(\cdot, \cdot) =: \langle \cdot | \cdot \rangle$. Given (iii), it is not difficult to check that the orthogonal complement $(\mathscr{S}_{\rm sc}^{\mathbb{C}+})^{\perp} = \overline{\mathscr{S}_{\rm sc}^{\mathbb{C}+}}$, the complex conjugate space. According to Theorem~\ref{thm:projectiontheorem}, $\mathscr{S}_{\rm sc}^{\mathbb{C}} = \mathscr{S}_{\rm sc}^{\mathbb{C}+} \oplus \overline{\mathscr{S}_{\rm sc}^{\mathbb{C}+}}$ and hence, if $\Phi \in \mathscr{S}_{\rm sc}^{\mathbb{C}}$, then it can be decomposed as $\Phi = \Phi^+ + \Phi^-$, with $\Phi^+ \in \mathscr{S}_{\rm sc}^{\mathbb{C}+}$ and $\Phi^- \in \overline{\mathscr{S}_{\rm sc}^{\mathbb{C}+}}$.

The subspace $\mathscr{S}_{\rm sc}^{\mathbb{C}+}$ with scalar product $\langle \cdot | \cdot \rangle$ is not necessarily complete in the norm induced by the scalar product.

\begin{definition}
Define $\mathscr{H}$ to be the completion of $\mathscr{S}_{\rm sc}^{\mathbb{C}+}$ in the norm induced by the scalar product $\langle \cdot | \cdot \rangle$. Then, $\mathscr{H}$ is a Hilbert space and is called the \emph{one-particle Hilbert space}.
\end{definition}

\begin{remark}
For a spacetime with time-translation symmetry, a natural choice of $\mathscr{H}$ is the space of complex positive frequency solutions, as detailed in the next section. For now, it is assumed that $\mathscr{H}$ is the completion of any space $\mathscr{S}_{\rm sc}^{\mathbb{C}+}$ satisfying the properties (i)-(iii) above.
\end{remark}

Given the one-particle Hilbert space $\mathscr{H}$, one constructs the (symmetric) Fock space, $\mathscr{F}_{\rm s}(\mathscr{H})$, as in Definition~\ref{def:Fockspace},
\begin{equation} \label{eq:Fockspace}
\mathscr{F}_{\rm s}(\mathscr{H}) = \bigoplus_{n=0}^{\infty} \left( {\bigotimes^n}_{\rm s} \mathscr{H} \right) \, ,
\end{equation}
where $\bigotimes_{\rm s}^0 \mathscr{H} := \mathbb{C}$. Elements of this Hilbert space are called states.

\begin{definition}
An element $\Psi \in \mathscr{F}_{\rm s}(\mathscr{H})$ of the Fock space,
\begin{equation}
\Psi = (\psi_0 , \psi_1, \psi_2, ...) \, ,
\end{equation}
with $\psi_n \in \bigotimes^n_{\rm s} \mathscr{H}$, is called a \emph{state}. A very common notation for an element of the Fock space $\Psi$ is $| \Psi \rangle$, such that an element $\Psi'$ of the dual space $\mathscr{F}_{\rm s}(\mathscr{H})^*$ is written as $\langle \Psi' |$. The state
\begin{equation}
| 0 \rangle = (1 , 0, 0, ...)
\end{equation}
is called a \emph{vacuum state}.
\end{definition}

\subsubsection{Quantum field operators}

The Fock space $\mathscr{F}_{\rm s}(\mathscr{H})$ is the desired Hilbert space on which the operator-valued distributions $\Phi(f)$ act. To see this, we first define the annihilation and creation operators.

\begin{definition}
Given any $\varphi \in \mathscr{H}$, the \emph{annihilation operator} $a(\overline{\varphi}) : \mathscr{F}_{\rm s}(\mathscr{H}) \to \mathscr{F}_{\rm s}(\mathscr{H})$ is defined by
\begin{equation}
a(\overline{\varphi}) |\Psi \rangle := \left( \langle \overline{\varphi} | \psi_1 \rangle , \, \sqrt{2} \, \langle \overline{\varphi} | \psi_2 \rangle , \, \ldots , \, \sqrt{n+1} \, \langle \overline{\varphi} | \psi_{n+1} \rangle , \, \ldots \right) \, ,
\end{equation}
whereas the \emph{creation operator} $a^{\dagger}(\varphi) : \mathscr{F}_{\rm s}(\mathscr{H}) \to \mathscr{F}_{\rm s}(\mathscr{H})$ is defined by
\begin{equation}
a^{\dagger}(\varphi) |\Psi \rangle := \left( 0, \, \varphi \, \psi_0, \, \ldots , \, \sqrt{n} \, \varphi \otimes_{\rm s} \psi_{n-1}, \, \ldots \right) \, .
\end{equation}
\end{definition}

\begin{remark}
The creation operator $a^{\dagger}(\varphi)$ is the adjoint of the annihilation operator $a(\overline{\varphi})$. The vacuum state $| 0 \rangle$ is such that $a(\overline{\varphi}) | 0 \rangle = (0, \, 0, \, \ldots ) \equiv 0$.
\end{remark}

\begin{remark}
These operators are unbounded operators, but their action on a state $|\Psi \rangle \in \mathscr{F}_{\rm s}(\mathscr{H})$ gives another state in the Fock space if $|\Psi \rangle$ is a terminating sequence, in which case the resulting state is also a terminating sequence.
\end{remark}

These operators satisfy commutation relations.

\begin{proposition}
The annihilation and commutation operators obey the following commutation relations
\begin{equation} \label{eq:acommutationrelations}
\left[ a(\varphi), a(\varphi') \right] = \left[ a^{\dagger}(\varphi), a^{\dagger}(\varphi') \right] = 0  \, , \qquad
\left[ a(\varphi), a^{\dagger}(\varphi') \right] = \langle \varphi , \varphi' \rangle \, \mathbb{I} \, ,
\end{equation}
with $\varphi, \, \varphi' \in \mathscr{H}$.
\end{proposition}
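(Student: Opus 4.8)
The plan is to verify each of the three relations in \eqref{eq:acommutationrelations} by acting on an arbitrary terminating state $|\Psi\rangle = (\psi_0, \psi_1, \psi_2, \ldots)$, with $\psi_n \in \bigotimes^n_{\rm s} \mathscr{H}$, and comparing the resulting components sector by sector. Since both sides of each relation are (densely defined) operators preserving the dense domain of terminating sequences, this is enough. The structural input throughout is the total symmetry of the symmetrised tensor product together with the explicit normalisation factors $\sqrt{n}$ and $\sqrt{n+1}$ appearing in the definitions of $a(\varphi)$ and $a^{\dagger}(\varphi)$, where I read $a(\varphi)$ as the annihilation operator whose action contracts via $\langle \varphi | \cdot \rangle$.

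First I would dispose of the two ``like'' commutators. For $[a(\varphi), a(\varphi')]=0$, acting with $a(\varphi)\,a(\varphi')$ on the $(n+2)$-particle component $\psi_{n+2}$ produces a double contraction of $\varphi$ and $\varphi'$ against two of the symmetrised slots; because $\psi_{n+2}$ is totally symmetric, the result is invariant under interchanging the roles of $\varphi$ and $\varphi'$, so $a(\varphi)\,a(\varphi') = a(\varphi')\,a(\varphi)$ on each sector. The relation $[a^{\dagger}(\varphi), a^{\dagger}(\varphi')]=0$ is the dual statement: inserting $\varphi$ and then $\varphi'$, each symmetrised into the growing tensor, yields the same totally symmetric vector as inserting them in the opposite order, and the prefactors $\sqrt{n+1}\,\sqrt{n+2}$ agree in both orderings.

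The heart of the argument is the mixed commutator. I would compute the two orderings separately on the $n$-particle component $\psi_n$. In $a(\varphi)\,a^{\dagger}(\varphi')$ the creation operator first produces $\sqrt{n+1}\,\varphi' \otimes_{\rm s} \psi_n$, and applying $a(\varphi)$ then contracts $\varphi$ against each of the $n+1$ symmetrised factors: exactly one contraction lands on the freshly inserted $\varphi'$ and contributes $\langle \varphi | \varphi' \rangle \, \psi_n$, while the remaining $n$ contractions hit the original factors of $\psi_n$. In the opposite ordering $a^{\dagger}(\varphi')\,a(\varphi)$ the annihilation acts first, producing $\sqrt{n}$ contractions of $\varphi$ against the factors of $\psi_n$, after which $\varphi'$ is reinserted; once the $\sqrt{n}$ and $\sqrt{n+1}$ factors are combined these reproduce precisely the $n$ ``original factor'' terms from the first ordering. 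Subtracting, all cross terms cancel and only the diagonal contribution $\langle \varphi | \varphi' \rangle \, \psi_n$ survives on every sector, giving $[a(\varphi), a^{\dagger}(\varphi')] = \langle \varphi, \varphi' \rangle \, \mathbb{I}$.

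I expect the main obstacle to be purely combinatorial: correctly accounting for the symmetrisation projector and the $\sqrt{n}$ versus $\sqrt{n+1}$ normalisations so that the $n$ cross terms in the two orderings match term by term. A clean way to organise this bookkeeping is to work on simple symmetrised products $\psi_n = \chi_1 \otimes_{\rm s} \cdots \otimes_{\rm s} \chi_n$, which span each sector and for which the contractions reduce to explicit sums over which factor $\varphi$ strikes, and then to extend the identities by linearity and continuity to all of $\mathscr{H}$.
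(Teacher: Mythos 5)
Your proposal is correct and follows essentially the same route as the paper's proof: a sector-by-sector computation on Fock states in which the mixed commutator is evaluated by expanding the contraction of $\varphi$ against the symmetrised product $\varphi' \otimes_{\rm s} \psi_n$, so that one term strikes the freshly inserted $\varphi'$ (yielding $\langle \varphi | \varphi' \rangle \, \psi_n$) while the remaining $n$ cross terms cancel against the opposite ordering once the $\sqrt{n}$ and $\sqrt{n+1}$ factors are combined. The only difference is presentational---you spell out the like commutators and the reduction to simple symmetrised tensors, where the paper invokes the distribution identity $(n+1)\,\langle \overline{\varphi} | \varphi' \otimes_{\rm s} \psi_n \rangle = n \, \varphi' \otimes_{\rm s} \langle \overline{\varphi} | \psi_n \rangle + \langle \overline{\varphi} | \varphi' \rangle \, \psi_n$ directly and dismisses the remaining identities with ``follow similarly''.
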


\begin{proof}
One has
\begin{align}
a^{\dagger}(\varphi') a(\overline{\varphi}) |\Psi \rangle &= a^{\dagger}(\varphi') \left( \ldots , \, \sqrt{n+1} \, \langle \overline{\varphi} | \psi_{n+1} \rangle , \, \ldots \right) \notag \\
&= \left( \ldots , \, \sqrt{n} \, \varphi' \otimes_{\rm s} \sqrt{n} \, \langle \overline{\varphi} | \psi_n \rangle, \, \ldots \right) \, ,
\end{align}
and
\begin{align}
a(\overline{\varphi}) a^{\dagger}(\varphi') |\Psi \rangle &= a(\overline{\varphi}) \left( \ldots , \, \sqrt{n} \, \varphi' \otimes_{\rm s} \psi_{n-1}, \, \ldots \right) \notag \\
&= \left( \ldots , \, \sqrt{n+1} \langle \overline{\varphi} | \sqrt{n+1} \, \varphi' \otimes_{\rm s} \psi_n \rangle , \, \ldots \right) \notag \\
&= \left( \ldots , \, (n+1) \langle \overline{\varphi} | \varphi' \otimes_{\rm s} \psi_n \rangle , \, \ldots \right) \notag \\
&= \left( \ldots , \, n \, \varphi' \otimes_{\rm s} \langle \overline{\varphi} | \psi_n \rangle + \langle \overline{\varphi} | \varphi' \rangle  \psi_n , \, \ldots \right) \, .
\end{align}
Hence,
\begin{equation}
\left[ a(\overline{\varphi}), a^{\dagger}(\varphi') \right] |\Psi \rangle = \langle \overline{\varphi} | \varphi' \rangle \, |\Psi \rangle \, .
\end{equation}
The other identities follow similarly.
\end{proof}

Let $\{ \Phi_i \}_{i \in \mathscr{I}}$, where $\mathscr{I}$ is some index set, be a orthonormal basis of $\mathscr{H}$. Together with their complex conjugates, they form a basis for the Hilbert space completion of $\mathscr{S}^{\mathbb{C}}_{\rm sc}$, such that
\begin{equation}
\langle \Phi_i | \Phi_j \rangle = \delta_{ij} \, , \qquad
\langle \overline{\Phi_i} | \overline{\Phi_j} \rangle = - \delta_{ij} \, , \qquad
\langle \Phi_i | \overline{\Phi_j} \rangle = 0 \, ,
\end{equation}
for any $i, \, j \in \mathscr{I}$. If one sets
\begin{equation}
a_i := a (\Phi_i) \, , \qquad a_i^{\dagger} := a^{\dagger} (\Phi_i) \, ,
\end{equation}
one can rewrite the commutation relations \eqref{eq:acommutationrelations} as
\begin{equation}
\left[ a_i , a_j \right] = \left[ a_i^{\dagger} , a_j^{\dagger} \right] = 0 \, , \qquad \left[ a_i, a_j^{\dagger} \right] = \delta_{ij} \, \mathbb{I} \, .
\end{equation}

We now have all we need to define the quantum field operator $\Phi(x)$.

\begin{definition} \label{def:quantumscalarfield}
The \emph{quantum scalar field} $\Phi(x)$ is an operator-valued distribution defined by
\begin{equation} \label{eq:quantumscalarfield}
\Phi(x) := \sum_{i \in \mathscr{I}} \left[ a_i \, \Phi_i(x) + a_i^{\dagger} \, \overline{\Phi_i(x)} \right] \, ,
\end{equation}
where $\{ \Phi_i \}_{i \in \mathscr{I}}$ is an orthonormal basis of $\mathscr{H}$.
\end{definition}

If follows from Definition~\ref{def:quantumscalarfield} that
\begin{equation}
a_i = \langle \Phi_i | \Phi \rangle \, , \qquad
a_i^{\dagger} = - \langle \overline{\Phi_i} | \Phi \rangle \, .
\end{equation}
More generally, given a complex classical solution $\varphi \in \mathscr{S}^{\mathbb{C}}_{\rm sc}$, we could have defined
\begin{equation} \label{eq:aalternativedefinition}
a(\varphi) = \langle \varphi | \Phi \rangle \, , \qquad
a^{\dagger}(\varphi) = - \langle \overline{\varphi} | \Phi \rangle \, .
\end{equation}

Using the canonical commutation relations \eqref{eq:acommutationrelations}, one can derive the commutation relations for the field and its canonical conjugate momentum on a Cauchy surface $\Sigma$.

\begin{proposition}
On a Cauchy surface $\Sigma$ of constant time coordinate $t$, the quantum field $\Phi$ and its canonical conjugate momentum $\Pi$ satisfy the canonical commutation relations
\begin{equation}
\left[ \Phi(t,x), \Phi(t,x') \right] = \left[ \Pi(t,x), \Pi(t,x') \right] = 0 \, , \qquad
 \left[ \Phi(t,x), \Pi(t,x') \right] = i \delta(x,x') \mathbb{I} \, ,
\end{equation}
with $x , \, x'$ representing coordinates on $\Sigma$ and the Dirac delta distribution $\delta(x,x')$ is a density in the second argument.
\end{proposition}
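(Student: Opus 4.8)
The plan is to insert the mode expansions of the field and its conjugate momentum into each of the three commutators, collapse the resulting double mode sums using the canonical commutation relations of the ladder operators, and then identify the surviving single sums as the closure (completeness) relations dual to the orthonormality relations established above.

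First I would record the restriction of the momentum to $\Sigma$. Writing $u_i := \Phi_i|_{\Sigma}$ and $p_i := \sqrt{h}\,\nabla_n \Phi_i|_{\Sigma}$, so that $\Pi = \sqrt{h}\,\nabla_n \Phi$ yields $p_i$ as the momentum mode, one has on $\Sigma$
\begin{equation}
\Phi(x) = \sum_{i \in \mathscr{I}} \left( a_i\, u_i(x) + a_i^{\dagger}\, \overline{u_i(x)} \right) \, , \qquad \Pi(x) = \sum_{i \in \mathscr{I}} \left( a_i\, p_i(x) + a_i^{\dagger}\, \overline{p_i(x)} \right) \, .
\end{equation}
Substituting into the commutators and using $[a_i,a_j]=[a_i^{\dagger},a_j^{\dagger}]=0$ together with $[a_i,a_j^{\dagger}]=\delta_{ij}\mathbb{I}$, every double sum collapses to a single one; for instance
\begin{equation}
[\Phi(t,x),\Pi(t,x')] = \sum_{i \in \mathscr{I}} \left( u_i(x)\,\overline{p_i(x')} - \overline{u_i(x)}\,p_i(x') \right) \mathbb{I} \, ,
\end{equation}
while $[\Phi,\Phi]$ produces $\sum_i \bigl( u_i(x)\overline{u_i(x')} - \overline{u_i(x)}u_i(x') \bigr)$ and $[\Pi,\Pi]$ the same expression with $u$ replaced by $p$.

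The heart of the argument is to evaluate these three sums. I would obtain them as closure relations by expanding an arbitrary element of the complexified solution space in the basis $\{\Phi_i,\overline{\Phi_i}\}$. Writing $\Phi = \sum_i(\alpha_i \Phi_i + \beta_i \overline{\Phi_i})$, the orthonormality relations $\langle\Phi_i|\Phi_j\rangle = \delta_{ij}$, $\langle\overline{\Phi_i}|\overline{\Phi_j}\rangle = -\delta_{ij}$, $\langle\Phi_i|\overline{\Phi_j}\rangle = 0$ give $\alpha_i = \langle\Phi_i|\Phi\rangle$ and $\beta_i = -\langle\overline{\Phi_i}|\Phi\rangle$. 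Re-expressing these coefficients through the surface-integral representation of $\tilde{\sigma}_{\mathbb{C}}$ (cf.~\eqref{eq:quantumHermitianform} and \eqref{eq:symplecticform}) in terms of the initial data $(\phi,\pi) := (\Phi|_{\Sigma},\Pi|_{\Sigma})$, and substituting back into $\phi(x) = \sum_i(\alpha_i u_i(x) + \beta_i \overline{u_i(x)})$ and $\pi(x) = \sum_i(\alpha_i p_i(x) + \beta_i \overline{p_i(x)})$, I would match the coefficients of $\phi$ and $\pi$ under the surface integral. Since the resulting identities must hold for arbitrary Cauchy data, this forces
\begin{equation}
\sum_{i \in \mathscr{I}} \left( u_i(x)\overline{u_i(x')} - \overline{u_i(x)}u_i(x') \right) = 0 \, , \qquad \sum_{i \in \mathscr{I}} \left( u_i(x)\overline{p_i(x')} - \overline{u_i(x)}p_i(x') \right) = i\,\delta(x,x') \, ,
\end{equation}
with the companion $p_i$ sum vanishing by the same matching. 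These are precisely the three sums above, so the claimed commutators follow, the factor $\sqrt{h}$ carried by $p_i$ rendering $\delta(x,x')$ a density in its second argument. As a consistency check, $[\Phi(t,x),\Phi(t,x')]=iG(x,x')\mathbb{I}$ must vanish at equal times by the support property $\supp(Gf)\subset J(\supp f)$.

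The main obstacle is upgrading this closure step from formal to rigorous. The orthonormality of $\{\Phi_i\}$ is given, but what the argument genuinely requires is \emph{completeness}: that $\{\Phi_i,\overline{\Phi_i}\}$ spans the complexified solution space, so that the expansion of $\Phi$ converges and may legitimately be inserted under the surface integral, and so that the interchange of the (generally infinite) mode sum with the integration over $\Sigma$ is justified. This is where the distributional character is essential — the closure relation is an identity in $C_0^{\infty}(\Sigma)^*$ and the delta emerges only as a distribution. Once completeness of the modes on the initial-data space is invoked, what remains is the bookkeeping of the symplectic pairing indicated above.
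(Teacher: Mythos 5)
Your proof is correct at the level of formal rigour the thesis itself works at, but it runs in the opposite direction to the proof given there. The paper never expands $\Phi$ and $\Pi$ in modes inside the commutators: instead it uses the representations $a(\varphi)=\langle\varphi|\Phi\rangle$ and $a^{\dagger}(\varphi)=-\langle\overline{\varphi}|\Phi\rangle$ to rewrite the ladder commutators $[a(\varphi),a(\psi)]$, $[a^{\dagger}(\varphi),a^{\dagger}(\psi)]$, $[a(\varphi),a^{\dagger}(\psi)]$ as double surface integrals over $\Sigma$ of the \emph{unknown} commutators $[\Phi,\Phi]$, $[\Pi,\Pi]$, $[\Phi,\Pi]$ smeared against the mode data $(\varphi,\nabla_n\varphi)$ and its conjugates; equating these to the previously established values $0$, $0$, $\langle\varphi|\psi\rangle$ from \eqref{eq:acommutationrelations} for all $\varphi,\psi\in\mathscr{H}$ then fixes the field commutators. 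Your route --- inserting the mode expansions, collapsing the double sums with the ladder algebra, and identifying the surviving single sums as closure relations --- is the dual computation: your derivation of the closure relations from $\alpha_i=\langle\Phi_i|\Phi\rangle$, $\beta_i=-\langle\overline{\Phi_i}|\Phi\rangle$ is precisely the inversion step that the paper leaves implicit in its concluding ``the commutation relations follow''. What the paper's organisation buys is that everything stays smeared, so no distributional mode sum is ever evaluated near coincident arguments; what yours buys is that the delta function appears constructively and the genuine hypothesis --- completeness of $\{\Phi_i,\overline{\Phi_i}\}$ in the complexified space of Cauchy data, together with the interchange of mode sum and surface integral --- is stated explicitly rather than hidden. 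Note that the paper's argument needs exactly the same completeness to pass from constraints smeared against all $\varphi,\psi\in\mathscr{H}$ to the distributional identity, so the gap you flag is not a defect relative to the thesis; both proofs are formal at that point. Your signs, the role of the density factor $\sqrt{h}$ carried by $p_i$, and the equal-time consistency check against $[\Phi(x),\Phi(x')]=iG(x,x')\,\mathbb{I}$ all check out.
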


\begin{proof}
Dropping the dependence on $t$ for notational simplicity, for $\varphi , \, \psi \in \mathscr{H}$, 
\begin{align}
\left[ a(\varphi), a(\psi) \right] &= \langle \varphi | \Phi \rangle \langle \psi | \Phi \rangle - \langle \psi | \Phi \rangle \langle \varphi | \Phi \rangle \notag \\
&= - \int_{\Sigma} \dvol_{\Sigma}(x) \left[ \overline{\varphi(x)} \nabla_n \Phi(x) - \Phi(x) \nabla_n \overline{\varphi(x)}\right] \notag \\
&\quad \times \int_{\Sigma} \dvol_{\Sigma}(x') \left[ \overline{\psi(x')} \nabla_n \Phi(x') - \Phi(x') \nabla_n \overline{\psi(x')}\right]
- (\varphi \leftrightarrow \psi) \notag \\
&= - \int_{\Sigma} \dd^{d-1}x \left[ \overline{\varphi(x)} \Pi(x) - \Phi(x) \nabla_n \overline{\varphi(x)}\right] \notag \\
&\quad \times \int_{\Sigma} \dd^{d-1}x' \left[ \overline{\psi(x')} \Pi(x') - \Phi(x') \nabla_n \overline{\psi(x')}\right]
- (\varphi \leftrightarrow \psi) \notag \\
&= - \int_{\Sigma} \dd^{d-1}x \int_{\Sigma} \dd^{d-1}x' \left\{ \overline{\varphi(x)} \overline{\psi(x')} \left[ \Pi(x),\Pi(x') \right] \right. \notag \\
&\quad \left. + \nabla_n \overline{\varphi(x)} \nabla_n \overline{\psi(x')} \left[ \Phi(x), \Phi(x') \right]
- \overline{\varphi(x)} \nabla_n \overline{\psi(x')} \left[ \Pi(x), \Phi(x') \right] \right. \notag \\
&\quad \left. - \nabla_n \overline{\varphi(x)} \overline{\psi(x')} \left[ \Phi(x),\Pi(x') \right]  \right\} \notag \\
&= 0 \, .
\end{align}
Similarly,
\begin{align}
\left[ a^{\dagger}(\varphi), a^{\dagger}(\psi) \right] &= \langle \overline{\varphi} | \Phi \rangle \langle \overline{\psi} | \Phi \rangle - \langle \overline{\psi} | \Phi \rangle \langle \overline{\varphi} | \Phi \rangle \notag \\
&= - \int_{\Sigma} \dd^{d-1}x \int_{\Sigma} \dd^{d-1}x' \left\{ \varphi(x) \psi(x') \left[ \Pi(x),\Pi(x') \right] \right. \notag \\
&\quad \left. + \nabla_n \varphi(x) \nabla_n \psi(x') \left[ \Phi(x), \Phi(x') \right]
- \varphi(x) \nabla_n \psi(x') \left[ \Pi(x), \Phi(x') \right] \right. \notag \\
&\quad \left. - \nabla_n \varphi(x) \psi(x') \left[ \Phi(x),\Pi(x') \right]  \right\} \notag \\
&= 0 \, ,
\end{align}
and
\begin{align}
\left[ a(\varphi), a^{\dagger}(\psi) \right] &= - \langle \varphi | \Phi \rangle \langle \overline{\psi} | \Phi \rangle + \langle \overline{\psi} | \Phi \rangle \langle \varphi | \Phi \rangle \notag \\
&= \int_{\Sigma} \dd^{d-1}x \int_{\Sigma} \dd^{d-1}x' \left\{ \overline{\varphi(x)} \psi(x') \left[ \Pi(x),\Pi(x') \right] \right. \notag \\
&\quad \left. + \nabla_n \overline{\varphi(x)} \nabla_n \psi(x') \left[ \Phi(x), \Phi(x') \right]
- \overline{\varphi(x)} \nabla_n \psi(x') \left[ \Pi(x), \Phi(x') \right] \right. \notag \\
&\quad \left. - \nabla_n \overline{\varphi(x)} \psi(x') \left[ \Phi(x),\Pi(x') \right]  \right\} \notag \\
&= \langle \varphi | \psi \rangle \, .
\end{align}
The commutation relations follow.
\end{proof}

\begin{remark}
Note that the ``smeared'' form of these commutation relations is
\begin{equation} \label{eq:smearedcommutationrelationonSigma}
\left[ \Phi(f), \Phi(g) \right] = \left[ \Pi(f), \Pi(g) \right] = 0 \, , \qquad 
\left[ \Phi(f), \Pi(g) \right] = i (f,g)_{\Sigma} \mathbb{I} \, ,
\end{equation}
where $f, \, g \in C^{\infty}_0 (\Sigma)$ and
\begin{equation}
(f,g)_{\Sigma} = \int_{\Sigma} \dvol_{\Sigma} f g \, .
\end{equation}
\end{remark}

Finally, we can show that the quantum field obeys the desired commutation relation.

\begin{proposition}
The quantum scalar field obeys the canonical commutation relation, for $f, \, g \in C^{\infty}_0 (M)$,
\begin{equation}
\left[ \Phi(f) , \Phi(g) \right] = i G(f,g) \mathbb{I} \, .
\end{equation}
\end{proposition}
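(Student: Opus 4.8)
The plan is to reduce the spacetime-smeared commutator to the equal-time canonical commutation relations on a Cauchy surface $\Sigma$ that were just established. The first step is to rewrite $\Phi(f) = \int_M \dvol_M\, f\,\Phi$ as a surface integral. Because the quantum field satisfies $P\Phi=0$ as an operator identity (it is a superposition of the classical solutions $\Phi_i$), the divergence/integration-by-parts argument behind Lemma~\ref{lemma:relationGandsigma} applies verbatim with one slot replaced by the operator-valued solution $\Phi$, giving
\[
\Phi(f) = \int_{\Sigma}\dvol_{\Sigma}\,\big[\,(Gf)\,\nabla_n\Phi - \Phi\,\nabla_n(Gf)\,\big]
= \int_{\Sigma}\dd^{d-1}x\,\big[\,(Gf)\,\Pi - \sqrt{h}\,\nabla_n(Gf)\,\Phi\,\big],
\]
where I have used $\Pi=\sqrt{h}\,\nabla_n\Phi$ and $\dvol_{\Sigma}=\sqrt{h}\,\dd^{d-1}x$, and where $Gf$ and $\nabla_n(Gf)$ are now c-number functions on $\Sigma$.

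Next I would insert this representation for both $\Phi(f)$ and $\Phi(g)$ and expand the commutator. The $[\Phi,\Phi]$ and $[\Pi,\Pi]$ pieces vanish, while the two cross terms each pick up a factor of the equal-time relation $[\Phi(x),\Pi(x')]=i\delta(x,x')\mathbb{I}$. The delta functions collapse one of the surface integrals, leaving
\[
[\Phi(f),\Phi(g)] = i\int_{\Sigma}\dvol_{\Sigma}\,\big[\,(Gf)\,\nabla_n(Gg) - (Gg)\,\nabla_n(Gf)\,\big]\,\mathbb{I}
= i\,\sigma(Gf,Gg)\,\mathbb{I},
\]
by the definition \eqref{eq:symplecticform} of the symplectic form. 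Lemma~\ref{lemma:relationGandsigma} then gives $\sigma(Gf,Gg)=G(f,g)$, which yields the claim $[\Phi(f),\Phi(g)]=i\,G(f,g)\,\mathbb{I}$.

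The step I expect to need the most care is the first one: justifying that the operator-valued identity $\Phi(f)=\int_{\Sigma}\dvol_{\Sigma}[(Gf)\nabla_n\Phi-\Phi\nabla_n(Gf)]$ really holds (that is, that $P\Phi=0$ may be used inside the integral and that the surface term is independent of the choice of $\Sigma$), together with the careful bookkeeping of the density nature of $\delta(x,x')$ and of the factors of $\sqrt{h}$ in passing between $\dvol_{\Sigma}$ and $\dd^{d-1}x$. An alternative, fully equivalent route avoids the Cauchy surface altogether: inserting the mode expansion \eqref{eq:quantumscalarfield} gives $\Phi(f)=\sum_i[a_i\,(f,\Phi_i)_M+a_i^\dagger\,(f,\overline{\Phi_i})_M]$, and the oscillator relations \eqref{eq:acommutationrelations} reduce $[\Phi(f),\Phi(g)]$ to a single mode sum; rewriting the coefficients via $G^*=-G$ (Remark~\ref{rem:adjointGreenoperator}) and the symplectic orthonormality of $\{\Phi_i,\overline{\Phi_i}\}$ re-sums this to $\sigma(Gf,Gg)=G(f,g)$, at the cost of invoking completeness of the mode basis and of tracking the signs and conjugates introduced by the anti-Hermitian character of $\sigma_{\mathbb{C}}$.
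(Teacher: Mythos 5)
Your proof is correct and follows essentially the same route as the paper: the paper's argument also rewrites the smeared field as a surface integral over $\Sigma$ (via \eqref{eq:smearedfieldinterm}--\eqref{eq:solutionintermsofic}, i.e.~the same integration-by-parts identity behind Lemma~\ref{lemma:relationGandsigma}), applies the equal-time canonical commutation relations in their smeared form \eqref{eq:smearedcommutationrelationonSigma}, and invokes Lemma~\ref{lemma:relationGandsigma} to identify the resulting symplectic expression with $G(f,g)$. The only cosmetic difference is that the paper phrases the surface representation through the trace maps $\rho_0$, $\rho_1$ and their adjoints rather than writing the surface integral and delta distributions explicitly.
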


\begin{proof}
By using \eqref{eq:solutionintermsofic}, the commutation relations \eqref{eq:smearedcommutationrelationonSigma} and Lemma~\ref{lemma:relationGandsigma}, 
\begin{align}
\left[ \Phi(f), \Phi(g) \right] 
&= \left[ (G \rho_1^* \Phi_0,f)_M - (G \rho_0^* \Phi_1, f)_M, \, (G \rho_1^* \Phi_0,f)_M - (G \rho_0^* \Phi_1,f)_M \right] \notag \\
&= \left[ - (\Phi_0, \rho_1 G f)_{\Sigma} + (\Phi_1, \rho_0 G f)_{\Sigma}, \, - (\Phi_0, \rho_1 G g)_{\Sigma} + (\Phi_1, \rho_0 G g)_{\Sigma} \right] \notag \\
&= \left[ - \Phi (\rho_1 G f) + \Pi(\rho_0 G f), \, - \Phi (\rho_1 G g) + \Pi(\rho_0 G g)\right] \notag \\
&= - i (\rho_1 G f, \rho_0 G g)_{\Sigma} \, \mathbb{I} + i (\rho_0 G f, \rho_1 G g)_{\Sigma} \, \mathbb{I} \notag \\ 
&= - i \int_{\Sigma} \dvol_{\Sigma} \left( Gg \nabla_n (Gf) - Gf \nabla_n (Gg) \right) \mathbb{I} \notag \\
&= i G(f,g) \mathbb{I} \, .
\end{align}
\end{proof}

\begin{remark}
The ``unsmeared field'' $\Phi(x)$ satisfies the commutation relation
\begin{equation}
\left[ \Phi(x), \Phi(x') \right] = i G(x,x') \mathbb{I} \, .
\end{equation}
When given an orthonormal basis $\{ \Phi_i \}_{i \in \mathscr{I}}$ of $\mathscr{H}$, this is equivalent to
\begin{align}
\left[ \Phi(x), \Phi(x') \right]
&= \sum_{i,j \in \mathscr{I}} \left\{ \left[ a_i, a_j \right] \Phi_i(x) \, \Phi_j (x')
+ \left[ a_i^{\dagger}, a_j^{\dagger} \right] \overline{\Phi_i(x)} \, \overline{\Phi_j (x')} \right. \notag \\
&\qquad\qquad \left. + \left[ a_i, a_j^{\dagger} \right] \Phi_i(x) \, \overline{\Phi_j (x')} 
+ \left[ a_i^{\dagger}, a_j \right] \overline{\Phi_i(x)} \, \Phi_j (x') \right\} \notag \\
&= \sum_{i \in \mathscr{I}} \left\{ \Phi_i(x) \, \overline{\Phi_i (x')} - \overline{\Phi_i(x)} \, \Phi_i (x') \right\} \mathbb{I} \notag \\
&= i G(x,x') \mathbb{I} \, . \label{eq:commutatorphi}
\end{align}
\end{remark}

We have then finished the Dirac's prescription to construct the quantum scalar field theory. We have constructed the Fock space $\mathscr{F}_{\rm s}(\mathscr{H})$ in \eqref{eq:Fockspace} and defined the quantum field as an operator-valued distribution $\Phi(f)$, whose ``unsmeared'' form is given by \eqref{eq:quantumscalarfield}. The quantum field satisfies the properties (i)-(iv) given in the beginning of this section.


\section{The case of stationary spacetimes}
\label{sec:qftcst-stationary}

In this section, we restrict our attention to stationary spacetimes, as defined in section~\ref{sec:stationaryspacetimes}. The basic idea to construct the quantum field theory is to choose the one-particle Hilbert space to be the subspace of complex solutions which are positive frequency with respect to the timelike Killing vector field, in a straightforward generalisation of quantum field theory on Minkowski spacetime. There are, however, some subtle technical points which need to be carefully considered and which are explored in detail in \cite{Ashtekar:1975zn,Kay:1978yp}. Here, we give an heuristic discussion of the construction and then a brief overview of the Green's distributions associated with the field equation which will be needed.

\subsection{Positive frequency solutions}
\label{sec:qftcst-positivefrequency}

In the last section it was described how to construct the one-particle Hilbert space $\mathscr{H}$ as the completion of a subspace $\mathscr{S}^{\mathbb{C}+}_{\rm sc}$ of the space $\mathscr{S}^{\mathbb{C}}_{\rm sc}$ of complex classical solutions such that the Hermitian form $\tilde{\sigma}_{\mathbb{C}}$ defined in \eqref{eq:quantumHermitianform} is positive definite (and, hence, a scalar product $\langle \cdot | \cdot \rangle$), $\mathscr{S}^{\mathbb{C}+}_{\rm sc}$ and its complex conjugate span the space of complex solutions and $\langle \Phi^+ | \Phi^- \rangle = 0$ for $\Phi^+ \in \mathscr{S}_{\rm sc}^{\mathbb{C}+}$ and $\Phi^- \in \overline{\mathscr{S}_{\rm sc}^{\mathbb{C}+}}$. However, there are many choices of such Hilbert spaces $\mathscr{H}$ and, therefore, the quantum field theory ultimately depends on our choice of $\mathscr{H}$.

In quantum field theory, which deals with infinite-dimensional vector spaces of solutions, different choices of $\mathscr{H}$ yield, in general, unitarily inequivalent theories. For a detailed discussion of this fact we refer e.g. to section 4.4 of \cite{wald1994quantum}. A simple way to visualise this point is to consider two such choices of one-particle Hilbert spaces, $\mathscr{H}_1$ and $\mathscr{H}_2$. Then, any solution $\varphi \in \mathscr{H}_2$ can be decomposed as $\varphi = \psi + \overline{\xi}$, with $\psi , \, \xi \in \mathscr{H}_1$. The annihilation operator $a(\varphi)$, which acts on the Fock space $\mathscr{F}_{\rm s}(\mathscr{H}_2)$, can then be written as
\begin{equation}
a(\varphi) = \langle \varphi | \Phi \rangle = \langle \psi + \overline{\xi} | \Phi \rangle
= a(\psi) - a^{\dagger}(\xi) \, ,
\end{equation}
where \eqref{eq:aalternativedefinition} was used. Let $|0 \rangle \in \mathscr{F}_{\rm s}(\mathscr{H}_1)$ be the vacuum state of the Fock space defined by $\mathscr{H}_1$, such that $a(\psi) |0 \rangle = 0$. It is clear that $a(\varphi) |0 \rangle \neq 0$, i.e. the vacuum state defined using $\mathscr{H}_1$ is not equivalent to the vacuum state defined using $\mathscr{H}_2$. We conclude that the definition of a vacuum state depends on the choice of the one-particle Hilbert space $\mathscr{H}$.

This choice-dependence is also true for Minkowski spacetime. However, in this case, there is a natural choice of $\mathscr{H}$, consisting of the subspace of positive frequency solutions (to be defined below), which arises from the time translation invariance (and, ultimately, from the Poincar\'{e} invariance) of the classical theory. In a general curved spacetime, there is no natural criterion, such as symmetries of the theory, for a unique choice of $\mathscr{H}$.

However, for a globally stationary spacetime, as defined in Definition~\ref{def:globalstationary}, there exists a time translation symmetry that can be used in an analogous way to Minkowski spacetime to select a natural choice of $\mathscr{H}$.

Let $M$ be a globally stationary spacetime and let $\xi$ be the future-directed timelike Killing vector field. Given the time translation symmetry, the Lie derivative with respect to $\xi$, $\mathcal{L}_{\xi}$, commutes with the Klein-Gordon operator $P$ and, therefore, it maps the space of complex smooth solutions $\mathscr{S}^{\mathbb{C}}$ to itself. Furthermore, it can be shown that:

\begin{proposition}
$\mathcal{L}_{\xi}$ is anti-Hermitian with respect to the Hermitian form $\tilde{\sigma}_{\mathbb{C}}$.
\end{proposition}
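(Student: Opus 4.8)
The plan is to strip the statement down to a property of the symplectic current and then invoke the Killing equation. Since $\tilde{\sigma}_{\mathbb{C}}(\Phi_1,\Phi_2)=i\,\sigma(\overline{\Phi_1},\Phi_2)$ by \eqref{eq:quantumHermitianform}, and since $\xi$ is a \emph{real} vector field so that $\overline{\mathcal{L}_\xi\Phi}=\mathcal{L}_\xi\overline{\Phi}$, the claim that $\mathcal{L}_\xi$ is anti-Hermitian, i.e.\ $\tilde{\sigma}_{\mathbb{C}}(\mathcal{L}_\xi\Phi_1,\Phi_2)=-\tilde{\sigma}_{\mathbb{C}}(\Phi_1,\mathcal{L}_\xi\Phi_2)$, is equivalent (after writing $\psi_1:=\overline{\Phi_1}$, $\psi_2:=\Phi_2$) to the real identity
\[
\sigma(\mathcal{L}_\xi\psi_1,\psi_2)+\sigma(\psi_1,\mathcal{L}_\xi\psi_2)=0
\]
for all $\psi_1,\psi_2\in\mathscr{S}_{\rm sc}^{\mathbb{C}}$, with $\sigma$ extended bilinearly.

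I would begin by recording that, because $\xi$ is Killing, $\mathcal{L}_\xi$ commutes with the Klein--Gordon operator $P$, so $\mathcal{L}_\xi\psi_1$ and $\mathcal{L}_\xi\psi_2$ are again solutions of spacelike compact support and both terms are well defined. The heart of the argument is then to examine the conserved current $J^a(\psi_1,\psi_2)=\psi_1\nabla^a\psi_2-\psi_2\nabla^a\psi_1$, which satisfies $\nabla_aJ^a=0$ on solutions (the Remark after \eqref{eq:symplecticform}). Viewing $J^a$ as a vector field on $M$ whose metric dependence is $\xi$-invariant, its Lie derivative obeys the Leibniz rule
\[
\mathcal{L}_\xi J^a = J^a(\mathcal{L}_\xi\psi_1,\psi_2)+J^a(\psi_1,\mathcal{L}_\xi\psi_2),
\]
while, as the Lie derivative of a contravariant vector, $\mathcal{L}_\xi J^a=\xi^b\nabla_bJ^a-J^b\nabla_b\xi^a$. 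Using $\nabla_b\xi^b=0$ (Killing fields are divergence-free) together with $\nabla_bJ^b=0$, these combine into the manifestly total-derivative form
\[
J^a(\mathcal{L}_\xi\psi_1,\psi_2)+J^a(\psi_1,\mathcal{L}_\xi\psi_2)=\nabla_b\!\left(\xi^bJ^a-\xi^aJ^b\right).
\]

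Integrating this over the Cauchy surface $\Sigma$ against $n_a\,\dvol_\Sigma$, the right-hand side is the flux of the divergence of the antisymmetric tensor $\xi^bJ^a-\xi^aJ^b$, which by Stokes' theorem reduces to an integral over $\partial\Sigma$. Because $\psi_1,\psi_2$ have spacelike compact support, their Cauchy data --- and hence $J^a$ --- have compact support on $\Sigma$, so this boundary contribution vanishes and the required identity follows. As an independent check I would re-derive the same conclusion from the flow $\phi_s$ generated by $\xi$: surface-independence of $\sigma$ together with $\phi_s^*g=g$ gives $\sigma(\phi_s^*\psi_1,\phi_s^*\psi_2)=\sigma(\psi_1,\psi_2)$, and differentiating at $s=0$ reproduces the relation.

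The step I expect to be the main obstacle is the control of the boundary term: one must confirm that spacelike compact support of the solutions genuinely forces $J^a$ to have compact support on $\Sigma$, so that the $\partial\Sigma$ integral drops out, and one must verify that the Leibniz split of $\mathcal{L}_\xi J^a$ acquires no extra metric-derivative terms, which uses $\mathcal{L}_\xi g^{ab}=0$. Both points rely decisively on $\xi$ being Killing rather than merely generating a symmetry of the field equation; everything else is bookkeeping with $\nabla_{(a}\xi_{b)}=0$ and the conservation law $\nabla_aJ^a=0$.
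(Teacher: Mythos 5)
Your proof is correct, but it takes a genuinely different route from the paper's. The paper argues by direct computation on the Cauchy surface: it uses the decomposition $\xi^a = N n^a + N^a$ of \eqref{eq:chinNN}, writes $\mathcal{L}_{\xi}\Phi = N\,\nabla_n\Phi + \nabla_N\Phi$ on $\Sigma$, substitutes this into $\tilde{\sigma}_{\mathbb{C}}(\Phi,\mathcal{L}_{\xi}\Psi)$, and integrates by parts within $\Sigma$ (using the torsion-free connection) to move the derivatives onto $\overline{\Phi}$, landing directly on $-\tilde{\sigma}_{\mathbb{C}}(\mathcal{L}_{\xi}\Phi,\Psi)$. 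You instead reduce the claim to the real bilinear identity $\sigma(\mathcal{L}_{\xi}\psi_1,\psi_2)+\sigma(\psi_1,\mathcal{L}_{\xi}\psi_2)=0$ and prove it covariantly: the Killing property enters twice, once as $\mathcal{L}_{\xi}g^{ab}=0$ to justify the Leibniz split of $\mathcal{L}_{\xi}J^a$, and once as $\nabla_a\xi^a=0$ to recognise $\mathcal{L}_{\xi}J^a$ as the divergence $\nabla_b\bigl(\xi^b J^a - \xi^a J^b\bigr)$ of an antisymmetric tensor, whose flux through $\Sigma$ is a pure boundary term. Your two flagged worries both check out: in a globally hyperbolic spacetime $J(K)\cap\Sigma$ is compact for compact $K$, so spacelike compact support of the solutions does force $J^a|_{\Sigma}$ to have compact support, killing the $\partial\Sigma$ integral; and the absence of extra metric-derivative terms in the Leibniz split is precisely $\mathcal{L}_{\xi}g=0$. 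What your route buys is foliation independence and conceptual clarity --- no ADM split is needed, and your flow argument ($\sigma(\phi_s^*\psi_1,\phi_s^*\psi_2)=\sigma(\psi_1,\psi_2)$, differentiated at $s=0$) exhibits anti-Hermiticity of $\mathcal{L}_{\xi}$ as nothing but infinitesimal invariance of the symplectic form under the isometry, a statement valid for any Killing field, timelike or not. What the paper's computation buys is self-containedness: it uses only integration by parts on $\Sigma$, avoiding the antisymmetric-superpotential Stokes identity, and it displays the cancellation explicitly in the lapse--shift variables $N$, $N^i$ used throughout the rest of the chapter.
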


\begin{proof}
One wants to show that $\mathcal{L}_{\xi}^{\dagger} = - \mathcal{L}_{\xi}$, i.e.
\begin{equation}
\tilde{\sigma}_{\mathbb{C}} (\Phi , \mathcal{L}_{\xi} \Psi ) = - \tilde{\sigma}_{\mathbb{C}} (\mathcal{L}_{\xi} \Phi , \Psi ) \, ,
\end{equation}
for $\Phi , \, \Psi \in \mathscr{S}^{\mathbb{C}}_{\rm sc}$. On the Cauchy surface $\Sigma$ on which the Hermitian form is evaluated, one has that
$\xi^a = N n^a + N^a$, according to \eqref{eq:chinNN}, hence
\begin{equation}
\mathcal{L}_{\xi} \Phi = N \, \nabla_n \Phi + \nabla_N \Phi \, .
\end{equation}
Applying \eqref{eq:quantumHermitianform},
\begin{align}
\tilde{\sigma}_{\mathbb{C}} (\Phi , \mathcal{L}_{\xi} \Psi )
&= i \int_{\Sigma} \dvol_{\Sigma} \left( \overline{\Phi} \, \nabla_n \mathcal{L}_{\xi} \Psi
- \mathcal{L}_{\xi} \Psi \, \nabla_n \overline{\Phi} \right) \notag \\
&= i \int_{\Sigma} \dvol_{\Sigma} \left[ \overline{\Phi} \, ( N \nabla_n \nabla_n \Psi + \nabla_n \nabla_N \Psi )
- (N \, \nabla_n \Psi + \nabla_N \Psi ) \, \nabla_n \overline{\Phi} \right] \, .
\end{align}
By using integration by parts and the torsion-free property of the connection,
\begin{align}
\tilde{\sigma}_{\mathbb{C}} (\Phi , \mathcal{L}_{\xi} \Psi )
&= i \int_{\Sigma} \dvol_{\Sigma} \left[ ( N \nabla_n \nabla_n \overline{\Phi} + \nabla_n \nabla_N \overline{\Phi} ) \, \Psi
- (N \, \nabla_n \overline{\Phi} + \nabla_N \overline{\Phi} ) \, \nabla_n \Psi \right] \notag \\
&= i \int_{\Sigma} \dvol_{\Sigma} \left( \nabla_n \mathcal{L}_{\xi} \overline{\Phi} \,  \Psi
- \mathcal{L}_{\xi} \overline{\Phi} \, \nabla_n \Psi \right) \notag \\
&=  - \tilde{\sigma}_{\mathbb{C}} (\mathcal{L}_{\xi} \Phi , \Psi ) \, .
\end{align}
Hence, $\mathcal{L}_{\xi}$ is anti-Hermitian.
\end{proof}

An immediate and important consequence is:

\begin{proposition} \label{prop:Lchiimaginaryeigenvalues}
$\mathcal{L}_{\xi}$ has purely imaginary eigenvalues and eigenvectors for distinct eigenvalues are orthogonal.
\end{proposition}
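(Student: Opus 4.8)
The plan is to read both claims off directly from the anti-Hermiticity of $\mathcal{L}_{\xi}$ with respect to $\tilde{\sigma}_{\mathbb{C}}$ just established, reproducing the abstract content of Remark~\ref{rem:antihermitianoperator} in this concrete setting. The method is the classical one for (anti-)self-adjoint operators: pair the eigenvalue equation against the eigenvector using $\tilde{\sigma}_{\mathbb{C}}$, and exploit that this form is anti-linear in its first slot and linear in its second, as defined in \eqref{eq:quantumHermitianform}.

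First I would treat the eigenvalues. Suppose $\mathcal{L}_{\xi}\Phi = \lambda\Phi$ with $\Phi \neq 0$. Evaluating $\tilde{\sigma}_{\mathbb{C}}(\Phi, \mathcal{L}_{\xi}\Phi)$ in two ways — once by extracting $\lambda$ from the linear second argument, and once by moving $\mathcal{L}_{\xi}$ across via anti-Hermiticity and extracting $\lambda$ from the anti-linear first argument — gives
\begin{equation}
\lambda \, \tilde{\sigma}_{\mathbb{C}}(\Phi, \Phi) = -\overline{\lambda} \, \tilde{\sigma}_{\mathbb{C}}(\Phi, \Phi) \, .
\end{equation}
Hence $(\lambda + \overline{\lambda})\,\tilde{\sigma}_{\mathbb{C}}(\Phi, \Phi) = 0$, and provided $\tilde{\sigma}_{\mathbb{C}}(\Phi, \Phi) \neq 0$ this forces $\operatorname{Re}\lambda = 0$, i.e.\ $\lambda$ is purely imaginary.

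For the orthogonality statement, take eigenvectors $\Phi, \Psi$ with distinct eigenvalues $\lambda \neq \mu$. The same two-way evaluation of $\tilde{\sigma}_{\mathbb{C}}(\Phi, \mathcal{L}_{\xi}\Psi)$ yields
\begin{equation}
(\mu + \overline{\lambda}) \, \tilde{\sigma}_{\mathbb{C}}(\Phi, \Psi) = 0 \, .
\end{equation}
Since the eigenvalues are already known to be purely imaginary, $\overline{\lambda} = -\lambda$, so the prefactor equals $\mu - \lambda \neq 0$ and therefore $\tilde{\sigma}_{\mathbb{C}}(\Phi, \Psi) = 0$, as claimed.

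The step that deserves care rather than calculation is the non-vanishing $\tilde{\sigma}_{\mathbb{C}}(\Phi, \Phi) \neq 0$. In contrast to the genuine Hilbert-space setting of the previous section, $\tilde{\sigma}_{\mathbb{C}}$ is non-degenerate on all of $\mathscr{S}^{\mathbb{C}}_{\rm sc}$ but need not be positive definite there, so an eigenvector could a priori be $\tilde{\sigma}_{\mathbb{C}}$-null and the argument for imaginary eigenvalues would stall (this is exactly the loophole through which complex frequencies, and hence instabilities, can appear). I would therefore either restrict to the regime where $\xi$ is timelike and $\tilde{\sigma}_{\mathbb{C}}$ is positive definite on the relevant subspace — the globally stationary situation in which the positive-frequency splitting is well defined — or simply cite the spectral result of Remark~\ref{rem:antihermitianoperator}, which records precisely this conclusion for anti-Hermitian operators on an inner product space and makes the proposition an immediate consequence of the preceding one.
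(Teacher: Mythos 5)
Your proof is correct and follows essentially the paper's route: the paper's entire proof is the single line ``It follows directly from Remark~\ref{rem:antihermitianoperator}'', and that remark is exactly the standard anti-Hermiticity pairing argument you spell out, so your fallback of simply citing the remark is literally what the paper does. The caveat you raise about possible $\tilde{\sigma}_{\mathbb{C}}$-null eigenvectors is a genuine subtlety that the paper's citation silently glosses over — Remark~\ref{rem:antihermitianoperator} is stated for operators on an inner product space, while $\tilde{\sigma}_{\mathbb{C}}$ is indefinite on $\mathscr{S}^{\mathbb{C}}_{\rm sc}$ — so your version is, if anything, more careful than the original.
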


\begin{proof}
It follows directly from Remark~\ref{rem:antihermitianoperator}.
\end{proof}

A positive frequency solution is then defined to be an eigenfunction of $\mathcal{L}_{\xi}$ whose eigenvalue is purely negative imaginary.

\begin{definition}
A mode solution $\Phi \in \mathscr{S}^{\mathbb{C}}$ is of \emph{positive frequency} if it is an eigenfunction of $\mathcal{L}_{\xi}$ such that
\begin{equation}
\mathcal{L}_{\xi} \Phi = - i \omega \Phi \, , \qquad \omega > 0 \, .
\end{equation}
A general solution is of positive frequency if it can be expressed as a linear combination of mode solutions of positive frequency.
\end{definition}

Let $\mathscr{S}^{\mathbb{C}+}$ denote the subspace of positive frequency solutions. Solutions in $\overline{\mathscr{S}^{\mathbb{C}+}}$ are called \emph{negative frequency} solutions.

\begin{remark} \label{rem:positivefrequencydense}
Note that, even though positive frequency solutions cannot have spacelike compact support \cite{wald1994quantum}, the space $\mathscr{S}^{\mathbb{C}}_{\rm sc}$ is dense in $\mathscr{S}^{\mathbb{C}} = \mathscr{S}^{\mathbb{C}+} \oplus \overline{\mathscr{S}^{\mathbb{C}+}}$.
\end{remark}

\begin{proposition} \label{prop:choiceofpositivefreq}
One has that
\begin{enumerate}[label={(\roman*)}]
\item the Hermitian form $\tilde{\sigma}_{\mathbb{C}}$ is positive definite on $\mathscr{S}^{\mathbb{C}+}$ (and hence defines a scalar product $\langle \cdot | \cdot \rangle$);
\item $\mathscr{S}^{\mathbb{C}}$ is the span of $\mathscr{S}^{\mathbb{C}+}$ and $\overline{\mathscr{S}^{\mathbb{C}+}}$;
\item given any $\Phi^+ \in \mathscr{S}^{\mathbb{C}+}$ and $\Phi^- \in \overline{\mathscr{S}^{\mathbb{C}+}}$, then $\tilde{\sigma}_{\mathbb{C}}(\Phi^+,\Phi^-)=0$.
\end{enumerate}
\end{proposition}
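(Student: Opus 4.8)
The plan is to prove the three claims with a single common tool — the conserved Noether energy attached to the Killing symmetry — since the orthogonality (iii) and spanning (ii) statements are essentially spectral consequences of the anti-Hermiticity of $\mathcal{L}_{\xi}$ proved above, whereas the positive definiteness (i) is the genuine obstacle and is exactly where the global (rather than merely local) timelike character of $\xi$ must be used. Before anything else I would record the identity relating the Hermitian form $\tilde{\sigma}_{\mathbb{C}}$ to the field energy; the key preliminary observation is that the formal manipulations available so far fix the sign of the eigenvalues of $\mathcal{L}_{\xi}$ but give no information whatsoever on the sign of $\tilde{\sigma}_{\mathbb{C}}(\Phi,\Phi)$, so an external geometric input is unavoidable.

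For (i), I would evaluate $\tilde{\sigma}_{\mathbb{C}}(\Phi,\Phi)$ for a frequency-$\omega$ mode using \eqref{eq:quantumHermitianform} together with $\xi^a = N n^a + N^a$ from \eqref{eq:chinNN}. This produces a manifestly positive piece $\propto \omega\,|\Phi|^2/N$ accompanied by a shift-vector term $\propto N^a(\overline{\Phi}\,\nabla_a\Phi - \Phi\,\nabla_a\overline{\Phi})$ of indefinite sign; controlling this second term is the heart of the matter, as it is precisely the ``ergoregion/superradiance'' contribution. Rather than fight it directly, I would relate the norm to the conserved energy $E(\Phi) = \int_{\Sigma} \dvol_{\Sigma}\, T_{ab}\,\xi^a n^b$, where $T_{ab}$ is the Hermitian stress tensor of the complex field, and show by integration by parts that $E(\Phi) = \omega\,\tilde{\sigma}_{\mathbb{C}}(\Phi,\Phi)$ for a mode of frequency $\omega$. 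Because the spacetime is globally stationary, $\xi$ is future-directed timelike everywhere, and $n$ is future-directed timelike by construction, so the scalar-field energy density $T_{ab}\xi^a n^b$ is nonnegative pointwise by the energy condition; hence $E(\Phi) \geq 0$, with $E(\Phi) > 0$ unless $\Phi \equiv 0$. Since $\omega > 0$ on $\mathscr{S}^{\mathbb{C}+}$, this forces $\tilde{\sigma}_{\mathbb{C}}(\Phi,\Phi) > 0$, and extending from modes to general positive-frequency solutions by continuity of the form establishes (i).

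For (ii), I would use the spectral decomposition of $\mathcal{L}_{\xi}$: in coordinates adapted to $\xi$ (so that $\mathcal{L}_{\xi} = \partial_t$) a general complex solution is resolved by Fourier analysis in $t$ into components of fixed frequency $\omega \in \mathbb{R}$, the $\omega > 0$ components lying in $\mathscr{S}^{\mathbb{C}+}$ and the $\omega < 0$ components in $\overline{\mathscr{S}^{\mathbb{C}+}}$. The only possibility left to exclude is a zero-frequency kernel, and the energy identity disposes of it immediately: an $\omega = 0$ solution has $E(\Phi) = 0$, which by the strict positivity just established forces $\Phi \equiv 0$. Hence $\mathscr{S}^{\mathbb{C}} = \mathscr{S}^{\mathbb{C}+} \oplus \overline{\mathscr{S}^{\mathbb{C}+}}$.

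For (iii), I would note that, $\xi$ being real, $\mathcal{L}_{\xi}$ commutes with complex conjugation, so conjugating a positive-frequency mode with eigenvalue $-i\omega$ ($\omega > 0$) yields an element of $\overline{\mathscr{S}^{\mathbb{C}+}}$ with eigenvalue $+i\omega$. Thus any $\Phi^+ \in \mathscr{S}^{\mathbb{C}+}$ and $\Phi^- \in \overline{\mathscr{S}^{\mathbb{C}+}}$ are superpositions of eigenvectors of $\mathcal{L}_{\xi}$ with strictly negative and strictly positive imaginary eigenvalues respectively; these are always distinct, so Proposition~\ref{prop:Lchiimaginaryeigenvalues} gives $\tilde{\sigma}_{\mathbb{C}}(\Phi^+,\Phi^-) = 0$ on modes, and the general case follows by (anti)linearity over the decomposition of (ii). Alternatively, for modes one reads this off directly from anti-Hermiticity, which turns $\tilde{\sigma}_{\mathbb{C}}(\Phi^+, \mathcal{L}_{\xi}\Phi^-) = -\tilde{\sigma}_{\mathbb{C}}(\mathcal{L}_{\xi}\Phi^+, \Phi^-)$ into $i(\omega_+ + \omega_-)\tilde{\sigma}_{\mathbb{C}}(\Phi^+,\Phi^-) = 0$. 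The main obstacle throughout is thus the positivity in (i): everything else is formal, and it is only there that global stationarity — as opposed to the local stationarity appropriate to rotating black holes, where this very positivity fails — is essential.
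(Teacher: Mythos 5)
Your parts (ii) and (iii) follow essentially the paper's own route: the paper also obtains (iii) from Proposition~\ref{prop:Lchiimaginaryeigenvalues} (anti-Hermiticity of $\mathcal{L}_{\xi}$, distinct eigenvalues orthogonal) and then reads (ii) off the resulting orthogonal decomposition; your Fourier-in-$t$ argument with an explicit exclusion of a zero-frequency kernel is, if anything, more careful than the paper's one-line assertion. The genuine difference is in (i), which is indeed the crux. The paper works directly with the identity $0 = \int_{\Sigma}\dd^{d-1}x\, N\sqrt{h}\,\overline{\Phi}\,P\Phi$: it expands $P$ in ADM variables, integrates by parts spatially, uses the pointwise timelike character of the Killing field (i.e.\ $-N^2 + h_{ij}N^iN^j<0$, hence positivity of the quadratic form $h^{ij}-N^iN^j/N^2$) together with $m^2+\xi R\geq 0$, and combines the resulting inequality with $\omega>0$ to get $\tilde{\sigma}_{\mathbb{C}}(\Phi,\Phi)\geq 0$. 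You route instead through the stress tensor: the identity $E(\Phi)=c\,\omega\,\tilde{\sigma}_{\mathbb{C}}(\Phi,\Phi)$ (with $c>0$; your stated constant is off by a harmless factor) plus pointwise nonnegativity of $T_{ab}\xi^a n^b$. The two proofs are cousins rather than strangers: your energy identity is established by exactly the field-equation-plus-integration-by-parts manipulation the paper performs, and the pointwise positivity of the energy density rests on the same geometric input, namely that $\xi$ is timelike \emph{everywhere} on $\Sigma$. What your packaging buys is conceptual clarity — global stationarity enters as "future timelike $\xi$ makes the flux vector argument work pointwise" — and strict positivity, which the paper's chain of inequalities (ending only in $\geq 0$) technically leaves unaddressed, although positive definiteness requires it.

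One point you must make explicit, because as written it is a hole. The field carries the curvature coupling of \eqref{eq:KGequation}, and the conserved (improved) stress tensor of a non-minimally coupled scalar does \emph{not} satisfy the dominant energy condition in general, so your appeal to "the energy condition" is not automatic. The fix is to run the argument with the minimal-form Hermitian tensor in which the mass term is the effective mass $m^2+\xi R$: the pointwise positivity algebra then needs only $m^2+\xi R\geq 0$ at each point, and conservation of $T_{ab}$ is never actually used since both $E$ and $\tilde{\sigma}_{\mathbb{C}}$ are evaluated on a single surface $\Sigma$. This is precisely the hypothesis the paper inserts ("no tachyonic instabilities"), so it costs nothing extra, but without it your (i) — and with it the zero-kernel step of your (ii) — fails for sufficiently negative effective mass. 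Finally, like the paper, you should note that sharp-frequency modes need not lie in $\mathscr{S}^{\mathbb{C}}_{\rm sc}$, so "extending by continuity" really means the wave-packet (or compact-$\Sigma$) caveat that the paper spells out.
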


\begin{proof}
Item (iii) follows from Proposition~\ref{prop:Lchiimaginaryeigenvalues}. This shows that $\mathscr{S}^{\mathbb{C}}$ can be orthogonally decomposed as $\mathscr{S}^{\mathbb{C}} = \mathscr{S}^{\mathbb{C}+} \oplus \overline{\mathscr{S}^{\mathbb{C}+}}$ and, hence, (ii).

It remains to prove (i). We want to show that, for positive frequency $\Phi$,
\begin{equation} \label{eq:KGscalarproductproof1}
\tilde{\sigma}_{\mathbb{C}}(\Phi, \Phi)
= i \int_{\Sigma} \dvol_{\Sigma} \left( \overline{\Phi} \nabla_n \Phi - \Phi \nabla_n \overline{\Phi} \right) \geq 0 \, .
\end{equation}
The metric of the stationary spacetime can be written as \eqref{eq:ADMdecomposition},
\begin{equation} \label{eq:ADMdecompositionproof}
\dd s^2 = - N^2 \, \dd t^2 + h_{ij} \left( \dd x^i + N^i \, \dd t \right) \left( \dd x^j + N^j \, \dd t \right) \, .
\end{equation}
with $\xi = \partial_t$ and $N$, $N^i$ and $h_{ij}$ being independent of $t$. Note that the unit normal vector field is given by
\begin{equation}
n = \frac{1}{N} \left( \xi - N^i \partial_i \right) \, ,
\end{equation}
and thus \eqref{eq:KGscalarproductproof1} is equivalent to
\begin{equation}
\tilde{\sigma}_{\mathbb{C}}(\Phi, \Phi)
= i \int_{\Sigma} \dd^{d-1} x \frac{\sqrt{h}}{N} \left( \overline{\Phi} \partial_t \Phi - \Phi \partial_t \overline{\Phi} -\overline{\Phi} N^i \partial_i \Phi + \Phi N^i \partial_i \overline{\Phi} \right) \geq 0 \, .
\end{equation}

In the static case ($N^i = 0$), point (i) follows easily from $\partial_t \Phi = - i \omega \Phi$ with $\omega > 0$, 
\begin{equation}
\tilde{\sigma}_{\mathbb{C}}(\Phi, \Phi)
= 2 \omega \int_{\Sigma} d^{d-1} x \frac{\sqrt{h}}{N} |\Phi|^2 \geq 0 \, .
\end{equation}

In the non-static case, a bit more work is needed. Here, we just sketch the proof, by considering the special case for which the spacelike surfaces $\Sigma$ are compact. Then, the spectrum is discrete and, by orthogonality, it suffices to consider positive frequency modes with fixed $\omega$. (If $\Sigma$ is not compact, we need to consider wave-packets of positive frequency, which are localised in spacetime, as the ones described in Section~\ref{sec:HHstateconstruction}, instead of mode solutions of sharp frequency.)

Start with
\begin{equation} \label{eq:KGscalarproductproof2}
0 \equiv \int_{\Sigma} \dd^{d-1} x \, N \sqrt{h} \, \left( \overline{\Phi} P \Phi \right) \, ,
\end{equation}
where $P = \nabla^2 - m^2 - \xi R$ is the Klein-Gordon operator and $\Phi$ is a positive frequency solution, $\partial_t \Phi = - i \omega \Phi$ with $\omega > 0$. For a stationary spacetime with metric \eqref{eq:ADMdecompositionproof} the Klein-Gordon operator is
\begin{align}
P &= \frac{1}{\sqrt{-g}} \, \partial_{\mu} \left( g^{\mu\nu} \sqrt{-g} \, \partial_{\nu} \right) - m^2 - \xi R \notag \\
&= - \frac{1}{N^2} \, \partial_t^2 + \frac{N^i}{N^2} \, \partial_t \partial_i + \frac{1}{N \sqrt{h}} \, \partial_i \left( \frac{N^i \sqrt{h}}{N} \partial_t \right) \notag \\
&\quad + \frac{1}{N \sqrt{h}} \, \partial_i \left[ N \sqrt{h} \left( h^{ij} - \frac{N^i N^j}{N^2} \right) \partial_j \right] - m^2 - \xi R \, .
\end{align}
Substituting in \eqref{eq:KGscalarproductproof2} gives
\begin{align}
0 &=  \int_{\Sigma} \dd^{d-1} x \, N \sqrt{h} \, \left[ \frac{\omega^2}{N^2} \, |\Phi|^2 - i \omega \frac{N^i}{N^2} \overline{\Phi} \partial_i \Phi + i \omega \frac{N^i}{N^2} \Phi \partial_i \overline{\Phi} \right. \notag \\
&\hspace{19ex} \left. - \left( h^{ij} - \frac{N^i N^j}{N^2} \right) \partial_i \overline{\Phi} \partial_j \Phi - \left( m^2 + \xi R \right) |\Phi|^2 \right] \, ,
\label{eq:KGscalarproductproof3}
\end{align}
where integration by parts was used in the third and fourth terms and the boundary terms vanish, given the compactness of $\Sigma$.

At this point, we use the fact that $\partial_t$ is timelike, which translates into
\begin{equation} \label{eq:KGscalarproductproof4}
- N^2 + h_{ij} N^i N^j < 0 \, .
\end{equation}
At each point of $\Sigma$, choose Riemann normal coordinates such that $h_{ij} = \delta_{ij}$ and $N^i=(N^1, 0, ..., 0)$. Then, \eqref{eq:KGscalarproductproof4} shows that $|N^1| < N$ at that point, from which we can conclude that the quadratic form given by
\begin{equation}
h^{ij} - \frac{N^i N^j}{N^2} \, , \qquad i, j = 1, \ldots , d-1,
\end{equation}
is positive. On physical grounds, we also assume that $m^2 + \xi R \geq 0$, such that there are no tachyonic instabilities. Then it follows that
\begin{align}
\int_{\Sigma} \dd^{d-1} x \, N \sqrt{h} \, \left[ \left( h^{ij} - \frac{N^i N^j}{N^2} \right) \partial_i \overline{\Phi} \partial_j \Phi + \left( m^2 + \xi R \right) |\Phi|^2 \right] \geq 0 \, .
\end{align}
Given \eqref{eq:KGscalarproductproof3} and recalling that $\omega > 0$, we can conclude that
\begin{align} \label{eq:KGscalarproductproof5}
\int_{\Sigma} \dd^{d-1} x \, \frac{\sqrt{h}}{N} \, \left[ \omega \, |\Phi|^2 - i \, \overline{\Phi} N^i \partial_i \Phi + i \, \Phi N^i \partial_i \overline{\Phi} \right] \geq 0 \, .
\end{align}

Therefore,
\begin{align} 
\tilde{\sigma}_{\mathbb{C}}(\Phi, \Phi) &= \int_{\Sigma} \dd^{d-1} x \, \frac{\sqrt{h}}{N} \, \left[ 2 \omega \, |\Phi|^2 - i \, \overline{\Phi} N^i \partial_i \Phi + i \, \Phi N^i \partial_i \overline{\Phi} \right] \notag \\
&\geq \int_{\Sigma} \dd^{d-1} x \, \frac{\sqrt{h}}{N} \, \left[ \omega \, |\Phi|^2 - i \, \overline{\Phi} N^i \partial_i \Phi + i \, \Phi N^i \partial_i \overline{\Phi} \right]
\geq 0 \, ,
\end{align}
which proves (i).
\end{proof}

It follows that a natural choice for the one-particle Hilbert space $\mathscr{H}$ for a globally stationary spacetime is the subspace of positive frequency solutions.

Let $\{ \Phi_i \}_{i \in \mathscr{I}}$ denote an orthonormal basis of $\mathscr{H}$ and let $t$ be a time function such that $\xi = \partial_t$ and $x = (t, \vec{x})$. One can write a positive frequency mode solution as
\begin{equation} \label{eq:positivefreqmodesolution}
\Phi_i (x) = e^{- i \omega_i t} \, \phi_i(\vec{x}) \, , \qquad
\omega_i > 0 \, .
\end{equation}
These expressions are written with a notation appropriate for the case in which the index set $\mathscr{I}$ is discrete, for notational simplicity, but they should be thought to also include the continuous case.

\subsection{Green's distributions}
\label{sec:Greensfunctions}

In this section, we present a brief description of the Green's distributions (and other closely related distributions) associated to the Klein-Gordon equation and their relation to the expectation values of products of the fields. We start by considering the case of systems with zero temperature before introducing the thermal Green's distributions.
This brief overview follows parts of \cite{birrell1984quantum,fulling1987temperature,fulling1989aspects}.

\subsubsection{Zero temperature Green's distributions}

To start with two examples, the bi-distributions $G_{\rm ret}$ and $G_{\rm adv}$ introduced in Definition~\ref{def:Greenoperators} satisfy
\begin{equation}
P_x \, G_{\rm ret}(x,x') = P_x \, G_{\rm adv}(x,x') = \frac{\delta(x,x')}{\sqrt{-g(x)}} \, ,
\end{equation}
in the sense of distributions, where $P_x$ is the Klein-Gordon operator at the point $x$ and the Dirac delta distribution $\delta(x,x')$ is a density in the first argument. Hence, $G_{\rm ret}$ and $G_{\rm adv}$ are \emph{Green's distributions} associated with the Klein-Gordon equation, $P \Phi = 0$. As is normal with hyperbolic differential operators, there are several different Green's distributions associated with a hyperbolic equation.

The bi-distribution $G$ introduced in Definition~\ref{def:causalpropagator}, on the other hand, satisfies the homogeneous equation
\begin{equation}
P_x \, G(x,x') = 0 \, ,
\end{equation}
and is not, strictly speaking, a Green's distribution. It is common practice, however, to designate this and other related distributions as ``Green's distributions'' or (misleadingly) ``Green's functions''.

Above, it was shown that $G(x,x')$ was related to products of two field operator-valued distributions $\Phi(x)$ and $\Phi(x')$ by the relation \eqref{eq:commutatorphi},
\begin{equation}
i G(x,x') \mathbb{I} = [ \Phi(x), \Phi(x') ] \, .
\end{equation}
Given the support properties of $G_{\rm ret}$ and $G_{\rm adv}$, one has
\begin{align}
i G_{\rm ret}(x,x') \mathbb{I} &= - \Theta(t-t') \, [ \Phi(x), \Phi(x') ] \, , \\
i G_{\rm adv}(x,x') \mathbb{I} &= \Theta(t'-t) \, [ \Phi(x), \Phi(x') ] \, ,
\end{align}
where $t$ is a time function on the spacetime and $\Theta$ is the Heaviside function. These Green's distributions are characterised by their support properties in spacetime and can be defined for any globally hyperbolic spacetime. Furthermore, they give the expectation value of different products of the quantum field, e.g.
\begin{equation}
\langle \Psi | [ \Phi(x), \Phi(x') ] | \Psi \rangle = i G(x,x') \, ,
\end{equation}
where $|\Psi \rangle$ is a normalised quantum state.

Another Green's distribution which will be important is the following.

\begin{definition}
The \emph{Feynman propagator} $G^{\rm F}$ associated with a vacuum state $|0 \rangle$ is defined as the expectation value of the time-ordered product of fields,
\begin{equation} \label{eq:Feynmanpropdef}
G^{\rm F}(x,x') := i \langle 0 | \mathscr{T} \left( \Phi(x) \Phi(x') \right) | 0 \rangle \, ,
\end{equation}
where $\mathscr{T}$ is the time-ordering operator,
\begin{equation}
\mathscr{T} \left( \Phi(x) \Phi(x') \right) := \Theta(t-t') \Phi(x) \Phi(x') + \Theta(t'-t) \Phi(x') \Phi(x) \, .
\end{equation}
\end{definition}

The Feynman propagator satisfies
\begin{equation} \label{eq:GFdiffeq}
P_x \, G^{\rm F}(x,x') = - \frac{\delta(x,x')}{\sqrt{-g(x)}} \, ,
\end{equation}
and, by definition, depends on the quantum state being considered.

\begin{definition}
The \emph{Wightman two-point functions} $G^{\pm}$ associated with a vacuum state $|0 \rangle$ are defined as,
\begin{equation}
G^+(x,x') := \langle 0 | \Phi(x) \Phi(x')| 0 \rangle \, , \qquad
G^-(x,x') := \langle 0 | \Phi(x') \Phi(x)| 0 \rangle \, .
\end{equation}
\end{definition}

They satisfy the homogeneous equation
\begin{equation}
P_x \, G^{\pm}(x,x') = 0 \, .
\end{equation}
On a stationary spacetime with time function $t$, the Wightman two-point functions can be expressed in terms of mode solutions of the form \eqref{eq:positivefreqmodesolution} as
\begin{equation}
G^{\pm}(x,x') = \sum_{i \in \mathscr{I}} e^{\mp i \omega_i (t-t')} \, \phi_i(x) \, \overline{\phi_i (x')} \, .
\end{equation}
If $\Delta t := t - t'$, one verifies that, as a function of $\Delta t$, $G^{\pm}$ is analytic when $\ImC [\Delta t] \in \mathbb{R}^{\mp}$. Moreover, if $|\Delta t| < |\Delta \vec{x}| \neq 0$, i.e.~if the points are spacelike separated, $G^+(x,x') = G^-(x,x')$ on the real axis. Therefore, there exists an holomorphic function $\mathscr{G}$ of $\Delta t$ on the cut complex plane $\mathbb{C} \setminus \left( (-\infty, - |\Delta \vec{x}|) \cup ( |\Delta \vec{x}|, \infty) \right)$ such that
\begin{equation}
\mathscr{G}(x,x') = \begin{cases}
G^+(x,x') \, , & \ImC [\Delta t] < 0 \, , \\
G^-(x,x') \, , & \ImC [\Delta t] > 0 \, ,
\end{cases}
\end{equation}
and both equalities hold when $\ImC [\Delta t] = 0$ and $|{\ReC [\Delta t]|} < |\Delta \vec{x}|$ (see Fig.~\ref{fig:complexplanenonthermal}). In other words, the Wightman two-point distribution $G^{\pm}$ is the boundary value of $\mathscr{G}$ as $\Delta t$ approaches the real axis from below/above:
\begin{equation}
G^{\pm}(\Delta t; \vec{x},\vec{x'}) = \lim_{\epsilon \to 0+} \mathscr{G}(\Delta t \mp i \epsilon ; \vec{x},\vec{x'}) \, .
\end{equation}

\begin{figure}[th!]
\begin{center}
{\small
\begin{tikzpicture}[scale=1]

\draw (-4,0) -- (-3.9,0);
\draw (-1,0) -- (1,0)
	node[pos=1, below, yshift=-1ex]{$|\Delta \vec{x}|$}
	node[pos=0, below, yshift=-1ex]{$-|\Delta \vec{x}|$}
	;
\draw (-1,-0.12) -- (-1,0.12);
\draw (1,-0.12) -- (1,0.12);
\draw[->] (3.9,0) -- (4,0)
	node[pos=1, below, yshift=-1ex]{$\ReC [\Delta t]$}	
	;
\draw[->] (0,-2) -- (0,2)
	node[pos=1, left, yshift=-1ex]{$\ImC [\Delta t]$}	
	;

\draw[decorate,decoration=zigzag] (1,0) -- (3.9,0);
\draw[decorate,decoration=zigzag] (-3.9,0) -- (-1,0);

\draw[->] (2.5,0.75) -- (2.5,0.25)
	node[pos=0, above, xshift=.5ex]{$G^-$}
	;
\draw[->] (-2.5,0.75) -- (-2.5,0.25)
	node[pos=0, above, xshift=.5ex]{$G^-$}
	;
\draw[->] (-2.5,-0.75) -- (-2.5,-0.25)
	node[pos=0, below, xshift=.5ex]{$G^+$}
	;
\draw[->] (2.5,-0.75) -- (2.5,-0.25)
	node[pos=0, below, xshift=.5ex]{$G^+$}
	;
\draw[->] (0.6,1) -- (0.15,1)
	node[pos=0, right]{$G^{\rm E}$}
	;

\end{tikzpicture}
}
\end{center}
\caption[Cut complex plane for the holomorphic function $\mathscr{G}$.]{\label{fig:complexplanenonthermal}Cut complex plane for the holomorphic function $\mathscr{G}$.
}
\end{figure}

On the imaginary axis, $\Delta t = i \Delta \tau \in i \mathbb{R}$, one has
\begin{equation}
\mathscr{G}(x,x') = \sum_{i \in \mathscr{I}} e^{- \omega_i |\Delta \tau|} \, \phi_i(x) \, \overline{\phi_i (x')} \, .
\end{equation}
Even though each mode term is not holomorphic in $\Delta \tau$, the series has an holomorphic limit. Let's assume for a moment that the spacetime is static. Then,
\begin{equation}
G^{\rm E}(\tau, \vec{x}; \tau', \vec{x'}) := \mathscr{G}(i \tau, \vec{x}; i \tau', \vec{x'}) \, ,
\end{equation}
where $\tau - \tau' = \Delta \tau$, can be shown to be the Green's distribution satisfying
\begin{equation} \label{eq:EuclideanGreenfunction}
\left( \square_x - m^2 \right) G^{\rm E}(x,x') = - \frac{\delta(x,x')}{\sqrt{g^{\rm E}(x)}} \, ,
\end{equation}
where $\square_x$ is the d'Alembertian on a Riemannian manifold with metric $g^{\rm E}$. In the static case, the operator $\square_x - m^2$ is elliptic and, hence, has a unique Green's distribution, $G^{\rm E}$, which is called the \emph{Euclidean Green's distribution}. If the spacetime is stationary, but not static, the operator is no longer elliptic in general and, therefore, uniqueness of the Green's distribution does not necessarily follow.

Note that the Feynman propagator $G^{\rm F}$ can be obtained from $G^{\rm E}$ by a rigid rotation of the domain from the imaginary axis to the real axis in a counter-clockwise direction,
\begin{align}
G^{\rm F}(\Delta t; \vec{x},\vec{x'}) &= i \lim_{\theta \to \pi/2 -} \mathscr{G}(- i \Delta t e^{i \theta} ; \vec{x},\vec{x'}) \label{eq:GFGholo} 
= \begin{cases}
i \, G^+(\Delta t; \vec{x},\vec{x'}) \, , & \Delta t > 0 \, , \\
i \, G^-(\Delta t; \vec{x},\vec{x'}) \, , & \Delta t < 0 \, ,
\end{cases}
\end{align}
which agrees with \eqref{eq:Feynmanpropdef}. Eq.\eqref{eq:GFGholo} is usually written as
\begin{equation}
G^{\rm F}(t, \vec{x}; t', \vec{x}') = i \, G^{\rm E} (t, \vec{x}; t', \vec{x}') \, .
\end{equation}
This relation is a crucial part for the ``Euclidean methods'' used to obtain the Feynman propagator on static spacetimes.

\subsubsection{Non-zero temperature Green's distributions}

The Green's distributions discussed so far have been computed for pure quantum states such as the vacuum state, and hence are appropriate for systems at zero temperature. We now turn to thermal equilibrium states. The expectation value of an operator $A$ for a thermal equilibrium state at temperature $T = 1/\beta$ corresponding to a time-independent Hamiltonian $H$ is given by the \emph{Gibbs formula}
\begin{equation}
\langle A \rangle_{\beta} := \frac{\Tr \left( e^{-\beta H} A \right)}{\Tr \left( e^{-\beta H} \right)} \, .
\end{equation}
Here, we assume that the \emph{density operator} $\rho := e^{-\beta H}$ is of trace class (see Definition~\ref{def:traceoperator}). This implies that $H$ must be an operator with purely point spectrum $\{ E_i \}_{i \in \mathscr{I}}$ and that
\begin{equation}
Z := \Tr \left( e^{-\beta H} \right) = \sum_{i \in \mathscr{I}} e^{- \beta E_i} < \infty \, .
\end{equation}

\begin{remark}
For a massive scalar field on a stationary spacetime with metric given by \eqref{eq:ADMmetricchap2}, the Hamiltonian $H$ is given by:
\begin{align}
H &= \frac{1}{2} \int_{\Sigma} \dd^{d-1} x \, N \sqrt{h} \left[ \left(1 - \frac{N_i N^i}{N^2} \tilde{\Pi}^2 \right)  + h^{ij} \left( \partial_i \Phi + \frac{N^i}{N} \tilde{\Pi} \right) \left( \partial_j \Phi + \frac{N^j}{N} \tilde{\Pi} \right) \right. \notag \\
&\hspace*{20ex} \left. + (m^2 + \xi R) \Phi^2 \right] \, ,
\end{align}
where $\Pi =: \sqrt{h} \, \tilde{\Pi}$. If $\Sigma$ is compact, it can be shown that $H$ is a positive, compact operator and its trace is just the sum of its eigenvalues, cf.~Proposition~\ref{prop:compactpositiveoperator}. More details can be found in \cite{fulling1987temperature}.
\end{remark}

Given this, one can define the thermal Wightman two-point functions as 
\begin{equation}
G^+_{\beta}(x,x') := \langle \Phi(x) \Phi(x') \rangle_{\beta} \, , \qquad
G^-_{\beta}(x,x') := \langle \Phi(x') \Phi(x) \rangle_{\beta} \, .
\end{equation}
In terms of mode solutions, it can be shown that they are given by
\begin{equation} \label{eq:thermalWightmanfunction}
G^{\pm}_{\beta}(x,x') = \sum_{i \in \mathscr{I}} \frac{\phi_i(x) \, \overline{\phi_i (x')}}{1 - e^{-\beta \omega_i}} \left( e^{\mp i \omega_i \Delta t} + e^{- \beta \omega_i} e^{\pm i \omega_i \Delta t}\right) \, .
\end{equation}
Similarly to the zero temperature case, as functions of $\Delta t$, $G^+_{\beta}$ and $G^-_{\beta}$ are analytic when $- \beta < \ImC [\Delta t] < 0$ and $0 < \ImC [\Delta t] < \beta$, respectively. Moreover, if $|\Delta t| < |\Delta \vec{x}| \neq 0$, i.e.~if the points are spacelike separated, $G^+_{\beta}(x,x') = G^-_{\beta}(x,x')$ on the real axis. Therefore, there exists an holomorphic function $\mathscr{G}_{\beta}$ of $\Delta t$ on the region $\{ z \in \mathbb{C} : |{\ImC [z]}| < \beta \} \setminus \left( (-\infty, - |\Delta \vec{x}|) \cup ( |\Delta \vec{x}|, \infty) \right)$ such that
\begin{equation}
\mathscr{G}_{\beta}(x,x') = \begin{cases}
G^+_{\beta}(x,x') \, , & - \beta < \ImC [\Delta t] < 0 \, , \\
G^-_{\beta}(x,x') \, , & 0 < \ImC [\Delta t] < \beta \, ,
\end{cases}
\end{equation}
and both equalities hold when $\ImC [\Delta t] = 0$ and $|{\ReC [\Delta t]|} < |\Delta \vec{x}|$ (see Fig.~\ref{fig:complexplanethermal}).

\begin{figure}
\begin{center}
{\small
\begin{tikzpicture}[scale=1]

\draw (-4,0) -- (-3.9,0);
\draw (-1,0) -- (1,0)
	node[pos=1, below, yshift=-1ex]{$|\Delta \vec{x}|$}
	node[pos=0, below, yshift=-1ex]{$-|\Delta \vec{x}|$}
	;
\draw (-1,-0.12) -- (-1,0.12);
\draw (1,-0.12) -- (1,0.12);
\draw[->] (3.9,0) -- (4,0)
	node[pos=1, below, yshift=-1ex]{$\ReC [\Delta t]$}	
	;
\draw[->] (0,-3.5) -- (0,3.5)
	node[pos=1, left, yshift=-1ex]{$\ImC [\Delta t]$}	
	;

\draw[decorate,decoration=zigzag] (1,0) -- (3.9,0)
	node[pos=0.5, above, yshift=3ex]{$G^-_{\beta}$}
	node[pos=0.5, below, yshift=-3ex]{$G^+_{\beta}$}	
	;
\draw[decorate,decoration=zigzag] (1,1.5) -- (3.9,1.5)
	node[pos=0.5, above, yshift=3ex]{$G^+_{\beta}$}
	;
\draw[decorate,decoration=zigzag] (1,-1.5) -- (3.9,-1.5)
	node[pos=0.5, below, yshift=-3ex]{$G^-_{\beta}$}
	;
\draw[decorate,decoration=zigzag] (-3.9,0) -- (-1,0)
	node[pos=0.5, above, yshift=3ex]{$G^-_{\beta}$}
	node[pos=0.5, below, yshift=-3ex]{$G^+_{\beta}$}	
	;
\draw[decorate,decoration=zigzag] (-3.9,1.5) -- (-1,1.5)
	node[pos=0.5, above, yshift=3ex]{$G^+_{\beta}$}
	;
\draw[decorate,decoration=zigzag] (-3.9,-1.5) -- (-1,-1.5)
	node[pos=0.5, below, yshift=-3ex]{$G^-_{\beta}$}
	;

\draw[->] (0.6,2.5) -- (0.15,2.5)
	node[pos=0, right]{$G^{\rm E}_{\beta}$}
	;

\draw (-0.12,1.5) -- (0.12,1.5)
	node[pos=0,left]{$\beta$}
	;
\draw (-0.12,-1.5) -- (0.12,-1.5)
	node[pos=0,left]{$-\beta$}
	;

\draw[dashed] (1,0) -- (1,1.5);
\draw[dashed] (1,-0.85) -- (1,-1.5);
\draw[dashed] (-1,0) -- (-1,1.5);
\draw[dashed] (-1,-0.85) -- (-1,-1.5);

\node at (2.4,3.15) {$\vdots$};
\node at (-2.5,3.15) {$\vdots$};
\node at (2.4,-3) {$\vdots$};
\node at (-2.5,-3) {$\vdots$};

\end{tikzpicture}
}
\end{center}
\caption{\label{fig:complexplanethermal}Cut complex plane for the holomorphic function $\mathscr{G}_{\beta}$.}
\end{figure}

From \eqref{eq:thermalWightmanfunction}, one can derive the important property of thermal Green's distributions,
\begin{equation}
G^+_{\beta}(\Delta t - i \beta ; \vec{x}, \vec{x}') = G^-_{\beta}(\Delta t ; \vec{x}, \vec{x}') \, ,
\end{equation}
which is usually known as the \emph{KMS condition}. This allows us to analytically continue $\mathscr{G}_{\beta}$ to $\mathbb{C} \setminus \{ z \in \mathbb{C} : \ImC [z] = N \beta , \, N \in \mathbb{Z}, \text{ and } |{\ReC [z]|} > | \Delta \vec{x} | \}$, by
\begin{equation} \label{eq:thermalGreenperiodicity}
\mathscr{G}_{\beta}(\Delta t ; \vec{x}, \vec{x}') = \mathscr{G}_{\beta}(\Delta t + i N \beta ; \vec{x}, \vec{x}') \, , \qquad N \in \mathbb{Z} \, .
\end{equation}

On the imaginary axis, one has 
\begin{equation}
\mathscr{G}_{\beta}(x,x') = \sum_{i \in \mathscr{I}} \frac{\phi_i(x) \, \overline{\phi_i (x')}}{1 - e^{-\beta \omega_i}} \left( e^{- \omega_i \Delta \tau} + e^{- \beta \omega_i} e^{\omega_i \Delta \tau}\right) \, ,
\end{equation}
for $0 < \Delta \tau < \beta$ (in the rest of the axis the expression can be obtained by using \eqref{eq:thermalGreenperiodicity}). Then, assuming that the spacetime is static,
\begin{equation} \label{eq:thermalEuclideanGreenfunction}
G^{\rm E}_{\beta}(\tau, \vec{x}; \tau', \vec{x'}) := \mathscr{G}_{\beta}(i \tau, \vec{x}; i \tau', \vec{x'}) \, ,
\end{equation}
where $\tau - \tau' = \Delta \tau$, can be shown to be the Green's distribution satisfying
\begin{equation}
\left( \square_x - m^2 \right) G^{\rm E}_{\beta}(x,x') = - \frac{\delta(x,x')}{\sqrt{g^{\rm E}(x)}} \, ,
\end{equation}
such that $G^{\rm E}_{\beta}(\Delta \tau + \beta ; \vec{x}, \vec{x'}) = G^{\rm E}_{\beta}(\Delta \tau ; \vec{x}, \vec{x'})$. That is, it is the Euclidean Green's distribution for the elliptic operator $\square_x - m^2$ acting on the cylinder $S^1 \times \Sigma$ of radius $\beta$. As in the non-thermal case, if the spacetime is stationary, but not static, this operator is no longer elliptic and there might not be a unique Green's distribution.

As before, the thermal Feynman propagator can be obtained by
\begin{align}
G^{\rm F}_{\beta}(\Delta t; \vec{x},\vec{x'}) = i \lim_{\theta \to \pi/2 -} \mathscr{G}_{\beta}(- i \Delta t e^{i \theta} ; \vec{x},\vec{x'})
\end{align}
which we will write simply as
\begin{equation} \label{eq:GFGEthermal}
G^{\rm F}_{\beta}(t, \vec{x}; t', \vec{x}') = i \, G^{\rm E}_{\beta} (t, \vec{x}; t', \vec{x}') \, .
\end{equation}

Finally, we note that the thermal and non-thermal Green's distributions can be related in the following way. It can be shown (see e.g.~\cite{birrell1984quantum}) that
\begin{equation} \label{eq:imagemsumG}
\mathscr{G}_{\beta}(\Delta t; \vec{x},\vec{x'}) = \sum_{N = - \infty}^{\infty} \mathscr{G}(\Delta t + i N \beta; \vec{x},\vec{x'}) \, ,
\end{equation}
i.e.~the thermal Green's distributions can be obtained as an imaginary-time image sum of the zero temperature Green's distributions. We will make use of this relation in Appendix~\ref{app:Minkowski} to write the thermal Green's distribution on the Minkowski spacetime in terms of its zero-temperature Green's distribution.

\subsection{Rotating black hole spacetimes}
\label{section:qftcst-rotatingbhs}

In this section, we focus on stationary black hole spacetimes, by which we mean black hole spacetimes which are locally stationary. The Kerr black hole in four dimensions is the most notable example. In contrast with the Schwarzschild black hole, the exterior region of the Kerr black hole does \emph{not} have a global timelike Killing vector field and, therefore, is not a globally stationary manifold in its own right. In the usual Boyer-Lindquist coordinates $(t, r, \theta, \phi)$, the Killing vector $\partial_t$ is timelike for $r > r_{\mathcal{S}}$, where $r = r_{\mathcal{S}}$ is the radial location of the \emph{stationary limit surface}, and spacelike in the region given by $r_+ < r < r_{\mathcal{S}}$ (the \emph{ergoregion}), where $r = r_+$ is the radial location of the event horizon. If we instead consider the generator $\chi = \partial_t + \Omega_{\mathcal{H}} \, \partial_{\phi}$ of the horizon (where $\Omega_{\mathcal{H}}$ is the angular velocity of the horizon), then $\chi$ is a Killing vector field and is timelike in the region $r_+ < r < r_{\mathcal{C}}$, where $r = r_{\mathcal{C}}$ is the radial location of the \emph{speed of light surface}, and is spacelike in the region given by $r > r_{\mathcal{C}}$. (More details about these statements for the Kerr spacetime can be found e.g.~in~\cite{Duffy:2005mz}.)

The quantisation procedure described above for globally stationary spacetimes, which chooses for the one-particle Hilbert space the subspace of positive frequency solutions of the field equation, is therefore not applicable to the exterior region of Kerr. The non-existence of an everywhere timelike Killing vector field in the exterior region of the spacetime is directly related to the non-existence of a well defined quantum vacuum state which is regular at the horizon and is invariant under the isometries of the spacetime. For the Kerr spacetime, this was noted by Frolov and Thorne \cite{Frolov:1989jh} and was proven in a seminal paper by Kay and Wald \cite{Kay:1988mu}. 

It is then expected that a state with these properties can be defined if we restrict the spacetime such that the scalar field does not have access to the region from the speed of light surface to infinity. This can be done by inserting a mirror-like, timelike boundary which respects the isometries of the spacetime. The simplest example is a boundary $\mathcal{M}$ at constant radius $r = r_{\mathcal{M}}$, on which the scalar field satisfies Dirichlet boundary conditions. If we choose the radius such that $r_{\mathcal{M}} \in (r_+, r_{\mathcal{C}})$, then the horizon generator $\chi$ is a timelike Killing vector field up to the boundary, and a vacuum state with the above properties is expected to be well defined. The introduction of timelike boundaries was suggested in \cite{Frolov:1989jh} and explored in 
\cite{Duffy:2005mz}. As far as we know, no rigorous proof of the existence of such a state is available. However, the heuristic arguments given above strongly suggest that such a conjecture is expected to be true and we shall take it as an assumption from this point onwards.

\begin{figure}[t!]
\begin{center}
{\small
\def\svgwidth{0.45\textwidth}
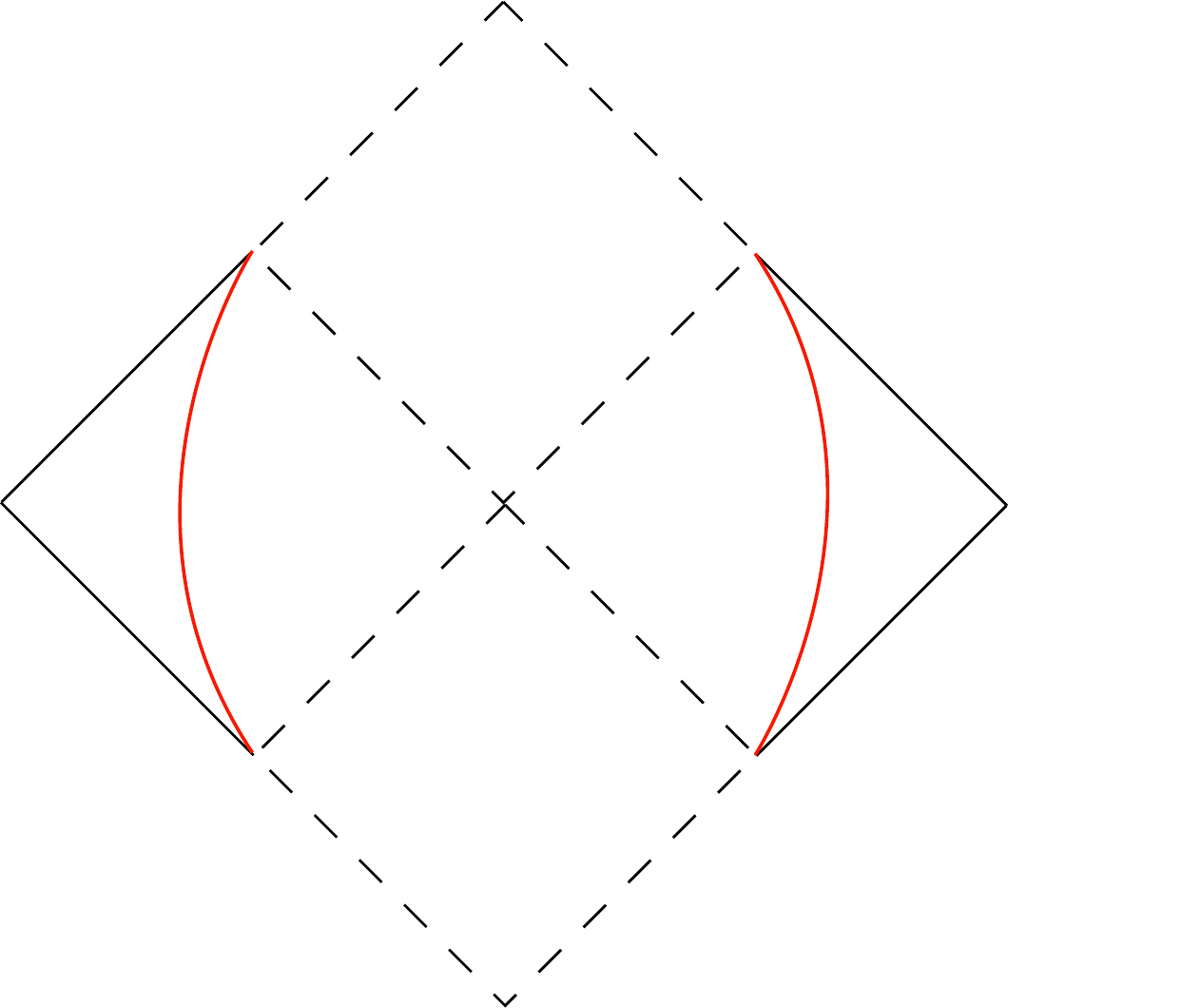
}
\caption[Carter-Penrose diagram of manifold with boundaries.]{\label{fig:CPdiagrammainfoldboundary} Carter-Penrose diagram of manifold $M$ described in the text.}
\end{center}
\end{figure}

For definiteness, we define the spacetime with timelike boundaries, $M$, to be the one constructed in the following way. Consider the non-extremal Kerr spacetime, $0 < |a| < M$, and let:
\begin{itemize}
\item region I be the exterior region;
\item region II be the black hole region;
\item region III be the white hole region;
\item region IV be the other asymptotically flat region.
\end{itemize}
The maximal analytical extension of Kerr comprises more regions, which we will not consider (see \cite{Wald:1984rg} for more details). In region I we insert a boundary $\mathcal{M}$ at constant radius $r = r_{\mathcal{M}}$, with $r_{\mathcal{M}} \in (r_+, r_{\mathcal{C}})$, on which Dirichlet boundary conditions are imposed. We denote by $\widetilde{\text{I}}$ the portion of the region I from the horizon up to the mirror. In region IV, a similar boundary $\mathcal{M}'$ is inserted, which can be obtained by the action of a discrete isometry $J$ which takes points in region I to points in region IV by a reflection about the bifurcation surface. In a similar way, a region $\widetilde{\text{IV}}$ is defined. We take as the new manifold $M$ of interest the union of regions $\widetilde{\text{I}}$, II, III and $\widetilde{\text{IV}}$ (see Fig.~\ref{fig:CPdiagrammainfoldboundary}).

The resulting manifold $M$ is \emph{not} globally hyperbolic and, hence, the quantisation procedure described above for globally hyperbolic spacetimes is not applicable. In section~\ref{sec:qftcst-stwithboundaries} it is described how to construct a quantum field theory on a spacetime with timelike boundaries. The upshot is that one expects that a isometry-invariant vacuum state which is regular at the horizons can be defined on the manifold $M$. The explicit construction of such a state is deferred to section~\ref{sec:computation-scalarfield}.

\begin{remark}
Note that if only the timelike boundary $\mathcal{M}$ had been introduced in region I, but no corresponding timelike boundary $\mathcal{M}'$ in region IV, it is conjectured that there is no isometry-invariant vacuum state which is regular at the horizons \cite{Kay:2015iwa}.
\end{remark}

\section{The case of spacetimes with boundaries}
\label{sec:qftcst-stwithboundaries}

So far, we have dealt with globally hyperbolic spacetimes, on which the Cauchy problem describing the dynamics of a scalar field is well posed. However, there are examples of non globally hyperbolic spacetimes on which we might be interested in constructing a quantum field theory, such as spacetimes with boundaries. In the last section, we argued that an isometry-invariant quantum state which is regular at the horizon of a rotating black hole spacetime can be defined if we restrict the spacetime by inserting appropriate timelike boundaries which respect the isometries of the spacetime. Another important example of spacetime with timelike boundaries is the anti-de Sitter (AdS) spacetime, whose spatial infinity provides a natural timelike boundary. A quantisation scheme for a scalar field on AdS was introduced by Avis, Isham and Storey \cite{Avis:1977yn} and the main idea is that appropriate boundary conditions need to be introduced so that there is a unique classical solution to the field equation. Since then, there have been several similar attempts (e.g.~\cite{Ishibashi:2004wx,Seggev:2003rp}) to construct consistent quantum field theories for non globally hyperbolic stationary spacetimes.

Therefore, instead of imposing global hyperbolicity, consider a stably causal, stationary spacetime. Recall from Proposition~\ref{prop:stablycausaltimefunction} that a stably causal spacetime has a time function. If, furthermore, it is stationary the following can be shown.

\begin{proposition}
If $M$ is a stably causal, stationary spacetime, then there
exists a spacelike surface $\Sigma$ which intersects each orbit of the timelike Killing vector field exactly once.
\end{proposition}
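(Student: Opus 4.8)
The plan is to build $\Sigma$ as a level set of a function adapted to the Killing flow, starting from the time function supplied by stable causality. First I would invoke Proposition~\ref{prop:stablycausaltimefunction} to obtain a smooth time function $t$, so that by Definition~\ref{def:timefunction} the vector field $-\nabla^a t$ is future-directed timelike. Writing $\xi$ for the everywhere-timelike, future-directed Killing field of Definition~\ref{def:globalstationary}, the key elementary observation is a sign computation: since $\xi$ and $-\nabla t$ are both future-directed timelike, their inner product is negative, whence $\xi(t)=\xi^a\nabla_a t=g(\xi,\nabla t)>0$ at every point. Two consequences are immediate: $t$ is strictly increasing along every future-directed orbit of $\xi$, so each orbit meets any level set $\{t=c\}$ \emph{at most} once; and because $\nabla t$ is timelike, every such level set is a \emph{spacelike} hypersurface. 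Thus the spacelike and the ``at most once'' requirements are already secured by any level set of $t$, and the whole difficulty is to arrange that the chosen surface meets \emph{every} orbit.

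Next I would record the global structure of the flow. Stable causality implies chronology (no closed timelike curves), and an orbit of $\xi$ is an integral curve of a timelike field, hence a timelike curve; a closed orbit would therefore be a closed timelike curve, which is excluded. Together with $\xi\neq 0$ this shows the flow $\phi_s$ of $\xi$ has no fixed points and no periodic orbits, so it generates a free action of $\mathbb{R}$ (passing to the maximal flow, or assuming completeness of $\xi$). Using $\xi(t)>0$, which on any compact set is bounded below by a positive constant and thereby bounds the elapsed flow parameter in terms of the increment of $t$, the action is proper, so the orbit space $Q$ is a smooth Hausdorff manifold and $\pi:M\to Q$ is a principal $\mathbb{R}$-bundle. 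Since $\mathbb{R}$ is contractible this bundle is trivial and admits a global smooth cross-section, i.e.~a surface meeting each orbit exactly once; what remains is to produce such a section that is moreover spacelike.

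To control spacelikeness I would parametrise the candidate sections as graphs over the flow: any section is of the form $\Sigma_{\hat c}=\{p:t(p)=\hat c(p)\}$ for a smooth function $\hat c$ that is constant along orbits (a ``basic'' function, $\xi(\hat c)=0$), and $\Sigma_{\hat c}$ meets a given orbit exactly once precisely when the value of $\hat c$ there lies in the range of $t$ along that orbit. Its defining function is $F=t-\hat c$ with $g(\xi,\nabla\hat c)=\xi(\hat c)=0$, so $\nabla\hat c\perp\xi$ is spacelike, and $\Sigma_{\hat c}$ is spacelike iff $g(\nabla t-\nabla\hat c,\nabla t-\nabla\hat c)<0$. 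Decomposing $\nabla t$ into its $\xi$-component and the orthogonal part, this reduces to a pointwise smallness condition on the ``spatial'' gradient of $\hat c$. In particular the constant choice $\hat c\equiv 0$ yields the spacelike surface $\{t=0\}$, which already works for every orbit along which $t$ attains the value $0$.

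The hard part, and the main obstacle, is exactly this tension: a slice must \emph{bend} (nonzero $\nabla\hat c$) in order to reach orbits along which $t$ is bounded away from the chosen constant, yet bending too much destroys spacelikeness. I would resolve this by choosing $\hat c$ through a partition of unity on $Q$ subordinate to a cover on which local spacelike slices (pieces of level sets of $t$) exist, selecting for each orbit a value in the open interval given by the range of $t$ and patching so that the resulting spatial gradient stays below the threshold dictated by $g(\nabla t-\nabla\hat c,\nabla t-\nabla\hat c)<0$; the freedom in the convex range condition, together with an averaging or insertion argument, should make this possible. The cleanest situation is when $\xi$ is complete and $t\to\pm\infty$ along every orbit, in which case every value is attained and $\hat c\equiv 0$ already furnishes the required spacelike section, while the general case is the delicate reconciliation of the once-hitting and spacelike conditions described above.
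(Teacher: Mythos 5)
Your opening moves are correct and well observed: $\xi(t)=g(\xi,\nabla t)>0$ everywhere, so level sets of the time function are spacelike and meet each orbit at most once, and the whole content of the proposition is indeed the ``every orbit'' half. But the two steps you then lean on are precisely the hard points, and neither is established. First, your properness argument for the $\mathbb{R}$-action is invalid as stated: if $p\in K$ and $\phi_s(p)\in K$ for a compact set $K$, then $t(\phi_s(p))-t(p)=\int_0^s \xi(t)\,\dd u$ is bounded by the diameter of $t(K)$, but the orbit segment between the two visits need not stay in $K$, and $\xi(t)$ has no positive lower bound outside $K$; hence the bounded increment of $t$ gives no upper bound on the elapsed parameter $s$. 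The Hausdorff manifold structure of the orbit space is exactly where the causal hypothesis does genuine work (in the standard-splitting literature this requires the distinguishing property and a substantially more careful argument), so it cannot be waved through with a compactness remark. You also tacitly need completeness of $\xi$ for the principal-$\mathbb{R}$-bundle step --- ``passing to the maximal flow'' does not produce a group action --- and the proposition does not grant completeness.

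Second, and more seriously, the final reconciliation of ``meets every orbit'' with spacelikeness is asserted (``should make this possible'') rather than proved, and the difficulty is real. Writing candidate sections as graphs $t=\hat c$ with $\xi(\hat c)=0$, the admissible values of $\hat c$ on an orbit $q$ form the open interval $I_q=\bigl(\inf_{\rm orbit}t,\ \sup_{\rm orbit}t\bigr)$, which varies only semicontinuously in $q$, while spacelikeness, $g(\nabla t-\nabla\hat c,\nabla t-\nabla\hat c)<0$, caps $\nabla\hat c$ pointwise by a non-uniform, metric-dependent threshold; where the intervals $I_q$ shift or shrink quickly these demands conflict, and partition-of-unity gluing of local graphs $\hat c=\sum_i \mu_i c_i$ produces extra gradient terms $(c_i-c_j)\nabla\mu_i$ that destroy the naive convexity of the spacelike condition --- the same obstruction familiar from gluing Cauchy surfaces. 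Note that the paper itself does not prove the statement but cites Proposition 3.1 of \cite{Seggev:2003rp}; a complete version of your quotient-bundle route would have to supply arguments of comparable substance for both the Hausdorff-quotient step and the spacelike-section step, so as it stands your proposal is a plausible strategy with two genuine gaps rather than a proof.
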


\begin{proof}
See Proposition 3.1 of \cite{Seggev:2003rp}.
\end{proof}

Even though a stably causal, stationary spacetime which is not globally hyperbolic does not possess a Cauchy surface on which initial data can be prescribed, it has spacelike surfaces which intersect the orbits of the timelike Killing vector field only once. However, initial data on such a surface is not enough to have a well posed initial value problem.

For concreteness, consider the spacetime $M$ constructed in the previous section, consisting of the portions of regions I, II, III and IV of the extended Kerr black hole between two timelike boundaries $\mathcal{M}$ and $\mathcal{M}'$. This spacetime is not globally hyperbolic, but it is stably causal and each region $\widetilde{\text{I}}$ and $\widetilde{\text{IV}}$ is a stably causal, stationary spacetime in its own right. On each region there is a spacelike surface which intersects the orbits of the timelike Killing vector field of each region exactly once. Assume that two such surfaces on regions $\widetilde{\text{I}}$ and $\widetilde{\text{IV}}$ meet each other at the bifurcation surface and let $\Sigma$ denote the whole spacelike surface on $M$, including the points at the timelike boundaries. We shall call this surface an ``initial-value surface''. It can be thought as the restriction of a Cauchy surface in the original globally hyperbolic spacetime without boundaries which passes through the bifurcation surface to the regions $\widetilde{\text{I}}$ and $\widetilde{\text{IV}}$, together with the bifurcation surface and the timelike boundaries (see Fig.~\ref{fig:CPdiagrammainfoldboundarysurface}). By construction, this ``initial-value surface'' is compact.

\begin{figure}[t!]
\begin{center}
{\small
\def\svgwidth{0.45\textwidth}
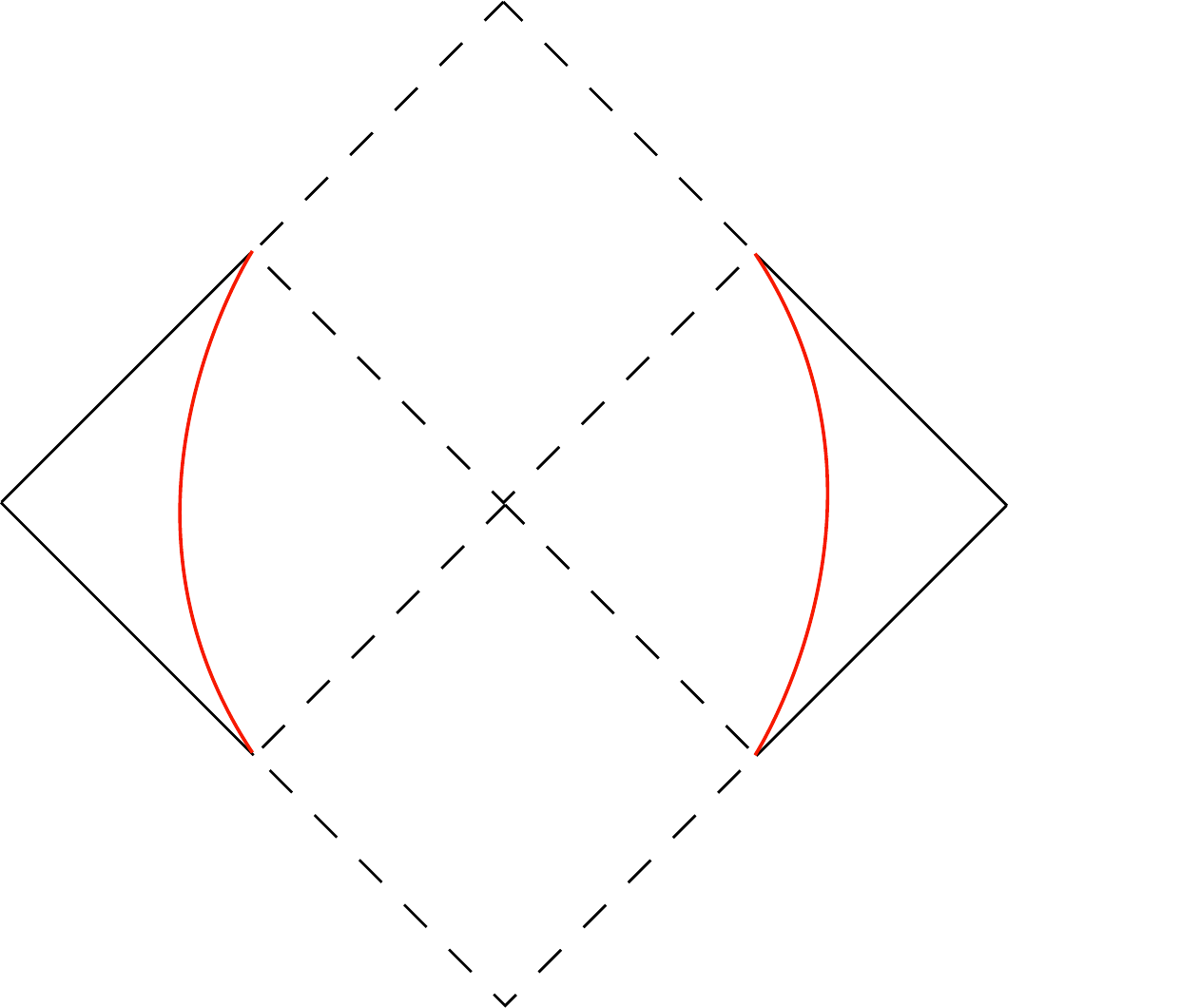
}
\caption[Carter-Penrose diagram of manifold with boundaries and ``initial-value surface''.]{\label{fig:CPdiagrammainfoldboundarysurface} Carter-Penrose diagram of manifold $M$ described in the text with an ``initial-value surface'' $\Sigma$.}
\end{center}
\end{figure}

One now considers the following mixed Dirichlet-Cauchy problem:
\begin{equation} \label{eq:mixedDirichletCauchyproblem}
\left\{
\begin{array}{ll}
P \Phi = 0 \, , \\
\Phi \big|_{\Sigma} = \Phi_0 \, , \\
\nabla_n \Phi \big|_{\Sigma} = \Phi_1 \, , \\
\Phi \big|_{\mathcal{M}} = \Phi \big|_{\mathcal{M}'} = 0 \, ,
\end{array}
\right.
\end{equation}
where $\Phi_0 , \, \Phi_1 \in C^{\infty}_0(\Sigma)$. Given that $\Sigma$ is compact and the boundaries are timelike, standard results on mixed Dirichlet-Cauchy problems guarantee the well posedness of \eqref{eq:mixedDirichletCauchyproblem} (see e.g.~chapter 24 of \cite{hormander2007analysis}).

Hence, even though the spacetime under consideration is not globally hyperbolic, upon imposing Dirichlet boundary conditions on the timelike boundaries the mixed Dirichlet-Cauchy problem is well posed. We then expect (and will assume) that our previous results derived for a globally hyperbolic spacetime to be carried over for the manifold $M$. Namely, we expect that the space of spacelike compact solutions to \eqref{eq:mixedDirichletCauchyproblem} to be a symplectic space with symplectic form given by \eqref{eq:symplecticform}, where the integral is evaluated on an ``initial-value surface'' $\Sigma$. Moreover, we expect that the construction of the quantum field theory from this symplectic space to remain valid and, therefore, we can obtain a Fock space whose vacuum state is regular and isometry-invariant, as discussed in the previous discussion. We also expect this vacuum state to be ``physically acceptable'' in the sense of having the defining features of a ``Hadamard state'', as defined for a globally hyperbolic spacetime. To conclude the discussion of the quantum scalar field theory, it remains to define this notion of ``physically acceptable'' state and discuss the issue of renormalisation of local observables, which is done in the last section of this chapter.

\section{Hadamard renormalisation}
\label{sec:qftcst-hadamardrenormalisation}

The objective of this thesis is to compute the expectation value of local observables for a given quantum field theory on a rotating black hole spacetime. Since observables in quantum field theory are self-adjoint operator-valued distributions, problems are bound to arise if we want to consider observables which are non-linear in the fields, such as $\Phi^2(x)$ and the stress-energy tensor $T_{ab}(x)$, which is given by
\begin{align}
T_{ab} &= (1-2\xi) \nabla_a \Phi \nabla_b \Phi + \left( 2\xi - \frac{1}{2} \right) g_{ab} \nabla^c \Phi \nabla_c \Phi - 2 \xi \Phi \nabla_b \nabla_a \Phi \notag \\ 
&\quad + 2 \xi g_{ab} \Phi \nabla^c \nabla_c \Phi + \xi \left( R_{ab} - \frac{1}{2} g_{ab} R \right) \Phi^2 - \frac{1}{2} g_{ab} m^2 \Phi^2 \, .
\end{align}
Since $\Phi$ is a distribution on spacetime, these observables involve taking the product of two distributions at the same spacetime point, which is not a well defined operation. Therefore, some kind of renormalisation procedure is necessary. In this section, we will describe the Hadamard renormalisation, which is an extension of the standard ``point-splitting method'' which uses the so-called Hadamard representation of the Green's distributions. References for this part are \cite{Decanini:2005eg,wald1994quantum}.

\subsection{The case of globally hyperbolic spacetimes}

First, note that both expectation values $\langle \Phi^2(x) \rangle$ and $\langle T_{ab} (x) \rangle$, with respect to a given quantum state $| \Psi \rangle$, can be given as spacetime limits of the Feynman propagator $G^{\rm F}$ associated with $| \Psi \rangle$. One has
\begin{equation}
\langle \Phi^2(x) \rangle = -i \lim_{x' \to x} G^{\rm F}(x, x') \, ,
\end{equation}
and
\begin{equation}
\langle T_{ab}(x) \rangle = \lim_{x' \to x} \mathcal{T}_{ab'}(x,x') \left[ -i \, G^{\rm F} (x,x') \right] \, ,
\end{equation}
where $\mathcal{T}_{ab'}(x,x')$ is an operator-valued bi-tensor constructed by point splitting,
\begin{align}
\mathcal{T}_{ab'} &= (1-2\xi) \, {g_{b}}^{b'} \nabla_{a} \nabla_{b'} + \left(2\xi - \frac{1}{2} \right) g_{ab} g^{cd'} \nabla_{c} \nabla_{d'} - 2 \xi \, {g_{a}}^{a'} {g_{b}}^{b'} \nabla_{a'} \nabla_{b'} \notag \\
&\quad +2\xi \, g_{ab} \nabla_{\rho} \nabla^{\rho} + \xi \left( R_{ab} - \frac{1}{2} g_{ab} R \right) - \frac{1}{2} g_{ab} m^2 \, ,
\end{align}
where ${g^a}_{b'}$ is the parallel propagator, as introduced in \eqref{eq:parallelpropagator}. These expectation values have been regularised by point splitting and now we are dealing with well-defined bi-distributions.

For simplicity, the prescription is described in detail for $\langle \Phi^2(x) \rangle$ and, at the end, the results are also presented for $\langle T_{ab} (x) \rangle$. The basic idea of the prescription is to ``subtract'' the short-distance singular behaviour of the bi-distribution, in this case the Feynman propagator $G^{\rm F}$. For that, one expands the Feynman propagator for small geodesic distance between $x$ and $x'$.

Assume that $x$ and $x'$ are in a geodesically convex neighbourhood, cf.~Definition~\ref{def:geodconvexneighbourhood}. Then, with respect to a class of quantum states to be defined below, the Feynman propagator has a \emph{Hadamard expansion} which depends on the spacetime dimension $d$.

\begin{enumerate}
\item For even $d \geq 4$, the Hadamard expansion of $G^{\rm F}$ is given by
\begin{align} \label{eq:Hadamardexpansioneven}
G^{\rm F}(x,x') &= i \alpha_d \left[ \frac{U(x,x')}{(\sigma(x,x')+i\epsilon)^{d/2-1}} + V(x,x') \log \left[ \sigma(x,x') + i\epsilon \right] + W(x,x') \right] \, .
\end{align}
\item For odd $d \geq 3$, the Hadamard expansion of $G^{\rm F}$ is given by
\begin{align} \label{eq:Hadamardexpansionodd}
G^{\rm F}(x,x') &= i \alpha_d \left[ \frac{U(x,x')}{(\sigma(x,x')+i\epsilon)^{d/2-1}} + W(x,x') \right] \, .
\end{align}
\end{enumerate}

In both cases, one has that
\begin{equation}
\alpha_d = \frac{\Gamma(d/2-1)}{2(2\pi)^{d/2}} \, ,
\end{equation}
$U(x,x')$, $V(x,x')$ and $W(x,x')$ are smooth symmetric bi-scalars which are regular when $x' \to x$, $\sigma(x,x')$ is the Synge's world function \eqref{eq:Syngesworldfunction}, and $i \epsilon$ with $\epsilon \to 0+$ is introduced to give $G^{\rm F}$ a singularity structure consistent with its definition as a time-ordered product.

These expansions of the Feynman propagator are only valid when evaluated for a special class of quantum states.

\vspace*{0.5ex}

\begin{definition} \label{def:Hadamardstate}
A quantum state for which the short-distance singularity of the Feynman propagator is given by either \eqref{eq:Hadamardexpansioneven} or \eqref{eq:Hadamardexpansionodd} is called a \emph{Hadamard state}.
\end{definition}

\vspace*{1ex}

\begin{remark}
The heuristic idea behind this definition is that the short-distance singularity structure of the Feynman propagator (equivalently, of the two-point function $\langle \Phi(x) \Phi(x') \rangle$) on a curved spacetime should be as close as possible to that on Minkowski spacetime. As seen below, the singular terms in the expansion of the Feynman propagator only depend on the local geometry (and not on the quantum state being considered) and, hence, it seems reasonable to require that a physically acceptable state has a two-point function $\langle \Phi(x) \Phi(x') \rangle$ with the same short-distance singularity structure as on Minkowski spacetime.
\end{remark}

\vspace*{0.5ex}

\begin{remark}
There is a more modern definition of Hadamard states introduced by Radzikowski \cite{Radzikowski:1996pa} which uses microlocal analysis techniques and looks into the singularity structure of the two-point function. It is equivalent to Definition~\ref{def:Hadamardstate} and the latter is sufficient for the purposes of this thesis.
\end{remark}

\vspace*{0.5ex}

\begin{remark}
Results from \cite{Fulling:1978ht} and \cite{Fulling:1981cf} show that every globally hyperbolic spacetime admit a wide class of Hadamard states. Analogous results are not available for spacetimes with boundaries, in which the definition of a Hadamard state needs to be modified (see Section~\ref{sec:renormalisationspacetimeswithboundaries}).
\end{remark}

The key point of these expansions of the Feynman propagator evaluated for Hadamard states is that it can be shown that the bi-scalars $U(x,x')$ and $V(x,x')$ only depend on the geometry along the geodesic joining $x$ and $x'$, whereas the bi-scalar $W(x,x')$ contains the quantum state dependence of the Feynman propagator (see e.g.~\cite{Decanini:2005eg}). Hence, the Hadamard expansion of the Feynman propagator contains a purely geometrical part which is \emph{singular} when $x' \to x$,
\begin{align} \label{eq:Hadsingularparteven}
G_{\rm Had}(x,x') := i \alpha_d \left[ \frac{U(x,x')}{(\sigma(x,x')+i\epsilon)^{d/2-1}} + V(x,x') \log \left[ \sigma(x,x') + i\epsilon \right]\right] \, ,
\end{align}
for even $d \geq 4$, and
\begin{align} \label{eq:Hadsingularpartodd}
G_{\rm Had}(x,x') := i \alpha_d \left[ \frac{U(x,x')}{(\sigma(x,x')+i\epsilon)^{d/2-1}} \right] \, ,
\end{align}
for odd $d \geq 3$, and a state dependent part, which is \emph{regular} when $x' \to x$,
\begin{align} \label{eq:regularpart}
G^{\rm F}_{\rm reg}(x,x') &= i \alpha_d \, W(x,x') \, .
\end{align}
The singular part \eqref{eq:Hadsingularparteven} or \eqref{eq:Hadsingularpartodd} is called the \emph{Hadamard singular part} of the Feynman propagator.

Given the Hadamard expansion of the Feynman propagator and its singular, non-state dependent Hadamard part, the final step of the Hadamard renormalisation procedure is to subtract this part from the Feynman propagator and use the regular part to define the renormalised local observables.

\begin{definition} \label{def:renormalisedquantities}
The \emph{renormalised vacuum polarisation} $\langle \Phi^2(x) \rangle_{\rm ren}$ with respect to a Hadamard state is defined as
\begin{equation}
\langle \Phi^2(x) \rangle_{\rm ren} := -i \lim_{x' \to x} G^{\rm F}_{\rm reg}(x, x') \, .
\end{equation}
The \emph{renormalised expectation value of the stress-energy tensor} $\langle T_{ab}(x) \rangle_{\rm ren}$ with respect to a Hadamard state is defined as
\begin{equation} \label{eq:renstressenergytensor}
\langle T_{ab}(x) \rangle_{\rm ren} := \lim_{x' \to x} \mathcal{T}_{ab'}(x,x') \left[ -i \, G^{\rm F}_{\rm reg}(x,x') \right] + \tilde{\Theta}_{ab}(x) \, ,
\end{equation}
where $\tilde{\Theta}_{ab}(x)$ is a state independent tensor which only depends on the local geometry and the parameters $m^2$ and $\xi$ of the scalar field and which guarantees that $\langle T_{ab}(x) \rangle_{\rm ren}$ is conserved, i.e.~$\nabla^a \langle T_{ab}(x) \rangle_{\rm ren} = 0$.
\end{definition}

\begin{remark}
The term $\tilde{\Theta}_{ab}(x)$ in \eqref{eq:renstressenergytensor} is necessary as the renormalisation procedure fails to provide a conserved tensor.  Additionally, an extra conserved tensor (denoted by $\Theta_{ab}(x)$ in \cite{Decanini:2005eg}), which also only depends on the local geometry and the parameters $m^2$ and $\xi$ of the scalar field, can be added to \eqref{eq:renstressenergytensor}, since the renormalised stress-energy tensor is defined only up to a local, conserved tensor \cite{Wald:1977up,Wald:1978pj,wald1994quantum}. This leaves an intrinsic ambiguity in the definition of the renormalised expectation value of the stress-energy tensor, which cannot be corrected without a full theory of quantum gravity or an experiment.
\end{remark}

As noted above, the Hadamard singular part of the Feynman propagator is purely geometrical and, hence, does not depend on which quantum state the Feynman propagator is being evaluated. It is possible to explicitly compute the Hadamard singular part (up to the required order in $\sigma$) in terms of the local geometry and the parameters $m^2$ and $\xi$ of the scalar field. The results for $2 \leq d \leq 6$ can be found in \cite{Decanini:2005eg}, here we present the results for $d = 3$ which are needed for this thesis.

\begin{proposition} \label{prop:Uexpansion}
For $d=3$, the covariant expansion of the Hadamard singular part \eqref{eq:Hadsingularpartodd} of the Feynman propagator is obtained using
\begin{equation} \label{eq:biscalarU}
U = U_0 + U_1 \sigma + \mathcal{O}(\sigma^2) \, ,
\end{equation}
with
\begin{align}
U_0 &= u_0 - u_{0a} \, \sigma^{;a} + \frac{1}{2!} u_{0ab} \, \sigma^{;a} \sigma^{;b} - \frac{1}{3!} u_{0abc} \, \sigma^{;a} \sigma^{;b} \sigma^{;c} + \mathcal{O}(\sigma^2) \, , \\
U_1 &= u_1 - u_{1a} \, \sigma^{;a} + \mathcal{O}(\sigma) \, ,
\end{align}
and where the coefficients are given by
\begin{equation}
u_0 = 1 \, , \quad u_{0a} = 0 \, , \quad u_{0ab} = \frac{1}{6} R_{ab} \, , \quad
u_{0abc} = \frac{1}{4} R_{(ab;c)} \, ,
\end{equation}
and
\begin{equation}
u_1 = m^2 + \left(\xi - \frac{1}{6}\right) R \, , \quad u_{1a} = \frac{1}{2} \left(\xi - \frac{1}{6}\right) R_{;a} \, .
\end{equation}
Here, $R$ is the Ricci scalar and $R_{ab}$ is the Ricci tensor of the spacetime.
\end{proposition}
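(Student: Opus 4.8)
The plan is to determine the bi-scalars $U_0$ and $U_1$ by imposing that the Hadamard parametrix solves the field equation, and then to read off the requested coefficients by covariantly Taylor expanding the results about coincidence. Since $d=3$ is odd, the logarithmic bi-scalar $V$ is absent and only the single term $i\alpha_3\,U/\sqrt{\sigma}$ of \eqref{eq:Hadsingularpartodd} needs to be analysed, which simplifies the bookkeeping considerably.

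First I would substitute the ansatz $G_{\rm Had}=i\alpha_3\,U\,\sigma^{-1/2}$ with $U=\sum_{n\geq 0}U_n\sigma^n$ into the defining equation \eqref{eq:GFdiffeq}, $P_x G^{\rm F}=-\delta(x,x')/\sqrt{-g(x)}$, with $P=\nabla^2-m^2-\xi R$. Away from coincidence the right-hand side vanishes, so I would expand $\nabla^2(U_n\,\sigma^{n-1/2})$ using the fundamental identity $\sigma_{;a}\sigma^{;a}=2\sigma$ of \eqref{eq:sigmamunorm} together with the standard transport equation relating $\nabla^2\sigma$ to the Van Vleck--Morette determinant $\Delta$, $\nabla^2\sigma=d-\sigma^{;a}(\ln\Delta)_{;a}$ (see \cite{Poisson:2003nc}). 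Collecting equal powers of $\sigma$ then yields a hierarchy of transport equations for the $U_n$. At the most singular order the algebraic coefficient of $U_0$ vanishes identically (because $2(1-d/2-1)+d=0$), leaving the first-order equation $2\sigma^{;a}(\ln U_0)_{;a}=\sigma^{;a}(\ln\Delta)_{;a}$, whose solution compatible with $[U_0]=1$ is $U_0=\Delta^{1/2}$; the next order expresses $U_1$ through $P$ acting on $U_0$.

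Then I would obtain the coefficients by covariant Taylor expansion in the sense of \eqref{eq:covarianttaylorexpansion}. For $U_0=\Delta^{1/2}$ I would invoke the known near-coincidence expansion of the Van Vleck determinant, whose leading corrections are governed by the curvature, and match term by term. This gives $u_0=[\Delta^{1/2}]=1$, $u_{0a}=0$ (there is no term linear in $\sigma^{;a}$), $u_{0ab}=\tfrac16 R_{ab}$ from the quadratic term $\tfrac1{12}R_{ab}\sigma^{;a}\sigma^{;b}$ once the combinatorial factor $1/2!$ in \eqref{eq:covarianttaylorexpansion} is restored, and $u_{0abc}=\tfrac14 R_{(ab;c)}$ from the cubic term. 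For $U_1$, the $n=1$ transport equation together with the coincidence limits $[\sigma_{;ab}]=g_{ab}$ and the known limits of $\Delta^{1/2}$ and its derivatives yields $u_1=[U_1]=m^2+(\xi-\tfrac16)R$; a further covariant differentiation combined with the coefficient relation $t_{\cdots a_1}=-[T_{\cdots;a_1}]+t_{\cdots;a_1}$ established earlier then gives $u_{1a}=\tfrac12(\xi-\tfrac16)R_{;a}$.

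The main obstacle will be the bookkeeping in the covariant expansions beyond second order: extracting $u_{0abc}=\tfrac14 R_{(ab;c)}$ and $u_{1a}$ requires the coincidence limits of third- and fourth-order covariant derivatives of $\sigma$ and of $\Delta^{1/2}$, which are obtained by repeated application of Synge's rule for interchanging covariant differentiation with the coincidence limit. Keeping track of the symmetrisations and of the curvature identities (Bianchi and Ricci contractions) entering these limits is the delicate part; by contrast, the transport-equation structure itself is routine once $U_0=\Delta^{1/2}$ has been established.
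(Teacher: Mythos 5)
Your derivation is correct and is essentially the route behind the paper's own treatment: the thesis states this proposition without proof, citing Decanini--Folacci \cite{Decanini:2005eg}, and your argument --- inserting the ansatz $i\alpha_3\,U\sigma^{-1/2}$ into the field equation, noting that the algebraic term cancels at leading order so that $U_0=\Delta^{1/2}$ solves the first transport equation with $[U_0]=1$, reading $u_{0ab}=\tfrac16 R_{ab}$ and $u_{0abc}=\tfrac14 R_{(ab;c)}$ off the standard covariant expansion of the Van Vleck--Morette determinant against the $(-1)^k/k!$ convention of \eqref{eq:covarianttaylorexpansion}, and extracting $u_1$ and $u_{1a}$ from the $n=1$ transport equation via $[\Box\Delta^{1/2}]=\tfrac16 R$, Synge's rule and the coefficient relation $t_{\cdots a}=-[T_{\cdots;a}]+t_{\cdots;a}$ --- is precisely the derivation underlying that citation. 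One cosmetic slip: your parenthetical justification ``$2(1-d/2-1)+d=0$'' is garbled; the leading-order cancellation is $\lambda(2\lambda-2+d)=0$ with $\lambda=1-d/2$, which is the identity you actually need, and the conclusion you draw from it is the standard one.
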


\begin{remark}
Note that the expansion of the Hadamard singular part is only required up to $\sigma^2$ for the computation of the expectation value of the stress-energy tensor \eqref{eq:renstressenergytensor} since only two covariant derivatives are taken before taking the coincidence limit. This also allows us to bypass the discussion of the convergence of this expansion.
\end{remark}

\subsection{The case of spacetimes with timelike boundaries}
\label{sec:renormalisationspacetimeswithboundaries}

The above discussion of the Hadamard renormalisation and Hadamard states was formulated for globally hyperbolic spacetimes. Even though the focus was to analyse and ultimately subtract the short-distance singularity structure of the Feynman propagator, the concept of a quantum state is global, as evidenced by the non-existence of a natural vacuum state in a general curved spacetime.

Here, we present a modification of the definition of a Hadamard state for the case of a spacetime with boundaries which was first proposed by \cite{Kay:2015iwa}. For that, first we recall the notion of a causally convex set.

\begin{definition}
A subset $U$ of a spacetime $M$ is a \emph{causally convex set} if, whenever two points $x$ and $x'$ of $U$ can be connected by a causal curve in $U$, the portion of the causal curve between $x$ and $x'$ lies entirely in $U$.
\end{definition}

We note that this definition differs from the Definition~\ref{def:geodconvexneighbourhood} of a geodesically convex set. We then define an Hadamard state in a spacetime with boundaries in the following way. Let $\Interior M := M \setminus \partial M$ denote the interior of the spacetime.

\begin{definition}
If, for any causally convex subset $U$ of $\Interior M$ which is a globally hyperbolic spacetime on its own right, a quantum state is Hadamard in the usual sense in $U$, then we say that the quantum state is \emph{Hadamard} in $M$.
\end{definition}

Given this definition, the Hadamard renormalisation procedure described above is performed in exactly the same way. In Chapter~\ref{chap:localobservables}, an explicit implementation of the Hadamard renormalisation of local observables such as $\langle \Phi^2(x) \rangle$ for a scalar field on a rotating black hole spacetime is presented.

\chapter{Renormalised local observables in rotating black hole spacetimes}
\chaptermark{Renormalised local observables}
\label{chap:localobservables}

In this chapter, we present a method to renormalise a class of local observables for a scalar field on a rotating black hole. We will focus on (2+1)-dimensional spacetimes for simplicity, but we argue that this method should be easily generalised to a wide range of rotating black hole spacetimes in four and more spacetime dimensions. Additionally, the details of the computation will focus on the renormalised vacuum polarisation $\langle \Phi^2(x) \rangle$. In Chapter~\ref{chap:computation}, this computation will be made explicit for the case of a warped AdS${}_3$ black hole. Finally, at the end of this chapter, it is explained why this method fails to renormalise local observables such as the expectation values of the stress-energy tensor for a rotating black hole spacetime.

The contents of this chapter were published on \cite{Ferreira:2014ina,Ferreira:2015ipa}.

\section{Scalar field on a rotating black hole}
\label{sec:computation-scalarfield}

In this first section, the outline of the calculation of the renormalised vacuum polarisation for a scalar field on a (2+1)-dimensional rotating black hole is displayed. Namely, we identify the quantum state of interest and apply the renormalisation procedure explained in Section~\ref{sec:qftcst-hadamardrenormalisation}.

\subsection{(2+1)-dimensional rotating black hole}
\label{sec:rotatingBHgeneral}

For concreteness, in the following we consider a generic (2+1)-dimensional stationary black hole spacetime, whose metric is of the form given below. This is the case of the warped AdS${}_3$ black hole described in Chapter~\ref{chap:wadsbh}. However, it will be argued that the method should be applicable to a wide range of rotating black hole spacetimes in three and more spacetime dimensions.

We choose spherical coordinates $(t,r,\theta)$, where $t \in (-\infty, \infty)$, $r \in (0, \infty)$ and $(t,r,\theta) \sim (t,r,\theta + 2\pi)$. In these coordinates, the metric of a (2+1)-dimensional stationary black hole, according to \eqref{eq:ADMdecomposition}, can be written as
\begin{equation}
\dd s^2 = - N(r)^2 \, \dd t^2 + g_{rr}(r) \, \dd r^2 + g_{\theta\theta}(r) \left( \dd \theta + N^{\theta}(r) \, \dd t \right)^2 \, ,
\end{equation}
where $N(r)$ is the lapse function and $N^{\theta}(r)$ is the shift function. We make the following remarks about the black hole:
\begin{enumerate}
\item It is assumed that there is $r_+>0$ where $g^{rr}(r_+) = 0$, $N(r_+) = 0$ and $\sqrt{g_{rr}(r_+)} \, N(r_+)$ is finite, such that $r = r_+$ is the location of the \emph{event horizon}. The region of the spacetime in which $r>r_+$ is called the \emph{exterior region}.
\item In this coordinate system, $\partial_t$ and $\partial_{\theta}$ are Killing vector fields. Even though it might seem natural to assume that $\partial_t$ is timelike in some region $r > r'$, with $r' > r_+$, we do not make such a requirement. Instead, it is only required that the spacetime be locally stationary, cf.~Definition~\ref{def:locallystationary}, i.e.~any point of the spacetime must have a neighbourhood in which there is a timelike Killing vector field. In particular, there exists a Killing vector field of the form
\begin{equation} \label{eq:horizongeneratorchi}
\chi = \partial_t + \Omega_{\mathcal{H}} \, \partial_{\theta}
\end{equation}
which is timelike in the region $(r_+, r_{\mathcal{C}})$, with $r_{\mathcal{C}} > r_+$ and which generates the event horizon. This vector field is then null at the horizon and at the surface located at $r = r_{\mathcal{C}}$, which is called the \emph{speed of light surface}. The constant $\Omega_{\mathcal{H}}$ is then interpreted as the \emph{angular velocity of the horizon} with respect to the coordinate system.
\end{enumerate}

As seen in Section~\ref{section:qftcst-rotatingbhs}, for most cases of physical interest, such as the Kerr spacetime in four dimensions, there is \emph{not} any Killing vector field which is timelike everywhere in the exterior region $r > r_+$.  As a consequence, in the context of quantum field theory, there is not a well defined quantum vacuum state which is regular at the horizon and is invariant under the isometries of the spacetime. However, we argued that a vacuum state with these properties can be defined if we restrict the spacetime by inserting a mirror-like boundary which respects the isometries of the spacetime. The simplest example is a boundary $\mathcal{M}$ at constant radius $r = r_{\mathcal{M}}$, with $r_+ < r_{\mathcal{M}} < r_{\mathcal{C}}$, on which the field satisfies Dirichlet boundary conditions, as in this region $\chi$ is a timelike Killing vector field.

Given this remark, we add the following comments: 
\begin{enumerate} \setcounter{enumi}{2}
\item We assume that there exists a Dirichlet boundary $\mathcal{M}$ at $r = r_{\mathcal{M}}$ and focus only on the portion of the exterior region from the horizon up to the boundary, the region $\widetilde{\text{I}}$ defined in Section~\ref{section:qftcst-rotatingbhs}, which from now is what is meant by \emph{exterior region} of the black hole.
\item In region $\widetilde{\text{I}}$ of the black hole spacetime, it is convenient to consider ``co-rotating coordinates'' $(\tilde{t}=t, \, r, \,\tilde{\theta} = \theta - \Omega_{\mathcal{H}} t)$, such that the Killing vector field $\chi$ is given by $\chi = \partial_{\tilde{t}}$ and the metric is given by
\begin{equation}
\dd s^2 = - N(r)^2 \, \dd \tilde{t}^2 + g_{rr}(r) \, \dd r^2 + g_{\theta\theta}(r) \left( \dd \tilde{\theta} + \big( N^{\theta}(r) + \Omega_{\mathcal{H}} \big) \dd \tilde{t} \right)^2 \, .
\label{eq:metriccorotatingcoords}
\end{equation}
The coordinate $\tilde{t}$ is, by definition, a time function in this region.
\end{enumerate}

\subsection{Scalar field and Hartle-Hawking state}
\label{sec:HHstateconstruction}

We now turn to the theory of a real massive scalar field $\Phi$ on the exterior region of the rotating black hole. It obeys the Klein-Gordon equation \eqref{eq:KGequation},
\begin{equation}
P \Phi = \left(\nabla^2 - m^2 - \xi R \right) \Phi = 0 \, .
\label{eq:KGequation2}
\end{equation}

A thorough study of the classical theory for a general curved spacetime was done in Section~\ref{sec:qftcst-classicaltheory}, followed by a discussion of the quantisation procedure in Section~\ref{sec:qftcst-quantumtheory}. The aim now is to identify the one-particle Hilbert space $\mathscr{H}$ on which an isometry-invariant, regular state can be defined (as the vacuum state of the corresponding Fock space $\mathscr{F}_{\rm s}(\mathscr{H})$) and to construct the quantised scalar field as an operator-valued distribution which acts on the Fock space as in Definition~\ref{def:quantumscalarfield}. Given that the spacetime under consideration is (locally) stationary, the natural choice for the one-particle Hilbert space $\mathscr{H}$ is the space of positive frequency solutions of the Klein-Gordon equation with respect to a timelike Killing vector field, cf.~Section~\ref{sec:qftcst-positivefrequency}.

First, note that, using the co-rotating coordinates $(\tilde{t}, r, \tilde{\theta})$, $\partial_{\tilde{t}}$ and $\partial_{\tilde{\theta}}$ are Killing vector fields. Hence, there are \emph{mode solutions} of \eqref{eq:KGequation2} of the form
\begin{equation}
\Phi_{\tilde{\omega} k}(\tilde{t},r,\tilde{\theta}) = e^{-i \tilde{\omega} \tilde{t} + i k \tilde{\theta}} \, \phi_{\tilde{\omega} k}(r) \, ,
\label{eq:modesolutions}
\end{equation}
where $\tilde{\omega} \in \mathbb{R}$ and $k \in \mathbb{Z}$, such that a general (complex) solution can be written as
\begin{equation}
\Phi(\tilde{t}, r, \tilde{\theta}) = \int_{\mathbb{R}} \dd \tilde{\omega} \, \sum_{k \in \mathbb{Z}} N_{\tilde{\omega} k} \, \Phi_{\tilde{\omega} k}(\tilde{t},r,\tilde{\theta}) \, ,
\end{equation}
where $N_{\tilde{\omega} k}$ is a normalisation constant.

\begin{remark}
The mode solutions \eqref{eq:modesolutions} with $\tilde{\omega}>0$ form a basis for the space of positive frequency solutions of \eqref{eq:KGequation2} with respect to Killing vector $\chi = \partial_{\tilde{t}}$. This is \emph{not} the space used to define the desired vacuum state.
\end{remark}

We want to find a vacuum state which is regular at the horizons of the extended spacetime and invariant under the spacetime isometries. The one-particle Hilbert space consisting of positive frequency solutions with respect to the affine parameters of the horizon satisfies these conditions.

The affine parameters of the horizon are given by the usual Kruskal coordinates, similarly to the Kerr spacetime in four dimensions. Given the coordinate system $(\tilde{t}, r, \tilde{\theta})$, define the tortoise coordinate $r_*$ by
\begin{equation} \label{eq:tortoisecoordgeneral}
\frac{\dd r_*}{\dd r} = \frac{\sqrt{g_{rr}(r)}}{N(r)^2} \, ,
\end{equation}
such that $r \in (r_+ , \infty)$ is mapped to $r_* \in (-\infty, \infty)$, and define the two null coordinates
\begin{equation}
v := \tilde{t} + r_* \, , \qquad u := \tilde{t} - r_* \, .
\end{equation}
Define a new angular coordinate $\psi$ by
\begin{equation}
\dd \psi = \dd \tilde{\theta} + \left( N^{\theta}(r) + \Omega_{\mathcal{H}} \right) \dd \tilde{t} \, .
\end{equation}
The metric written in coordinates $(v, r, \psi)$ is given by
\begin{equation}
\dd s^2 = - N(r)^2 \dd v^2 + 2 \sqrt{g_{rr}(r)} N(r) \dd v \dd r + g_{\theta\theta}(r) \dd \psi \, .
\end{equation}

The Killing vector field $\chi$ in these coordinates is given by $\chi = \partial_v$. Since it is null at the horizon, the latter is a Killing horizon. To find its surface gravity $\kappa_+$, we compute, at $r=r_+$,
\begin{equation}
\chi^{\mu} \nabla_{\mu} \chi_{\nu} = - \chi^{\mu} \nabla_{\nu} \chi_{\mu}
= - \frac{1}{2} \, \nabla_{\nu} \! \left(\chi^2\right) = - \frac{1}{2} \, \partial_{\nu} N^2
= N \partial_r N \, \delta_{\nu r} \, ,
\end{equation}
and use the relation valid at $r=r_+$,
\begin{equation}
\chi^{\mu} \nabla_{\mu} \chi_{\nu} = \kappa_+ \chi_{\nu} \, ,
\end{equation}
to conclude that
\begin{equation} \label{eq:surfacegravitygeneral}
\kappa_+ = \left. \frac{\partial_r N}{\sqrt{g_{rr}}} \right|_{r=r_+} \, .
\end{equation}

The coordinate $v$ is a non-affine parameter along (part of) the horizon. To find an affine parameter, let $\tilde{\chi}$ denote an affinely parametrised generator of the horizon, i.e. $\tilde{\chi}^a \nabla_a \tilde{\chi}^b = 0$ on the horizon. It is easy to check that $\tilde{\chi} = e^{\kappa_+ v} \chi$ is such a generator. Hence, if we denote by $V$ the affine parameter, such that $\tilde{\chi} = \partial_V$, then $V \propto e^{\kappa_+ v}$.  Analogously, we have that $U \propto e^{- \kappa_+ u}$ is an affine parameter along (part of) the horizon.

This suggests that we define the Kruskal-like coordinates as
\begin{equation}
V: = e^{\kappa_+ v} \, , \qquad U := - e^{- \kappa_+ u} \, .
\end{equation}
These new coordinates allow us to analytically extend the spacetime in the standard way. In the extended spacetime, the horizon is now given by either $V = 0$ or $U = 0$ and the original exterior region is given by $V>0$ and $U<0$. Denote the surface $V = 0$ by $\mathcal{H}^-$ (the \emph{past horizon}) and the surface $U=0$ by $\mathcal{H}^+$ (the \emph{future horizon}). Their intersection is the \emph{bifurcation surface}. Denote the original exterior region by region I. As in Section~\ref{section:qftcst-rotatingbhs}, we assume that there is a Dirichlet boundary $\mathcal{M}$ at $r = r_{\mathcal{M}}$ in the original exterior region and denote by $\tilde{\text{I}}$ the portion of the region I from the horizon up to the boundary. We also assume that there is another Dirichlet boundary $\mathcal{M}'$ in the region given by $V<0$ and $U>0$ (region IV), which can be obtained by the action of a discrete isometry $J : (U,V, \tilde{\theta}) \mapsto (-U,-V,\tilde{\theta})$, taking points from region I to region IV by a reflection about the bifurcation surface. Denote by $\widetilde{\text{IV}}$ the portion of region IV up to the boundary $\mathcal{M}'$. 

From the results above, one can conclude that $V$ is an affine parameter along $\mathcal{H}^+$, whereas $U$ is an affine parameter along $\mathcal{H}^-$. In other words, the vector field $\partial_V$ is tangent to affinely parametrised null geodesics along $\mathcal{H}^+$, whereas $\partial_U$ is tangent to affinely parametrised null geodesics along $\mathcal{H}^-$. Therefore, denote by $\mathscr{H}$ the one-particle Hilbert space of solutions such that
\begin{enumerate}[label={(\roman*)}]
\item when restricted to $\mathcal{H}^+$ are positive frequency with respect to $\partial_V$;
\item when restricted to $\mathcal{H}^-$ are positive frequency with respect to $\partial_U$;
\item the Dirichlet boundary condition at $\mathcal{M}$ and $\mathcal{M}'$ is satisfied.
\end{enumerate}
Let $\mathscr{F}_{\rm s}(\mathscr{H})$ denote the Fock space associated with $\mathscr{H}$, cf.~Definition~\ref{def:Fockspace}. We then define:

\vspace*{0.4ex}

\begin{definition}
The \emph{Hartle-Hawking state} $|H\rangle$ is the vacuum state of $\mathscr{F}_{\rm s}(\mathscr{H})$.
\end{definition}

\vspace*{0.8ex}

\begin{remark}
We call this state the Hartle-Hawking state as this state satisfies the same defining properties as the state defined on a Schwarzschild black hole \cite{Hartle:1976tp}, namely the regularity at the horizons and the isometry invariance. Hence, this state can be thought as the natural generalisation to the rotating case with mirrors.
\end{remark}

Having defined the Fock space of interest, one can define the quantum field in the way detailed in Section~\ref{sec:qftcst-quantumtheory}. For that, one picks an orthonormal basis of $\mathscr{H}$. In the following, such a basis is constructed.
\begin{enumerate}
\item Let $\big\{ \Phi^{\tilde{\text{I}}}_{\tilde{\omega} k} \big\}_{\tilde{\omega}>0, k \in \mathbb{Z}}$ be the orthonormal basis of mode solutions of the form \eqref{eq:modesolutions} on region $\tilde{\text{I}}$ which satisfy the Dirichlet boundary condition at $\mathcal{M}$.
\item Define mode solutions on region $\widetilde{\text{IV}}$, $\Phi^{\widetilde{\text{IV}}}_{\tilde{\omega} k}$, by the action of the isometry $J$,
\begin{equation}
\Phi^{\widetilde{\text{IV}}}_{\tilde{\omega} k}(x) := \overline{\Phi^{\tilde{\text{I}}}_{\tilde{\omega} k} \left( J^{-1}(x) \right)} \, , \qquad
x \in \widetilde{\text{IV}} \, .
\end{equation}
It is understood that the modes $\Phi^{\tilde{\text{I}}}_{\tilde{\omega} k}$ only have support on region $\tilde{\text{I}}$ and that the modes $\Phi^{\widetilde{\text{IV}}}_{\tilde{\omega} k}$ only have support on region $\widetilde{\text{IV}}$.
\item In the union $\tilde{\text{I}} \cup \widetilde{\text{IV}}$, define the new mode solutions $\Phi^{\text{L}}_{\tilde{\omega}k}$ and $\Phi^{\text{R}}_{\tilde{\omega}k}$ by
\begin{subequations}
\begin{align}
\Phi^{\text{L}}_{\tilde{\omega}k}(x) &:= \frac{1}{\sqrt{1-e^{-2\pi \tilde{\omega} /\kappa_+}}} \left( \Phi^{\widetilde{\text{IV}}}_{\tilde{\omega}k}(x) + e^{-\pi \tilde{\omega} /\kappa_+} \overline{\Phi^{\widetilde{\text{I}}}_{\tilde{\omega}k}(x)} \right) , \quad x \in \widetilde{\text{I}} \cup \widetilde{\text{IV}}  , \\
\Phi^{\text{R}}_{\tilde{\omega}k}(x) &:= \frac{1}{\sqrt{1-e^{-2\pi \tilde{\omega} /\kappa_+}}} \left( \Phi^{\widetilde{\text{I}}}_{\tilde{\omega}k}(x) + e^{-\pi \tilde{\omega} /\kappa_+} \overline{\Phi^{\widetilde{\text{IV}}}_{\tilde{\omega}k}(x)} \right) , \quad x \in \widetilde{\text{I}} \cup \widetilde{\text{IV}}  .
\end{align}
\end{subequations}
These L and R modes can be analytically extended across the horizons.
\end{enumerate}

\vspace*{0.5ex}

\begin{proposition} \label{prop:LRmodespositivefreq}
The L and R mode solutions are of positive frequency with respect to the affine parameters of $\mathcal{H}^+$ and $\mathcal{H}^-$.
\end{proposition}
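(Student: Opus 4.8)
The plan is to use the standard Unruh-type characterisation of positive frequency with respect to an affine horizon parameter: a solution restricted to $\mathcal{H}^+$ is positive frequency with respect to $\partial_V$ if and only if, viewed as a function of the affine coordinate $V$, it is the boundary value from below of a function holomorphic and bounded in the lower-half complex-$V$ plane, and analogously on $\mathcal{H}^-$ with $U$. This criterion follows because a positive-frequency superposition $\int_0^\infty \tilde{g}(\Omega)\, e^{-i\Omega V} \, \dd\Omega$ decays precisely when $\ImC V < 0$. First I would establish the near-horizon form of the building-block modes. Using $v = \tilde{t} + r_*$ together with $V = e^{\kappa_+ v}$, the restriction of a region-$\widetilde{\text{I}}$ mode $\Phi^{\widetilde{\text{I}}}_{\tilde{\omega} k}$ to $\mathcal{H}^+$ reduces, after stripping off the smooth angular factor $e^{ik\psi}$ which is regular across the horizon, to a term proportional to $e^{-i\tilde{\omega} v} = V^{-i\tilde{\omega}/\kappa_+}$, valid for $V>0$.

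Next I would track the region-$\widetilde{\text{IV}}$ mode through the discrete isometry $J:(U,V,\tilde{\theta})\mapsto(-U,-V,\tilde{\theta})$. Since $\Phi^{\widetilde{\text{IV}}}_{\tilde{\omega} k}(x) := \overline{\Phi^{\widetilde{\text{I}}}_{\tilde{\omega} k}(J^{-1}(x))}$, its complex conjugate restricted to $\mathcal{H}^+$ at a point with $V<0$ equals $\Phi^{\widetilde{\text{I}}}_{\tilde{\omega} k}$ evaluated at the reflected point with $-V>0$, hence is proportional to $(-V)^{-i\tilde{\omega}/\kappa_+} = |V|^{-i\tilde{\omega}/\kappa_+}$. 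The crucial step is then to compare the R-mode combination with the analytic continuation of $V^{-i\tilde{\omega}/\kappa_+}$ from $V>0$ to $V<0$ performed through the lower-half plane, writing $V = |V|\,e^{-i\pi}$. This yields $V^{-i\tilde{\omega}/\kappa_+}\big|_{V<0} = e^{-\pi\tilde{\omega}/\kappa_+}\,|V|^{-i\tilde{\omega}/\kappa_+}$, where the Boltzmann factor $e^{-\pi\tilde{\omega}/\kappa_+}$ arises precisely as $e^{-i\pi\cdot(-i\tilde{\omega}/\kappa_+)}$. Matching the two sides shows that $\Phi^{\text{R}}_{\tilde{\omega} k}$, up to the normalisation $(1-e^{-2\pi\tilde{\omega}/\kappa_+})^{-1/2}$, coincides on $\mathcal{H}^+$ with the boundary value of this single holomorphic function, and symmetrically $\Phi^{\text{L}}_{\tilde{\omega} k}$ corresponds to the continuation with region $\widetilde{\text{IV}}$ playing the primary role.

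To conclude positive frequency it remains to verify boundedness and holomorphy in the lower-half plane: for $V = R\,e^{i\phi}$ with $\phi\in(-\pi,0)$ one has $|V^{-i\tilde{\omega}/\kappa_+}| = e^{(\tilde{\omega}/\kappa_+)\phi}$, which for $\tilde{\omega}>0$ stays between $e^{-\pi\tilde{\omega}/\kappa_+}$ and $1$, so the continued function is holomorphic and bounded there and its boundary value is positive frequency with respect to $V$. Repeating the identical argument on $\mathcal{H}^-$ with the outgoing coordinate $u = \tilde{t} - r_*$ and $U = -e^{-\kappa_+ u}$, for which the relevant region-$\widetilde{\text{I}}$ power is $(-U)^{i\tilde{\omega}/\kappa_+}$ and the continuation is again through the lower-half $U$-plane, establishes positive frequency with respect to $\partial_U$.

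The hard part will be the careful bookkeeping in the second paragraph: keeping the complex conjugations, the orientation of the branch cut, the direction of the contour deformation and the action of $J$ mutually consistent, so that the thermal factor $e^{-\pi\tilde{\omega}/\kappa_+}$ emerges with the correct sign and the normalisation constant is reproduced. A secondary subtlety is that a generic mode restricted to the full horizon contains both a $v$-dependent piece on $\mathcal{H}^+$ and a $u$-dependent piece on $\mathcal{H}^-$; I would treat the two null generators separately and confirm that the same linear combination is simultaneously of positive frequency on both, which is exactly what makes the resulting vacuum a genuine Hartle-Hawking state.
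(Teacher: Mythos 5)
Your proposal is correct and rests on the same core mechanism as the paper's proof: the Unruh-style analytic continuation of the near-horizon mode $V^{-i\tilde{\omega}/\kappa_+}$ (resp.\ $e^{i(\tilde{\omega}/\kappa_+)\log(-U)}$) through the \emph{lower} half of the complex plane, with the branch cut in the upper half plane, so that the Boltzmann factor $e^{-\pi\tilde{\omega}/\kappa_+}$ in the L/R combinations is exactly what makes the restriction to the horizon the boundary value of a single holomorphic function. Your branch bookkeeping, including $e^{-i\pi\cdot(-i\tilde{\omega}/\kappa_+)} = e^{-\pi\tilde{\omega}/\kappa_+}$ and the role of $J$ in converting $\overline{\Phi^{\widetilde{\mathrm{IV}}}_{\tilde{\omega}k}}$ into $\Phi^{\widetilde{\mathrm{I}}}_{\tilde{\omega}k}$ at the reflected point, is consistent. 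Where you genuinely diverge is in the final step that converts holomorphy into positive frequency. The paper computes the Fourier transform with respect to $U$ explicitly and tries to kill the negative-frequency part with Jordan's lemma; this requires the decay condition \eqref{eq:decaylhpU}, which \emph{fails} for sharp-frequency modes, and the paper repairs this by passing to wave packets whose extra frequency integration makes the contour integrals convergent. You instead invoke the Hardy-space-type criterion that a bounded holomorphic function in the lower half plane has distributional boundary value with Fourier transform supported on positive frequencies, and you verify boundedness directly via $|V^{-i\tilde{\omega}/\kappa_+}| = e^{(\tilde{\omega}/\kappa_+)\phi} \in [e^{-\pi\tilde{\omega}/\kappa_+},1]$ for $\phi\in(-\pi,0)$. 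This buys something real: it sidesteps the Jordan's-lemma decay failure, so no wave-packet regularisation is needed at this step. Two caveats: your criterion should be stated as an implication rather than an ``if and only if'' (a positive-frequency superposition need not be bounded), though only the direction you use is needed; and since the sharp modes are non-normalisable, the Fourier transform still only exists in the sense of tempered distributions, so a fully rigorous version of your argument needs the Paley--Wiener-type boundary-value theorem in that distributional setting --- this is the same non-normalisability that forces the paper into wave packets, absorbed here into a cited theorem rather than an explicit construction.
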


\vspace*{0.3ex}

\begin{proof}
We only show that $\Phi^{\rm R}_{\tilde{\omega}k}$ is of positive frequency with respect to the affine parameter $U$ of $\mathcal{H}^-$. For that, we want to decompose $\Phi^{\rm R}_{\tilde{\omega}k}$ into its positive and negative frequency parts with respect to $U$ and show that the latter vanishes. The Fourier transform of $\Phi^{\rm R}_{\tilde{\omega}k}$ with respect to $U$ is
\begin{equation}
\tilde{\Phi}^{\rm R}_{\tilde{\omega}k}(\sigma, r, \tilde{\theta}) = \int_{-\infty}^{\infty} \dd U \, e^{i \sigma U} \, \Phi^{\rm R}_{\tilde{\omega}k} (U,r, \tilde{\theta}) \, ,
\label{eq:FouriertransfU}
\end{equation}
where
\begin{align*}
\Phi^{\rm R}_{\tilde{\omega}k} (U,r, \tilde{\theta}) 
&= \frac{1}{2\pi} \int_{-\infty}^{\infty} \dd \sigma \, e^{-i \sigma U} \, \tilde{\Phi}^{\rm R}_{\tilde{\omega}k}(\sigma, r, \tilde{\theta}) \notag \\
&= \frac{1}{2\pi} \int_{0}^{\infty} \dd \sigma \, e^{-i \sigma U} \, \tilde{\Phi}^{\rm R}_{\tilde{\omega}k}(\sigma, r, \tilde{\theta}) + \frac{1}{2\pi} \int_{0}^{\infty} \dd \sigma \, e^{i \sigma U} \, \tilde{\Phi}^{\rm R}_{\tilde{\omega}k}(-\sigma, r, \tilde{\theta}) \, .
\end{align*}
We decomposed $\Phi^{\rm R}_{\tilde{\omega}k}$ in its positive and negative frequency parts with respect to $U$. If $\tilde{\Phi}^{\rm R}_{\tilde{\omega}k}(-\sigma, r, \tilde{\theta}) = 0$ for $\sigma > 0$, then $\Phi^{\rm R}_{\tilde{\omega}k}$ is of positive frequency with respect to $U$.

Suppose that $\Phi^{\rm R}_{\tilde{\omega}k}(U,r, \tilde{\theta})$ is analytic in the lower half of the complex $U$-plane and, furthermore, that
\begin{equation}
\lim_{R \to \infty} \, \max_{\theta \in (-\pi, 0)} \left| \Phi^{\rm R}_{\tilde{\omega}k} (R e^{i \theta} ,r, \tilde{\theta}) \right| = 0 \, .
\label{eq:decaylhpU}
\end{equation}
Then, we can apply Jordan's lemma to \eqref{eq:FouriertransfU} when $\sigma < 0$ and close the integration contour in the lower half plane to conclude that $\tilde{\Phi}^{\rm R}_{\tilde{\omega}k}(\sigma, r, \tilde{\theta}) = 0$ when $\sigma < 0$.

To show this, first note that, on the past horizon $\mathcal{H}^-$, $\Phi^{\tilde{\text{I}}}_{\tilde{\omega}k}$ is of the form
\begin{equation}
\Phi^{\tilde{\text{I}}}_{\tilde{\omega}k}(U, r_+, \tilde{\theta}) 
\sim e^{i \frac{\tilde{\omega}}{\kappa_+} \log(-U) + i k \tilde{\theta}} \, \Theta(-U) \, ,
\end{equation}
where $\Theta(U)$ is the Heaviside function, whereas
\begin{equation}
\Phi^{\widetilde{\text{IV}}}_{\tilde{\omega}k}(U, r_+, \tilde{\theta}) 
\sim e^{-i \frac{\tilde{\omega}}{\kappa_+} \log(U) - i k \tilde{\theta}} \, \Theta(U) \, .
\end{equation}

Thus, on $\mathcal{H}^-$,
\begin{align}
\Phi^{\rm R}_{\tilde{\omega}k}(U,r_+, \tilde{\theta}) \sim \left[ e^{i \frac{\tilde{\omega}}{\kappa_+} \log(-U)} \Theta(-U) +  e^{-\frac{\pi\tilde{\omega}}{\kappa_+}}  e^{i \frac{\tilde{\omega}}{\kappa_+} \log(U)} \Theta(U) \right] e^{i k \tilde{\theta}} \, .
\end{align}
To check that $\Phi^{\rm R}_{\tilde{\omega}k}(U,r_+, \tilde{\theta})$ is analytic in the lower half of the complex $U$-plane, we extend the logarithm in the complex plane by taking the branch cut to lie in the upper half plane. Then,
\begin{equation}
e^{i \frac{\tilde{\omega}}{\kappa_+} \log(U)} 
= e^{i \frac{\tilde{\omega}}{\kappa_+} \left[ \log(-U) - i \pi \right]} 
= e^{\frac{\pi\tilde{\omega}}{\kappa_+}} e^{i \frac{\tilde{\omega}}{\kappa_+} \log(-U)} \, ,
\end{equation}
and we can write
\begin{align}
\Phi^{\rm R}_{\tilde{\omega}k}(U,r_+, \tilde{\theta}) \sim e^{i \frac{\tilde{\omega}}{\kappa_+} \log(-U)} \, e^{i k \tilde{\theta}} \, ,
\end{align}
for all $U$. This is analytic in the lower half of the complex $U$-plane.

However, it does not satisfy \eqref{eq:decaylhpU}. This is due to the fact that we are dealing with non-normalisable mode solutions with sharp frequencies, which leads that in
\begin{equation}
\tilde{\Phi}^{\rm R}_{\tilde{\omega}k}(\sigma, r_+, \tilde{\theta}) 
\sim \int_0^{\infty} \dd U \, e^{i \sigma U + i\frac{\tilde{\omega}}{\kappa_+} \log(-U)} \, e^{i k \tilde{\theta}}
\end{equation}
the integral does not converge absolutely. 

Instead, we should consider \emph{wave-packets} constructed as superpositions of positive frequency modes, such as the ones in \cite{Hawking:1974sw,Wald:1975kc}
\begin{equation}
\Phi_{jn} = \frac{1}{\sqrt{\epsilon}} \int_{j \epsilon}^{(j+1)\epsilon} d\omega \, e^{-i \frac{2\pi n}{\epsilon}} \Phi_{\omega} \, ,
\end{equation}
where $j \in \mathbb{N}$, $n \in \mathbb{Z}$, $\epsilon > 0$ and $\Phi_{\omega}$ is a mode solution generated by data of the form $e^{i \omega u}/\sqrt{\omega}$ at future null infinity. These wave-packets are made of frequencies within $\epsilon$ of $j \epsilon$ and are peaked around retarded time $u = \frac{2\pi n}{\epsilon}$ with spread $\sim 1/\epsilon$. Then, the additional integration over the frequencies required to construct the wave-packets make the above integrals over $U$ convergent. 
\end{proof}

Hence, we take the one-particle Hilbert space $\mathscr{H}$ to consist of the L and R mode solutions. The quantum scalar field $\Phi(x)$ is then defined to be
\begin{equation}
\Phi(x) = \sum_{k=-\infty}^{\infty} \int_0^{\infty} \dd \tilde{\omega} \left[ a^{\text{L}}_{\tilde{\omega}k} \Phi^{\text{L}}_{\tilde{\omega}k}(x) + a^{\text{R}}_{\tilde{\omega}k} \Phi^{\text{R}}_{\tilde{\omega}k}(x) + \text{h.c.} \right] \, ,
\end{equation}
where h.c.~stands for ``hermitian conjugate''. The Hartle-Hawking state $|H\rangle$ is such that $a^{\text{L}}_{\tilde{\omega}k} |H \rangle = a^{\text{R}}_{\tilde{\omega}k} |H \rangle = 0$. Moreover, one can define the Feynman propagator $G^{\rm F}$ evaluated for this quantum state by \eqref{eq:Feynmanpropdef}.

\begin{remark} \label{rem:HHstatethermal}
Note that the Hartle-Hawking state is the \emph{vacuum} state of the Fock space $\mathscr{F}_{\rm s}(\mathscr{H})$ associated with the one-particle Hilbert space $\mathscr{H}$ consisting of the L and R modes. As shown by
\cite{Hartle:1976tp,Israel:1976ur}, it is however a \emph{thermal} state with respect to the horizon generator Killing vector field $\chi$, introduced in \eqref{eq:horizongeneratorchi}. Therefore, the Feynman propagator $G^{\rm F}$ evaluated for the Hartle-Hawking state, when written using the coordinates $(\tilde{t}, r, \tilde{\theta})$ (which only cover region $\tilde{\text{I}}$), is a thermal Green's distribution, as defined in Section~\ref{sec:Greensfunctions}.
\end{remark}

\subsection{Hadamard renormalisation}

In Section~\ref{sec:qftcst-hadamardrenormalisation} the Hadamard renormalisation of the vacuum polarisation $\langle \Phi^2(x) \rangle$ was described. We concluded that the renormalized vacuum polarization $\langle \Phi^2(x) \rangle$ in any Hadamard state is given by
\begin{equation} \label{eq:Grengeneral}
\langle \Phi^2(x) \rangle_{\rm ren} := - i \lim_{x' \to x} \left[ G^{\text{F}}(x,x') - G_{\text{Had}}(x,x') \right] \, ,
\end{equation}
where $G_{\text{Had}}$ is the Hadamard singular part of the Feynman propagator $G^{\text{F}}$, evaluated for the Hadamard state. In the current three-dimensional setting, it is given by
\begin{equation} \label{eq:Hadamardsingularpartfull}
G_{\text{Had}}(x,x') = \frac{i}{4\sqrt{2}\pi} \frac{U(x,x')}{\sqrt{\sigma(x,x')+i\epsilon}} \, ,
\end{equation}
with the bi-scalar $U$ given by \eqref{eq:biscalarU}. This bi-scalar can be expressed as a covariant Taylor expansion (see section~\ref{sec:maths-bitensorexpansions}) as
\begin{equation}
U(x,x') = \sum_{k=0}^{\infty} U_k(x,x') \, \sigma^{;k}(x,x') \, .
\end{equation}
For the computation of the vacuum polarization, it is sufficient to know the zeroth term, $U_0(x,x') = 1 + \mathcal{O}(\sigma)$, thus,
\begin{equation} \label{eq:Hadamardsingularpartfirsterm}
G_{\text{Had}}(x,x') = \frac{i}{4\sqrt{2}\pi} \frac{1}{\sqrt{\sigma(x,x')+i\epsilon}} + \mathcal{O}(\sigma^{1/2}) \, .
\end{equation}

At this stage, we are faced with two important technical difficulties. To perform the subtraction in \eqref{eq:Grengeneral}, we need to compute the Feynman propagator $G^{\rm F}$ evaluated for the Hartle-Hawking state and the state-independent Hadamard singular part $G_{\rm Had}$. The former is usually obtained as a sum over mode solutions of the differential equation \eqref{eq:GFdiffeq} satisfied by $G^{\rm F}$, whereas the latter is given in closed form by \eqref{eq:Hadamardsingularpartfull}. This implies that, unless we are able to express the mode sum in closed form, which is generally not possible, we need to express $G_{\rm Had}$ as a sum over mode solutions, such that the short-distance divergence can be subtracted term by term. This will be done in Section~\ref{sec:renormalisation-procedure}.

First, however, we need to compute the Feynman propagator $G^{\rm F}$ and write it as sum over mode solutions of \eqref{eq:GFdiffeq}. If the background spacetime were static, the standard technique to obtain the Feynman propagator would be to consider the real Riemannian section of the static spacetime, cf.~Definition~\ref{def:realriemanniansection}, and obtain the Euclidean Green's distribution $G^{\rm E}$ which satisfies the differential equation \eqref{eq:thermalEuclideanGreenfunction}. The Euclidean Green's distribution is unique, due to the ellipticity of the Klein-Gordon operator in the real Riemannian section, and it can be easily computed using standard Green's functions techniques. The Feynman propagator for the original static spacetime can then be obtained by using \eqref{eq:GFGEthermal}.

Since our spacetime is stationary, but non static, this technique is no longer valid. If we instead consider the complex Riemannian section of the spacetime, cf.~Definition~\ref{eq:complexRiemanniansection}, there is no guarantee that, in general, there is a unique Green's distribution that solves the differential equation, as the Klein-Gordon operator is not elliptic in the complex Riemannian section. In the next section, we describe how the complex Riemannian section can still be used to compute the Feynman propagator in this case.

\section{Quasi-Euclidean method}
\label{sec:quasi-euclidean-method}

In this section, we present the ``quasi-Euclidean method'' to compute the Feynman propagator for a scalar field in the Hartle-Hawking state on a rotating black hole spacetime. This is a generalisation of the ``Euclidean method'' used for static spacetimes and involves the complex Riemannian section of the exterior region of the rotating black hole spacetime with a timelike boundary. Ideas similar to the ones presented in Section~\ref{sec:qftcst-stwithboundaries} allows us to conclude that there exists a unique Green's distribution associated with the Klein-Gordon equation in the complex Riemannian section which can obtained as a mode sum using standard Green's functions techniques.

The complex Riemannian section of certain rotating spacetimes has been briefly discussed in \cite{Gibbons:1976ue,Frolov:1982pi,Brown:1990di} in the context of the Kerr-Newman black hole. In \cite{Moretti:1999fb}, a more general concept of ``local Wick rotation'' is discussed for any Lorentzian manifold, even without a timelike Killing vector field, as long as its metric is a locally analytic function of the coordinates.

\subsection{Complex Riemannian section}

In Section~\ref{sec:riemanniansections} the real Riemannian section of a static spacetime was defined. In short, a static spacetime can be thought of as a real Lorentzian section of a complex manifold, for which it is always possible to find a real Riemannian section by performing an appropriate analytical continuation. For a (2+1)-dimensional static spacetime whose metric in coordinates $(t,r,\theta)$ is
\begin{equation}
\dd s^2 = - N(r)^2 \, \dd t^2 + g_{rr}(r) \, \dd r^2 + g_{\theta\theta}(r) \, \dd \theta^2 \, ,
\end{equation}
where $t$ is a global time function, one can obtain the real Riemannian section by performing a Wick rotation $t \to -i \tau \in i \mathbb{R}$,
\begin{equation}
\dd s^2_{\mathbb{R}} = N(r)^2 \, \dd \tau^2 + g_{rr}(r) \, \dd r^2 + g_{\theta\theta}(r) \, \dd \theta^2 \, .
\end{equation}

The analytic continuation procedure does unfortunately not have an immediate generalization to spacetimes that are stationary but not static which generates a real Riemannian section. For the exterior of a rotating black hole, one issue is that the exterior need not have a globally timelike Killing vector even when each point in the exterior has a neighbourhood with such a Killing vector, i.e.~it is locally stationary, as we saw in Section~\ref{section:qftcst-rotatingbhs}. A second issue is that there may exist no analytic continuation in the coordinates that results in a real Riemannian section. Both of these issues are present in Kerr (for which the absence of a real section with a positive definite metric was shown in~\cite{Woodhouse:1977-complex}) and in the $(2+1)$-dimensional warped AdS${}_3$ black holes considered in Chapter~\ref{chap:wadsbh}. It is possible to obtain a positive definite metric by analytically continuing not just the coordinates but also the parameters (e.g.~the angular momentum parameter in Kerr \cite{hawking1979general}), but the physical relevance of continuing parameters seems debatable \cite{Brown:1990di}. 

If we only consider region~$\widetilde{\text{I}}$ of the (2+1)-dimensional rotating black hole spacetime, there exists an everywhere timelike Killing vector field, $\chi = \partial_{\tilde{t}}$. If we perform a Wick rotation $\tilde{t} = -i \tau \in i \mathbb{R}$, the metric \eqref{eq:metriccorotatingcoords} becomes
\begin{equation} \label{eq:metriccomplexRiemanniansectiongen}
\dd s^2_{\mathbb{C}} = N(r)^2 \, \dd \tau^2 + g_{rr}(r) \, \dd r^2 + g_{\theta\theta} \left( \dd \tilde{\theta} - i \, \big( N^{\theta}(r) + \Omega_{\mathcal{H}} \big) \dd \tau \right)^2 \, .
\end{equation}
This is the complex-valued metric $g^{\mathbb{C}}$ of the complex Riemannian section $I^{\mathbb{C}}$ of a complex manifold, in which region $\widetilde{\text{I}}$ is a real Lorentzian section, cf.~Definition~\ref{eq:complexRiemanniansection}. 

This metric is regular at $r=r_+$ if $\tau$ is periodic with period $2\pi/\kappa_+$, where $\kappa_+$ is the surface gravity of the black hole obtained in \eqref{eq:surfacegravitygeneral}. Otherwise, there would be a conical singularity at $r=r_+$. The resulting manifold has two periodic directions and a third direction that is also compact due to the boundary at $r = r_{\mathcal{M}}$.

\subsection{Green's distribution in the Riemannian section}

In the real Lorentzian section of the rotating black hole, we defined the Feynman propagator $G^{\rm F}$ evaluated for the Hartle-Hawking state in the usual way as a bi-distribution on $M$. In the complex Riemannian section $I^{\mathbb{C}}$, we find the Green's distribution $G$ associated with the Klein-Gordon equation (which should \emph{not} be confused with the causal propagator $G$ introduced in Definition~\ref{def:causalpropagator}). Given the construction of $I^{\mathbb{C}}$, the results obtained in this section will only be relevant for region $\widetilde{\text{I}}$ of the original spacetime.

In the complex Riemannian section $I^{\mathbb{C}}$, the Green's distribution $G$ associated with the Klein-Gordon equation satisfies the distributional equation
\begin{equation} \label{eq:GCfunctioneq}
\left( \nabla^2 - m^2 - \xi R \right) G(x,x') = - \frac{\delta^3(x,x')}{\sqrt{g(x)}} = - \frac{\delta(\tau-\tau') \delta(r-r') \delta(\tilde{\theta}-\tilde{\theta}')}{\sqrt{g(x)}} \, ,
\end{equation}
where $g(x) := \det(g^{\mathbb{C}}_{\mu\nu})$ and $\nabla^2 := (g^{\mathbb{C}})^{\mu\nu} \nabla_{\mu} \nabla_{\nu}$.

In contrast to the real Lorentzian section, there is a \emph{unique} solution to this equation in the complex Riemannian section which satisfies the following boundary conditions: 
\begin{enumerate}[label={(\roman*)}]
\item $G(x,x')$ is regular at $r = r_+$; 
\item $G(x,x')$ satisfies the Dirichlet boundary conditions at $r = r_{\mathcal{M}}$.
\end{enumerate}
This follows from the uniqueness results for boundary value problems in compact manifolds. Note that two of the directions of the complex spacetime are periodic, while the third direction is compact due to the existence of the timelike boundary. In contrast, on static spacetimes without boundaries (and suitable asymptotic properties at infinity), the real Riemannian section has a unique Euclidean Green's distribution, due to the ellipticity of the Klein-Gordon operator, as previously remarked.

Given the periodicity conditions of $\tau$ and $\tilde{\theta}$, one has
\begin{align}
\delta (\tau - \tau') &= \frac{\kappa_+}{2\pi} \sum_{n = -\infty}^{\infty} e^{i \kappa_+ n (\tau - \tau')} \, ,
\label{eq:deltataugeneral} \\
\delta (\tilde{\theta} - \tilde{\theta}') &= \frac{1}{2\pi} \sum_{k = -\infty}^{\infty} e^{i k(\tilde{\theta} - \tilde{\theta}')} \, ,
\label{eq:deltathetageneral}
\end{align}
understood as distributional identities. We can write the Green's distribution $G(x,x')$ as a sum over modes $G_{nk}(r,r')$
\begin{equation}
G(x,x') = \frac{\kappa_+}{4\pi^2} \sum_{n = -\infty}^{\infty} e^{i \kappa_+ n (\tau - \tau')} \sum_{k = -\infty}^{\infty} e^{i k(\tilde{\theta} - \tilde{\theta}')} \, G_{nk}(r,r') \, .
\label{eq:Greenfunctiongen0}
\end{equation}
By using \eqref{eq:deltataugeneral}, \eqref{eq:deltathetageneral} and \eqref{eq:Greenfunctiongen0} in the field equation \eqref{eq:GCfunctioneq}, we obtain an ordinary differential equation for $G_{nk}$,
\begin{equation} \label{eq:radialGreenseq}
\left[ \frac{1}{\sqrt{g}} \frac{\dd }{\dd r} \left( \sqrt{g} \, g^{rr} \frac{\dd }{\dd r} \right) - \frac{\left( \kappa_+ n + i k \big( N^{\theta} + \Omega_{\mathcal{H}} \big) \right)^2}{N^2} - \frac{k^2}{g_{\tilde{\theta}\tilde{\theta}}} - m^2 - \xi R \right] G_{nk} = - \frac{\delta(r-r')}{\sqrt{g}} \, .
\end{equation}
The solutions of this equation can be given in terms of solutions of the corresponding homogeneous equation. Let $p_{nk}$ be the solution of the homogeneous equation which is regular at the horizon and let $q_{nk}$ be the solution of the homogeneous equation which satisfies the Dirichlet boundary condition at the timelike boundary. Then, by the standard theory of Green's functions (e.g.~Chapter 10 of \cite{arfken2012mathematical}), the radial part of the Green's distributions is given by
\begin{equation} \label{eq:Gnkgeneral}
G_{nk}(r,r') = C_{nk} \, p_{nk}(r_<) \, q_{nk}(r_>) \, ,
\end{equation}
where $r_< := \min \{ r, r' \}$, $r_> := \max \{ r, r' \}$ and $C_{nk}$ is the normalization constant determined by the Wronskian relation
\begin{equation} \label{eq:Cnkdefinition}
C_{nk} \left( p_{nk} \frac{\dd  q_{nk}}{\dd r} - q_{nk} \frac{\dd  p_{nk}}{\dd r} \right) = \frac{1}{\sqrt{g} \, g^{rr}} \, .
\end{equation}

Hence, we have found the unique solution of Eq.~\eqref{eq:GCfunctioneq} which satisfies the boundary conditions of regularity at the horizon and Dirichlet condition at the timelike boundary. The Green's distribution is expressed as a sum over mode solutions of the differential equation \eqref{eq:radialGreenseq}. For convenience, we rewrite \eqref{eq:Greenfunctiongen0} as
\begin{equation}
G(x,x') =: \sum_{n = -\infty}^{\infty} e^{i \kappa_+ n (\tau - \tau')} \sum_{k = -\infty}^{\infty} e^{i k(\tilde{\theta} - \tilde{\theta}')} \, G_{nk}^{\rm BH}(r,r') \, .
\label{eq:Greenfunctiongen1}
\end{equation}

In the current (2+1)-dimensional case, it is generally possible to find the mode solutions \eqref{eq:Gnkgeneral} in closed form in terms of known functions, whereas we need to resort to numerical methods for four or more dimensions. In Chapter~\ref{chap:computation}, we will explicitly find the mode solutions for a scalar field on a warped AdS${}_3$ black hole.

\section{Renormalisation procedure}
\label{sec:renormalisation-procedure}

In the last section, we described how to compute the Green's distribution associated with the Klein-Gordon equation in the complex Riemannian section of the exterior region of the rotating black hole. As noted at the end of Section~\ref{sec:computation-scalarfield}, we also need to express the Hadamard singular part $G_{\rm Had}$ of the Feynman propagator as a sum over mode solutions, such that the short-distance divergences can be subtracted term by term. Before that, we need to make sense of $G_{\rm Had}$ in the complex Riemannian section.

In Section~\ref{sec:nbhdcomplexRiemannian} we verified that the local geodesic structure of a real Lorentzian manifold is preserved when going to the complex Riemannian section and, in particular, we can define a notion of a geodesically linearly convex neighbourhood as in Definition~\ref{def:convexnbhdcsection} and generalise the definition of the Synge's world function, as in Definition~\ref{def:SyngesworldfunctionCsection}.

Henceforth, we can write the Hadamard singular part of the Green's distribution $G$ in the complex Riemannian section as
\begin{equation}
G_{\text{Had}}(x,x') = \frac{1}{4\sqrt{2}\pi} \frac{1}{\sqrt{\sigma(x,x')}} + \mathcal{O}(\sigma^{1/2}) \, .
\label{eq:Hadamardsingpart}
\end{equation}

In an analogous way to the real Lorentzian case, we now subtract the Hadamard singular part from the Green's function $G$ and then take the coincidence limit to obtain the vacuum polarization at $x \in \widetilde{\text{I}}$,
\begin{equation}
\langle \Phi^2(x) \rangle = \lim_{x' \to x} \left[ G(x,x') - G_{\text{Had}}(x,x') \right] \, .
\label{eq:vacuumpolCsection}
\end{equation}
(In a slight abuse of notation, on the RHS of the equation $x, x' \in I^{\mathbb{C}}$, such that $x \in I^{\mathbb{C}}$ is the result of a Wick rotation of $x \in \widetilde{\text{I}}$.)

By construction, the Green's distribution $G$ is regular at $r = r_+$, satisfies the Dirichlet boundary conditions at $r = r_{\mathcal{M}}$ and is invariant under the spacetime isometries. Therefore, together with \eqref{eq:GFGEthermal}, after analytically continuing back to the Lorentz section, $\langle \Phi^2(x) \rangle$ as given by \eqref{eq:vacuumpolCsection} is the vacuum polarisation for a scalar field in the Hartle-Hawking state.

\subsection{Subtraction of the Hadamard singular part}

It remains to perform the subtraction in \eqref{eq:vacuumpolCsection} before the coincidence limit can be taken. As $G$ is known only as the mode sum \eqref{eq:Greenfunctiongen0}, the evaluation of $\langle \Phi^2(x) \rangle$ requires $G_{\text{Had}}$ to be rewritten as a mode sum that can be combined with \eqref{eq:Greenfunctiongen0} so that the divergences in the coincidence limit get subtracted under the sum term by term. For a general spacetime, it is not known how to express $G_{\text{Had}}$ as a mode sum. 

We accomplish this in the following way. The Hadamard singular part incorporates the short-distance singular behaviour of the Green's distribution for (the complex Riemannian section of) a rotating black hole, which should be of the same form as the singular behaviour of the Green's distribution for the (complex Riemannian section of) Minkowski spacetime, given that we are dealing with Hadamard states. A good thing about Minkowski spacetime is that the zero temperature Green's distribution for a scalar field is known in closed form
\begin{equation} \label{eq:GMclosedform}
G_0^{\mathbb{M}}(x,x') = \frac{1}{4\pi} \frac{e^{-m \sqrt{2\sigma(x,x')}}}{\sqrt{2\sigma(x,x')}} \, .
\end{equation}
This means that the thermal Green's distribution $G^{\mathbb{M}}$ can be expressed both as an imaginary-time image sum of the zero temperature Green's distribution using \eqref{eq:imagemsumG} and also as a sum over mode solutions, say,
\begin{equation} 
G^{\mathbb{M}}(x,x') = \sum_{n = -\infty}^{\infty} e^{i \kappa_+ n (\tau - \tau')} \sum_{k = -\infty}^{\infty} e^{i k(\tilde{\theta} - \tilde{\theta}')} \, G^{\mathbb{M}}_{nk}(x,x') \, ,
\end{equation}
as described in Appendix~\ref{app:Minkowski}. And, of course, we know how to write its Hadamard singular part $G^{\mathbb{M}}_{\rm Had}(x,x')$ in closed form, as in \eqref{eq:Hadamardsingpart}. This allow us to write the Hadamard singular part as in \eqref{eq:GHadMinkappendix},
\begin{equation}
G_{\text{Had}}^{\mathbb{M}}(x,x') =  \sum_{n = -\infty}^{\infty} e^{i \kappa_+ n (\tau - \tau')} \sum_{k = -\infty}^{\infty} e^{i k(\tilde{\theta} - \tilde{\theta}')} \, G^{\mathbb{M}}_{nk}(x,x') - G_{\text{reg}}^{\mathbb{M}}(x,x') \, ,
\label{eq:GnkMink}
\end{equation}
where $G_{\text{reg}}^{\mathbb{M}}(x,x')$ is a finite term when the coincidence limit is taken, which is obtained in Appendix~\ref{app:Minkowski}. Eq.~\eqref{eq:GnkMink} expresses the Hadamard singular part of the Green's distribution for the Minkowski spacetime as a sum over mode solutions, plus a regular term which can be easily computed since we know the Minkowski Green's distribution in closed form \eqref{eq:GMclosedform}.

In general, it is not possible to obtain the Green's distribution for the rotating black hole spacetime in closed form and, therefore, it is not possible to write its Hadamard singular part as in \eqref{eq:GnkMink}. However, as we argued above, the Hadamard singular parts of the Green's distributions for the rotating black hole and Minkowski spacetime are essentially of the same form, just given in different coordinate systems. As we show below, it should be possible to express $G_{\text{Had}}(x,x')$ of the black hole in terms of $G^{\mathbb{M}}_{\rm Had}(x,x')$ which, in turn, we know how to write in terms of a mode sum, as in \eqref{eq:GnkMink}! Therefore, we are able to subtract the short-distance divergences of the black hole Green's distribution by using a sum over mode solutions of the Minkowski Green's distributions differential equation.

To explain this procedure in detail, it is convenient at this stage to consider a particular choice of point separation. Assume that the black hole metric is given in coordinates $(\tau,r,\tilde{\theta})$, whereas the Minkowski metric is given in coordinates $(\tau,\rho,\tilde{\theta})$. Now, consider the case of angular separation in each spacetime, such that for the black hole case $x = (\tau, r, 0)$ and $x' = (\tau, r, \tilde{\theta})$, with $\tilde{\theta} > 0$, and similarly for the Minkowski case.

The expansion of the Hadamard singular parts for small $\tilde{\theta}$ are
\begin{align}
G_{\text{Had}}(x,x') &= \frac{1}{4\pi} \frac{1}{\sqrt{g_{\tilde{\theta}\tilde{\theta}}(r)}} \, \frac{1}{\tilde{\theta}} + \mathcal{O}(\tilde{\theta}) \, , 
\label{eq:GHadleadingterm} \\
G_{\text{Had}}^{\mathbb{M}}(x,x') &= \frac{1}{4\pi} \frac{1}{\sqrt{g^{\mathbb{M}}_{\tilde{\theta}\tilde{\theta}}(\rho)}} \, \frac{1}{\tilde{\theta}} + \mathcal{O}(\tilde{\theta}) \, ,
\end{align}
where $g^{\mathbb{M}}_{\tilde{\theta}\tilde{\theta}}(\rho) = \rho^2$ is the $\tilde{\theta}\tilde{\theta}$-component of the metric for the rotating Minkowski spacetime (see \eqref{eq:Minkmetric}). 
We are free to make the identification
\begin{equation} \label{eq:gthetathetaidentification}
g^{\mathbb{M}}_{\tilde{\theta}\tilde{\theta}}(\rho) \equiv \gamma(r)^{-2} \, g_{\tilde{\theta}\tilde{\theta}}(r) \, ,
\end{equation}
where $\gamma(r) > 0$ is a function to be specified. This identification provides a matching between the two radial coordinates, $\rho = \rho(r) = \gamma(r)^{-1} \sqrt{g_{\tilde{\theta}\tilde{\theta}}(r)}$, and allows us to express $G_{\text{Had}}(x,x')$ of the black hole in terms of $G^{\mathbb{M}}_{\rm Had}(x,x')$, as argued in the beginning of this section.

Given this identification, we can now write
\begin{align} \label{eq:Gdiff}
G(x,x') - G_{\text{Had}}(x,x') &= \sum_{k = -\infty}^{\infty} e^{i k \tilde{\theta}} \sum_{n = -\infty}^{\infty}  \left[ G^{\text{BH}}_{nk}(x,x') - \gamma(r)^{-1} G^{\mathbb{M}}_{nk}(x,x') \right] \notag \\
&\quad + \gamma(r)^{-1} G_{\text{reg}}^{\mathbb{M}}(x,x') + \mathcal{O}(\tilde{\theta}) \, .
\end{align}
We have succeeded in writing $G(x,x') - G_{\text{Had}}(x,x')$ as a sum over the difference of mode solutions, plus a regular term which is finite in the coincidence limit.

At this point, note that the Minkowski Green's distribution has several free parameters: $T_{\mathbb{M}}$ (temperature of the scalar field), $\Omega_{\mathbb{M}}$ (angular velocity of the coordinate system) and $m_{\mathbb{M}}^2$ (squared mass of the scalar field), besides the unspecified factor $\gamma$ we introduced above (for more details on these parameters, see Appendix~\ref{app:Minkowski}). However, the combination
\begin{align}
\sum_{k = -\infty}^{\infty} e^{i k \tilde{\theta}} \sum_{n = -\infty}^{\infty}  \gamma(r)^{-1} G^{\mathbb{M}}_{nk}(x,x') - \gamma(r)^{-1} G_{\text{reg}}^{\mathbb{M}}(x,x') = \gamma(r)^{-1} G_{\text{Had}}^{\mathbb{M}}(x,x')
\end{align}
(see \eqref{eq:GnkMink}) is unchanged if any of these parameters are modified, since $G_{\text{Had}}^{\mathbb{M}}(x,x')$ is independent of them. Therefore, these parameters can be chosen such that the double sum in \eqref{eq:Gdiff} is convergent when $\tilde{\theta} \to 0$. To check the convergence of the double sum, we need to analyse the asymptotic behaviour of the summand for large values of $n$ and $k$.

\subsection{Large quantum number behaviour}

In order to check the convergence of the double sum \eqref{eq:Gdiff} in the coincidence limit, we need to obtain the asymptotic behaviour of the summand for large values of the quantum numbers $n$ and $k$.
  
For a black hole spacetime with metric \eqref{eq:metriccomplexRiemanniansectiongen} in the complex Riemannian section, the Klein-Gordon equation
\begin{equation}
\left(\nabla^2 - m^2 - \xi R \right) \Phi(\tau, r, \tilde{\theta}) = 0 \, ,
\label{eq:KGeqCsection}
\end{equation}
together with the ansatz $\Phi_{n k}(\tau,r,\tilde{\theta}) = e^{i \kappa_+ n \tau + i k \tilde{\theta}} \, \phi_{n k}(r)$, leads to
\begin{equation}
\left[ \frac{1}{\sqrt{g}} \frac{\dd }{\dd r} \left( \sqrt{g} \, g^{rr} \frac{\dd }{\dd r} \right) - \frac{\left( \kappa_+ n + i k \big( N^{\theta} + \Omega_{\mathcal{H}} \big) \right)^2}{N^2} - \frac{k^2}{g_{\tilde{\theta}\tilde{\theta}}} - m^2 - \xi R \right] \phi_{n k} = 0 \, .
\label{eq:radialfieldeq}
\end{equation}

Let $\phi_{n k}^1$ and $\phi_{n k}^2$ be two independent solutions of the radial equation \eqref{eq:radialfieldeq}.
Define a new radial coordinate $\xi$ such that the equation can be written in the form
\begin{equation} \label{eq:fieldeqxi}
\frac{\dd^2 \phi_{n k}(\xi)}{\dd \xi^2} - \left( \chi_{nk}^2 (\xi) + \eta^2 (\xi) \right) \phi_{n k}(\xi) = 0 \, ,
\end{equation}
and the Wronskian relation is given by
\begin{equation} \label{eq:wronskianxi}
\phi_{nk}^1(\xi) \frac{\dd \phi_{nk}^2(\xi)}{\dd \xi} - \phi_{nk}^2(\xi) \frac{\dd \phi_{nk}^1(\xi)}{\dd \xi} = \frac{1}{C_{n k}} \, ,
\end{equation}
where $C_{n k}$ is a constant and $\chi_{nk}^2 (\xi)$ contains all the $n$ and $k$ dependence and is large whenever $n^2+k^2$ is large. From \eqref{eq:radialfieldeq} we obtain
\begin{equation} \label{eq:xircoords}
\frac{\dd }{\dd \xi} = \sqrt{g} \, g^{rr} \frac{\dd }{\dd r}
\end{equation}
and
\begin{equation} \label{eq:chietadefinitions}
\chi_{nk}^2 = g_{\tilde{\theta}\tilde{\theta}} \left( \kappa_+ n + i k \big( N^{\theta} + \Omega_{\mathcal{H}} \big) \right)^2 + N^2 k^2 \, , \qquad
\eta^2 = g_{\tilde{\theta}\tilde{\theta}} N^2 \left( m^2+\xi R \right) \, .
\end{equation}

We are interested in obtaining the large $\chi_{n k}$ expansion of the quantity
\begin{equation}
\mathcal{G}_{n k} (\xi) := C_{n k} \, \phi_{n k}^1(\xi) \, \phi_{n k}^2(\xi) \, .
\end{equation}
This asymptotic expansion can be obtained using a WKB method and this is described in Appendix~\ref{app:WKBexpansions}. From Proposition~\ref{prop:Gxiexpansion}, the asymptotic expansion of $\mathcal{G}_{n k}(\xi)$ for large values of $\chi_{n k}$ is
\begin{equation} \label{eq:Gxiexpansionmaintext}
\mathcal{G}_{n k} (\xi) = \frac{1}{2 \chi_{n k}} - \frac{\eta^2}{4 \chi_{n k}^3} - \frac{(\chi_{n k}^2)''}{16 \chi_{n k}^5} + \frac{5 [(\chi_{n k}^2)']^2}{64 \chi_{n k}^7} + \mathcal{O}(\chi_{n k}^{-5}) \, ,
\end{equation}
where the prime represents derivative with respect to $\xi$.

For our case of interest, $\mathcal{G}_{n k}$ plays the role of the radial part of the Green's distributions, given by \eqref{eq:Gnkgeneral}, whereas $\phi_{n k}^1$ and $\phi_{n k}^2$ correspond to $p_{nk}$ and $q_{nk}$ for both the black hole and Minkowski spacetimes.

\subsection{Fixing of the Minkowski free parameters}

We now have all we need to determine the choice of parameters of the Minkowski Green's distribution which makes the mode sum in \eqref{eq:Gdiff} convergent in the coincidence limit. This unique choice is the following.

\begin{theorem} \label{thm:matchingpolarisation}
If the parameters $\gamma$, $T_{\mathbb{M}}$ and $\Omega_{\mathbb{M}}$ are chosen as
\begin{equation}
\gamma(r) = N(r) \, , \qquad T_{\mathbb{M}} = \frac{\kappa_+}{2\pi} \, , \qquad \Omega_{\mathbb{M}} = N^{\theta}(r) + \Omega_{\mathcal{H}} \, ,
\label{eq:matching}
\end{equation}
then the double sum in \eqref{eq:Gdiff} is finite in the coincidence limit.
\end{theorem}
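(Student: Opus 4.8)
The strategy is to show that the choice of Minkowski parameters in \eqref{eq:matching} forces the leading large-$(n,k)$ asymptotics of the black-hole summand $G^{\text{BH}}_{nk}$ and of the rescaled Minkowski summand $\gamma(r)^{-1} G^{\mathbb{M}}_{nk}$ to coincide to sufficiently high order that their difference is summable over $n$ and $k$ as $\tilde\theta\to 0$. Both radial Green's functions have the product form \eqref{eq:Gnkgeneral}, and by identifying $\mathcal G_{nk}$ with $G_{nk}$ for each spacetime, their large-$\chi_{nk}$ behaviour is governed by the WKB expansion \eqref{eq:Gxiexpansionmaintext}. The whole proof reduces to comparing the two asymptotic series term by term.

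\textbf{Key steps.} First I would write down $\chi^2_{nk}$ and $\eta^2$ for both spacetimes using \eqref{eq:chietadefinitions}. For the black hole these are
\begin{equation}
\chi_{nk}^2 = g_{\tilde{\theta}\tilde{\theta}} \left( \kappa_+ n + i k ( N^{\theta} + \Omega_{\mathcal{H}} ) \right)^2 + N^2 k^2 \, ,
\label{eq:chiBHproof}
\end{equation}
while for the Minkowski section, with metric parameters $T_{\mathbb M}=\kappa_{\mathbb M}/2\pi$, $\Omega_{\mathbb M}$ and the radial identification $g^{\mathbb M}_{\tilde\theta\tilde\theta}=\gamma^{-2}g_{\tilde\theta\tilde\theta}$, the analogous quantity has the form $\chi^{\mathbb M\,2}_{nk}=g^{\mathbb M}_{\tilde\theta\tilde\theta}(2\pi T_{\mathbb M}\, n + i k\,\Omega_{\mathbb M})^2 + (N^{\mathbb M})^2 k^2$. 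The second step is to impose that the dominant term of \eqref{eq:Gxiexpansionmaintext}, namely $1/(2\chi_{nk})$, agrees between the two after the overall $\gamma^{-1}$ rescaling; matching the coefficients of $n^2$, $nk$ and $k^2$ inside $\chi^2_{nk}$ forces $2\pi T_{\mathbb M}=\kappa_+$, $\Omega_{\mathbb M}=N^\theta+\Omega_{\mathcal H}$ and $\gamma=N$, which is exactly \eqref{eq:matching}. The third step is to verify that with these choices not only the leading $\chi_{nk}^{-1}$ term but also the subleading $\chi_{nk}^{-3}$ and $\chi_{nk}^{-5}$ terms in the difference decay fast enough: since $\chi_{nk}^2$ grows like $n^2+k^2$, the difference of summands falls off like $(n^2+k^2)^{-3/2}$ or faster once the leading terms cancel, and I would confirm that this is summable against the angular factor $e^{ik\tilde\theta}$ uniformly as $\tilde\theta\to 0^+$.

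\textbf{The main obstacle.} The delicate point is that matching only the leading WKB coefficient is \emph{not} enough for convergence: a double sum of $(n^2+k^2)^{-1/2}$ diverges, so one genuinely needs the first \emph{subtraction} to kill the $1/\chi_{nk}$ term and leave a summable remainder. The hard part will be controlling the $n$-summation, because $\chi^2_{nk}$ is complex (the shift function enters through $ik(N^\theta+\Omega_{\mathcal H})$), so $\chi_{nk}$ does not grow monotonically in $n$ and one must be careful that the WKB expansion remains valid and that no small-denominator problems arise for the finitely many $(n,k)$ where $\chi_{nk}^2$ is small. I would handle this by splitting the sum into a finite low-mode part (evaluated directly and manifestly finite) and a high-mode tail where \eqref{eq:Gxiexpansionmaintext} applies, and then showing the tail of the difference is dominated by a convergent comparison series. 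The cancellation of the $\tilde\theta^{-1}$ pole is automatic from \eqref{eq:GHadleadingterm} once $\gamma=N$, since then $g^{\mathbb M}_{\tilde\theta\tilde\theta}=\gamma^{-2}g_{\tilde\theta\tilde\theta}$ makes the two Hadamard leading terms identical.
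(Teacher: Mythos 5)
Your proposal is correct and follows essentially the same route as the paper's proof: match the leading $1/(2\chi_{nk})$ WKB term to fix $\gamma$, $T_{\mathbb{M}}$ and $\Omega_{\mathbb{M}}$, observe that with this choice $\chi^{\mathbb{M}\,2}_{nk}=\chi^2_{nk}/N^2$ so the difference of summands is $\mathcal{O}(\chi_{nk}^{-3})$, and then establish summability of the tail. The only difference is how the complex-$\chi_{nk}$ worry is handled: where you propose a low-mode/high-mode split, the paper bounds $|\chi_{nk}^2|^{-3/2}$ directly by $\bigl[g_{\tilde\theta\tilde\theta}\kappa_+^2 n^2+\bigl(N^2-g_{\tilde\theta\tilde\theta}(N^\theta+\Omega_{\mathcal H})^2\bigr)k^2\bigr]^{-3/2}$ — with the coefficient of $k^2$ positive precisely because $\chi$ is timelike between the horizon and the mirror — and then invokes an explicit lemma showing $\sum'(An^2+Bk^2)^{-3/2}<\infty$ together with a limit comparison test, which yields absolute convergence without any mode splitting.
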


\begin{proof}
First, we obtain the leading terms in the asymptotic expansions of the summands in \eqref{eq:Gdiff}, using \eqref{eq:Gxiexpansionmaintext}. The summand $G^{\text{BH}}_{nk}(x,x)$ of the Green's distribution $G(x,x')$ in \eqref{eq:Greenfunctiongen1} has the following asymptotic expansion for large $\chi_{nk}$
\begin{equation}
G^{\text{BH}}_{nk}(x,x') = \frac{\kappa_+}{4\pi^2} \frac{1}{2\chi_{nk}} + \mathcal{O}\left(\chi_{nk}^{-3}\right) \, .
\end{equation}
Analogously, for the Minkowski Green's distribution, the summand $G^{\mathbb{M}}_{nk}(x,x')$ in \eqref{eq:GnkMink} has the asymptotic expansion
\begin{equation}
G^{\mathbb{M}}_{nk}(x,x') = \frac{T_{\mathbb{M}}}{2\pi} \frac{1}{2\chi^{\mathbb{M}}_{nk}} + \mathcal{O}\left((\chi^{\mathbb{M}}_{nk})^{-3}\right) \, ,
\end{equation}
where
\begin{equation} \label{eq:chiMink}
\chi^{\mathbb{M}}_{nk} (\rho)^2 = \rho^2 \left( 2\pi T_{\mathbb{M}} n + i k \Omega_{\mathbb{M}} \right)^2 + k^2 \, .
\end{equation}

The double sum in \eqref{eq:Gdiff} will be finite in the coincidence limit if the leading term in the asymptotic expansion of the summand vanishes, that is, if the term of order $\chi_{nk}^{-1}$ of the expansion of $G^{\text{BH}}_{nk}(x,x)$ cancels with the term of order $\left(\chi^{\mathbb{M}}_{nk}\right)^{-1}$ of the expansion of $\gamma(r)^{-1} \, G^{\mathbb{M}}_{nk}(x,x)$. This only occurs if the free parameters $\gamma$, $T_{\mathbb{M}}$ and $\Omega_{\mathbb{M}}$ are chosen as
\begin{equation}
\gamma(r) = N(r) \, , \qquad T_{\mathbb{M}} = \frac{\kappa_+}{2\pi} \, , \qquad \Omega_{\mathbb{M}} = N^{\theta}(r) + \Omega_{\mathcal{H}} \, .
\end{equation}

To show that the double sum is indeed finite in the coincidence limit, we need to check that the double sum of the remaining terms in the asymptotic expansion of the summand, which are $\mathcal{O}\left(\chi_{nk}^{-3}\right)$, is finite.

It is enough to consider
\begin{equation}
\Delta G(r) := \sideset{}{'}\sum_{k,n} \left[ G^{\text{BH}}_{nk}(r,r) - \gamma(r)^{-1} \, G^{\mathbb{M}}_{nk}(\rho(r),\rho(r)) \right] \, ,
\end{equation}
where $\sum_{n,k}'$ stands for the double sum over $k$ and $n$ excluding the $k=n=0$ term.

The first terms in the WKB-like expansion cancel each other, thus
\begin{equation}
\Delta G(r) = \sideset{}{'}\sum_{k,n} \left\{ G^{\text{BH}(2)}_{nk}(r) + \mathcal{O}(\chi_{nk}^{-5}) - \gamma(r)^{-1} \, \left[ G^{\mathbb{M}(2)}_{nk}(\rho) + \mathcal{O}((\chi_{nk}^{\mathbb{M}})^{-5}) \right] \right\} \, ,
\end{equation}
where $G^{\text{BH}(2)}_{nk}$ and $G^{\mathbb{M}(2)}_{nk}$ are the terms of the expansion of order $\chi_{nk}^{-3}$ and $(\chi_{nk}^{\mathbb{M}})^{-3}$, respectively.
With the choice \eqref{eq:matching}, one has
\begin{equation}
\chi_{nk}^{\mathbb{M}}(r)^2 = \frac{\chi_{nk}(r)^2}{N(r)^2} \, .
\end{equation}
Therefore
\begin{equation}
\Delta G(r) = \sideset{}{'}\sum_{k,n} \left[ \frac{\mathcal{W}(r)}{\chi_{nk}(r)^3}  + \mathcal{O}(\chi_{nk}^{-5}) \right]  \, ,
\end{equation}
where $\mathcal{W}(r)$ does not depend on $n$ and $k$.

\newpage 

Note that:
\begin{align}
\sideset{}{'}\sum_{k,n} \left| \frac{\mathcal{W}(r)}{\chi_{nk}(r)^3} \right|
&\propto \sideset{}{'}\sum_{k,n} \left| g_{\tilde{\theta}\tilde{\theta}}(r) \left(\kappa_+ n + i k (N^{\theta}(r) + \Omega_{\mathcal{H}})\right)^2 + N(r)^2 k^2 \right|^{-3/2} \notag \\
&= \sideset{}{'}\sum_{k,n} \Big\{ \left[ g_{\tilde{\theta}\tilde{\theta}}(r) \kappa_+^2 n^2 + \left( N(r)^2 - g_{\tilde{\theta}\tilde{\theta}} (N^{\theta}(r) + \Omega_{\mathcal{H}})^2 \right) k^2 \right]^2 \notag \\
&\qquad\qquad\quad\; + 4 g_{\tilde{\theta}\tilde{\theta}}(r)^2 (N^{\theta}(r) + \Omega_{\mathcal{H}})^2 \kappa_+^2 n^2 k^2 \Big\}^{-3/4} \notag \\
&\leq \sideset{}{'}\sum_{k,n} \left[ g_{\tilde{\theta}\tilde{\theta}}(r) \kappa_+^2 n^2 + \left( N(r)^2 - g_{\tilde{\theta}\tilde{\theta}}(r) (N^{\theta}(r) + \Omega_{\mathcal{H}})^2 \right) k^2 \right]^{-3/2} \, .
\end{align}
Lemma~\ref{lemma:convergenceseries} below shows that the latter series is convergent. This proves the absolute convergence of
\begin{equation}
\sideset{}{'}\sum_{k,n} \frac{\mathcal{W}(r)}{\chi_{nk}(r)^3} \, .
\end{equation}

Finally, since
\begin{equation}
\lim_{|\chi_{\text{BH}}| \to \infty} \frac{\left| \frac{\mathcal{W}(r)}{\chi_{nk})(r)^3}  + \mathcal{O}(\chi_{nk}^{-5}) \right|}{\left| \frac{\mathcal{W}(r)}{\chi_{nk}(r)^3} \right|}
= 1 \, ,
\end{equation}
the limit comparison test implies the absolute convergence of
\begin{equation}
\sideset{}{'}\sum_{k,n} \left[ \frac{\mathcal{W}(r)}{\chi_{nk}(r)^3}  + \mathcal{O}(\chi_{nk}^{-5}) \right] \, .
\end{equation}

Therefore, we conclude that the $\Delta G(r)$ is finite.
\end{proof}

\vspace*{1ex}

\begin{lemma} \label{lemma:convergenceseries}
Let $A$, $B>0$. Then,
\begin{align}
S := \sideset{}{'}\sum_{k,n} \frac{1}{(A n^2 + B k^2)^{3/2}} < \infty
\end{align}
where $\sum_{n,k}'$ stands for the double sum over $k, n \in \mathbb{Z}$ excluding the $k=n=0$ term.
\end{lemma}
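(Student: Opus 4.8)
The plan is to show that the double series $S=\sum'_{k,n}(An^2+Bk^2)^{-3/2}$ converges by comparing it with a two-dimensional integral, using the fact that the summand is a positive, radially decreasing function of $(n,k)$. This is the standard technique for establishing convergence of Epstein-type zeta sums, and the exponent $3/2$ is exactly what makes it work in two dimensions.

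First I would treat the summand as the restriction to the integer lattice $\mathbb{Z}^2\setminus\{0\}$ of the function $f(x,y)=(Ax^2+By^2)^{-3/2}$ on $\mathbb{R}^2\setminus\{0\}$. Since $A,B>0$, this $f$ is positive and, away from the origin, monotonically decreasing along rays from the origin. The key observation is that for a nonnegative function that decreases in this radial sense, the sum over lattice points can be bounded above by an integral over a suitable region, by comparing each lattice point with a unit cell on which $f$ is no smaller. Concretely, one bounds the tail $\sum_{n^2+k^2\geq R^2}f(n,k)$ by $\int\!\!\int_{x^2+y^2\geq (R-1)^2} f(x,y)\,\dd x\,\dd y$, with the finitely many remaining terms inside a bounded region contributing a finite amount.

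Next I would evaluate (or merely estimate) the comparison integral by rescaling and passing to polar coordinates. Substituting $u=\sqrt{A}\,x$, $v=\sqrt{B}\,y$ turns $Ax^2+By^2$ into $u^2+v^2$ and introduces a Jacobian factor $(AB)^{-1/2}$, so that
\begin{equation}
\int\!\!\int_{x^2+y^2\geq \rho_0^2} \frac{\dd x\,\dd y}{(Ax^2+By^2)^{3/2}}
= \frac{1}{\sqrt{AB}}\int\!\!\int_{D} \frac{\dd u\,\dd v}{(u^2+v^2)^{3/2}} \, ,
\end{equation}
where $D$ is the image of the exterior region, which contains and is contained in annular regions $\{r\geq r_1\}$ for suitable $r_1>0$. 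Writing the right-hand side in polar coordinates $(r,\phi)$ gives $\frac{1}{\sqrt{AB}}\int_0^{2\pi}\dd\phi\int_{r_1}^{\infty} r^{-3}\,r\,\dd r = \frac{2\pi}{\sqrt{AB}}\int_{r_1}^{\infty} r^{-2}\,\dd r$, which is manifestly finite because $\int^\infty r^{-2}\,\dd r$ converges. Since $S$ is a sum of positive terms bounded above by this finite integral plus a finite number of explicit terms, $S<\infty$, and the proof is complete.

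The main obstacle, such as it is, is being careful with the integral-comparison bound near the origin: the function $f$ blows up at $(0,0)$, so one cannot naively dominate the whole sum by $\int\!\!\int_{\mathbb{R}^2}f$, which diverges at the origin. The clean way around this is to isolate the finitely many lattice points with $n^2+k^2< R^2$ (for some fixed small $R$), which form a finite, hence summable, set, and apply the integral comparison only to the tail where $f$ is bounded and decreasing. With that separation the anisotropy of the quadratic form causes no difficulty, since the rescaling reduces everything to the isotropic case, and the convergence hinges solely on the exponent $3/2$ exceeding the spatial dimension $2$ divided by $2$, i.e. on $3>2$ in the radial integral.
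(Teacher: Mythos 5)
Your proof is correct, but it takes a genuinely different route from the paper's. The paper argues one-dimensionally: for each fixed $k\neq 0$ it considers $S_k=\sum_{n\in\mathbb{Z}}(An^2+Bk^2)^{-3/2}$, observes that $k^2 S_k$ is a Riemann sum for $\int_{-\infty}^{\infty}(B+At^2)^{-3/2}\,\dd t = \tfrac{2}{B\sqrt{A}}$, and concludes $S_k\sim \tfrac{2}{B\sqrt{A}}k^{-2}$ as $|k|\to\infty$, whence $\sum_k S_k<\infty$. Your two-dimensional integral comparison with the rescaling $u=\sqrt{A}x$, $v=\sqrt{B}y$ and polar coordinates is the more standard Epstein-zeta argument: it is more robust (it transfers verbatim to any positive-definite quadratic form in $d$ variables with exponent exceeding $d/2$, whereas the Riemann-sum trick is tailored to the iterated structure), and you correctly identified and handled the one genuine pitfall, the non-integrable singularity of $(Ax^2+By^2)^{-3/2}$ at the origin, by excising a finite set of lattice points before comparing with the integral. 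The paper's approach buys slightly more in exchange for less generality: it yields the sharp asymptotic constant for $S_k$, which is not needed for the lemma. One small point worth tightening in a written version: "decreasing along rays from the origin" is not by itself enough to dominate $f(n,k)$ by $f$ on a neighbouring unit cell; what you actually want is that $f$ is decreasing in $|x|$ and $|y|$ separately (true here, since $Ax^2+By^2$ increases in each of $|x|$ and $|y|$), so that choosing the unit cell on the origin side of each lattice point gives $f(n,k)\leq f(x,y)$ throughout the cell, which is exactly what justifies your bound with the integration domain $x^2+y^2\geq (R-1)^2$.
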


\begin{proof}
We write
\begin{equation}
S = \sum_{k \in \mathbb{Z}} S_k \, ,
\end{equation}
where
\begin{equation}
S_k := \begin{cases} \displaystyle\sum_{n \in \mathbb{Z}} \frac{1}{(A n^2 + B k^2)^{3/2}} \, , & k \neq 0 \, , \\
\displaystyle\sum_{n \in \mathbb{Z} \setminus \{0 \}} \frac{1}{A^{3/2} \, n^3} \, , & k = 0 \, .
\end{cases}
\end{equation}
Each $S_k$ is clearly finite, and $S_{-k} = S_k$. For $k>0$ we have
\begin{equation}
k^2 \, S_k = \sum_{n=-\infty}^{\infty} \frac{1}{\left[B + A (\frac{n}{k})^2 \right]^{3/2}} \frac{1}{k} \; \xrightarrow{\;\; k \to \infty \;\;} \; \frac{2}{B \sqrt{A}} \, ,
\label{eq:seriesRiemannsum}
\end{equation}
since the series in \eqref{eq:seriesRiemannsum} becomes the Riemann sum for the integral
\begin{equation}
\int_{-\infty}^{\infty} \dd t \, \frac{1}{(B + A t^2)^{3/2}} = \frac{2}{B \sqrt{A}} \, .
\end{equation}
Thus,
\begin{equation}
S_k \sim \frac{2}{B \sqrt{A}} \frac{1}{k^2} \, , \qquad |k| \to \infty \, .
\end{equation}
so that $S$ is finite.
\end{proof}

\begin{remark}
The choice \eqref{eq:matching} for the parameters of the Minkowski Green's distribution corresponds to have the temperature $T_{\mathbb{M}}$ of the scalar field in Minkowski to match the Hawking temperature of the black hole and to have the angular velocity $\Omega_{\mathbb{M}}$ to be equal to the one measured by a locally non-rotating observer at radius $r$ in the black hole spacetime.
\end{remark}

The key aspect of the proof is that, in order to remove the divergences, we only need to know the asymptotic behaviour of the Green's distribution summands $G^{\text{BH}}_{nk}(r,r)$ and $G^{\mathbb{M}}_{nk}(\rho,\rho)$ for large values of $n$ and $k$, and not the full solutions. This implies that, apart from technical difficulties, this method can be applied to black holes in four or more dimensions, for which although we can only obtain the Green's distributions numerically, the asymptotic expansions of the summands for large quantum numbers can be explicitly computed using the above procedure.

Setting the parameters as in \eqref{eq:matching}, it is now possible to take the coincidence limit $\tilde{\theta} \to 0$ of \eqref{eq:Gdiff} and compute the renormalized vacuum polarization \eqref{eq:vacuumpolCsection}. In Part II of the thesis, as an example, we present the results for the particular case of the warped AdS${}_3$ black hole.



\section{Expectation value of the stress-energy tensor}
\label{sec:stress-energy-tensor}

In the previous section, we successfully implemented a method to compute the renormalised vacuum polarisation of a scalar field on a rotating black hole by subtracting the short-distance divergence of the Feynman propagator using a sum over Minkowski modes with the same singularity structure. The next physically interesting local observable is the renormalised expectation value of the stress-energy tensor, $\langle T_{ab} (x) \rangle$. In this section, we demonstrate why the method described in this chapter cannot be used to renormalise the stress-energy tensor.

First, recall that we defined the renormalised expectation value of the stress-energy tensor in Definition~\ref{def:renormalisedquantities} as
\begin{equation}
\langle T_{ab}(x) \rangle = \lim_{x' \to x} \mathcal{T}_{ab'}(x,x') \left[ -i \left( G^{\rm F}(x,x') - G_{\rm Had}(x,x') \right) \right] + \Theta_{ab}(x) \, ,
\end{equation}
where $\Theta_{ab}(x)$ is a state independent tensor which ensures that $\langle T_{ab}(x) \rangle$ is covariantly conserved and
\begin{align}
\mathcal{T}_{ab'} &= (1-2\xi) \, {g_{b}}^{b'} \nabla_{a} \nabla_{b'} + \left(2\xi - \frac{1}{2} \right) g_{ab} g^{cd'} \nabla_{c} \nabla_{d'} - 2 \xi \, {g_{a}}^{a'} {g_{b}}^{b'} \nabla_{a'} \nabla_{b'} \notag \\
&\quad +2\xi \, g_{ab} \nabla_{\rho} \nabla^{\rho} + \xi \left( R_{ab} - \frac{1}{2} g_{ab} R \right) - \frac{1}{2} g_{ab} m^2 \, .
\end{align}

Besides having to carefully perform the coincidence limit of $G^{\rm F}(x,x') - G_{\rm Had}(x,x')$ as in the vacuum polarisation computation, for the stress-energy tensor we also need to consider terms of the form $\nabla_a \nabla_b \left[ G^{\rm F}(x,x') - G_{\rm Had}(x,x') \right]$. Here, we show that the implementation of our renormalisation method, in particular the formulation of $G_{\rm Had}(x,x')$ as a sum over Minkowski modes, fails to subtract the short-distance divergences of $\nabla_{\tilde{\theta}} \nabla_{\tilde{\theta}} G^{\rm F}(x,x')$ and, hence, the whole unrenormalised stress-energy tensor.

\begin{remark} \label{rem:shiftBHstressenergy}
A heuristic argument that suggests the method indeed fails to renormalise the stress-energy tensor is that to match the short-distance divergences of $\nabla_a \nabla_b G_{\rm Had}(x,x')$ on the rotating black hole spacetime to the the short-distance divergences of $\nabla_a \nabla_b G^{\mathbb{M}}_{\rm Had}(x,x')$ on Minkowski spacetime (in rotating coordinates) it will be necessary not only to identify components of the metric tensors (as in \eqref{eq:gthetathetaidentification}) but also \emph{derivatives} of those components. However, the shift function $N^{\theta}$ (present in the $t \tilde{\theta}$-component of the metric) is a function of the radial coordinate $r$ for the rotating black hole, whereas it is a constant ($- \Omega_{\mathbb{M}}$) for the Minkowski spacetime (in rotating coordinates). We then expect that the matching between derivatives of the metric components with respect to the radial coordinates might not be possible. This is indeed the case, as we show explicitly in the following.
\end{remark}

\subsection{Derivatives of the Hadamard singular part}

Here, we work again in the complex Riemannian sections of both the black hole and Minkowski spacetimes. First, we compute the double covariant derivatives of the Hadamard singular part, $\nabla_{\mu} \nabla_{\nu} G_{\rm Had}(x,x')$. Recall that the Hadamard singular part in three dimensions is given by
\begin{equation}
G_{\rm Had}(x,x') = \frac{1}{4\sqrt{2} \pi} \frac{U(x,x')}{\sqrt{\sigma(x,x')}} \, ,
\end{equation}
where the bi-scalar $U$ is given in \eqref{eq:biscalarU}, thus,
\begin{equation}
\frac{U}{\sigma^{1/2}} = \frac{U_0}{\sigma^{1/2}} + U_1 \sigma^{1/2} + \mathcal{O}(\sigma^{3/2}) \, ,
\end{equation}
and $U_0$ and $U_1$ given in Proposition~\ref{prop:Uexpansion}. To simplify the expressions in the following, we only consider spacetimes with a constant Ricci scalar. In three dimensions, this is not a major restriction, as all solutions of Einstein gravity satisfy this property (this is not necessarily true of modified theories of gravity). The examples in Part II of this thesis have constant Ricci scalars.

The first covariant derivative of $U \, \sigma^{-1/2}$ up to $\mathcal{O}(\sigma)$ is
\begin{equation}
\left(\frac{U}{\sigma^{1/2}}\right)_{;\mu} = \frac{U_{0;\mu}}{\sigma^{1/2}} - \frac{1}{2} \frac{U_0}{\sigma^{3/2}} \, \sigma_{;\mu} + \frac{1}{2} \frac{U_1}{\sigma^{1/2}} \, \sigma_{;\mu} + \mathcal{O}(\sigma) \, .
\end{equation}
Note that the term $U_{1;\mu} \, \sigma^{1/2} = \mathcal{O}(\sigma)$.

The double covariant derivative of $U \, \sigma^{-1/2}$ up to $\mathcal{O}(\sigma^{1/2})$ is
\begin{align}
\left(\frac{U}{\sigma^{1/2}}\right)_{;\mu\nu} &= \frac{U_{0;\mu\nu}}{\sigma^{1/2}} - \frac{1}{2} \frac{U_{0;\nu}}{\sigma^{3/2}} \, \sigma_{;\mu} - \frac{1}{2} \frac{U_{0;\mu}}{\sigma^{3/2}} \, \sigma_{;\nu} + \frac{3}{4} \frac{U_0}{\sigma^{5/2}} \sigma_{;\mu} \sigma_{;\nu} - \frac{1}{2} \frac{U_0}{\sigma^{3/2}} \, \sigma_{;\mu\nu} \notag \\
&\quad -\frac{1}{4} \frac{U_1}{\sigma^{3/2}} \sigma_{;\mu} \sigma_{;\nu}  + \frac{1}{2} \frac{U_1}{\sigma^{1/2}} \, \sigma_{;\mu\nu} + \mathcal{O}(\sigma^{1/2}) \, .
\end{align}

Concerning the derivatives of $U_0$ and $U_1$ up to the required order:
\begin{align}
U_{0;\mu} &= u_{0 ab} \sigma^{;a} {\sigma^{;b}}_{\mu} + \frac{1}{2} \left( u_{0 ab;\mu} \sigma^{;a} \sigma^{;b} - u_{0 abc} \sigma^{;a} \sigma^{;b} {\sigma^{;c}}_{\mu} \right) + \mathcal{O}(\sigma^{3/2}) \, ; \\
U_{0;\mu\nu} &= u_{0 ab} \left( {\sigma^{;a}}_{\nu} {\sigma^{;b}}_{\mu} + \sigma^{;a} {\sigma^{;b}}_{\mu\nu} \right) + u_{0 ab;\mu} \sigma^{;a} {\sigma^{;b}}_{\nu} + u_{0 ab;\nu} \sigma^{;a} {\sigma^{;b}}_{\mu} \notag \\
&\quad - u_{0 abc} \sigma^{;a} {\sigma^{;b}}_{\nu} {\sigma^{;c}}_{\mu} + \mathcal{O}(\sigma) \, ; \\
U_{1;\mu} &= \mathcal{O}(\sigma^{1/2}) \, ; \\
U_{1;\mu\nu} &= \mathcal{O}(\sigma^{0}) \, .
\end{align}

\subsection{Attempt to renormalise $\nabla_{\tilde{\theta}} \nabla_{\tilde{\theta}} G$}

At this stage, it is convenient to consider again angular separation of the points  in spacetime, such that $x = (\tau, r, 0)$ and $x' = (\tau, r, \tilde{\theta})$, with $\tilde{\theta} > 0$, for the black hole case and similarly for the Minkowski case.

Hence, we can expand the Synge's world function and its derivatives in $\tilde{\theta}$, using Proposition~\ref{prop:noncovariantexpansionsigma},
\begin{align}
\sigma &= \frac{1}{2} g_{\tilde{\theta} \tilde{\theta}} \, \tilde{\theta}^2 + \tilde{\sigma}_{\tilde{\theta}\tilde{\theta}\tilde{\theta}\tilde{\theta}} \, \tilde{\theta}^4 + \mathcal{O}(\tilde{\theta}^6) \, , \\
\sigma_{;\mu} &= g_{\mu \tilde{\theta}} \, \tilde{\theta}+ \left( \frac{1}{2} g_{\tilde{\theta}\tilde{\theta},\mu} + 3 \tilde{\sigma}_{\tilde{\theta}\tilde{\theta}\mu} \right) \tilde{\theta}^2 + \left( \tilde{\sigma}_{\tilde{\theta}\tilde{\theta}\tilde{\theta},\mu} + 4 \tilde{\sigma}_{\tilde{\theta}\tilde{\theta}\tilde{\theta}\mu} \right) \tilde{\theta}^3 \notag \\
&\quad + \left( \tilde{\sigma}_{\tilde{\theta}\tilde{\theta}\tilde{\theta}\tilde{\theta},\mu} + 5 \tilde{\sigma}_{\tilde{\theta}\tilde{\theta}\tilde{\theta}\tilde{\theta}\mu} \right) \tilde{\theta}^4 + \mathcal{O}(\tilde{\theta}^5) \, , \\
\sigma_{;\mu\nu} &= g_{\mu\nu} + \left( 2g_{\tilde{\theta}(\mu,\nu)} - \Gamma^{\lambda}_{\mu\nu} g_{\tilde{\theta}\lambda} + 6 \tilde{\sigma}_{\tilde{\theta}\mu\nu} \right) \tilde{\theta} \notag \\
&\quad + \left( \frac{1}{2} g_{\tilde{\theta}\tilde{\theta},\mu\nu} - \frac{1}{2} \Gamma^{\lambda}_{\mu\nu} g_{\tilde{\theta}\tilde{\theta},\lambda} + 6 \tilde{\sigma}_{\tilde{\theta}\tilde{\theta}(\mu,\nu)} - 3 \Gamma^{\lambda}_{\mu\nu} \tilde{\sigma}_{\tilde{\theta}\tilde{\theta}\lambda} + 12 \tilde{\sigma}_{\tilde{\theta}\tilde{\theta}\mu\nu} \right) \notag \\
&\quad + \left( \tilde{\sigma}_{\tilde{\theta}\tilde{\theta}\tilde{\theta},\mu\nu} - \Gamma^{\lambda}_{\mu\nu} \tilde{\sigma}_{\tilde{\theta}\tilde{\theta}\tilde{\theta},\lambda} + 8 \tilde{\sigma}_{\tilde{\theta}\tilde{\theta}\tilde{\theta}(\mu,\nu)} - 4 \Gamma^{\lambda}_{\mu\nu} \tilde{\sigma}_{\tilde{\theta}\tilde{\theta}\tilde{\theta}\lambda} + 20 \tilde{\sigma}_{\tilde{\theta}\tilde{\theta}\tilde{\theta}\tilde{\theta}\mu\nu} \right) \tilde{\theta}^3 \notag \\
&\quad + \mathcal{O}(\tilde{\theta}^4) \, .
\end{align}

Using the above expressions, one obtains that, for angular separation,
\begin{equation} \label{eq:thetathetaGHad}
\nabla_{\tilde{\theta}} \nabla_{\tilde{\theta}} G_{\mathrm{Had}} (x,x') = \frac{1}{2 \pi} \,  \frac{1}{\sqrt{g_{\tilde{\theta}\tilde{\theta}}}} \frac{1}{\tilde{\theta}^{3}} - \frac{1}{32\pi} \, \frac{g^{rr} \, \left( \partial_r g_{\tilde{\theta}\tilde{\theta}} \right)^2}{g_{\tilde{\theta}\tilde{\theta}}^{3/2}} \, \frac{1}{\tilde{\theta}} + \mathcal{O}(\tilde{\theta}^0) \, . 
\end{equation}

Some remarks:
\begin{enumerate}
\item As expected, $\nabla_{\tilde{\theta}} \nabla_{\tilde{\theta}} G_{\mathrm{Had}}$ is more divergent at the coincidence limit, having a leading divergent term which goes as $\tilde{\theta}^{-3}$, than $G_{\mathrm{Had}}$, which only has a term that goes as $\tilde{\theta}^{-1}$, cf.~\eqref{eq:GHadleadingterm}.
\item Note, however, that the term that goes as $\tilde{\theta}^{-3}$ is essentially of the same form as the term of $G_{\mathrm{Had}}$ that goes as $\tilde{\theta}^{-1}$ and, hence, the identification \eqref{eq:gthetathetaidentification} made to renormalise the vacuum polarisation is sufficient to subtract the leading term divergence of $\nabla_{\tilde{\theta}} \nabla_{\tilde{\theta}} G$.
\item It is easy to verify, however, that with the identification \eqref{eq:gthetathetaidentification} in place, the term in $\tilde{\theta}^{-1}$ cannot be subtracted by the corresponding Minkowski term, i.e.
\begin{equation*}
\nabla_{\tilde{\theta}} \nabla_{\tilde{\theta}} G(x,x') - \gamma^{-1} \nabla_{\tilde{\theta}} \nabla_{\tilde{\theta}} G^{\mathbb{M}}_{\rm Had}(x,x')
\end{equation*}
still retains a term of order $\tilde{\theta}^{-1}$ when expanded in $\tilde{\theta}$. One possible workaround is to perform the following subtraction instead,
\begin{equation} \label{eq:deltaGthetatheta}
\Delta_{\tilde{\theta}\tilde{\theta}}G(x,x') := \nabla_{\tilde{\theta}} \nabla_{\tilde{\theta}} G(x,x') - \gamma^{-1} \nabla_{\tilde{\theta}} \nabla_{\tilde{\theta}} G^{\mathbb{M}}_{\rm Had}(x,x') - a \, G^{\mathbb{M}}_{\mathrm{Had}}(x,x') \, ,
\end{equation}
where $a$ is a function of $r$ to be specified such that the term of order $\tilde{\theta}^{-1}$ is cancelled.
\end{enumerate}

At this stage it remains to find a choice of the parameters of the Minkowski's Green's distribution such that the mode sums in \eqref{eq:deltaGthetatheta} are convergent, if such a choice exists. If 
$G(x,x')$ is written as in \eqref{eq:Greenfunctiongen1}, then
\begin{align}
\nabla_{\tilde{\theta}} \nabla_{\tilde{\theta}} G(x,x')
&= \partial_{\tilde{\theta}}^2 G(x,x') - \Gamma^{\lambda}_{\tilde{\theta}\tilde{\theta}} \partial_{\lambda} G(x,x') \notag \\
&= \sum_{k = - \infty}^{\infty} e^{i k \tilde{\theta}} \sum_{n = - \infty}^{\infty} \Big[ {-k^2} \, G_{nk}^{\rm BH}(r,r) - \Gamma^r_{\tilde{\theta}\tilde{\theta}} \, \partial_r G_{nk}^{\rm BH}(r,r')\big|_{r' = r} \Big] \, ,
\end{align}
and similarly for the Minkowski Green's distribution. Hence,
\begin{align}
\Delta_{\tilde{\theta}\tilde{\theta}}G (x,x')
&= \sum_{k = - \infty}^{\infty} e^{i k \tilde{\theta}} \sum_{n = - \infty}^{\infty} 
\Big\{ {-k^2} \left[ G_{nk}^{\rm BH}(r,r) - \gamma^{-1} \, G_{nk}^{\mathbb{M}}(\rho, \rho) \right] \notag \\
&\quad - \Gamma^r_{\tilde{\theta}\tilde{\theta}} \, \partial_r G_{nk}^{\rm BH}(r,r')\big|_{r' = r} + \gamma^{-1} \, \Gamma^{\rho}_{\tilde{\theta}\tilde{\theta}} \, \partial_{\rho} G_{nk}^{\mathbb{M}}(\rho,\rho')\big|_{\rho' = \rho} \notag \\
&\quad - a \, G_{nk}^{\mathbb{M}}(\rho,\rho) \Big\}\Big|_{\rho = \rho(r)} \, .
\label{eq:deltaGthetathetamodesum}
\end{align}

Some extra remarks:
\begin{enumerate} \setcounter{enumi}{3}
\item Note that the first line of \eqref{eq:deltaGthetathetamodesum} can be made convergent in the coincidence limit if we fix the parameters of the Minkowski's Green's distribution as in \eqref{eq:matching}, by Theorem~\ref{thm:matchingpolarisation}. Hence, the fixing made to renormalise the vacuum polarisation also applies to the first line of \eqref{eq:deltaGthetathetamodesum}.
\item To check the convergence of the remaining terms, we need the asymptotic expansion of $\partial_r G_{nk}^{\rm BH}(r,r')|_{r' = r}$ and $\partial_{\rho} G_{nk}^{\mathbb{M}}(\rho,\rho')|_{\rho' = \rho}$ for large values of $n$ and $k$. In Appendix~\ref{app:WKBexpansions}, Proposition~\ref{prop:WKBexpansionGprime} shows that
\begin{equation}
\mathcal{G}_{n k}' (\xi) := C_{n k} \, \frac{\dd \phi_{n k}^1(\xi)}{\dd \xi} \, \phi_{n k}^2(\xi)
\end{equation}
has the asymptotic expansion for large values of $\chi_{n k}$
\begin{equation} \label{eq:Gnkprimemaintext}
\mathcal{G}_{n k}' (\xi) = \frac{1}{2} - \frac{(\chi_{n k}^2)'}{8 \chi_{n k}^3} + \mathcal{O}(\chi_{n k}^{-3}) \, .
\end{equation}
For our case of interest, $\mathcal{G}_{n k}'$ plays the role of the radial partial derivative of the radial part of the Green's distributions, given by 
\begin{equation}
\partial_r G_{nk}(r,r')|_{r' = r} = C_{nk} \frac{\dd  p_{nk}(r)}{\dd r} q_{nk}(r) \, ,
\end{equation}
whereas $\phi_{n k}^1$ and $\phi_{n k}^2$ correspond to $p_{nk}$ and $q_{nk}$ for both the black hole and Minkowski spacetimes.
\end{enumerate}

Let's focus for a moment on the $\left(\chi_{n k}^2(\xi)\right)'/(8 \chi_{n k}^3(\xi))$ term of the asymptotic expansion for the rotating black hole case. Using \eqref{eq:xircoords},
\begin{equation}
\frac{\partial_{\xi}\left(\chi_{n k}^2(\xi)\right)}{8 \chi_{n k}^3(\xi)}
= \sqrt{g} \, g^{rr} \, \frac{\partial_{r}\left(\chi_{n k}^2(r)\right)}{8 \chi_{n k}^3(r)} \, .
\end{equation}
Recall that $\chi_{n k}$ is defined by \eqref{eq:chietadefinitions},
\begin{equation}
\chi_{nk}^2 = g_{\tilde{\theta}\tilde{\theta}} \left( \kappa_+ n + i k \big( N^{\theta} + \Omega_{\mathcal{H}} \big) \right)^2 + N^2 k^2 \, ,
\end{equation}
and hence we can write
\begin{align}
\partial_r \left(\chi_{nk}^2\right) &= \partial_r g_{\tilde{\theta}\tilde{\theta}} \left( \frac{\chi_{nk}^2-N^2 k^2}{g_{\tilde{\theta}\tilde{\theta}}} \right)
+2i g_{\tilde{\theta}\tilde{\theta}} \partial_r N^{\theta} k \sqrt{\frac{\chi_{nk}^2-N^2 k^2}{g_{\tilde{\theta}\tilde{\theta}}}} + \partial_r (N^2) k^2 \notag \\
&= \frac{\partial_r g_{\tilde{\theta}\tilde{\theta}}}{g_{\tilde{\theta}\tilde{\theta}}} \chi_{nk}^2 
+ \left( \partial_r (N^2) - N^2 \frac{\partial_r g_{\tilde{\theta}\tilde{\theta}}}{g_{\tilde{\theta}\tilde{\theta}}} \right) k^2 
+ 2i \sqrt{g_{\tilde{\theta}\tilde{\theta}}} \partial_r N^{\theta} k \sqrt{\chi_{nk}^2-N^2 k^2} 
\end{align}
and
\begin{align} \label{eq:derivativechiBH}
\frac{\partial_r \left(\chi_{nk}^2\right)}{\chi_{nk}^3} 
&= \frac{\partial_r g_{\tilde{\theta}\tilde{\theta}}}{g_{\tilde{\theta}\tilde{\theta}}} \frac{1}{\chi_{nk}} 
+ \left( \partial_r (N^2) - N^2 \frac{\partial_r g_{\tilde{\theta}\tilde{\theta}}}{g_{\tilde{\theta}\tilde{\theta}}} \right) \frac{k^2}{\chi_{nk}^3}
+ 2i \sqrt{g_{\tilde{\theta}\tilde{\theta}}} \partial_r N^{\theta} \frac{\sqrt{\chi_{nk}^2-N^2 k^2}}{\chi_{nk}^3} \, .
\end{align}

Similarly, for the Minkowski case, using \eqref{eq:chiMink},
\begin{align} \label{eq:derivativechiMink}
\frac{\partial_{\rho} \left((\chi_{nk}^{\mathbb{M}})^2\right)}{(\chi_{nk}^{\mathbb{M}})^3} 
&= \frac{2}{\rho} \frac{1}{\chi_{nk}^{\mathbb{M}}} 
- \frac{2}{\rho} \frac{k^2}{(\chi_{nk}^{\mathbb{M}})^3} \, .
\end{align}

The double sum in \eqref{eq:deltaGthetathetamodesum} will be convergent in the coincidence limit if we are able to match all the terms in \eqref{eq:derivativechiBH} and the constant term in \eqref{eq:Gnkprimemaintext} to similar terms coming from the Minkowski summand. However, by comparing \eqref{eq:derivativechiBH} and \eqref{eq:derivativechiMink}, it is clear that there is \emph{no} term on the Minkowski side that can cancel the third term in the RHS of \eqref{eq:derivativechiBH}, i.e.~there is no term in the asymptotic expansions of the Minkowski terms for large $\chi_{nk}^{\mathbb{M}}$ which goes as
\begin{equation*}
\frac{\sqrt{(\chi_{nk}^{\mathbb{M}})^2 - k^2}}{(\chi_{nk}^{\mathbb{M}})^3} \, .
\end{equation*}
This comes down to the existence of $\partial_r N^{\theta}$ in the third term in the RHS of \eqref{eq:derivativechiBH}, as we alluded in Remark~\ref{rem:shiftBHstressenergy}.

In conclusion, our method to renormalise local observables by subtracting the short-distance divergences of the rotating black hole Green's distribution and its derivatives using the Minkowski spacetime Green's distribution does \emph{not} work if the computation of these local observables involves radial derivatives of the metric components, in particular if it involves terms containing $\partial_r N^{\theta}$, given that the shift function for the Minkowski metric components in rotating coordinates does not depend on the radial coordinate. Therefore, this method is suitable to renormalise local observables such as $\langle \Phi^2(x) \rangle$, which do not involve derivatives of the metric components. In the particular case of static spacetimes, in which there is a coordinate system such that the shift function vanishes, this method is still applicable for the renormalisation of the stress-energy tensor.

\part{Application}


\chapter{Warped \texorpdfstring{A\MakeLowercase{d}S$\mathbf{{}_3}$}{AdS3} black holes}
\chaptermark{Warped AdS${}_3$ black holes}
\label{chap:wadsbh}

In this chapter it is our aim to introduce the warped AdS${}_3$ black hole solutions which are used as the background spacetimes on which to apply the method described in Part I of this thesis. This is one of several possible choices of rotating black hole solutions and no particular physical significance is attached to this choice, apart from providing a simpler technical arena on which to renormalise the vacuum polarisation of a scalar field. Given this mindset, this chapter only exposes the basic ideas of (2+1)-dimensional gravity and the warped AdS${}_3$ solutions which are necessary to complete the computation, and does not attempt to give an exhaustive review of the research carried out in these topics in the last decades. For the latter, appropriate references are given in each section.

\section{2+1 gravity and topologically massive gravity}
\label{sec:wads-gravity-TMG}

In this section, we present a brief overview of (2+1)-dimensional classical gravity. In particular, we describe Einstein gravity in 2+1 dimensions and emphasise the main differences to the theory in 3+1 dimensions, namely the fact that there are no propagating degrees of freedom in 2+1 dimensions. We then introduce an extension of Einstein gravity, called topologically massive gravity, which has a propagating degree of freedom and new interesting solutions, such as the warped AdS${}_3$ solutions.

A standard reference for both the classical and quantum aspects of (2+1)-dimensional gravity is \cite{carlip2003quantum}, whereas for topologically massive gravity a few significant references are \cite{Deser:1982vy,Deser:1981wh,Deser:1982sw,Carlip:2008jk,Carlip:2008eq,Deser:2009ki}.

\subsection{2+1 gravity}

Here, we give a brief description of the main features of general relativity in 2+1 dimensions. This theory is described by the Einstein-Hilbert action
\begin{align}
S_{\text{E-H}} &= \frac{1}{16\pi G} \int \dd^3 x \, \sqrt{-g} \, \left( R - 2 \Lambda \right) \, ,
\end{align}
where $G$ is Newton's gravitational constant, $g$ is the determinant of the metric, $R$ is the Ricci scalar and $\Lambda$ is the cosmological constant.

A significant difference between 2+1 and 3+1 dimensions is the fact that in 2+1 dimensions the Riemann tensor $R_{abcd}$ is fully determined by the Ricci tensor $R_{ab}$,
\begin{equation}
R_{abcd} = g_{ac} R_{bd} + g_{bd} R_{ac} - g_{bc} R_{ad} - g_{ad} R_{bc} - \frac{1}{2} \left( g_{ac} g_{bd} - g_{ad} g_{cd} \right) R \, .
\end{equation}
This implies that any vacuum solution has constant curvature,
\begin{equation}
R_{ab} = 2 \Lambda \, g_{ab} \, .
\end{equation}
Therefore, in 2+1 gravity, there are no local degrees of freedom, only possibly global degrees of freedom, if the topology of the spacetime is non-trivial (e.g.~by performing identifications) \cite{carlip2003quantum}.

Additionally, if the cosmological constant is zero, there is no length scale in 2+1 dimensions. To see this, note that $G M$ is dimensionless in 2+1 dimensions. An important consequence of this fact is that there cannot be asymptotically flat black hole solutions of Einstein gravity, as the Schwarszchild radius would be a multiple of $G M$. If the cosmological constant is not zero, then there is a natural length scale, the cosmological length $\ell$, given by
\begin{equation}
\Lambda = \pm \frac{1}{\ell^2} \, .
\end{equation}
Then, in principle, it should be possible to find asymptotically AdS${}_3$ and dS${}_3$ black hole solutions by making identifications in AdS${}_3$ and dS${}_3$. This is indeed the case with AdS${}_3$, on which specific identifications yield the BTZ black hole solution \cite{Banados:1992wn,Banados:1992gq}. There are no known asymptotically dS${}_3$ black hole solutions.

If we want to use 2+1 gravity as a simpler arena to explore black hole physics, we might have gone too far in the simplification. However, there exist extensions of (2+1)-dimensional Einstein gravity which restore local degrees of freedom and whose dynamics are closer to the physically interesting case of 3+1 dimensions. In the following section, we consider one of such extensions, topologically massive gravity.

\subsection{Topologically massive gravity}

We now consider a deformation of (2+1)-dimensional Einstein gravity called topologically massive gravity (hereby denoted TMG), which is obtained by adding a gravitational Chern-Simons term to the Einstein-Hilbert action with a negative cosmological constant \cite{Deser:1982vy,Deser:1981wh,Carlip:2008jk,Carlip:2008eq,Deser:2009ki}. The Chern-Simons term creates a propagating, massive, spin 2 degree of freedom. In this sense, it is closer in spirit to general relativity in (3+1)-dimensions and can provide useful insight to some of the challenging problems of the higher dimensional theory.

The action of TMG in 2+1 spacetime dimensions is then
\begin{equation}
S = S_{\text{E-H}} + S_{\text{C-S}} \, ,
\end{equation}
with
\begin{align}
S_{\text{E-H}} &= \frac{1}{16\pi G} \int \dd^3 x \, \sqrt{-g} \, \left( R - 2 \Lambda \right) \, , \\
S_{\text{C-S}} &= \frac{1}{32\pi G \mu} \int \dd^3 x \, \sqrt{-g} \, \epsilon^{\mu\nu\rho} \, \Gamma^{\delta}_{\mu\lambda} \left( \partial_{\nu} \Gamma^{\lambda}_{\delta\rho} + \frac{2}{3} \Gamma^{\lambda}_{\nu\gamma} \Gamma^{\gamma}_{\rho\delta} \right) \, .
\end{align}
Here, $\mu$ is the Chern-Simons coupling, $g$ is the determinant of the metric, $\Gamma^{\delta}_{\mu\lambda}$ are the Christoffel symbols, and $\epsilon^{\mu\nu\rho}$ is the Levi-Civita tensor in three dimensions.

By linearising the action, it can be shown that TMG has a single massive propagating degree of freedom of squared mass $\mu^2$ \cite{Deser:1981wh}. This theory has third time derivative dependence, however it is ghost-free and unitary \cite{Deser:1981wh}.

A key feature of TMG is that it retains all of Einstein gravity solutions, including AdS${}_3$ and the BTZ black hole in the case of negative cosmological constant. Nevertheless, there also exist new solutions, such as the warped AdS${}_3$ vacuum solutions and warped AdS${}_3$ black hole solutions \cite{Nutku:1993eb,Gurses1994,Moussa:2003fc,Moussa:2008sj,Anninos:2008fx,Bengtsson:2005zj,Anninos:2008qb}, which are introduced in the next section. Similarly to the BTZ solution, the latter are obtained from the former by global identifications.

\section{Warped AdS$\mathbf{{}_3}$ solutions}
\sectionmark{Warped AdS${}_3$ solutions}
\label{sec:wads-wads-solution}

As noted above, there are non-Einstein solutions to TMG, and the simplest ones are the warped AdS${}_3$ solutions. These solutions are thought to be perturbatively stable vacua of TMG in a wide region of the parameter space of the theory, in contrast to the AdS${}_3$ solution \cite{Anninos:2009zi}. Mathematically, warped AdS${}_3$ spacetimes are Hopf fibrations of AdS${}_3$ over AdS${}_2$ where the fibre is the real line and the length of the fibre is ``warped'' \cite{Bengtsson:2005zj,Anninos:2008qb,Jugeau:2010nq}. This is the Lorentzian version of the warping of $S^3$ in the Riemannian setting, in which $S^3$ is warped along the Hopf fibres, which form a congruence of linked geodesic circles in $S^3$. In the Lorentzian case, there are actually two analogues, since AdS${}_3$ can be warped along Hopf fibres which may be spacelike or timelike (we will only focus on the spacelike case). And, in each case, the Hopf fibres can be either ``squashed'' or ``stretched''.

In the following, we introduce the warped AdS${}_3$ solutions in Section~\ref{sec:wads-wads-solution-noBH} and the black hole solutions in Section~\ref{sec:wadsbh-wads3bhsolutions}.

\subsection{Warped AdS$\mathbf{{}_3}$ solutions}
\label{sec:wads-wads-solution-noBH}

Before introducing the warped AdS${}_3$ solutions, we describe AdS${}_3$ in an unusual coordinate system.

\subsubsection{AdS$\mathbf{{}_3}$ in fibred coordinates}

\begin{definition}
Three-dimensional anti-de Sitter AdS${}_3$ is defined as the surface
\begin{equation}\label{eq:AdS3surface}
-U^2 - V^2 + X^2 + Y^2 = - \ell^2 \, ,
\end{equation}
embedded in the four-dimensional flat space $\mathbb{M}^{2,2}$ with metric

\begin{equation} \label{eq:AdS3flatspace}
\dd s^2 = - \dd U^2 - \dd V^2 + \dd X^2 + \dd Y^2 \, .
\end{equation}
\end{definition}

\begin{remark}
The topology of AdS${}_3$ is $\mathbb{R}^2 \times S^1$, with $S^1$ corresponding to timelike circles $U^2 + V^2 = \text{constant}$. The universal covering space is obtained by unwrapping $S^1$, which removes the closed timelike circles.
\end{remark}

To analyse the isometry group of AdS${}_3$, first note that the independent Killing vector fields of $\mathbb{M}^{2,2}$ are given by
\begin{equation}
J_{\mu\nu} = x_{\nu} \partial_{\mu} - x_{\mu} \partial_{\nu} \, , \qquad P_{\mu} = \partial_{\mu} \, ,
\end{equation} 
with $x^{\mu}=(U,V,X,Y)$. A general Killing vector field $\xi$ can then be written as
\begin{equation}
\xi = \frac{1}{2} \omega^{\mu\nu} J_{\mu\nu} + \omega^{\mu} P_{\mu} = \omega^{\mu\nu} x_{\nu} \partial_{\mu} + \omega^{\mu} \partial_{\mu} \, ,
\end{equation}
with $\omega^{\mu\nu} = - \omega^{\nu\mu}$. In detail, we can identify the spacelike and timelike rotations,
\begin{equation} \label{eq:M22rotations}
J_{UV} = V \partial_U - U \partial_V \, , \quad J_{XY} = Y \partial_X - X \partial_Y \, , 
\end{equation}
the four linearly independent boosts,
\begin{equation} \label{eq:M22boosts}
B_{UX} = U \partial_X + X \partial_U \, , \quad B_{UY} = U \partial_Y + Y \partial_U \, , \quad
\text{etc.} \, ,
\end{equation}
and the four translations,
\begin{equation} \label{eq:M22translations}
P_U = \partial_U \, , \quad P_V = \partial_V \, , \quad P_X = \partial_X \, , \quad P_Y = \partial_Y \, .
\end{equation}
These Killing vectors are the generators of $ISO(2,2)$, the isometry group of $\mathbb{M}^{2,2}$.

The isometry group of AdS${}_3$ is the subgroup of the isometry of $\mathbb{M}^{2,2}$ which leaves the AdS${}_3$ surface \eqref{eq:AdS3surface} invariant. Of the isometries above, only the translations \eqref{eq:M22translations} do not leave \eqref{eq:AdS3surface} invariant, therefore, the isometry group of AdS${}_3$ is $SO(2,2)$, which is generated by the two rotations \eqref{eq:M22rotations} and the four boosts \eqref{eq:M22translations}.

The connected component of $SO(2,2)$, $SO_0(2,2)$, is the direct product
\begin{equation}
SO_0(2,2) = SL(2,\mathbb{R})_{\rm L} \otimes SL(2,\mathbb{R})_{\rm R} / \mathbb{Z}^2 \, .
\end{equation}
To see this, it is useful to describe AdS${}_3$ as the group manifold of $SL(2,\mathbb{R})$,
\begin{equation}
SL(2,\mathbb{R}) = \left\{ A = \frac{1}{\ell}
\begin{pmatrix}
U+X & Y-V \\
Y+V & U-X
\end{pmatrix} : \;
\det(A) = 1 .
\right\}
\end{equation}
The condition $\det(A) = 1$ is invariant under the transformation
\begin{equation}
A \mapsto A' = B A C^{-1} \, , \quad B \in SL(2,\mathbb{R})_{\rm L} \, , \quad C \in SL(2,\mathbb{R})_{\rm R} \, .
\end{equation}
Hence, any element $G \in SO_0(2,2)$ may be identified with an equivalence class of two elements in the direct product $SL(2,\mathbb{R})_{\rm L} \otimes SL(2,\mathbb{R})_{\rm R}$,
\begin{equation}
G \sim (B, C) \sim (-B, -C) \, .
\end{equation}

It is then convenient to group the set of Killing vector fields of AdS${}_3$ into two mutually commuting sets. Define the right- and left-invariant Killing vector fields, $\xi_i^{\rm L}$ and $\xi_i^{\rm R}$ respectively,
\begin{align}
\xi_0^{\rm L} &= - \frac{1}{2} \left( J_{UV} + J_{XY} \right) \, ,
& \xi_0^{\rm R} &= - \frac{1}{2} \left( B_{UV} - B_{XY} \right) \, , \\
\xi_1^{\rm L} &= - \frac{1}{2} \left( B_{UY} - B_{VX} \right) \, ,
& \xi_1^{\rm R} &= - \frac{1}{2} \left( B_{UX} + B_{VY} \right) \, , \\
\xi_2^{\rm L} &= - \frac{1}{2} \left( B_{UX} + B_{VY} \right) \, ,
& \xi_2^{\rm R} &= - \frac{1}{2} \left( B_{UY} + B_{VX} \right) \, .
\end{align}
They satisfy
\begin{equation}
\left[ \xi_i^{\rm L}, \xi_j^{\rm L} \right] = {\epsilon_{ij}}^k \, \xi_k^{\rm L} \, , \quad
\left[ \xi_i^{\rm R}, \xi_j^{\rm R} \right] = {\epsilon_{ij}}^k \, \xi_k^{\rm R} \, , \quad
\left[ \xi_i^{\rm L}, \xi_j^{\rm R} \right] = 0 \, ,
\end{equation}
where $i,j,k=0,1,2$ and $\epsilon_{012}=1$. These vectors fields form bases $\{ \xi_i^{\rm L} \}_{i=0}^2$ and $\{ \xi_i^{\rm R} \}_{i=0}^2$ for (the sections of) $T \, SL(2,\mathbb{R})_{\rm L}$ and $T \, SL(2,\mathbb{R})_{\rm R}$, respectively, constituting the Maurer-Cartan frames (see Chapter~5 of \cite{nakahara2003geometry} for more details).

We can also define the dual one-forms $\theta^i_{\rm L}$ and $\theta^i_{\rm R}$, such that $\theta^i_{\rm L}(\xi_j^{\rm L}) = \delta^i_j$ and $\theta^i_{\rm R}(\xi_j^{\rm R}) = \delta^i_j$. These form the Maurer-Cartan co-frames. These one-forms satisfy the Maurer-Cartan structure equations,
\begin{equation}
\dd \theta^i_{\rm L} = - \frac{1}{2} {\epsilon_{jk}}^i \, \theta^j_{\rm L} \wedge \theta^k_{\rm L} \, ,
\end{equation}
and similarly for $\theta^i_{\rm R}$. From these equations, it follows that the Lie derivatives of these one-forms with respect to the Killing vector fields are given by
\begin{equation}
\mathcal{L}_{\xi_i^{\rm L}} \theta^j_{\rm L} = {{\epsilon_i}^j}_k \, \theta^k_{\rm L} \, , \quad
\mathcal{L}_{\xi_i^{\rm R}} \theta^j_{\rm R} = {{\epsilon_i}^j}_k \, \theta^k_{\rm R} \, , \quad
\mathcal{L}_{\xi_i^{\rm L}} \theta^j_{\rm R} = \mathcal{L}_{\xi_i^{\rm R}} \theta^j_{\rm L} = 0 \, .
\end{equation}
For instance,
\begin{align}
\mathcal{L}_{\xi_i^{\rm L}} \theta^j_{\rm L} 
= \iota_{\xi_i^{\rm L}} \dd \theta^j_{\rm L} + \dd \left( \iota_{\xi_i^{\rm L}} \theta^j_{\rm L} \right) 
= - \frac{1}{2} {\epsilon_{k l}}^j \, \iota_{\xi_i^{\rm L}} \left( \theta^k_{\rm L} \wedge \theta^l_{\rm L} \right) 
= {{\epsilon_i}^j}_k \, \theta^k_{\rm L} \, ,
\end{align}
where $\iota_{\xi_i^{\rm L}} \theta^j_{\rm L} = \theta^j_{\rm L} (\xi_i^{\rm L}) = \delta_i^j$ is the interior product of $\theta^j_{\rm L}$ with respect to $\xi_i^{\rm L}$. Therefore, the dual one-forms $\theta^i_{\rm L}$ and $\theta^i_{\rm R}$ are left- and right- invariant, respectively.

The dual one-forms allows us to write an invariant metric for the group manifold, the \emph{Killing metric}, given by
\begin{equation}
\dd s^2 = \frac{\ell^2}{4} \, \eta_{ij} \, \theta^i_{\rm L} \otimes \theta^j_{\rm L} \, .
\end{equation}

At this stage, we introduce the parametrisation
\begin{align}
U &= \cosh \left( \frac{\sigma}{2} \right) \cosh \left( \frac{u}{2} \right) \cos \left( \frac{\tau}{2} \right) + \sinh \left( \frac{\sigma}{2} \right) \sinh \left( \frac{u}{2} \right) \sin \left( \frac{\tau}{2} \right) \, , \\
V &= \cosh \left( \frac{\sigma}{2} \right) \cosh \left( \frac{u}{2} \right) \sin \left( \frac{\tau}{2} \right) - \sinh \left( \frac{\sigma}{2} \right) \sinh \left( \frac{u}{2} \right) \cos \left( \frac{\tau}{2} \right) \, , \\
X &= \cosh \left( \frac{\sigma}{2} \right) \sinh \left( \frac{u}{2} \right) \cos \left( \frac{\tau}{2} \right) + \sinh \left( \frac{\sigma}{2} \right) \cosh \left( \frac{u}{2} \right) \sin \left( \frac{\tau}{2} \right) \, , \\
Y &= \cosh \left( \frac{\sigma}{2} \right) \sinh \left( \frac{u}{2} \right) \sin \left( \frac{\tau}{2} \right) - \sinh \left( \frac{\sigma}{2} \right) \cosh \left( \frac{u}{2} \right) \cos \left( \frac{\tau}{2} \right) \, ,
\end{align}
with $u, \, \sigma \in \mathbb{R}$ and $\tau \sim \tau + 4\pi$. The right-invariant vector fields are
\begin{align}
\xi_0^{\rm L} &= - \sinh(u) \partial_{\sigma} - \cosh (u) \sech (\sigma) \partial_{\tau} + \cosh (u) \tanh (\sigma) \partial_u \, , \\
\xi_1^{\rm L} &= - \cosh(u) \partial_{\sigma} - \sinh (u) \sech (\sigma) \partial_{\tau} + \sinh (u) \tanh (\sigma) \partial_u \, , \\
\xi_2^{\rm L} &= \partial_u \, ,
\end{align}
the left-invariant fields are
\begin{align}
\xi_0^{\rm R} &= \partial_{\tau} \, , \\
\xi_1^{\rm R} &= \sin(\tau) \partial_{\sigma} - \cos (\tau) \tanh (\sigma) \partial_{\tau} + \cos (\tau) \sech (\sigma) \partial_u \, , \\
\xi_2^{\rm R} &= - \cos(\tau) \partial_{\sigma} + \sin (\tau) \tanh (\sigma) \partial_{\tau} + \sin (\tau) \sech (\sigma) \partial_u \, ,
\end{align}
and the left-invariant one-forms are
\begin{align}
\theta^0_{\rm L} &= - \cosh (u) \cosh (\sigma) \dd \tau + \sinh (u) \dd \sigma \, , \\
\theta^1_{\rm L} &= \sinh (u) \cosh (\sigma) \dd \tau - \cosh (u) \dd \sigma \, , \\
\theta^2_{\rm L} &= \dd u + \sinh (\sigma) \dd \tau \, .
\end{align}
The Killing metric is then given by
\begin{equation}
\dd s^2 = \frac{\ell^2}{4} \left[ - \cosh (\sigma)^2 \, \dd\tau^2 + \dd\sigma^2 + (\dd u+\sinh (\sigma) \, \dd\tau)^2 \right] \, .
\end{equation}
Unwrapping $\tau \in \mathbb{R}$ gives the covering space of AdS${}_3$. This is the AdS${}_3$ metric given in \emph{fibred coordinates}, as it is expressed as a Hopf fibration of the real line over AdS${}_2$.

\begin{remark}
The metric of AdS${}_3$ in the standard global coordinates $(t,\rho,\phi)$ is
\begin{equation}
\dd s^2 = \frac{\ell^2}{4} \left( - \cosh (\rho)^2 \, dt^2 + d\rho^2 + \sinh^2 (\rho) \, d\phi^2 \right) \, ,
\end{equation}
with $t, \, \rho \in \mathbb{R}$ and $\phi \sim \phi + 2\pi$.
\end{remark}

\begin{remark}
The coordinate system $(\tau,\sigma,u)$ is only one example of a fibred coordinate system, which we use in the following. For other possibilities, see e.g.~\cite{Anninos:2008fx}.
\end{remark}

\subsubsection{Spacelike warped AdS$\mathbf{{}_3}$}

In order to obtain a warped AdS${}_3$ spacetime, we multiply the fibre in the direction of $\xi_2^{\rm L} = \partial_u$ by a warp factor. The warping can take either the shape of ``stretching'' if the warp factor is positive or ``squashing'' if the warp factor is negative. Since the warping is made in the direction of the spacelike $\partial_u$, we call the resulting spacetime spacelike warped AdS${}_3$.

\begin{definition}
The \emph{spacelike warped} AdS${}_3$ spacetime has metric
\begin{equation} \label{eq:spacelikeWAdS}
\dd s^2 = \frac{\ell^2}{\nu^2+3} \left( - \theta^0_{\rm R} \otimes \theta^0_{\rm R} + \theta^1_{\rm R} \otimes \theta^1_{\rm R} + \frac{4\nu^2}{\nu^2+3} \theta^2_{\rm R} \otimes \theta^2_{\rm R} \right) \, ,
\end{equation}
where $\nu = \mu \ell/3$. For $\nu^2>1$ we have \emph{spacelike stretched} AdS${}_3$, for $\nu^2<1$ we have \emph{spacelike squashed} AdS${}_3$.
\end{definition}

In fibred coordinates $(\tau, \sigma, u)$, the metric is
\begin{equation} \label{eq:spacelikeWAdS}
\dd s^2 = \frac{\ell^2}{\nu^2+3} \left[- \cosh (\sigma)^2 \, \dd\tau^2 + \dd\sigma^2 + \frac{4\nu^2}{\nu^2+3} (\dd u+\sinh (\sigma) \, \dd\tau)^2 \right] \, .
\end{equation}

The isometry group of AdS${}_3$, which locally is $SL(2,\mathbb{R})_{\rm L} \otimes SL(2,\mathbb{R})_{\rm R}$, is broken by the warping and is only generated by $\xi_2^{\rm L}$ and $\xi_i^{\rm R}$, $i=0,1,2$. Hence, the isometry group of spacelike warped AdS${}_3$ is $U(1)_{\rm L} \otimes SL(2,\mathbb{R})_{\rm R}$.

\begin{remark}
Both AdS${}_3$ and spacelike warped AdS${}_3$ are solutions of TMG, but the latter is \emph{not} a vacuum solution of Einstein gravity in 2+1 dimensions.
\end{remark}

\begin{remark}
Besides the spacelike warped AdS${}_3$ spacetime, there exist also timelike and null warped AdS${}_3$ spacetimes. For more details, see e.g.~\cite{Anninos:2008fx}.
\end{remark}

\subsection{Warped AdS$\mathbf{{}_3}$ black hole solutions}
\label{sec:wadsbh-wads3bhsolutions}

Black hole solutions which are asymptotically warped AdS${}_3$ and do not have CTCs have only been found in the spacelike stretched case. The spacelike stretched black hole metric in coordinates $(t,r,\theta)$ is \cite{Anninos:2008fx}
\begin{equation}
\dd s^2 = \dd t^2 + \frac{\ell^2 \dd r^2}{4 R(r)^2 N(r)^2} + 2 R(r)^2 N^{\theta}(r) \dd t d\theta + R(r)^2 \dd \theta^2 \, ,
\label{eq:metricbh}
\end{equation}
with $r \in (0,\infty)$, $t \in (-\infty,\infty)$, $(t,r,\theta) \sim (t,r,\theta + 2\pi)$ and
\begin{align}
R(r)^2 &= \frac{r}{4} \left[ 3(\nu^2-1)r + (\nu^2+3)(r_+ + r_-) - 4\nu \sqrt{r_+ r_-(\nu^2+3)} \right] \, , \\
N(r)^2 &= \frac{(\nu^2+3)(r-r_+)(r-r_-)}{4R(r)^2} \, , \\
N^{\theta}(r) &= \frac{2\nu r - \sqrt{r_+ r_- (\nu^2+3)}}{2 R(r)^2} \, .
\end{align}
We can also write the metric in ADM form as
\begin{equation}
\dd s^2 = -N(r)^2 \, \dd t^2 + \frac{\ell^2 \dd r^2}{4 R(r)^2 N(r)^2} + R(r)^2 \left( \dd \theta + N^{\theta}(r) \, \dd t \right)^2 \, .
\end{equation}

In the rest of this section, some of the more important features of these black holes that will be needed in later chapters are briefly described. More details can be found in \cite{Anninos:2008fx,Ferreira:2013zta}.

\begin{enumerate}
\item There are outer and inner horizons at $r = r_+$ and $r = r_-$, respectively, and a singularity in the causal structure located at $r = \bar{r}_0 := \max\{ 0, r_0 \}$, with
\begin{equation}
r_0 = \frac{4\nu \sqrt{r_+ r_- (\nu^2+3)} - (\nu^2+3)(r_++r_-)}{3(\nu^2-1)} \, ,
\end{equation}
such that $0 \leq \bar{r}_0 \leq r_- \leq r_+$. The dimensionless constant $\nu = \mu \ell/3$ is greater than unity for the spacelike stretched black hole and in this context is usually known as the \emph{warp factor}. In the limit $\nu \to 1$ the metric reduces to the metric of the BTZ black hole in a rotating frame.
\item In this coordinate system, the vector fields $\partial_t$ and $\partial_{\theta}$ are Killing vector fields, however, $\partial_t$ is spacelike \emph{everywhere} in the spacetime. Consequently, this black hole does not have a stationary limit surface and its ergoregion extends to infinity. Therefore, no observers follow orbits of $\partial_t$ in the exterior region.
\item Notwithstanding the previous point, one can still consider observers following orbits of the (non Killing) vector field $\xi(r) = \partial_t + \Omega(r) \, \partial_{\theta}$ at a given radius $r$, which is timelike as long as
\begin{equation}
\Omega_-(r) < \Omega(r) < \Omega_+(r) \, ,
\label{eq:conditionalmoststationary}
\end{equation}
with
\begin{equation}
\Omega_{\pm}(r) = - \frac{2}{2\nu r - \sqrt{r_+ r_- (\nu^2+3)} \pm \sqrt{(r-r_+)(r-r_-)(\nu^2+3)}} \, .
\end{equation}
$\Omega(r)$ is negative for all $r > r_+$, approaches zero as $r \to +\infty$, and tends to
\begin{equation}
\Omega_{\mathcal{H}} = - \frac{2}{2\nu r_+ - \sqrt{r_+ r_- (\nu^2+3)}}
\label{eq:omegaH}
\end{equation}
as $r \to r_+$. In view of this, we can take $\Omega_{\mathcal{H}}$ as the \emph{angular velocity of the horizon} with respect to stationary observers in the limit they approach infinity.

One example of such a timelike vector field in the exterior region is 
\begin{equation} \label{eq:xiLNRO}
\xi(r) = \partial_t - N^{\theta}(r) \, \partial_{\theta} \, .
\end{equation}
Observers following orbits of $\xi(r)$ are known as locally non-rotating observers (LNRO) or zero angular momentum observers (ZAMO), since the $\theta$-component of the one-form $\xi_a$ is
\begin{equation}
\xi_{\theta} = g_{\theta \mu} \xi^{\mu} = R(r)^2 N^{\theta}(r) + R(r)^2 \big( {- N^{\theta}(r)} \big) = 0 \, .
\end{equation}
We will further consider these observers in Section~\ref{sec:existence-superradiance}. The vector field $\xi(r)$ is a representative of the time-orientation of the exterior region of the spacelike stretched black hole, cf.~Definition~\ref{def:timeorientation}.
\item Note that, even though the Killing vector field $\partial_t$ is spacelike in the exterior region, $t$ is a \emph{time function}, in the sense of Definition~\ref{def:timefunction}. To see this, let $\eta^a := - \nabla^a t$.  One has that $\eta^a$ is timelike,
\begin{equation}
\eta^2 = g^{\mu\nu} \eta_{\mu} \eta_{\nu} = g^{tt} = - \frac{1}{N(r)^2} < 0 \, ,
\end{equation}
and is future-directed,
\begin{align}
g_{\mu\nu} \eta^{\mu} \xi^{\nu} = g_{\mu\nu} \left(- g^{\mu t} \right) \xi^{\nu} 
= - 1 < 0 \, ,
\end{align}
where $\xi(r)$ in \eqref{eq:xiLNRO} was used as a representative of the time-orientation of the exterior region. Therefore, in this region, $t$ is increasing along worldlines of timelike curves and, furthermore, constant-$t$ surfaces are spacelike.
\item Similarly to the Kerr spacetime, there is a \emph{speed of light surface}, beyond which an observer cannot co-rotate with the event horizon. It is located at the surface where the Killing vector field which generates the horizon,
\begin{equation}
\chi = \partial_t + \Omega_{\mathcal{H}} \, \partial_{\theta} \, ,
\end{equation}
is null,
\begin{equation} \label{eq:solsurface}
r = r_{\mathcal{C}} = \frac{4\nu^2 r_+ - (\nu^2+3) r_-}{3(\nu^2-1)} \, .
\end{equation}
\item The spacelike stretched black hole can be obtained as the quotient of spacelike stretched AdS${}_3$ under a discrete subgroup of the isometry group, the same way the BTZ black hole is a quotient of AdS${}_3$ \cite{Banados:1992wn,Banados:1992gq}. The discrete subgroup is the one generated by the Killing vector $\partial_{\theta}$, which in terms of the original fibred coordinates $(\tau, \sigma, u)$ is given by
\begin{equation}
\partial_{\theta} = \frac{(\nu^2+3)(r_+-r_-)}{4} \xi^{\rm R}_2 
+ \frac{\nu^2+3}{4} \left( r_+ + r_- - \frac{\sqrt{r_+ r_- (\nu^2+3)}}{\nu} \right) \xi^{\rm L}_2 \, ,
\end{equation}
such that the identification of points $x$ of the spacelike stretched AdS${}_3$ is
\begin{equation} \label{eq:identpoints}
x \sim \exp(2\pi\lambda \, \partial_{\theta}) \, x \, , \qquad \lambda \in \mathbb{Z} \, .
\end{equation}
Across the spacelike stretched AdS${}_3$ spacetime, $\partial_{\theta}$ can be spacelike, null or timelike. The spacelike stretched black hole is then the region where $\partial_{\theta}$ is spacelike, which is geodesically incomplete. The boundaries are the surfaces where $\partial_{\theta}$ is null and they correspond to the singularity $r = \bar{r}_0$ in the causal structure. The region where $\partial_{\theta}$ is timelike would have closed timelike curves upon the identification \eqref{eq:identpoints}.

Therefore, the spacelike stretched black hole is locally equivalent to spacelike stretched AdS${}_3$. Another explicit way to see this is by a local coordinate transformation from the spacelike stretched AdS${}_3$ metric \eqref{eq:spacelikeWAdS} to the spacelike stretched black hole metric \eqref{eq:metricbh},
\begin{subequations}
\begin{align}
\tau &= \arctan \left[ \frac{2\sqrt{(r-r_+)(r-r_-)}}{2r-r_+-r_-} \sinh \left( \frac{\nu^2+3}{4}(r_+-r_-) \theta \right) \right] \, , \\
u &= \frac{\nu^2+3}{4\nu} \left[ 2t + \left( \nu (r_+-r_-) - \sqrt{r_+ r_-(\nu^2+3)} \right) \theta \right] \notag \\
&\quad - \arctan \left[ \frac{r_+ + r_- - 2r}{r_+ - r_-} \coth \left( \frac{\nu^2+3}{4}(r_+-r_-) \theta \right) \right] \, , \\
\sigma &= \text{asinh} \left[ \frac{2\sqrt{(r-r_+)(r-r_-)}}{2r-r_+-r_-} \cosh \left( \frac{\nu^2+3}{4}(r_+-r_-) \theta \right) \right] \, ,
\end{align}
\end{subequations}
valid for $\nu > 1$ and for the non-extremal case $r_+ > r_-$ (more details on this and the extremal case can be found in Ref.~\cite{Anninos:2008fx}).
\item Using the standard procedure, the Carter-Penrose diagrams for these black hole spacetimes were obtained in Ref.~\cite{Jugeau:2010nq} and are shown in Fig.~\ref{fig:CPdiagrams}. We see that the causal structure is very similar to that of asymptotically flat spacetimes in 3+1 dimensions. Indeed, the diagrams for the cases $r_0 < r_- < r_+$ and $r_0 < r_- = r_+$ are exactly the same as those for the standard and extreme Reissner-Nordstr\"{o}m black holes, while the one for the case $r_0 = r_- < r_+$ is identical to that for the Kruskal spacetime. For this reason, one may expect the behaviour of matter fields on these spacetimes to be qualitatively similar to that on the asymptotically flat ones.

In the next chapter, we will focus on the case of a spacelike stretched black hole for which $r_0 < r_- < r_+$.
\end{enumerate}

\begin{figure}[ht!]
\begin{center}
{\small
\subfigure[$r_0<r_-<r_+$]{
\begin{tikzpicture}[scale=0.56]

\node (I)    at ( 2,0)   {I};
\node (IV)   at (-2,0)   {IV};
\node (II)   at (0, 2) {II};
\node (III)  at (0,-2) {III};

\path  
  (IV) +(90:2)  coordinate (IVtop)
       +(-90:2) coordinate (IVbot)
       +(0:2)   coordinate (IVright)
       +(180:2) coordinate (IVleft)
       ;
       
\draw (IVleft) -- (IVtop); 
\draw[dashed] (IVtop) -- (IVright);
\draw[dashed] (IVright) -- (IVbot); 
\draw (IVbot) -- (IVleft);

\path 
   (I) +(90:2)  coordinate (Itop)
       +(-90:2) coordinate (Ibot)
       +(180:2) coordinate (Ileft)
       +(0:2)   coordinate (Iright)
       ;

\draw[dashed] (Ileft) -- (Itop)
	node[midway, below right, xshift=-1mm] {\scriptsize {$r=r_+$}}; 
\draw (Itop) -- (Iright);
\draw (Iright) -- (Ibot); 
\draw[dashed] (Ibot) -- (Ileft);

\path (Itop) +(90:4) coordinate (IItopright);
\path (IVtop)+(90:4) coordinate (IItopleft);
\node (IItop) at (0,4) {};

\draw[dashed] (Itop)  -- (IItopleft)
	node[pos=0.25, below, xshift=-3mm] {\scriptsize {$r=r_-$}};
\draw[dashed] (IVtop) -- (IItopright);
\draw[decorate,decoration=zigzag] (IVtop) -- (IItopleft);
\draw[decorate,decoration=zigzag] (Itop) -- (IItopright)
      node[midway, right, inner sep=2mm] {\scriptsize {$r=r_0$}};

\path (IItopright) +(2,2)  coordinate (IItoptopright);
\path (IItopright) +(-2,2) coordinate (IItoptopmiddle);
\path (IItopleft)  +(-2,2) coordinate (IItoptopleft);

\draw[dashed] (IItopright)  -- (IItoptopmiddle);
\draw[dashed] (IItopleft)   -- (IItoptopmiddle);
\draw         (IItopright)  -- (IItoptopright);
\draw         (IItopleft)  -- (IItoptopleft);

\path (Ibot) +(-90:4) coordinate (IIIbotright);
\path (IVbot)+(-90:4) coordinate (IIIbotleft);
\node (IIIbot) at (0,-4) {};

\draw[dashed] (Ibot)  -- (IIIbotleft);
\draw[dashed] (IVbot) -- (IIIbotright);
\draw[decorate,decoration=zigzag] (IVbot) -- (IIIbotleft);
\draw[decorate,decoration=zigzag] (Ibot) -- (IIIbotright);

\path (IIIbotright) +(2,-2)  coordinate (IIIbotbotright);
\path (IIIbotright) +(-2,-2) coordinate (IIIbotbotmiddle);
\path (IIIbotleft)  +(-2,-2) coordinate (IIIbotbotleft);

\draw[dashed] (IIIbotright)  -- (IIIbotbotmiddle);
\draw[dashed] (IIIbotleft)   -- (IIIbotbotmiddle);
\draw         (IIIbotright)  -- (IIIbotbotright);
\draw         (IIIbotleft)  -- (IIIbotbotleft);

\end{tikzpicture}
}
%
%
%
%
\subfigure[$r_0=r_-<r_+$]{
\begin{tikzpicture}[scale=0.71]

\node (I)    at ( 2,0)   {I};
\node (IV)   at (-2,0)   {IV};
\node (II)   at (0, 1.2) {II};
\node (III)  at (0,-1.2) {III};

\path  
  (IV) +(90:2)  coordinate (IVtop)
       +(-90:2) coordinate (IVbot)
       +(0:2)   coordinate (IVright)
       +(180:2) coordinate (IVleft)
       ;
       
\draw (IVleft) -- (IVtop); 
\draw[dashed] (IVtop) -- (IVright);
\draw[dashed] (IVright) -- (IVbot); 
\draw (IVbot) -- (IVleft);

\path 
   (I) +(90:2)  coordinate (Itop)
       +(-90:2) coordinate (Ibot)
       +(180:2) coordinate (Ileft)
       +(0:2)   coordinate (Iright)
       ;

\draw[dashed] (Ileft) -- (Itop)
	node[midway, below right] {\scriptsize {$r=r_+$}}; 
\draw (Itop) -- (Iright);
\draw (Iright) -- (Ibot); 
\draw[dashed] (Ibot) -- (Ileft);

\draw[decorate,decoration=zigzag] (IVtop) -- (Itop)
      node[midway, above, inner sep=1mm] {\scriptsize {$r=r_0$}};

\draw[decorate,decoration=zigzag] (IVbot) -- (Ibot)
      node[midway, below, inner sep=2mm] {\scriptsize {$r=r_0$}};

\end{tikzpicture}
}
%
%
%
%
\subfigure[$r_0<r_-=r_+$]{
\begin{tikzpicture}[scale=0.7]
      
\draw[decorate,decoration=zigzag] (-2,-6) -- (-2,6)
	node[midway, right, inner sep=1mm] {\scriptsize {$r=r_0$}};
\draw (2,6) -- (0,4) -- (2,2) -- (0,0) -- (2,-2) -- (0,-4) -- (2,-6);
\draw[dashed] (-2,6) -- (0,4) -- (-2,2) -- (0,0) --
 			      node[midway, below right, xshift=-1mm, yshift=-1mm] {\scriptsize {$r=r_+$}}
 			  (-2,-2) -- (0,-4) -- (-2,-6);
\end{tikzpicture}
}
}
\end{center}
\caption[Carter-Penrose diagrams of the spacelike stretched black hole.]{\label{fig:CPdiagrams} Carter-Penrose diagrams of the spacelike stretched black hole spacetime for different values of $r_0$, $r_-$, and $r_+$.}
\end{figure}
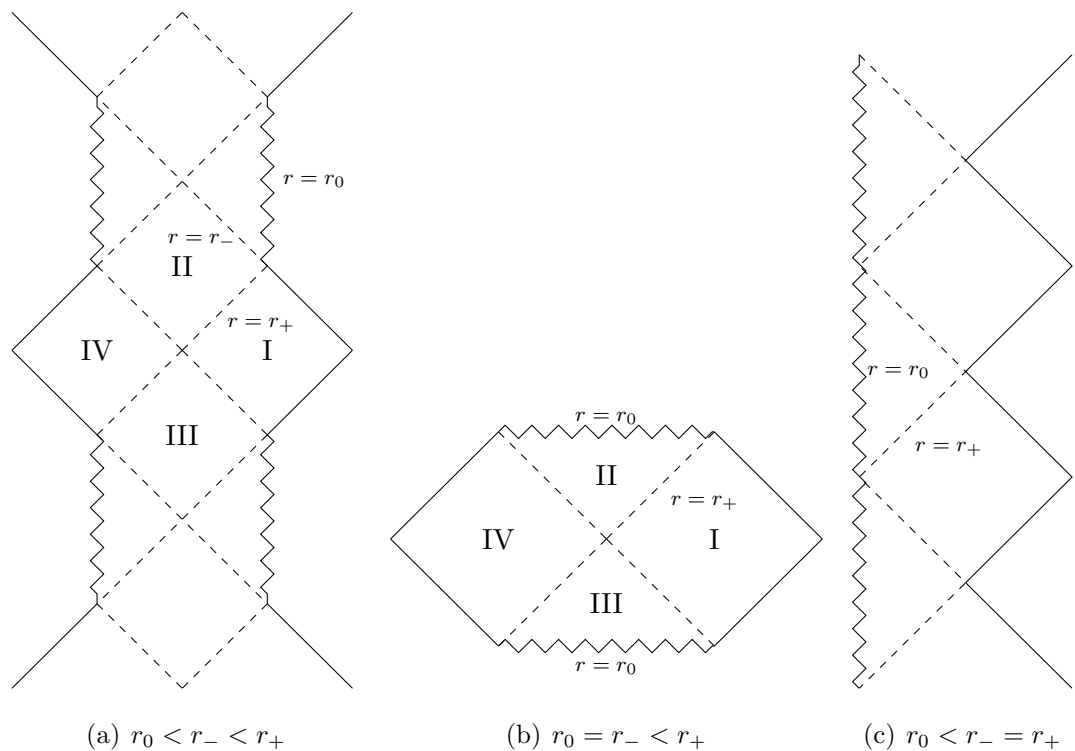


\chapter{Classical linear mode stability of the \texorpdfstring{WA\MakeLowercase{d}S$\mathbf{{}_3}$}{WAdS3} black holes}
\chaptermark{Classical linear mode stability}
\label{chap:classical-stability}

In this chapter, it is demonstrated that the warped AdS${}_3$ black hole solutions introduced in the previous chapter are classically stable against massive scalar field mode perturbations, even when the black hole is enclosed by a stationary timelike boundary with Dirichlet boundary conditions. Namely, it is shown that even though classical superradiance is present it does not give rise to superradiant instabilities. This is a surprising result given the similarity between the causal structure of the warped AdS${}_3$ black hole and the Kerr black hole in 3+1 dimensions. Having clarified the existence of the classical superradiance and the classical linear mode stability of the black hole, we then consider the quantised scalar field in the next chapter.


\section{Classical superradiance}
\label{sec:classical-superradiance}

In this section, we start by obtaining the solutions for the Klein-Gordon equation for a real massive scalar field on the spacelike stretched black hole, in both closed form and in the form of asymptotic approximations near the horizon and infinity. The latter allows us to construct bases for the space of solutions in the exterior region. Finally, we use these constructions to discuss the existence of classical superradiance.


\subsection{Klein-Gordon field equation}
\label{sec:fieldeq}

A real massive scalar field $\Phi$ on the background of a spacelike stretched black hole, whose metric is \eqref{eq:metricbh}, satisfies the Klein-Gordon equation \eqref{eq:KGequation},
\begin{equation}
\left(\nabla^2 - m_0^2 - \xi R \right) \Phi = 0 \, ,
\label{eq:fieldequation1}
\end{equation}
where $m_0$ is the mass of the field, $R$ is the Ricci scalar and $\xi$ is the curvature coupling parameter. For the spacelike stretched black hole, the Ricci scalar is a constant, $R =  - (\nu^2+3)+(\nu^2-3)/\ell^2$, so \eqref{eq:fieldequation1} can be rewritten as
\begin{equation}
\left(\nabla^2 - m^2 \right) \Phi = 0 \, ,
\label{eq:fieldequation2}
\end{equation}
where $m^2 := m_0^2 + \xi R$ is the ``effective squared mass'' of the scalar field.

Since $\partial_t$ and $\partial_{\theta}$ are Killing vector fields of the spacetime, one considers mode solutions of \eqref{eq:fieldequation2} of the form
\begin{equation}
\Phi_{\omega k}(t,r,\theta) = e^{-i\omega t + i k \theta} \, \phi_{\omega k}(r) \, ,
\label{eq:fieldansatz}
\end{equation}
where $\omega \in \mathbb{R}$ and $k \in \mathbb{Z}$.

\begin{remark}
Since the Killing vector field $\partial_t$ is not timelike anywhere in the exterior region of black hole, the parameter $\omega$ cannot be strictly regarded as a ``frequency'' in the usual sense. We will come back to this detail in section \ref{sec:existence-superradiance}, but for simplicity we sometimes refer to $\omega$ as the ``frequency''.
\end{remark}

Using \eqref{eq:fieldansatz} and \eqref{eq:metricbh}, the radial equation can be easily obtained,
\begin{equation}
\frac{4}{\ell^2} R^2 N^2 \frac{\dd}{\dd r} \left( R^2 N^2 \frac{\dd \phi_{\omega k}}{\dd r} \right) + \left[ R^2 (\omega + k N^{\theta})^2 - N^2 (k^2 + m^2 R^2) \right] \phi_{\omega k} = 0 \, .
\label{eq:radialeq}
\end{equation} 
By performing the rescalings $r \to r \ell$, $t \to t \ell$, $m \to m/\ell$, and $\omega \to \omega/\ell$, one can set $\ell=1$, as is assumed from now on.

In this (2+1)-dimensional setting, it is possible to write the general solution to the radial equation in closed form. Introducing a new radial coordinate
\begin{equation}
z = \frac{r-r_+}{r-r_-} \, 
\label{eq:definitionz1}
\end{equation}
the general real solution can be written as
\begin{equation}
\phi_{\omega k} (z) = A_{\omega k} \, z^{\alpha} (1-z)^{\beta} F(a,b;c;z) + B_{\omega k} \, \overline{z^{\alpha} (1-z)^{\beta} F(a,b;c;z)} \, ,
\label{eq:exactsolutionLorentzian}
\end{equation}
where $A_{\omega k}$ and $B_{\omega k}$ are constants, $F(a,b;c;z) := {}_2 F_1(a,b;c;z)$ is the Gaussian hypergeometric function (see Appendix~\ref{app:hypergeometric}) and its parameters are given by
\begin{align}
a = \alpha + \beta + \gamma \, , \qquad
b = \alpha + \beta - \gamma \, , \qquad
c = 2\alpha + 1 \, ,
\label{eq:defabcLorentzian}
\end{align}
where
\begin{subequations}
\begin{align}
\alpha &= -i \tilde{\omega}_+ \frac{2\nu r_+ - \sqrt{r_+ r_-(\nu^2+3)}}{(\nu^2+3)(r_+ - r_-)} \, , \\
\beta &= \frac{1}{2} - \hat{\varpi} \frac{\sqrt{3(\nu^2-1)}}{\nu^2+3} \, , \\ 
\gamma & = -i \tilde{\omega}_- \frac{2\nu r_- - \sqrt{r_+ r_-(\nu^2+3)}}{(\nu^2+3)(r_+ - r_-)} \, ,
\end{align}
\end{subequations}
and
\begin{align}
\tilde{\omega}_{\pm} := \omega + k N^{\theta}(r_{\pm}) \, , \qquad
\hat{\varpi} := \sqrt{\frac{(\nu^2+3)^2}{12(\nu^2-1)} \left( 1 + \frac{4m^2}{\nu^2+3} \right) - \omega^2} \, .
\end{align}

This exact solution will be useful for the stability analysis below. To discuss the existence of classical superradiance, it will be sufficient to consider asymptotic approximations near the horizon and infinity, as is done in the next subsection.


\subsection{Asymptotic mode solutions}
\label{eq:stability-asymptoticsolutions}

In order to construct convenient bases of mode solutions, the asymptotic approximations near the horizon and infinity are obtained by rewriting the radial field equation as a Schr\"{o}dinger-like equation. To do that, the first step is to derive the effective potential seen by the scalar field. Define the tortoise coordinate $r_*$ by
\begin{equation}
\frac{\dd r_*}{\dd r} = \frac{1}{2 R N^2} \, ,
\label{eq:tortoiseeq}
\end{equation}
which maps $(r_+, \infty)$ to $(-\infty,\infty)$ for $\nu>1$ and to $(-\infty,\hat{r}_*)$, where $\hat{r}_*$ is a finite value, for $\nu = 1$. Introduce the new radial function $\varphi_{\omega k}$,
\begin{equation}
\phi_{\omega k}(r) =: R(r)^{-1/2} \, \varphi_{\omega k} (r) \, .
\label{eq:varphi}
\end{equation}
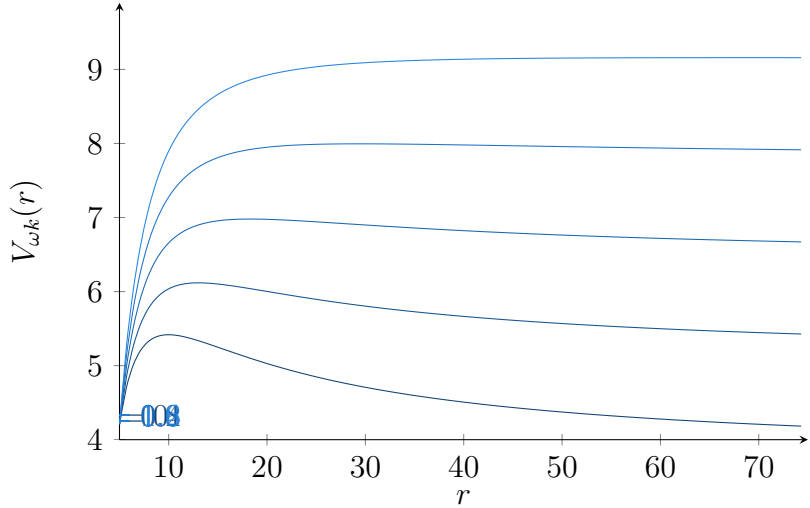
\begin{figure}[t!]
\centering
\begin{tikzpicture}
\begin{axis}[
	width=0.57\textwidth,
	x post scale=1.3,
	xlabel={$r$},
	ylabel={$V_{\omega k}(r)$},
	axis lines=left,
	xmin=5, xmax=75,
	ymin=4, ymax=9.9,
]
\addplot[smooth, no markers, blue1] 
	table {data/effectivepotential-1.txt}
	[yshift=10pt]
		node[pos=0.65]{$m^2=0$}
	;
\addplot[smooth, no markers, blue2] 
	table {data/effectivepotential-2.txt}
	[yshift=10pt]
		node[pos=0.65]{$m^2=0.4$}
	;
\addplot[smooth, no markers, blue3] 
	table {data/effectivepotential-3.txt}
	[yshift=10pt]
		node[pos=0.65]{$m^2=0.8$}
	;
\addplot[smooth, no markers, blue4] 
	table {data/effectivepotential-4.txt}
	[yshift=10pt]
		node[pos=0.65]{$m^2=1.2$}
	;
\addplot[smooth, no markers, blue5] 
	table {data/effectivepotential-5.txt}
	[yshift=10pt]
		node[pos=0.65]{$m^2=1.6$}
	;
\end{axis}
\end{tikzpicture}
\caption[Effective potential of the scalar field for selected values of the squared mass.]{Effective potential $V_{\omega k}(r)$ for selected values of $m^2$ with $r_+=5$, $r_-=2.5$, $\nu=1.2$, $\omega = 5$ and $k=-1$. For smaller values of $m^2$ (or larger values of $\omega$) the potential has a local maximum near the horizon, around which a potential barrier stands. As one considers fields with larger $m^2$ (or smaller $\omega$), the potential barrier eventually disappears. 
\vspace*{2ex}   
}
\label{fig:potentialmass}
\end{figure}
The radial field equation \eqref{eq:radialeq} can then be written in a Schr\"{o}dinger-like form
\begin{equation}
\left( \frac{\dd^2}{\dd r_*^2} + (\omega^2 - V_{\omega k}(r)) \right) \varphi_{\omega k}(r) = 0 \, ,
\label{eq:effectiveeq}
\end{equation}
with:
\begin{align}
V_{\omega k} &:= \omega^2 - (\omega + k N^{\theta})^2 + 2 N^3 \left( R N \frac{\dd^2 R}{\dd r^2} + \frac{1}{2} N \left(\frac{\dd R}{\dd r}\right)^2 + 2 R \frac{\dd R}{\dd r} \frac{\dd N}{\dd r} \right) \notag \\
&\quad\; + N^2 \left( m^2 + \frac{k^2}{R^2} \right) \, .
\label{eq:effectivepotential}
\end{align}
The function $V_{\omega k}(r)$ can hence be regarded as the \emph{effective potential} experienced by the scalar field of effective squared mass $m^2$, frequency $\omega$, and angular momentum number $k$. Figure~\ref{fig:potentialmass} shows the form of $V_{\omega k}(r)$ for selected values of $m^2$.

\begin{remark}
Similarly to what happens in the Kerr spacetime \cite{Simone:1991wn}, $V_{\omega k}(r)$ depends on the frequency $\omega$ of the scalar field (when $k \neq 0$). Also, $V_{\omega k}(r) \to +\infty$ as $r \to +\infty$ and $\nu \to 1$, as expected for the BTZ black hole.
\end{remark}

One can now find the asymptotic solutions of \eqref{eq:effectiveeq} near the horizon and near infinity by analysing the behaviour of the effective potential in those regions.
\begin{enumerate}
\item In the near-horizon limit,
\begin{equation}
V_{\omega k}(r) - \omega^2 \to - \tilde{\omega}^2 \, , \qquad r \to r_+ \, ,
\end{equation}
where
\begin{equation}
\tilde{\omega} := \omega + k N^{\theta}(r_+) = \omega - k \Omega_{\mathcal{H}} \, .
\label{eq:omegatilde}
\end{equation}
Thus, the solution near the horizon is of the form
\begin{equation}
\varphi_{\omega k}(r_*) = A_{\omega k} \, e^{i \tilde{\omega} r_*} + B_{\omega k} \, e^{- i \tilde{\omega} r_*} \, .
\label{eq:asymptoticmodesatthehorizon}
\end{equation}
Modes of the form $e^{i \tilde{\omega} r_*}$ are outgoing from the past event horizon, while modes of the form $e^{-i \tilde{\omega} r_*}$ are ingoing to the future event horizon.
\item At infinity,
\begin{align}
V_{\omega k}(r) \to \omega_m^2 \, , \qquad r \to \infty \, ,
\label{eq:effectivepotentialinfinity}
\end{align}
where
\begin{equation}
\omega_m := \frac{1}{2} \frac{\nu^2+3}{\sqrt{3(\nu^2-1)}} \sqrt{ 1 + \frac{4m^2}{\nu^2+3} } \, .
\label{eq:omegam}
\end{equation}

Two cases need now to be distinguished.
\begin{itemize}
\item[(a)] In the case $|\omega| > \omega_m$, the asymptotic solution is of the form
\begin{equation}
\varphi_{\omega k}(r_*) = C_{\omega k} \, e^{i \hat{\omega} r_*} + D_{\omega k} \, e^{- i \hat{\omega} r_*} \, ,
\label{eq:asymptoticmodesatinfinitycasea}
\end{equation}
where
\begin{equation}
\hat{\omega} := \begin{cases}
\sqrt{\omega^2-\omega_m^2} \, , & \omega > \omega_m \geq 0 \, , \\
-\sqrt{\omega^2-\omega_m^2} \, , & \omega < -\omega_m \leq 0 \, .
\end{cases}
\label{eq:hatomegadef}
\end{equation}
When $\hat{\omega} > 0$, modes of the form $e^{i \hat{\omega} r_*}$ correspond to outgoing flux at infinity, while the modes of the form $e^{-i \hat{\omega} r_*}$ correspond to incoming flux at infinity, and vice versa when $\hat{\omega} < 0$.
\item[(b)] In the case $|\omega| < \omega_m$, the asymptotic solutions are
\begin{equation}
\varphi_{\omega k}(r_*) = E_{\omega k} \, e^{\hat{\varpi} r_*} + F_{\omega k} \, e^{-\hat{\varpi} r_*} \, ,
\label{eq:asymptoticmodesatinfinitycaseb}
\end{equation}
where
\begin{equation}
\hat{\varpi} \equiv \begin{cases}
\sqrt{\omega_m^2 - \omega^2} \, , & 0 < \omega < \omega_m \, , \\
-\sqrt{\omega_m^2 - \omega^2} \, , & -\omega_m < \omega < 0 \, .
\end{cases}
\label{eq:hatvarpidef}
\end{equation}
To exclude the solution that diverges exponentially at infinity, impose that $E_{\omega k} = 0$ when $0 < \omega < \omega_m$ and  $F_{\omega k} = 0$ when $-\omega_m < \omega < 0$.
\end{itemize}
\end{enumerate}

\begin{remark} \label{rem:omegam}
The behaviour of the effective potential at infinity given by \eqref{eq:effectivepotentialinfinity} contrasts with that in asymptotically flat spacetimes such as Kerr, where the effective potential tends to $m^2$ at infinity \cite{Simone:1991wn}, and with asymptotically AdS spacetimes such as the BTZ or Kerr-AdS, where the effective potential grows without bound at infinity \cite{Winstanley:2001nx}. It is assumed that the asymptotic value of $V_{\omega k}$ at infinity, $\omega_m^2$, is non-negative and, by \eqref{eq:omegam}, this implies that $m^2$ may be negative provided it satisfies $m^2 \geq -\frac{\nu^2+3}{4}$. As a consistency check, in the BTZ limit $\nu \to 1$ this inequality reduces to the Breitenlohner-Freedman bound for AdS${}_3$ spacetimes $m^2 \geq -1$ \cite{Breitenlohner:1982bm}.
\end{remark}

\begin{remark}
The interpretation of the modes in \eqref{eq:asymptoticmodesatinfinitycasea} corresponding to incoming and outgoing flux at infinity can be explained by calculating the radial flux $j^r$ of the field mode $\phi_{\omega k}$ at infinity,
\begin{equation}
j^r = - i \, g^{rr} \left( \overline{\phi_{\omega k}} \, \frac{\dd \phi_{\omega k}}{\dd r} - \phi_{\omega k} \, \frac{\dd \overline{\phi_{\omega k}}}{\dd r} \right) \, .
\end{equation}
The result turns out to be
\begin{equation}
j^r = 4 \hat{\omega} \left( |C_{\omega k}|^2 - |D_{\omega k}|^2 \right) \, , \qquad r \to +\infty \, .
\end{equation}
Since a positive (negative) radial flux at infinity corresponds to outgoing (incoming) flux, the interpretation above follows.
\end{remark}

\begin{remark}
Note that so far no choice of ``positive frequency'' has been made, for instance, by taking $\omega > 0$. We will return to this point when discussing the existence of superradiance in Section~\ref{sec:existence-superradiance}.
\end{remark}


\subsection{Basis of mode solutions}
\label{sec:stability-basismodes}

Using the asymptotic mode solutions described above, one can construct a basis of mode solutions. Two particular basis modes will be of particular importance in the following, the ``in'' and ``up'' modes, which are specified by the boundary conditions they obey at the event horizon and at infinity. These modes are defined in analogy with the Kerr spacetime \cite{Ford:1975tp,Frolov:1989jh}. We will also define the so-called ``bound state'' modes for the case $|\omega| < \omega_m$.

\begin{definition}
For $|\omega| > \omega_m$, the \emph{in modes} are mode solutions of \eqref{eq:effectiveeq} of the form \eqref{eq:fieldansatz} which satisfy the boundary conditions
\begin{equation}
\varphi^{\rm in}_{\omega k}(r_*) = \begin{cases}
B^{\rm in}_{\omega k} \, e^{- i \tilde{\omega} r_*} \, , & r_* \to -\infty \, , \\
e^{- i \hat{\omega} r_*} + C^{\rm in}_{\omega k} \, e^{i \hat{\omega} r_*} \, , & r_* \to +\infty \, ,
\end{cases}
\label{eq:inmodes}
\end{equation}
whereas the \emph{up modes} are the ones which satisfy
\begin{equation}
\varphi^{\rm up}_{\omega k}(r_*) = \begin{cases}
e^{i \tilde{\omega} r_*} + B^{\rm up}_{\omega k} \, e^{- i \tilde{\omega} r_*} \, , & r_* \to -\infty \, , \\
C^{\rm up}_{\omega k} \, e^{i \hat{\omega} r_*} \, , & r_* \to +\infty \, .
\end{cases}
\label{eq:upmodes}
\end{equation}
In the case in which $|\omega| < \omega_m$, the \emph{bound state modes} are the mode solutions which satisfy the boundary conditions
\begin{equation}
\varphi^{\rm bs}_{\omega k}(r_*) = \begin{cases}
A^{\rm bs}_{\omega k} \, e^{i \tilde{\omega} r_*} + B^{\rm bs}_{\omega k} \, e^{- i \tilde{\omega} r_*} \, , & r_* \to -\infty \, , \\
e^{- \hat{\varpi} r_*} \, , & r_* \to +\infty \, .
\end{cases}
\label{eq:bsmodes}
\end{equation}
\end{definition}

\begin{remark}
The in modes correspond to flux coming from infinity which is partially reflected back to infinity and partially absorbed by the black hole. The up modes correspond to flux coming from the black hole which is partially reflected back to the black hole and partially sent to infinity. This is represented in Fig.~\ref{fig:inupmodes}. The bound state modes are localised near the event horizon and exponentially decay as infinity is approached.
\end{remark}

\begin{figure}[t!]
\begin{center}
\subfigure[\; `in' modes]{
\def\svgwidth{0.4\textwidth}
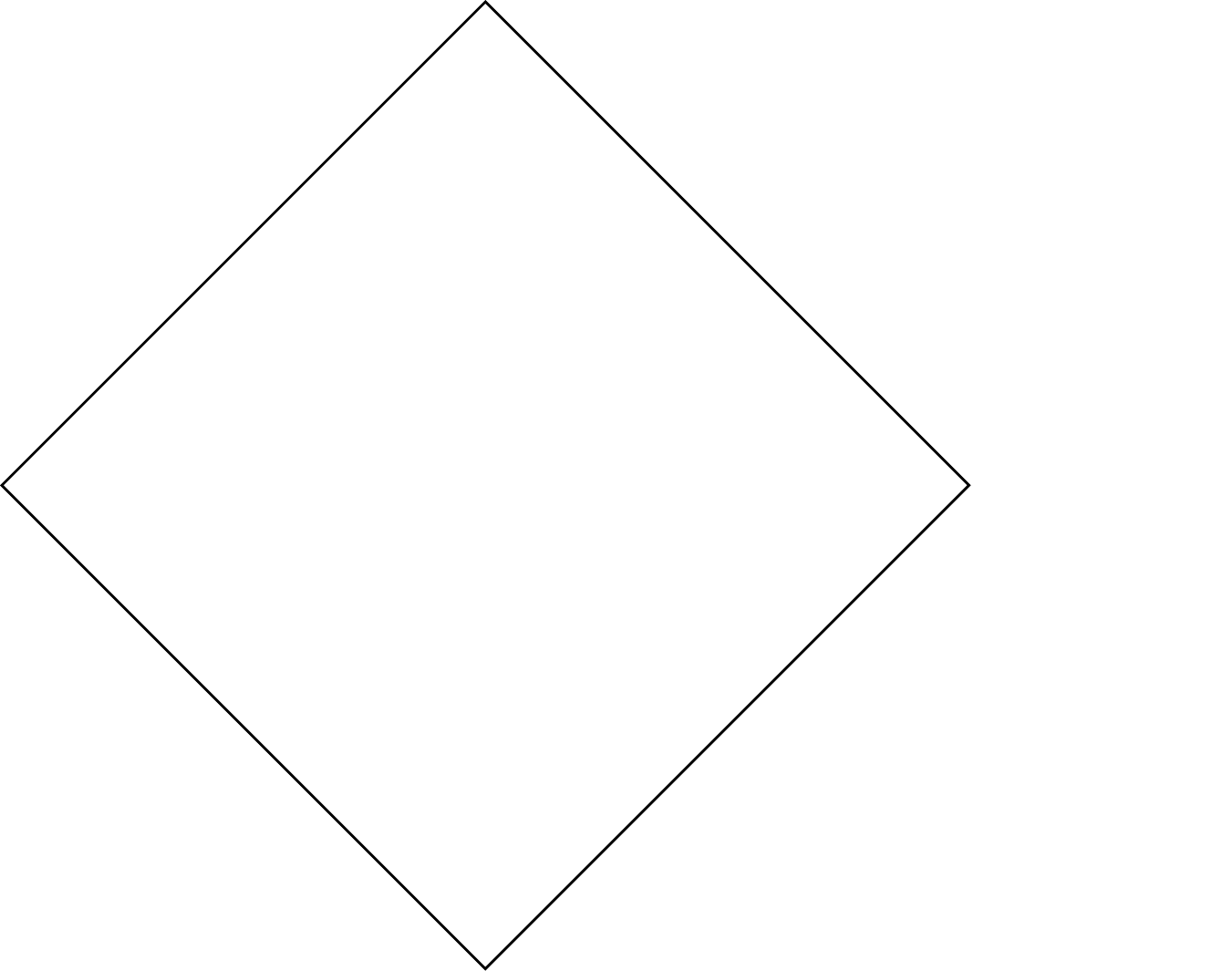}
\subfigure[\; `up' modes]{
\def\svgwidth{0.4\textwidth}
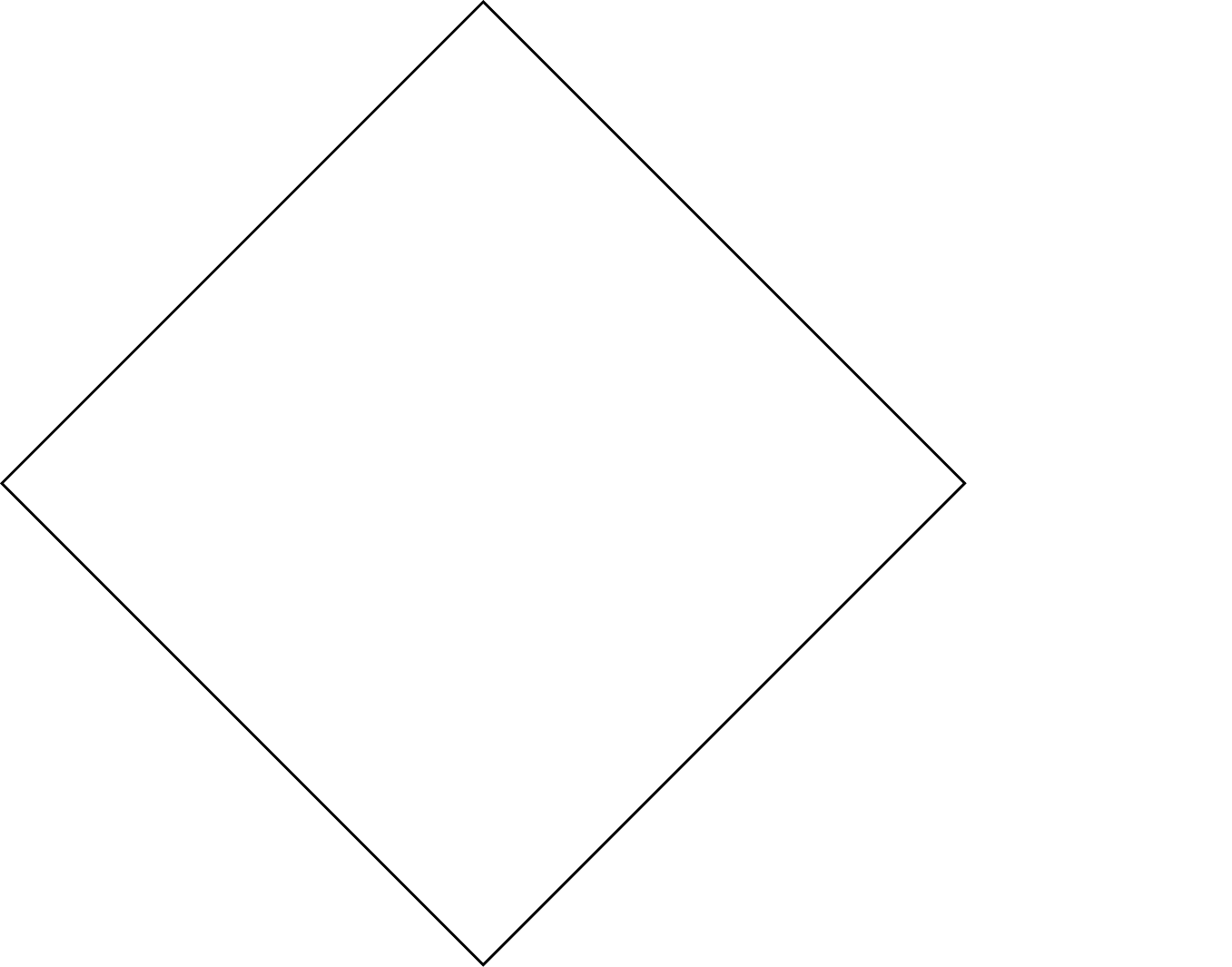}
\caption{\label{fig:inupmodes} In and up modes in the exterior region of the spacetime.}
\end{center}
\end{figure}

An immediate but important property relating the $A$, $B$ and $C$ coefficients in \eqref{eq:inmodes}, \eqref{eq:upmodes} and \eqref{eq:bsmodes} is given in the following lemma.

\begin{lemma} \label{eq:lemmaWronskiancoefficients}
The coefficients in \eqref{eq:inmodes}, \eqref{eq:upmodes} and \eqref{eq:bsmodes} satisfy
\begin{equation}
\tilde{\omega} \, |B^{\rm in}_{\omega k}|^2 = \hat{\omega} \left( 1 - |C^{\rm in}_{\omega k}|^2 \right) \, , \quad
\tilde{\omega} \left( 1 - |B^{\rm up}_{\omega k}|^2 \right) = \hat{\omega} \, |C^{\rm up}_{\omega k}|^2 \, , \quad 
|A^{\rm bs}_{\omega k}| = |B^{\rm bs}_{\omega k}| \, .
\label{eq:Wronskianrelations}
\end{equation}
\end{lemma}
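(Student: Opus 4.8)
The plan is to exploit the fact that the radial equation \eqref{eq:effectiveeq} is in Schr\"{o}dinger form, with no first-derivative term, so that the Wronskian with respect to $r_*$ of any two solutions is independent of $r_*$. Moreover, since $\omega$ and $k$ are real and the effective potential $V_{\omega k}(r)$ in \eqref{eq:effectivepotential} is manifestly real-valued, the equation has real coefficients; hence whenever $\varphi_{\omega k}$ solves \eqref{eq:effectiveeq} so does its complex conjugate $\overline{\varphi_{\omega k}}$. The three relations in \eqref{eq:Wronskianrelations} will then follow by computing the constant Wronskian $W[\varphi_{\omega k}, \overline{\varphi_{\omega k}}] := \varphi_{\omega k} \, \dd_{r_*} \overline{\varphi_{\omega k}} - \overline{\varphi_{\omega k}} \, \dd_{r_*} \varphi_{\omega k}$ at the two asymptotic ends $r_* \to \pm\infty$ and equating the results.

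First I would record the elementary Wronskian of a plane-wave pair: for $\psi = a \, e^{-i\Omega r_*} + b \, e^{i\Omega r_*}$ with real $\Omega$, a direct computation gives $W[\psi,\overline{\psi}] = 2i\Omega\left( |a|^2 - |b|^2 \right)$, the oscillating cross terms cancelling identically. This single formula, applied with $\Omega = \tilde{\omega}$ near the horizon and $\Omega = \hat{\omega}$ near infinity, supplies the value of the (constant) Wronskian at each end directly from the boundary data in \eqref{eq:inmodes}--\eqref{eq:bsmodes}.

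For the in modes, evaluating at $r_* \to -\infty$ using \eqref{eq:inmodes} gives $W = 2i\tilde{\omega}\,|B^{\rm in}_{\omega k}|^2$, while at $r_* \to +\infty$ it gives $W = 2i\hat{\omega}\left( 1 - |C^{\rm in}_{\omega k}|^2 \right)$; equating yields the first relation. For the up modes, \eqref{eq:upmodes} gives $W = 2i\tilde{\omega}\left( |B^{\rm up}_{\omega k}|^2 - 1 \right)$ at the horizon and $W = -2i\hat{\omega}\,|C^{\rm up}_{\omega k}|^2$ at infinity, producing the second relation after rearrangement. For the bound state modes, the horizon end of \eqref{eq:bsmodes} gives $W = 2i\tilde{\omega}\left( |B^{\rm bs}_{\omega k}|^2 - |A^{\rm bs}_{\omega k}|^2 \right)$, whereas at infinity $\varphi^{\rm bs}_{\omega k} \sim e^{-\hat{\varpi} r_*}$ with real $\hat{\varpi}$ is real and exponentially decaying, so $\varphi^{\rm bs}_{\omega k}$, $\overline{\varphi^{\rm bs}_{\omega k}}$ and their $r_*$-derivatives all vanish in the limit and the constant Wronskian equals $0$; hence $|A^{\rm bs}_{\omega k}| = |B^{\rm bs}_{\omega k}|$, using $\tilde{\omega} \neq 0$.

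I expect the only genuine subtlety---rather than a true obstacle---to be the justification that the Wronskian may be read off from the leading asymptotic forms: since $W$ is exactly constant in $r_*$, it equals its limit as $r_* \to \pm\infty$, and the subleading corrections to \eqref{eq:inmodes}--\eqref{eq:bsmodes} decay, so they drop out of the limit. This is transparent at the oscillatory ends and, for the bound state case, is precisely what makes the vanishing of the Wronskian at infinity immediate. The remaining effort is purely the bookkeeping of signs and of which coefficient multiplies $e^{\pm i\Omega r_*}$ at each end.
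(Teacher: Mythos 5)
Your proof is correct and takes essentially the same approach as the paper's: both rest on the constancy in $r_*$ of the conjugated Wronskian $\varphi \, \frac{\dd \overline{\varphi}}{\dd r_*} - \overline{\varphi} \, \frac{\dd \varphi}{\dd r_*}$ for solutions of \eqref{eq:effectiveeq}, evaluated and equated at the horizon and at infinity using the asymptotic data in \eqref{eq:inmodes}--\eqref{eq:bsmodes}. Your explicit remarks that the reality of $V_{\omega k}$ makes $\overline{\varphi_{\omega k}}$ a solution, and that the exponential decay of the bound state mode forces the Wronskian to vanish at infinity, merely spell out details the paper leaves implicit.
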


\begin{proof}
The expressions relating the coefficients for each type of mode solution follow straightforwardly from the observation that, given any two linearly independent solutions $\varphi_1(r_*)$ and $\varphi_2(r_*)$ of \eqref{eq:effectiveeq}, their Wronskian is independent of $r_*$, i.e.,
\begin{equation}
W(\varphi_1, \varphi_2) := \varphi_1 \frac{\dd \overline{\varphi_2}}{\dd r_*} - \frac{\dd \varphi_1}{\dd r_*} \overline{\varphi_2} = \text{constant.}
\end{equation}
By comparing the Wronskians at the horizon and at infinity, the relations follow.
\end{proof}


\subsection{Existence of classical superradiance}
\label{sec:existence-superradiance}

In Appendix \ref{app:superradiance}, a brief overview of the classical superradiance phenomenon on black holes is given. In short, a given mode solution coming from either infinity or the horizon is called \emph{superradiant} if, after it gets reflected in the neighbourhood of the event horizon, its amplitude is increased. The superradiant nature of a given mode depends on its type and on its frequency, as formulated in the next proposition.

\begin{proposition} \label{prop:superradiantcondition}
In and up modes of a given frequency $\omega$ are superradiant if and only if $\tilde{\omega} \hat{\omega} < 0$. Bound state modes are never superradiant.
\end{proposition}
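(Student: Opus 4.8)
The plan is to characterize superradiance through the behaviour of the amplitude coefficients, using the Wronskian relations established in Lemma~\ref{eq:lemmaWronskiancoefficients} as the main computational engine. Recall that a mode is superradiant precisely when the reflected amplitude exceeds the incident amplitude; for the in and up modes this reflection coefficient is read off from the asymptotic forms \eqref{eq:inmodes} and \eqref{eq:upmodes}, so I would first translate ``amplified upon reflection'' into an explicit inequality on $|C^{\rm in}_{\omega k}|^2$ (respectively $|B^{\rm up}_{\omega k}|^2$). For the in modes, unit-amplitude incoming flux from infinity is reflected with amplitude $C^{\rm in}_{\omega k}$, so superradiance means $|C^{\rm in}_{\omega k}|^2 > 1$; for the up modes, unit-amplitude flux emerging from the horizon is reflected back with amplitude $B^{\rm up}_{\omega k}$, so superradiance means $|B^{\rm up}_{\omega k}|^2 > 1$.

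The key step is then to invoke the flux-conservation identities of Lemma~\ref{eq:lemmaWronskiancoefficients}. From $\tilde{\omega}\,|B^{\rm in}_{\omega k}|^2 = \hat{\omega}\bigl(1 - |C^{\rm in}_{\omega k}|^2\bigr)$ we obtain
\begin{equation}
|C^{\rm in}_{\omega k}|^2 - 1 = - \frac{\tilde{\omega}}{\hat{\omega}}\,|B^{\rm in}_{\omega k}|^2 \, ,
\end{equation}
and since $|B^{\rm in}_{\omega k}|^2 \geq 0$, the left-hand side is positive if and only if $\tilde{\omega}\hat{\omega} < 0$. The identical manipulation applied to $\tilde{\omega}\bigl(1 - |B^{\rm up}_{\omega k}|^2\bigr) = \hat{\omega}\,|C^{\rm up}_{\omega k}|^2$ gives
\begin{equation}
|B^{\rm up}_{\omega k}|^2 - 1 = - \frac{\hat{\omega}}{\tilde{\omega}}\,|C^{\rm up}_{\omega k}|^2 \, ,
\end{equation}
which is again positive exactly when $\tilde{\omega}\hat{\omega} < 0$. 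This establishes the ``if and only if'' for both mode types simultaneously, with the same frequency condition, modulo the boundary case where the reflected coefficient vanishes, which I would note corresponds to total transmission and hence no amplification.

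For the bound state modes, the argument is even shorter: the third Wronskian relation $|A^{\rm bs}_{\omega k}| = |B^{\rm bs}_{\omega k}|$ says that the two coefficients at the horizon have equal modulus, and since the asymptotic form \eqref{eq:bsmodes} decays exponentially at infinity (carrying no flux there), the reflection at the horizon is necessarily amplitude-preserving; thus these modes are never superradiant. The main obstacle I anticipate is not the algebra but pinning down the correct \emph{physical} identification of which coefficient plays the role of ``incident'' versus ``reflected'' amplitude in each case, given that the sign of $\hat{\omega}$ itself flips with the sign of $\omega$ per \eqref{eq:hatomegadef}; I would therefore cross-check the identification against the radial-flux computation $j^r = 4\hat{\omega}\bigl(|C_{\omega k}|^2 - |D_{\omega k}|^2\bigr)$ recorded in the remark after \eqref{eq:omegam}, ensuring that the flux balance is consistent with the asymptotic labelling of ingoing and outgoing waves at both ends. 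Once the flux bookkeeping is fixed, the product $\tilde{\omega}\hat{\omega} < 0$ emerges uniformly as the superradiance criterion.
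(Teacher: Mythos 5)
Your proposal is correct and follows essentially the same route as the paper: translating superradiance into $|C^{\rm in}_{\omega k}|>1$ (resp.\ $|B^{\rm up}_{\omega k}|>1$), rearranging the Wronskian relations of Lemma~\ref{eq:lemmaWronskiancoefficients} to show this holds precisely when $\tilde{\omega}\hat{\omega}<0$, and using $|A^{\rm bs}_{\omega k}|=|B^{\rm bs}_{\omega k}|$ to conclude total reflection for bound states. Your explicit algebraic rearrangement and the remark on the degenerate case of vanishing transmission are merely a more detailed writing-out of the paper's terser argument.
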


\begin{proof}
An in mode is superradiant if it is reflected back to infinity with a greater amplitude than the original one, i.e.~if $|C^{\text{in}}_{\omega k}|>1$. Using Lemma~\ref{eq:lemmaWronskiancoefficients}, this occurs when $\tilde{\omega} \hat{\omega} < 0$.

Similarly, an up mode is superradiant if it is reflected back to the horizon with a greater amplitude than the original one, i.e.~if $|B^{\text{up}}_{\omega k}|>1$. Using Lemma~\ref{eq:lemmaWronskiancoefficients}, this also occurs when $\tilde{\omega} \hat{\omega} < 0$.

Finally, for a bound state mode, $|A^{\text{bs}}_{\omega k}| = |B^{\text{bs}}_{\omega k}|$ implies that all flux coming from the horizon is reflected back. Consequently, the mode is not superradiant.
\end{proof}

Proposition~\ref{prop:superradiantcondition} gives the condition that a given mode solution of frequency $\omega$ needs to satisfy in order to be superradiant. It remains to verify if that condition can actually be fulfilled. 

At this point, one needs to discuss the notion of positive frequency, as described in detail in Section~\ref{sec:qftcst-positivefrequency}. The question of positive frequency becomes subtle for spacetimes which do not have a globally timelike Killing vector field, as we have seen, and in practice one needs to decide the location of a locally non-rotating observer with respect to whom only positive frequency modes are observed.

First, we define these locally non-rotating observers, who we have briefly mentioned in point 3 of Section~\ref{sec:wadsbh-wads3bhsolutions}.

\begin{definition} \label{def:LNRO}
An observer in the exterior region of the black hole is a \emph{locally non-rotating observer} (LNRO) if its radial coordinate $r$ is fixed and it has zero angular momentum, i.e.~$u_{\mu} (\partial_{\theta})^{\mu} = u_{\theta} = 0$, where $u^{a}$ is the future-directed unit vector tangent to the observer worldline.
\end{definition}

\begin{remark}
Locally non-rotating observers are also known as \emph{zero angular momentum observers} (ZAMO) in the literature.
\end{remark}

\begin{proposition}
A LNRO in the exterior region of the spacelike stretched black hole follows orbits of the vector field $\xi(r) := \partial_t - N^{\theta}(r) \, \partial_{\theta}$, which is timelike everywhere in the exterior region and is perpendicular to constant-$t$ surfaces.
\end{proposition}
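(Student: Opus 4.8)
The claim has three parts: (i) the vector field $\xi(r) = \partial_t - N^{\theta}(r)\,\partial_{\theta}$ is timelike everywhere in the exterior region; (ii) its orbits are those followed by a locally non-rotating observer (LNRO) in the sense of Definition~\ref{def:LNRO}; and (iii) $\xi$ is perpendicular to the surfaces of constant $t$. The plan is to verify each of these by direct computation using the ADM form of the metric
\begin{equation*}
\dd s^2 = - N(r)^2\,\dd t^2 + \frac{\ell^2\,\dd r^2}{4R(r)^2 N(r)^2} + R(r)^2\left(\dd\theta + N^{\theta}(r)\,\dd t\right)^2 \, .
\end{equation*}

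First I would address (ii) and (iii) together, since they are the most immediate. The natural strategy is to compute the components of the one-form $\xi_a = g_{a\mu}\xi^{\mu}$ associated to $\xi^{\mu} = (1, 0, -N^{\theta})$. This is precisely the calculation already carried out in point~3 of Section~\ref{sec:wadsbh-wads3bhsolutions}, where it is shown that $\xi_{\theta} = g_{\theta\mu}\xi^{\mu} = R^2 N^{\theta} + R^2(-N^{\theta}) = 0$. The vanishing of $\xi_{\theta}$ establishes (ii): the unit tangent $u^a = \xi^a/\sqrt{-\xi^b\xi_b}$ satisfies $u_{\theta}=0$, which is exactly the zero-angular-momentum condition. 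For (iii), I would compute $\xi_t = g_{t\mu}\xi^{\mu}$ and show that $\xi_a \propto (\dd t)_a$, i.e. that $\xi_r = \xi_{\theta} = 0$; since $\xi^{\mu}$ has no radial component and the metric is block-diagonal in the $r$ direction, $\xi_r = 0$ is immediate, and $\xi_{\theta}=0$ from above. Hence $\xi_a$ is proportional to $\nabla_a t$, so $\xi$ is orthogonal to constant-$t$ surfaces.

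For part (i), the key step is to evaluate the norm
\begin{equation*}
\xi^a \xi_a = g_{tt} - 2 N^{\theta} g_{t\theta} + (N^{\theta})^2 g_{\theta\theta} \, .
\end{equation*}
Substituting $g_{tt} = -N^2 + R^2(N^{\theta})^2$, $g_{t\theta} = R^2 N^{\theta}$, and $g_{\theta\theta} = R^2$, the $N^{\theta}$-dependent terms cancel and one is left with $\xi^a\xi_a = -N(r)^2$. Since $N(r)^2 = (\nu^2+3)(r-r_+)(r-r_-)/(4R^2) > 0$ throughout the exterior region $r > r_+$ (both factors $r-r_+$ and $r-r_-$ being positive and $R^2>0$ there), the norm is strictly negative, so $\xi$ is timelike everywhere outside the horizon. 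I would also confirm future-directedness by checking $g_{\mu\nu}\xi^{\mu}\eta^{\nu} < 0$ with $\eta^a = -\nabla^a t$, reusing the argument in point~4 of Section~\ref{sec:wadsbh-wads3bhsolutions}.

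I do not expect any genuine obstacle here: the result is essentially a bookkeeping exercise in the ADM decomposition, and each piece reduces to a one-line cancellation. The only point requiring mild care is ensuring the positivity of $N(r)^2$, hence of $R(r)^2$, on the entire exterior region, which follows from the explicit form of $R(r)^2$ together with the ordering $0 \leq \bar r_0 \leq r_- \leq r_+ < r$ established for the spacelike stretched black hole. The structural content of the statement is simply that the shift in the ADM form identifies $\xi$ as the hypersurface-orthogonal ZAMO congruence, with norm equal to minus the squared lapse.
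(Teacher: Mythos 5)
Your proposal is correct and follows essentially the same route as the paper: the paper likewise extracts $\xi$ from the ZAMO condition $u_{\theta}=0$ at fixed $r$ (via $u^t = g^{tt}u_t$, $u^{\theta} = g^{\theta t}u_t$, so $u^a \propto \partial_t + (g^{\theta t}/g^{tt})\partial_{\theta} = \xi^a$), and proves hypersurface-orthogonality exactly as you do, by computing $\xi_r = 0$, $\xi_{\theta} = 0$ and $\xi_t = -N^2$, whence $\xi_a \propto (\dd t)_a$. The only cosmetic difference is timelikeness — you compute the norm $\xi^a\xi_a = -N^2 < 0$ directly, while the paper cites the condition \eqref{eq:conditionalmoststationary} with $\Omega(r) = -N^{\theta}(r)$, though its computation $\xi_t = -N^2$, $\xi_{\theta}=0$ yields the same norm — and your optional future-directedness check against $\eta^a = -\nabla^a t$ is superfluous (and mildly circular, since the time orientation of the exterior region is itself fixed by declaring $\xi$ future-directed).
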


\begin{proof}
Given that the coordinate $r$ is fixed, $u^r = 0$. The other components are
\begin{equation}
u^t = g^{tt} \, u_t \, , \qquad u^{\theta} = g^{\theta t} \, u_t \, .
\end{equation}
Hence, a vector field proportional to $u^a$ is
\begin{equation}
\partial_t + \frac{g^{\theta t}}{g^{tt}} \, \partial_{\theta} = \partial_t - N^{\theta} \, \partial_{\theta} =: \xi \, .
\end{equation}
Therefore, a LNRO follows orbits of $\xi$. This vector field is timelike as $\Omega(r) = - N^{\theta}(r)$ satisfies \eqref{eq:conditionalmoststationary} for all $r > r_+$. Furthermore, we have that $\xi_r = 0$ and
\begin{align}
\xi_t &= g_{t\mu} \xi^{\mu} = g_{tt} + g_{t\theta} (- N^{\theta}) = - N^2 \, , \\
\xi_{\theta} &= g_{\theta\mu} \xi^{\mu} = g_{\theta t} + g_{\theta\theta} (-N^{\theta}) = g_{\theta \theta} N^{\theta} + g_{\theta \theta} (- N^{\theta}) = 0 \, ,
\end{align}
thus, $\xi_a \propto (dt)_a$, which shows that $\xi^a$ is perpendicular to constant-$t$ surfaces.
\end{proof}

\begin{remark}
Note that $\Omega(r) = - N^{\theta}(r)$ is such that $\Omega(r_+) = \Omega_{\mathcal{H}}$ and $\Omega(r) \to 0$ as $r \to \infty$, as we have seen in point 3 of Section~\ref{sec:wadsbh-wads3bhsolutions}.
\end{remark}

Given these remarks, one now considers two new coordinate charts and convenient timelike Killing vector fields in each of them, near the event horizon and near spatial infinity, respectively.
\begin{enumerate}
\item Consider the open set $N_{\mathcal{H}} := \{ r_+ < r < r' \}$, for some $r' < r_{\mathcal{C}}$ (the location of the speed-of-light surface, cf.~point 5 of Section~\ref{sec:wadsbh-wads3bhsolutions}). The Killing vector field
\begin{equation}
\xi_{\mathcal{H}} := \chi = \partial_t + \Omega_{\mathcal{H}} \, \partial_{\theta}
\end{equation}
(the horizon generator) is clearly timelike in $N_{\mathcal{H}}$. In this set, consider the coordinate system $(\tilde{t}, r, \tilde{\theta})$ such that $\partial_{\tilde{t}} = \xi_{\mathcal{H}}$. It follows that $\tilde{t} = t$ and $\tilde{\theta} = \theta - \Omega_{\mathcal{H}} t$. Furthermore, for a mode solution $\Phi_{\omega k}$ of frequency $\omega$,
\begin{equation}
\frac{\partial}{\partial \tilde{t}} \Phi_{\omega k}(\tilde{t},r,\tilde{\theta}) = - i \tilde{\omega} \, \Phi_{\omega k}(\tilde{t},r,\tilde{\theta}) \, ,
\end{equation}
where $\tilde{\omega} = \omega - k \Omega_{\mathcal{H}}$ (cf.~\eqref{eq:omegatilde}). This is just the co-rotating coordinate system introduced in Section~\ref{sec:rotatingBHgeneral}.
\item Fix $r = r_*$, which can be taken to be very large, so that $r_* \gg r_+$. Define the Killing vector field
\begin{equation}
\xi_* := \partial_t + \Omega_* \, \partial_{\theta} \, ,
\end{equation}
with $\Omega_* := - N^{\theta}(r_*)$. There exists a small enough neighbourhood $N_*$ of $r=r_*$ such that $\xi_*$ is timelike in $N_*$, since the spacetime is locally stationary. In this neighbourhood, consider a new coordinate system $(t_*, r, \theta_*)$, such that $\partial_{t_*} = \xi_*$. It follows that $t_* = t$ and $\theta_* = \theta - \Omega_* t$. Furthermore, for a mode solution $\Phi_{\omega k}$ of frequency $\omega$,
\begin{equation}
\frac{\partial}{\partial t_*} \Phi_{\omega k}(t_*, r, \theta_*) = - i \omega_* \, \Phi_{\omega k}(t_*, r, \theta_*) \, ,
\end{equation}
where
\begin{equation}
\omega_* := \omega - k \Omega_* \, .
\end{equation}
\end{enumerate}

\begin{remark}
The idea behind the definition of the Killing vector field $\xi_*$ is the fact that one cannot use the everywhere spacelike Killing vector field $\partial_t$ to define positive frequency modes in a neighbourhood of spatial infinity. For the purpose of checking the existence of superradiant modes, it is enough to consider a neighbourhood of $r = r_*$, which can be taken to be as far from the black hole as desired.
\end{remark}

We now have all the necessary ingredients to pick appropriate notions of positive frequency near the event horizon and near spatial infinity. We adopt the terminology of Ref.~\cite{Frolov:1989jh} concerning ``near-horizon'' and ``distant'' observers.

\begin{enumerate}
\item For the up modes, one chooses to have positive frequency as measured by a LNRO close to the horizon (the `near-horizon observer' viewpoint), i.e.~positive frequency is defined with respect to $\tilde{t}$, which requires $\tilde{\omega} > 0$. 
\item For the in modes, one chooses to have positive frequency as measured by a LNRO near spatial infinity (the `distant observer' viewpoint), i.e.~positive frequency is defined with respect to $t_*$, which requires $\omega_* > 0$. If $\omega_m > 0$, one must additionally have $\omega > \omega_m$ for the in mode to exist, so that the positive frequency condition altogether is $\omega_* > \omega_m$. 
\end{enumerate}

The main conclusion of this section, expressed in the following theorem, is that classical superradiance is present in the spacelike stretched black hole.

\begin{theorem} \label{thm:existenceofsuperradiance}
Superradiant mode solutions exist for a massive scalar field on the background of a spacelike stretched black hole.
\end{theorem}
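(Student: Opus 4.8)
The plan is to reduce the existence claim to the superradiance criterion already established in Proposition~\ref{prop:superradiantcondition}, which states that a given in or up mode is superradiant precisely when $\tilde{\omega}\hat{\omega} < 0$. It therefore suffices to exhibit a single admissible (positive frequency) mode for which $\tilde{\omega}$ and $\hat{\omega}$ carry opposite signs. I would work with the up modes, since their positive frequency condition, as fixed in the discussion preceding the theorem, is simply $\tilde{\omega} > 0$ (positive frequency with respect to the near-horizon time coordinate $\tilde{t}$). Under this choice, the superradiance condition $\tilde{\omega}\hat{\omega} < 0$ collapses to the single requirement $\hat{\omega} < 0$.

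By the definition of $\hat{\omega}$ in \eqref{eq:hatomegadef}, one has $\hat{\omega} < 0$ exactly when $\omega < -\omega_m \le 0$; this regime is internally consistent, since then $|\omega| > \omega_m$, so that $\hat{\omega} = -\sqrt{\omega^2 - \omega_m^2}$ is real and strictly negative and the up mode with oscillatory asymptotics at infinity is genuinely defined. The task thus becomes to find an integer $k$ and a real $\omega$ satisfying simultaneously $\omega < -\omega_m$ and $\tilde{\omega} = \omega - k\Omega_{\mathcal{H}} > 0$. Here I would exploit the crucial feature of the rotating background, namely that $\Omega_{\mathcal{H}} < 0$, which follows from \eqref{eq:omegaH} together with $\nu > 1$ and $r_+ \ge r_-$. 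Writing $|\Omega_{\mathcal{H}}| = -\Omega_{\mathcal{H}} > 0$, the two constraints read
\begin{equation}
-k\,|\Omega_{\mathcal{H}}| < \omega < -\omega_m \, .
\end{equation}

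This interval is non-empty if and only if $k\,|\Omega_{\mathcal{H}}| > \omega_m$, that is $k > \omega_m / |\Omega_{\mathcal{H}}|$, which holds for every sufficiently large positive integer $k$ (note that $\omega_m$ is finite for the spacelike stretched black hole, as $\nu > 1$). Fixing such a $k$ and any $\omega$ in the above interval then produces an up mode which is of positive frequency, since $\tilde{\omega} > 0$, and superradiant, since $\tilde{\omega}\hat{\omega} < 0$. This establishes the existence of superradiant mode solutions.

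The computation itself is routine; the only genuine subtlety, and the point I would be most careful about, is the consistent bookkeeping of the positive frequency convention. Superradiance here is entirely an artefact of the mismatch between the near-horizon and distant notions of positive frequency, and the argument works precisely because the horizon angular velocity $\Omega_{\mathcal{H}}$ is nonzero, allowing the shift $-k\Omega_{\mathcal{H}}$ to reverse the sign of $\tilde{\omega}$ relative to $\omega$. As a consistency check I would verify that the same conclusion can be reached using the in modes under the distant-observer convention $\omega_* > \omega_m$, with $\omega_* = \omega - k\Omega_*$, which should single out the complementary sign regime $\hat{\omega} > 0$ with $\tilde{\omega} < 0$.
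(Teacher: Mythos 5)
Your proposal is correct and takes essentially the same route as the paper's proof: both reduce the claim to Proposition~\ref{prop:superradiantcondition} and then exhibit superradiant up modes under the near-horizon convention $\tilde{\omega} > 0$ by selecting $\omega < -\omega_m$ (the paper additionally works out the in-mode cases in detail, which you only gesture at as a consistency check). Your explicit verification that the window $-k\,|\Omega_{\mathcal{H}}| < \omega < -\omega_m$ is nonempty precisely when $k > \omega_m/|\Omega_{\mathcal{H}}|$ makes a compatibility step that the paper leaves implicit, but the underlying argument is the same.
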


\begin{proof}
Proposition \ref{prop:superradiantcondition} states that up and in modes of frequency $\omega$ are superradiant if $\tilde{\omega} \hat{\omega} < 0$. We check that this condition is indeed possible for each mode.

\begin{enumerate}
\item An up mode has $\tilde{\omega} > 0$, as measured by a LNRO near the horizon. Therefore, the mode is superradiant if and only if $\hat{\omega} < 0$. This occurs when $\omega < - \omega_m$.
\item An in mode has $\omega_* > \omega_m$, as measured by a LNRO near spatial infinity, which is equivalent to $\omega > \omega_m + k \Omega_*$. This condition does not fix the sign of $\omega$, so there are two cases to consider.
\begin{itemize}
\item[(i)] In the case $\omega > \omega_m$, the condition $\tilde{\omega} \hat{\omega} < 0$ is equivalent to
\begin{equation}
\omega_m + k \Omega_* < \omega < k \Omega_{\mathcal{H}} \, ,
\label{eq:insuperradiantcondition1}
\end{equation}
with $k < 0$.
\item[(ii)] In the case $\omega < - \omega_m$, the condition $\tilde{\omega} \hat{\omega} < 0$ cannot be satisfied unless $k > 2 \omega_m / |\Omega_*|$, in which case it is equivalent to
\begin{equation}
k \Omega_{\mathcal{H}} < \omega < - \omega_m \, .
\label{eq:insuperradiantcondition2}
\end{equation}
\end{itemize}
When either \eqref{eq:insuperradiantcondition1} or \eqref{eq:insuperradiantcondition2} is satisfied, the in mode is superradiant.
\end{enumerate}

Hence, with the above choice of viewpoints, there can be superradiant mode solutions for a massive scalar field on a spacelike stretched black hole. (Note that with the above choice the modes have positive Klein-Gordon norm.) Since the in and up modes constitute a basis with which any solution of the scalar field equation can be expressed at any point in the exterior region of the spacetime, it can be concluded that classical superradiance is present in this spacetime.
\end{proof}

\begin{remark}
Note that the parameter $\Omega_*$ can be made arbitrarily small by fixing the location of the LNRO near spatial infinity to have arbitrarily large radial coordinate. If there was a Killing vector field which was timelike in a neighbourhood of spatial infinity (i.e.~for $r > r''$ for some $r'' > r_+$), then the limit $r_* \to \infty$ could be taken (sending the LNRO to infinity) and the familiar superradiance condition $\omega_m < \omega < k \Omega_{\mathcal{H}}$ would be recovered \cite{Brito:2015oca}.
\end{remark}

\begin{remark}
As stated in Proposition~\ref{prop:superradiantcondition}, the bound state modes cannot be superradiant. Therefore, in the frequency range $|\omega| < \omega_m$ there are no superradiant modes. This is similar to the situation with the BTZ black hole when reflective boundary conditions are imposed \cite{Ortiz:2011wd}.
\end{remark}

This result is in agreement with the expectation that the behaviour of the field modes should be similar to the Kerr spacetime case, given the similar causal structure and boundary conditions that we imposed. It is also interesting to note that the situation is significantly different for the Kerr-AdS spacetime, where classical superradiance is not inevitable \cite{Winstanley:2001nx}.


\sectionmark{Quasinormal and bound state modes}  
\section{Quasinormal and bound state modes and classical linear mode stability}
\sectionmark{Quasinormal and bound state modes}
\label{sec:quasinormal-modes}

In this section, we find the quasinormal and bound state scalar field modes and use the results to discuss the classical linear mode stability of the black hole solutions.

\subsection{Quasinormal and bound state modes}
\label{sec:stability-QNandBS}

Suppose that a spacelike stretched black hole is perturbed by a massive scalar field propagating in the spacetime. Once the black hole is perturbed it responds by releasing gravitational and scalar waves in the form of characteristic \emph{quasinormal modes} of discrete complex frequencies (for recent reviews on quasinormal modes see \cite{Konoplya:2011qq,Berti:2009kk}). For a stable black hole the quasinormal modes are exponentially decaying in time; conversely, if any of the modes are increasing in time, the black hole is unstable. Moreover, as seen in the previous section, there can be superradiant modes in this spacetime. If any of these superradiant modes are localised near the event horizon in the form of \emph{bound state modes} (possibly due to a potential well in the effective potential felt by the scalar field), the repeated amplitude increases due to reflections on the walls of the potential well lead to the so-called superradiant instabilities \cite{Press:1972zz,Cardoso:2004nk,Cardoso:2005vk,Dolan:2007mj,Pani:2012vp,Dolan:2012yt,Witek:2012tr}.

The quasinormal and bound state modes are defined by appropriate boundary conditions at the horizon and at infinity. Since the system under consideration is classical, there must be no flux from the horizon, and thus one imposes that only ingoing modes are present. Furthermore, no perturbations coming in from infinity should be allowed, and hence it is required that the quasinormal modes obey outgoing boundary conditions at infinity. As for the bound state modes, since they are localized in the vicinity of the black hole, one imposes that they decrease exponentially at infinity.

These ideas can be formalised in the following two definitions.

\begin{definition}
A \emph{quasinormal} mode is a mode solution of the field equation \eqref{eq:effectiveeq} of the form \eqref{eq:fieldansatz} with the following boundary conditions:
\begin{enumerate}
\item only ingoing modes at the horizon, cf.~\eqref{eq:asymptoticmodesatthehorizon},
\begin{equation}
\varphi_{\omega k}(r_*) \sim e^{-i\tilde{\omega}r_*} \, , \qquad r_* \to - \infty \, ;
\label{eq:QNingoingcondition}
\end{equation}
\item outgoing modes at spatial infinity, cf.~\eqref{eq:asymptoticmodesatinfinitycasea},
\begin{equation}
\varphi_{\omega k}(r_*) \sim e^{i\hat{\omega}r_*} \, , \qquad r_* \to \infty \, .
\label{eq:QNoutgoingcondition}
\end{equation}
\end{enumerate}
\end{definition}

\begin{definition}
A \emph{bound state} mode is a mode solution of the field equation \eqref{eq:effectiveeq} of the form \eqref{eq:fieldansatz} with the following boundary conditions:
\begin{enumerate}
\item only ingoing modes at the horizon, cf.~\eqref{eq:asymptoticmodesatthehorizon},
\begin{equation}
\varphi_{\omega k}(r_*) \sim e^{-i\tilde{\omega}r_*} \, , \qquad r_* \to - \infty \, ;
\label{eq:BSingoingcondition}
\end{equation}
\item exponentially decreasing modes at spatial infinity, cf.~\eqref{eq:asymptoticmodesatinfinitycaseb},
\begin{equation}
\varphi_{\omega k}(r_*) \sim e^{-\hat{\varpi}r_*} \, , \qquad r_* \to \infty \, .
\label{eq:BSexponentialcondition}
\end{equation}
\end{enumerate}
\end{definition}

These boundary conditions on field modes with an $e^{-i \omega t}$ time dependence restrict the allowed frequencies $\omega$ to a discrete set of complex values. The real part of $\omega$ represents the physical frequency of the oscillation, whereas the imaginary part gives the decay (or growth) in time of the mode. This occurs because the field can escape to the black hole or to infinity. One is then faced with an eigenvalue problem in which the quasinormal or the bound state modes are the eigenmodes. By obtaining the eigenfrequencies the stability of a given mode can be inferred by the sign of the imaginary part: if the imaginary part is negative then the mode decays in time and does not create an instability.

Contrary to higher dimensional black hole spacetimes, one does not have to resort to numerical methods, since an analytical expression for the field modes \eqref{eq:exactsolutionLorentzian} is available, to which the above boundary conditions can be applied.

\begin{proposition}
The eigenfrequencies for the quasinormal and bound state modes are given as follows. 
The ``right frequencies'' $(\omega_{\pm})_n^{({\rm R})}$ are given by
\begin{equation}
(\omega_{\pm})_n^{({\rm R})} = \frac{\nu^2+3}{d^2 \delta^2 - 3(\nu^2-1)} \left\{ -d\delta \left( \frac{4kd}{\nu^2+3} + i \left( n + \frac{1}{2} \right) \right) \pm i (e - i \, {\rm sgn}(k) f)  \right\} \, ,
\label{eq:rightfrequency}
\end{equation}
where
\begin{gather}
d = \frac{1}{r_+ - r_-} \, , \qquad
\delta = 2\nu(r_+ + r_-) - 2 \sqrt{(\nu^2+3) r_+ r_-} \, , \label{eq:defddelta} \displaybreak[0]  \\
e = \sqrt{\frac{\sqrt{E^2+F^2}+E}{2}} \, , \qquad
f 	= \sqrt{\frac{\sqrt{E^2+F^2}-E}{2}} \, , \displaybreak[0] \\
E = \frac{1}{4} \left( 1 + \frac{4m^2}{\nu^2+3} \right) d^2 \delta^2 - 3(\nu^2-1) \left[ \frac{1}{4} \left( 1 + \frac{4m^2}{\nu^2+3} \right) + \left( \frac{4kd}{\nu^2+3} \right)^2 - \left( n + \frac{1}{2} \right)^2 \right] \, , \displaybreak[0] \\
F = - 3(\nu^2-1) \left( n + \frac{1}{2} \right) \frac{8kd}{\nu^2+3} \, .
\end{gather}
The ``left frequencies'' $(\omega_{\pm})_n^{({\rm L})}$ are given by
\begin{equation}
(\omega_{\pm})_n^{({\rm L})} = - i \left[ (2n+1) \nu \mp \sqrt{ 3(\nu^2-1) \left( n + \frac{1}{2} \right)^2 + \frac{\nu^2+3}{4} \left( 1 + \frac{4m^2}{\nu^2+3} \right)} \right] \, .
\label{eq:leftfrequency}
\end{equation}
In the expressions above, the ``+ solutions'' correspond to the quasinormal eigenfrequencies, whereas the ``$-$ solutions'' correspond to the bound state eigenfrequencies.  Each of the modes has two types of eigenfrequencies, which are denoted by ``right'' and ``left'' frequencies, respectively.
\end{proposition}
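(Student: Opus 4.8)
The plan is to solve the radial equation in closed form, impose the quasinormal and bound state boundary conditions on the exact hypergeometric solution, and read off the discrete frequencies from the resulting quantization conditions. We already have the general solution \eqref{eq:exactsolutionLorentzian} in terms of the variable $z = (r-r_+)/(r-r_-)$, together with the hypergeometric parameters $a,b,c$ in \eqref{eq:defabcLorentzian} and the exponents $\alpha,\beta,\gamma$. The horizon corresponds to $z \to 0$ and spatial infinity to $z \to 1$, so the strategy is to track the behaviour of the solution at each endpoint using standard connection formulae for $F(a,b;c;z)$ (Appendix~\ref{app:hypergeometric}).

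First I would analyse the near-horizon limit $z \to 0$. Since $F(a,b;c;0)=1$, the term $z^{\alpha}(1-z)^{\beta} F(a,b;c;z)$ behaves like $z^{\alpha}$, while its complex conjugate behaves like $z^{\bar\alpha} = z^{-\alpha}$ (recall $\alpha$ is purely imaginary). Comparing with the asymptotic form $e^{\pm i\tilde\omega r_*}$ in \eqref{eq:asymptoticmodesatthehorizon} and using the near-horizon relation $z \sim e^{2\kappa_+ r_*}$ that follows from \eqref{eq:tortoiseeq} and \eqref{eq:definitionz1}, I would identify the purely ingoing condition \eqref{eq:QNingoingcondition}, i.e.~\eqref{eq:BSingoingcondition}, with selecting the single solution proportional to $z^{-\alpha}$ (or $z^{\alpha}$, depending on sign conventions). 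This fixes one of the constants $A_{\omega k}, B_{\omega k}$ to zero, so that both quasinormal and bound state modes are built from the \emph{same} ingoing solution at the horizon; the two types of modes are distinguished only by the condition imposed at infinity.

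Next I would handle the limit $z \to 1$. Here I would apply the connection formula expressing $F(a,b;c;z)$ as a linear combination of $F(a,b;a+b-c+1;1-z)$ and $(1-z)^{c-a-b}F(c-a,c-b;c-a-b+1;1-z)$, whose coefficients are ratios of Gamma functions. This decomposes the solution into pieces behaving like $(1-z)^{\beta}$ and $(1-z)^{c-a-b+\beta}$ near infinity; matching these against $e^{i\hat\omega r_*}$/$e^{-i\hat\omega r_*}$ (quasinormal) or $e^{\pm\hat\varpi r_*}$ (bound state) using $1-z \sim e^{-2\kappa_- r_*}$-type behaviour at infinity. The outgoing condition \eqref{eq:QNoutgoingcondition} (resp.~the decaying condition \eqref{eq:BSexponentialcondition}) then forces the coefficient of the \emph{unwanted} asymptotic piece to vanish. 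Because that coefficient is a ratio of Gamma functions, it vanishes precisely when the argument of a Gamma function in the numerator hits a non-positive integer, say $-n$ with $n \in \mathbb{N}_0$. Writing out $a = \alpha+\beta+\gamma$ (and the analogous combinations from the connection coefficients) and setting it equal to $-n$ yields the quantization condition. Solving this condition, which is quadratic in $\omega$ once $\alpha,\beta,\gamma$ are substituted as explicit functions of $\omega$, produces two families $(\omega_\pm)_n$; these are the ``right'' and ``left'' frequencies \eqref{eq:rightfrequency} and \eqref{eq:leftfrequency}, with the ``$+$'' branch arising from the quasinormal matching and the ``$-$'' branch from the bound state matching.

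The main obstacle I anticipate is organising the algebra in the quantization condition into the precise closed forms \eqref{eq:rightfrequency}--\eqref{eq:leftfrequency}. The condition is a quadratic whose coefficients depend on $\omega$ both explicitly and through the square root $\hat\varpi$ (which itself contains $\omega^2$), so isolating $\omega$ requires carefully squaring and keeping track of the sign/branch of $\hat\varpi$ and of $\mathrm{sgn}(k)$; the appearance of the auxiliary quantities $e$ and $f$ as real and imaginary parts of $\sqrt{E + iF}$ strongly suggests that the final step is extracting $\sqrt{E^2+F^2}$ from a complexified discriminant, which is exactly where the bookkeeping is most delicate. A secondary subtlety is correctly identifying which Gamma-function argument must vanish for each endpoint and each mode type, since choosing the wrong connection coefficient would swap quasinormal with bound state conditions; I would resolve this by checking the physical requirement that quasinormal eigenfrequencies have negative imaginary part (decay), thereby fixing the sign conventions and confirming that the ``$+$'' and ``$-$'' labels are assigned consistently.
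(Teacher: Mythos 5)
Your proposal follows essentially the same route as the paper's proof: impose the ingoing condition at $z=0$ to select the single regular hypergeometric solution, apply the connection formula \eqref{eq:transformationformula} at $z=1$, and demand that the coefficient of the unwanted asymptotic piece vanish, which yields $a=-n$ or $b=-n$ for quasinormal modes and $c-a=-n$ or $c-b=-n$ for bound states, exactly as in \eqref{eq:anb} and \eqref{eq:cacbn}, from which \eqref{eq:rightfrequency} and \eqref{eq:leftfrequency} follow. One small correction: a ratio of Gamma functions such as $\Gamma(a+b-c)/\left[\Gamma(a)\Gamma(b)\right]$ vanishes when a \emph{denominator} argument hits a non-positive integer (a pole of $\Gamma$), not a numerator argument as you wrote --- your stated condition $a=-n$ is nevertheless the correct one.
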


\begin{remark}
This follows the AdS/CFT-inspired terminology \cite{Birmingham:2001hc} and the notation here follows closely the notation of Ref.~\cite{Chen:2009rf}.
\end{remark}

\begin{proof}
We start by imposing the boundary conditions which define each of these types of mode solutions. By imposing the ingoing boundary condition \eqref{eq:QNingoingcondition} and \eqref{eq:BSingoingcondition} at the horizon, one is left with
\begin{equation}
\phi_{\omega k}(z) = A_{\omega k} \, z^{\alpha} (1-z)^{\beta} F(a,b,c;z) \, .
\end{equation}
In order to impose the boundary condition at infinity, one uses the transformation formula \eqref{eq:transformationformula} of Appendix~\ref{app:hypergeometric}, resulting in
\begin{align}
\phi_{\omega k}(z) &= A_{\omega k} \, \Gamma(c) \, z^{\alpha} \Bigg[ (1-z)^{\beta} \frac{\Gamma(c-a-b)}{\Gamma(c-a) \Gamma(c-b)} F(a,b;a+b-c+1;1-z) \notag \\
&\qquad\qquad\qquad\; + (1-z)^{\overline{\beta}} \, \frac{\Gamma(a+b-c)}{\Gamma(a) \Gamma(b)} F(c-a,c-b;c-a-b+1;1-z) \Bigg] \, .
\label{eq:QNandBSmodes}
\end{align}
Note that at infinity,
\begin{equation}
(1-z)^{\beta} \sim r^{-1/2} e^{i \hat{\omega} r_*} = r^{-1/2} e^{\hat{\varpi} r_*} \, , \qquad
(1-z)^{\overline{\beta}} \sim r^{-1/2} e^{-i \hat{\omega} r_*} = r^{-1/2} e^{-\hat{\varpi} r_*} \, ,
\label{eq:behaviouratinfinity}
\end{equation}
cf.~\eqref{eq:hatomegadef} and \eqref{eq:hatvarpidef}. The frequency $\omega$ is complex and can be written as $\omega = \omega_{\text{R}} + i \omega_{\text{I}}$. One can use the $(t,\theta) \to (-t,-\theta)$ symmetry to only consider solutions with $\omega_{\text{R}} \geq 0$ and thus $\ReC [\hat\omega] \geq 0$, $\ReC [\hat\varpi] \geq 0$. 

For a quasinormal mode, condition \eqref{eq:QNoutgoingcondition} at infinity implies that the $(1-z)^{\overline{\beta}}$ term in \eqref{eq:QNandBSmodes} must vanish. This happens when
\begin{equation}
a = -n \quad \text{or} \quad b = - n \, ,
\label{eq:anb}
\end{equation}
where $n \in \mathbb{N}_0$ is the overtone number.

On the other hand, for a bound state mode, condition \eqref{eq:BSexponentialcondition} implies that the $(1-z)^{\beta}$ term in \eqref{eq:QNandBSmodes} must vanish. This happens when
\begin{equation}
c - a = -n \quad \text{or} \quad c - b = - n \, .
\label{eq:cacbn}
\end{equation}

Since $a$, $b$, and $c$ are functions of $\omega$ [from \eqref{eq:defabcLorentzian}], these relations imply that there is a discrete set of frequencies $\{\omega_n\}$ for which the boundary conditions are satisfied. These frequencies are given by \eqref{eq:rightfrequency} and \eqref{eq:leftfrequency}, each corresponding to the two possible relations in \eqref{eq:anb} and \eqref{eq:cacbn}.
\end{proof}

\begin{remark}
The ``left'' frequencies \eqref{eq:leftfrequency} have no real part, therefore $\hat{\varpi}$ in \eqref{eq:behaviouratinfinity} is real and there are no ``left'' quasinormal modes (in the sense that they do not have the expected outgoing wavelike behavior at infinity). Only the bound state solutions are relevant for the ``left'' modes.
\end{remark}

\begin{remark}
Note that, although there is a region of the parameter space for which the imaginary part of the ``right'' quasinormal frequencies \eqref{eq:rightfrequency} is positive, it is easy to check that in this case either the mode is not outgoing at infinity or it decreases exponentially at infinity and hence is not a quasinormal mode. Otherwise, both the quasinormal and bound state frequencies have a negative imaginary part.
\end{remark}

\begin{remark}
It should also be noted that the bound state modes presented here are called quasinormal modes in some of the literature \cite{Oh:2008tc,Chen:2009rf,Chen:2009hg}. This is due to the adoption of different boundary conditions at infinity, motivated by AdS/CFT purposes \cite{Konoplya:2011qq,Berti:2009kk}. In fact, in the BTZ limit $\nu \to 1$, the bound state frequencies reduce to the quasinormal frequencies of the BTZ black hole in a rotating frame \cite{Cardoso:2001hn,Birmingham:2001hc}. This is expected since the BTZ black hole quasinormal modes must vanish at infinity.
\end{remark}

\subsection{Classical linear mode stability}

Before concluding about the stability of the spacelike stretched black hole, we describe the usual notions of ``stability'' present in the literature.

\subsubsection{Linear mode, linear and non-linear stability}
\label{sec:typestability}

At this point, it is useful to clarify what is meant by the ``linear mode stability'' of a black hole. We emphasise that the notion of stability discussed in this section is classical in nature, i.e.~no quantum effects are taken into account. A good summary of these concepts can be found in \cite{Dafermos:2010hd}.

We distinguish between three notions of classical stability:
\begin{enumerate}
\item \emph{Linear mode stability}. In this case, we study individual mode solutions of a linear field equation (or the linearisation of a non-linear field equation) and not general solutions. For a scalar field on a spacelike stretched black hole, we consider mode solutions of the form of \eqref{eq:fieldansatz} and say that a mode with finite energy is unstable if $\ImC \omega > 0$.

This analysis is possible if the spacetimes involved have enough symmetries generated by Killing vectors. In the case of scalar field mode perturbations, the mode stability of Schwarszchild was shown in \cite{Regge:1957td} and of Kerr in \cite{Whiting:1988vc}.
\item \emph{Linear stability}. Stability of individual modes does not necessarily imply stability of the superposition of infinitely many modes. In particular, stability of individual modes is not incompatible with general linear perturbations with finite initial energy growing without bound in time. Therefore, to prove linear stability, we need to find bounds for quantities involving the fields (the so-called ``energy-type quantities'', such as the energy of the field).

For the case of a scalar field perturbation on Schwarszchild, it was shown in \cite{Wald1979JMP,Kay:1987ax} that such a bound can be found for general solutions. A similar result for Kerr was only obtained very recently \cite{Dafermos:2014cua}.
\item \emph{Non-linear stability}. In the context of black hole solutions to general relativity (or other gravity theory), non-linear stability usually refers to the dynamical stability of those solutions to the Cauchy problem associated with the Einstein's equations. The only currently existent proof is the non-linear stability of Minkowski spacetime \cite{christodoulou2014global}. In the non-linear case, finding bounds for energy-type quantities is not enough to prove stability and we also need to find decay bounds, showing that those quantities are bounded by a fixed decaying function. This is the only known mechanism for non-linear stability.
\end{enumerate}

In the following, we are only concerned with linear \emph{mode} stability of the spacelike stretched black hole to massive scalar field \emph{mode} perturbations and we do not attempt to prove its full linear stability.

\vspace{2ex}
\subsubsection{Mode stability of the spacelike stretched black hole}

Having clarified what is meant by ``linear mode stability'', we now verify if the spacelike stretched black hole is stable in this sense. By analysing \eqref{eq:rightfrequency} and \eqref{eq:leftfrequency} for the quasinormal and bound state frequencies and taking into account the remarks above, one sees that both frequencies have a negative imaginary real part, and therefore these modes are classically stable. We have then proved the following theorem:

\begin{theorem} \label{thm-stability}
The spacelike stretched black hole is classically stable to massive scalar field mode perturbations.
\end{theorem}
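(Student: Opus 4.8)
The statement I need to prove is that the warped $\mathrm{AdS}_3$ (spacelike stretched) black hole is classically stable to massive scalar field mode perturbations. By the definition of linear mode stability given in Section 6.3.2, this amounts to showing that every quasinormal mode and every bound state mode has $\operatorname{Im}\omega < 0$, so that no finite-energy mode grows in time. The essential input is already assembled: the closed-form quasinormal and bound state eigenfrequencies in equations \eqref{eq:rightfrequency} (the ``right'' frequencies) and \eqref{eq:leftfrequency} (the ``left'' frequencies), together with the three remarks immediately following their derivation. So the proof is essentially a careful sign analysis of these explicit expressions, and the plan is to organize it into the four disjoint cases: right quasinormal, right bound state, left quasinormal, left bound state.

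The plan is as follows. First I would dispose of the ``left'' sector, which is the easiest. By \eqref{eq:leftfrequency} the left frequencies $(\omega_\pm)_n^{(\mathrm{L})}$ are purely imaginary, so (as noted in the first remark) they have no oscillatory behaviour at infinity and therefore cannot be genuine quasinormal modes; only the bound state branch is relevant. For that branch I would check that the bracketed quantity in \eqref{eq:leftfrequency}, namely
\begin{equation*}
(2n+1)\nu \;-\; \sqrt{3(\nu^2-1)\left(n+\tfrac12\right)^2 + \tfrac{\nu^2+3}{4}\left(1+\tfrac{4m^2}{\nu^2+3}\right)}\,,
\end{equation*}
is strictly positive for $\nu>1$, $n\in\mathbb{N}_0$, and $m^2 \ge -\tfrac{\nu^2+3}{4}$; since this is multiplied by $-i$, positivity of the bracket is exactly $\operatorname{Im}\omega<0$. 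This reduces to comparing $(2n+1)^2\nu^2 = 4\nu^2(n+\tfrac12)^2$ against the radicand, i.e. showing $4\nu^2(n+\tfrac12)^2 > 3(\nu^2-1)(n+\tfrac12)^2 + \tfrac{\nu^2+3}{4}(1+\tfrac{4m^2}{\nu^2+3})$, which rearranges to $(\nu^2+3)(n+\tfrac12)^2 > \tfrac{\nu^2+3}{4} + m^2$; using the worst case $m^2 \ge -\tfrac{\nu^2+3}{4}$ and $n\ge 0$ this is immediate, so the left bound state modes decay.

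Next I would treat the ``right'' sector using \eqref{eq:rightfrequency}. Here the imaginary part of $(\omega_\pm)_n^{(\mathrm R)}$ comes from two pieces: the explicit term $-d\delta\, i(n+\tfrac12)$ and the term $\pm i(e - i\,\mathrm{sgn}(k)f)$. I would extract $\operatorname{Im}\omega$ in closed form in terms of the positive quantities $d,\delta,e,f$ defined in \eqref{eq:defddelta} and the lines following it, keeping track of the overall prefactor $(\nu^2+3)/(d^2\delta^2-3(\nu^2-1))$ and its sign. The sign of $\operatorname{Im}\omega$ then has to be controlled case by case according to $\mathrm{sgn}(k)$ and the $\pm$ branch. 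Crucially, I would invoke the second and third remarks after the proposition: the third tells me that, apart from a region of parameter space where the ``$+$'' (quasinormal) frequencies would have positive imaginary part, all right frequencies have negative imaginary part; and in that exceptional region one must separately verify — by re-examining the boundary behaviour \eqref{eq:behaviouratinfinity} — that the mode is \emph{not} actually outgoing at infinity (or decays exponentially there), and hence fails to qualify as a quasinormal mode at all. This is the delicate step: the claim is not that the formula \eqref{eq:rightfrequency} always yields $\operatorname{Im}\omega<0$, but that whenever it does not, the corresponding solution is disqualified by the definition of a quasinormal/bound state mode.

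I expect this last point to be the main obstacle. The algebraic sign analysis of $\operatorname{Im}\,(\omega_\pm)_n^{(\mathrm R)}$ through the nested radicals $e$ and $f$ is routine but error-prone, and the genuinely subtle part is the consistency check between the \emph{formal} root of \eqref{eq:anb}–\eqref{eq:cacbn} and the \emph{physical} boundary condition. Concretely, when $\operatorname{Re}\hat\omega$ and $\operatorname{Im}\omega$ have signs that make $e^{i\hat\omega r_*}$ decay rather than represent a purely outgoing wave, the root does not correspond to an admissible quasinormal mode; I would make this precise by determining, from $\operatorname{Re}[\hat\omega]\ge 0$ and $\operatorname{Re}[\hat\varpi]\ge 0$ (imposed in the proof of the proposition via the $(t,\theta)\to(-t,-\theta)$ symmetry), exactly which branch of \eqref{eq:behaviouratinfinity} applies. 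Once I confirm that every \emph{admissible} root satisfies $\operatorname{Im}\omega<0$, the theorem follows: no finite-energy quasinormal or bound state mode grows in time, establishing linear mode stability in the sense of Section 6.3.2.
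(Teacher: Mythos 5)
Your overall strategy --- a case-by-case sign analysis of the closed-form eigenfrequencies \eqref{eq:rightfrequency} and \eqref{eq:leftfrequency}, with formal roots disqualified whenever they fail the defining boundary conditions --- is exactly the paper's argument, and your treatment of the ``right'' sector matches the paper's remarks, including the correct identification of the delicate step: in the parameter region where $\ImC \, (\omega_+)_n^{({\rm R})} > 0$, one must show the corresponding solution is either not outgoing at infinity or exponentially decaying there, hence not an admissible quasinormal mode.

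Your ``left'' sector verification, however, contains a genuine error. In \eqref{eq:leftfrequency} the bound state eigenfrequencies are the ``$-$'' solutions, for which the $\mp$ takes the \emph{plus} sign: $(\omega_-)_n^{({\rm L})} = -i\bigl[(2n+1)\nu + \sqrt{\cdots}\,\bigr]$. That bracket is a sum of manifestly positive terms (the radicand is non-negative precisely when $m^2 \geq -\tfrac{\nu^2+3}{4}$), so $\ImC \, \omega < 0$ is immediate with no comparison of magnitudes needed. The bracket you analyse, $(2n+1)\nu - \sqrt{\cdots}$, belongs to the quasinormal (``$+$'') branch, which the first remark has already disqualified for left modes (purely imaginary frequency, real $\hat{\varpi}$, so the would-be mode grows as $e^{\hat{\varpi} r_*}$ at infinity rather than being outgoing). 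Worse, your claimed positivity of that bracket is false: your own rearrangement gives $(\nu^2+3)\bigl(n+\tfrac{1}{2}\bigr)^2 > \tfrac{\nu^2+3}{4} + m^2$, and at $n=0$ this reads $0 > m^2$, which fails for every non-negative effective squared mass --- e.g.\ $\nu = 1.2$, $m^2 = 1$, $n = 0$ gives $(2n+1)\nu = 1.2$ against $\sqrt{2.44} \approx 1.56$, so the bracket is negative. The worst case for your inequality is \emph{large} $m^2$, not the Breitenlohner--Freedman-type lower bound you invoke, so the claim is not ``immediate'' but wrong. Had the bound states genuinely lived on that branch, your computation would exhibit an instability; the theorem survives only because the branch assignment is the other way around. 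Correct the reading of $\mp$ and the left sector becomes trivial, after which the rest of your argument goes through as in the paper.
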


In particular, there are \emph{no} superradiant instabilities, even though superradiant modes can exist in this spacetime, as it was shown in Theorem~\ref{thm:existenceofsuperradiance}. This is related to the fact that the effective potential $V_{\omega k}(r)$ does not have a potential well where these superradiant modes could be localized, as illustrated in the plots of Fig.~\ref{fig:potentialmass}. The absence of superradiant instabilities is the main conceptual difference in classical scalar field theory between the spacelike stretched black hole and  Kerr \cite{Cardoso:2004nk,Cardoso:2005vk,Dolan:2007mj,Pani:2012vp,Dolan:2012yt,Witek:2012tr}.

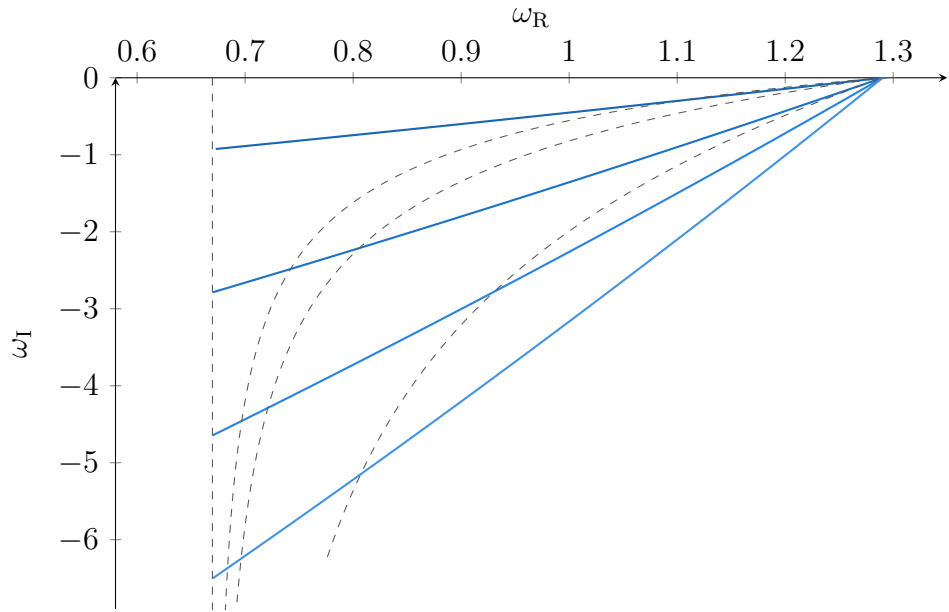
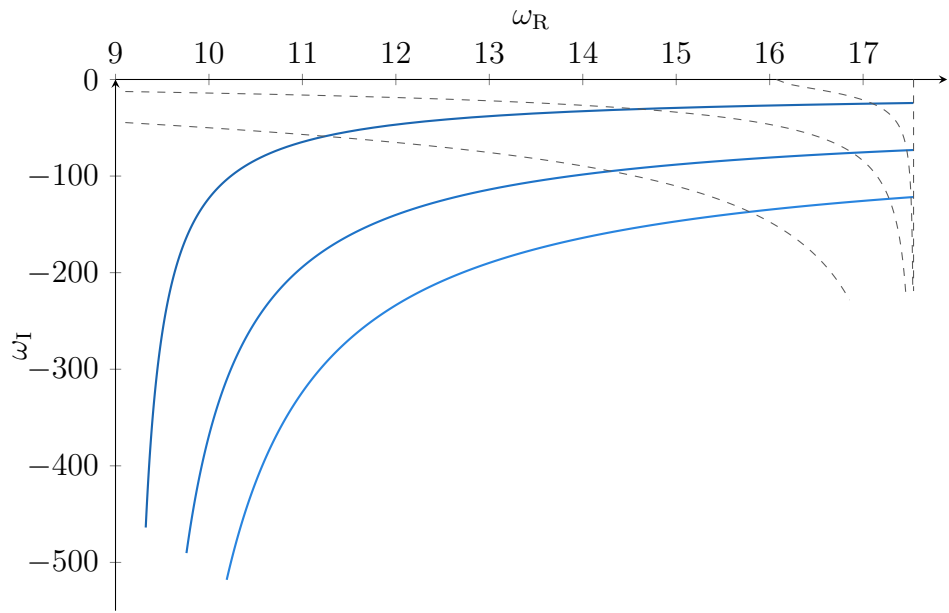
\begin{figure}[H]
\begin{center}
\subfigure[\; Quasinormal frequencies.]{
\begin{tikzpicture}
\begin{axis}[
	x post scale=1.3,
	xlabel={$\omega_{\rm R}$},
	ylabel={$\omega_{\rm I}$},
	axis x line=top,
	axis y line=left,
	xmin=0.58, xmax=1.35,
	ymin=-6.9, ymax=0,
]
%
%
\addplot[smooth, thick, no markers, blue3] 
	table {data/qnmodes-0.txt}
	[xshift=-20pt]
		node[pos=0]{{\small $n=0$}}
	;
\addplot[smooth, thick, no markers, blue4] 
	table {data/qnmodes-1.txt}
	[xshift=-20pt]
		node[pos=0]{{\small $n=1$}}
	;
\addplot[smooth, thick, no markers, blue5] 
	table {data/qnmodes-2.txt}
	[xshift=-20pt]
		node[pos=0]{{\small $n=2$}}
	;
\addplot[smooth, thick, no markers, blue6] 
	table {data/qnmodes-3.txt}
	[xshift=-20pt]
		node[pos=0]{{ \small $n=3$}}
	;
%
%
\addplot[smooth, no markers, gray!70!black, dashed] 
	table {data/qnmodes-mneg.txt}
	[xshift=30pt]
		node[pos=0.06]{{\footnotesize $m^2=-\frac{\nu^2+3}{4}$}}
	;
\addplot[smooth, no markers, gray!70!black, dashed] 
	table {data/qnmodes-m0.txt}
	[xshift=-25pt]
		node[pos=0.32]{{\footnotesize $m^2=0$}}
	;
\addplot[smooth, no markers, gray!70!black, dashed] 
	table {data/qnmodes-m1.txt}
	[xshift=23pt]
		node[pos=0.6]{{\footnotesize $m^2=1$}}
	;
\addplot[smooth, no markers, gray!70!black, dashed] 
	table {data/qnmodes-m10.txt}
	[xshift=23pt]
		node[pos=0.98]{{\footnotesize $m^2=10$}}
	;
\end{axis}
\end{tikzpicture}
}
\subfigure[\; Bound state frequencies.]{
\begin{tikzpicture}
\begin{axis}[
	x post scale=1.3,
	xlabel={$\omega_{\rm R}$},
	ylabel={$\omega_{\rm I}$},
	axis x line=top,
	axis y line=left,
	xmin=9, xmax=17.9,
	ymin=-550, ymax=0,
]
%
%
\addplot[smooth, thick, no markers, blue3] 
	table {data/bsmodes-0.txt}
	[xshift=-20pt]
		node[pos=0.18]{{\small $n=0$}}
	;
\addplot[smooth, thick, no markers, blue4] 
	table {data/bsmodes-1.txt}
	[xshift=-20pt]
		node[pos=0.27]{{\small $n=1$}}
	;
\addplot[smooth, thick, no markers, blue5] 
	table {data/bsmodes-2.txt}
	[xshift=-20pt]
		node[pos=0.3]{{\small $n=2$}}
	;
%
%
\addplot[smooth, no markers, gray!70!black, dashed] 
	table {data/bsmodes-mneg.txt}
	[xshift=0pt, yshift=1pt]
		node[pos=1,pin=below:{\footnotesize $m^2=-\frac{\nu^2+3}{4}$ \hspace*{25pt}}]{}
	;
\addplot[smooth, no markers, gray!70!black, dashed] 
	table {data/bsmodes-m0.txt}
	[xshift=-20pt]
		node[pos=0.2]{{\footnotesize $m^2=0$}}
	;
\addplot[smooth, no markers, gray!70!black, dashed] 
	table {data/bsmodes-m10.txt}
	[xshift=-50pt]
		node[pos=0.22]{{\footnotesize $m^2=10$}}
	;
\addplot[smooth, no markers, gray!70!black, dashed] 
	table {data/bsmodes-m100.txt}
	[xshift=-60pt]
		node[pos=0.25]{{\footnotesize $m^2=100$}}
	;
\end{axis}
\end{tikzpicture}
}
\caption[Eingenfrequencies in the complex plane for a spacelike stretched black hole.]{\label{fig:complexplane} Eigenfrequencies in the complex plane for a spacelike stretched black hole with $r_+ = 5$, $r_- = 0.5$, and $\nu=1.2$ and a scalar field with $k = -1$ and varying $m^2$. The different solid lines represent the eigenfrequencies for different overtone numbers $n$, and the dotted lines are lines of constant $m^2$.}
\end{center}
\end{figure}

In Fig.~\ref{fig:complexplane}, we plot the quasinormal and bound state frequencies in the complex plane for varying squared mass $m^2$. For a scalar field with a given $m^2$, there is a discrete set of complex eigenfrequencies for both the quasinormal and bound state modes at the intersection of the dotted and solid curves. It is clear the discrete nature of the allowed frequencies and the fact that their imaginary part is always negative. For the quasinormal modes the real part of the frequency increases as we consider scalar fields of larger $m^2$, while the imaginary part decreases. The bound $m^2 \geq -\frac{\nu^2+3}{4}$ is a consequence of the constraint on the effective potential at infinity, as noted in Remark~\ref{rem:omegam}. For each overtone number $n$ there is a maximum value of $m^2$ beyond which the corresponding quasinormal mode ceases to exist. This behaviour is similar to that of a massive scalar field in Schwarzschild and Kerr spacetimes \cite{Simone:1991wn}, as expected. Finally, the real and imaginary parts of the bound state mode frequencies are generally larger in absolute value than those of the quasinormal modes, but no growing modes are present.


\section{Case with a mirror-like boundary}
\label{sec:stability-mirror}

As seen above, massive scalar fields propagating in spacelike stretched black holes do not give rise to classical instabilities. In particular, there are no superradiant bound state modes, as the effective potential never develops a potential well. We now investigate whether these properties persist when a mirror-like, timelike boundary is introduced outside the event horizon, as discussed in Section~\ref{sec:qftcst-stwithboundaries}. One reason to consider this situation is that a ``mirror wall'' in the effective potential might give rise to a superradiant ``black hole bomb'' instability \cite{Press:1972zz}, as is shown to happen for a massless scalar field in Kerr \cite{Cardoso:2004nk}, even though there are no superradiant instabilities when no mirror is present. Another reason is that to treat the quantised scalar field, the existence of a speed of light surface implies that there is no well defined Hartle-Hawking state and one way to solve this problem is precisely to add a mirror between the horizon and the speed of light surface, as discussed in Section~\ref{sec:qftcst-stwithboundaries}.

Suppose then that a timelike boundary $\mathcal{M}$ is introduced at the radius $r=r_{\mathcal{M}}$, such that $r_+ < r_{\mathcal{M}} < \infty$. We impose Dirichlet boundary conditions at the boundary:
\begin{equation}
\Phi(t,z_{\mathcal{M}},\theta) = A_{\omega k} \, e^{-i\omega t + i k \theta} \, z^{\alpha}_{\mathcal{M}} (1-z_{\mathcal{M}})^{\beta} F(a,b,c;z_{\mathcal{M}}) = 0 \, ,
\end{equation}
where $z_{\mathcal{M}} = (r_{\mathcal{M}} - r_+)/(r_{\mathcal{M}} - r_-)$, cf.~\eqref{eq:definitionz1}. The bound state modes now have eigenfrequencies $\omega_{\mathcal{M}}$ determined by
\begin{equation}
z^{\alpha(\omega_{\mathcal{M}})}_{\mathcal{M}} (1-z_{\mathcal{M}})^{\beta(\omega_{\mathcal{M}})} F(a(\omega_{\mathcal{M}}),b(\omega_{\mathcal{M}}),c(\omega_{\mathcal{M}});z_{\mathcal{M}}) = 0 \, .
\end{equation}
Unfortunately, this equation cannot be analytically solved for $\omega_{\mathcal{M}}$, so these eigenfrequencies are found numerically, by truncating the hypergeometric series (see Appendix~\ref{app:hypergeometric}) to the desired accuracy and using {\sc Mathematica}'s root finding algorithm. A check on the numerics is done by considering the limit $z_{\mathcal{M}} \to 1$ ($r_{\mathcal{M}} \to + \infty$), in which $\omega_{\mathcal{M}}$ approaches the previously derived bound state frequency $\omega_-$ without the boundary. Then continuity can be used to obtain the eigenfrequencies for any value of $z_{\mathcal{M}} \in (0,1)$.

Even though no explicit expression for the frequencies is available, it is possible to obtain a useful piece of information by using the following heuristic argument. On the one hand, one can only expect superradiant instabilities if the frequencies of the bound state modes are such that $\omega_{\rm R} \lesssim \Omega_{\mathcal{H}}$, cf.~Eq.~\eqref{eq:insuperradiantcondition1}, or, in other words, if their wavelengths are $\lambda \gtrsim \Omega_{\mathcal{H}}^{-1}$. On the other hand, a mirror at $r = r_{\mathcal{M}}$ can only ``see'' these modes if $r_{\mathcal{M}} \gtrsim \lambda \gtrsim \Omega_{\mathcal{H}}^{-1}$. Therefore, superradiant instabilities, if they exist, can only occur if the mirror is placed beyond a critical radius which depends on the parameters of the spacetime. If one is not able to find any instabilities beyond this critical radius then one can assert with confidence that there are no superradiant instabilities wherever the timelike boundary is placed.

In Figs.~\ref{fig:mirrorsize}--\ref{fig:mirrorn} we present the results for the real and imaginary parts of the ``right'' eigenfrequencies $\omega_{\mathcal{M}}$ as functions of the mirror's position for selected values of the parameters. Note that only negative values of $k$ are considered in these examples, since, by \eqref{eq:rightfrequency}, $(\omega_-)_n^{({\rm R})}(k) = \overline{(\omega_-)_n^{({\rm R})}(-k)}$, and thus it suffices to consider modes with $\omega_{\rm R} \geq 0$.
\begin{figure}[H]
\begin{center}
\begin{tikzpicture}
\begin{axis}[
	width=8.5cm,
	x post scale=1.2,
	xlabel={$z_{\mathcal{M}}$},
	ylabel={$\ReC \, [\omega_{\mathcal{M}}]$},
	axis lines=left,
	xmin=0.6, xmax=1,
	ymin=0, ymax=2.9,
	max space between ticks=50pt,
]
\addplot[smooth, no markers, blue5] 
	table {data/bsmodesmirror-1-1r.txt}
	[xshift=5pt]
		node[pos=0.3,pin=100:{{\small $r_+=5, \, r_-=0.5$}}]{}
	;
\addplot[smooth, no markers, blue3] 
	table {data/bsmodesmirror-1-2r.txt}
	[xshift=5pt]
		node[pos=0.45,pin={[pin distance=20pt]120:{{\small $r_+=7, \, r_-=0.7$}}}]{}
	;
\addplot[smooth, no markers, blue1] 
	table {data/bsmodesmirror-1-3r.txt}
	[xshift=7pt]
		node[pos=0.77,pin={[pin distance=25pt]130:{{\small $r_+=10, \, r_-=1$}}}]{}
	;
\end{axis}
\end{tikzpicture}
\begin{tikzpicture}
\begin{axis}[
	width=8.5cm,
	x post scale=1.2,
	xlabel={$z_{\mathcal{M}}$},
	ylabel={$\ImC \, [\omega_{\mathcal{M}}]$},
	axis lines=left,
	xmin=0.6, xmax=1,
	ymin=-8.15, ymax=-7.8,
	max space between ticks=50pt,
]
\addplot[smooth, no markers, blue5] 
	table {data/bsmodesmirror-1-1i.txt}
	[xshift=5pt]
		node[pos=0.48,pin=100:{{\small $r_+=5, \, r_-=0.5$}}]{}
	;
\addplot[smooth, no markers, blue3] 
	table {data/bsmodesmirror-1-2i.txt}
	[xshift=6pt]
		node[pos=0.3,pin={[pin distance=35pt]95:{{\small $r_+=7, \, r_-=0.7$ }\hspace*{15pt}}}]{}
	;
\addplot[smooth, no markers, blue1] 
	table {data/bsmodesmirror-1-3i.txt}
	[xshift=10pt]
		node[pos=0.29,pin={[pin distance=45pt]95:{{\small $r_+=10, \, r_-=1$ \hspace*{20pt}}}}]{}
	;
\end{axis}
\end{tikzpicture}
\caption[``Right'' frequencies as functions of the mirror's location for selected values of $r_+$ and $r_-$.]{\label{fig:mirrorsize} ``Right'' frequencies as functions of the mirror's location for selected values of $r_+$ and $r_-$, with fixed $\nu=1.2$, $n=0$, $k=-1$, and $m=0$. The real part of the frequency is such that $\omega_{\text{R}} \to \ReC  \big[(\omega_-)_0^{({\rm R})} \big]$ in \eqref{eq:rightfrequency} as $z_{\mathcal{M}} \to 1$ and $\omega_{\text{R}} \to k \Omega_{\mathcal{H}}$ as $z_{\mathcal{M}} \to 0$. $k \Omega_{\mathcal{H}}$ equals 0.2307, 0.1648, and 0.1154 in the cases $r_+ = 5$, $r_+=7$, and $r_+=10$, respectively. The imaginary part of the frequency is such that $\omega_{\text{I}} \to \ImC  \big[(\omega_-)_0^{({\rm R})} \big]$ in \eqref{eq:rightfrequency} as $z_{\mathcal{M}} \to 1$.}
\end{center}
\end{figure}
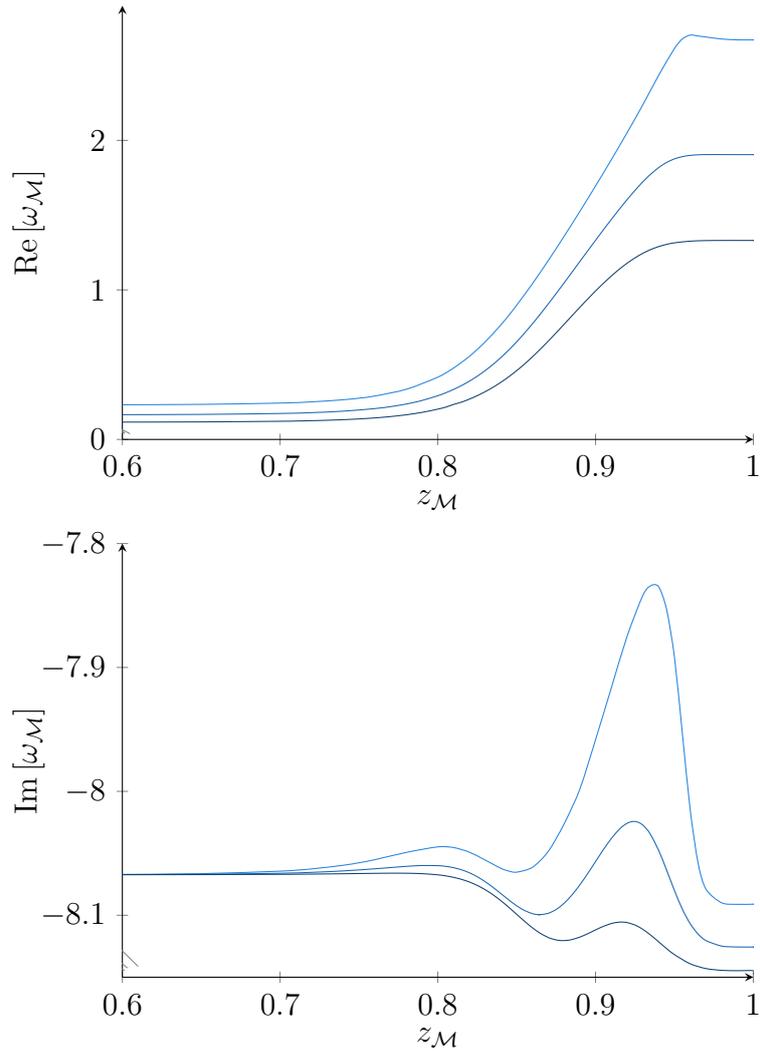
\begin{figure}[H]
\begin{center}
\begin{tikzpicture}
\begin{axis}[
	width=8.5cm,
	x post scale=1.2,
	xlabel={$z_{\mathcal{M}}$},
	ylabel={$\ReC \, [\omega_{\mathcal{M}}]$},
	axis lines=left,
	xmin=0.7, xmax=1,
	ymin=0, ymax=3.4,
	max space between ticks=50pt,
]
\addplot[smooth, no markers, blue1] 
	table {data/bsmodesmirror-1-1r.txt}
	[xshift=5pt, yshift=-2pt]
		node[pos=0.035,pin={[pin distance=10pt]90:{{\small $m=0$}}}]{}
	;
\addplot[smooth, no markers, blue3] 
	table {data/bsmodesmirror-2-1r.txt}
	[xshift=5pt]
		node[pos=0.09,pin=150:{{\small $m=1$}}]{}
	;
\addplot[smooth, no markers, blue4] 
	table {data/bsmodesmirror-2-2r.txt}
	[xshift=5pt]
		node[pos=0.28,pin=160:{{\small $m=1.5$}}]{}
	;
\addplot[smooth, no markers, blue6] 
	table {data/bsmodesmirror-2-3r.txt}
	[xshift=6pt]
		node[pos=0.77,pin={[pin distance=20pt]130:{{\small $m=2$}}}]{}
	;
\end{axis}
\end{tikzpicture}
\begin{tikzpicture}
\begin{axis}[
	width=7.1cm,
	x post scale=1.5,
	xlabel={$z_{\mathcal{M}}$},
	ylabel={$\ImC \, [\omega_{\mathcal{M}}]$},
	axis lines=left,
	xmin=0.7, xmax=1,
	ymin=-8.76, ymax=-7.81,
	max space between ticks=50pt,
]
\addplot[smooth, no markers, blue1] 
	table {data/bsmodesmirror-1-1i.txt}
	[xshift=5pt]
		node[pos=0.6,pin=170:{{\small $m=0$}}]{}
	;
\addplot[smooth, no markers, blue3] 
	table {data/bsmodesmirror-2-1i.txt}
	[xshift=3pt]
		node[pos=0.3,pin=200:{{\small $m=1$}}]{}
	;
\end{axis}
\end{tikzpicture}
\begin{tikzpicture}
\begin{axis}[
	width=7.1cm,
	x post scale=1.5,
	xlabel={$z_{\mathcal{M}}$},
	ylabel={$\ImC \, [\omega_{\mathcal{M}}]$},
	axis lines=left,
	xmin=0.7, xmax=1,
	ymin=-11.6, ymax=-9.3,
	max space between ticks=50pt,
]
\addplot[smooth, no markers, blue4] 
	table {data/bsmodesmirror-2-2i.txt}
	[xshift=5pt]
		node[pos=0.35,pin=150:{{\small $m=1.5$}}]{}
	;
\addplot[smooth, no markers, blue6] 
	table {data/bsmodesmirror-2-3i.txt}
	[xshift=0pt, yshift=5pt]
		node[pos=0.08,pin=280:{{\small $m=2$}}]{}
	;
\end{axis}
\end{tikzpicture}
\caption[``Right'' frequencies as functions of the mirror's location for selected values of $m$.]{\label{fig:mirrormass} ``Right'' frequencies as functions of the mirror's location for selected values of $m$, with fixed $r_+ = 5$, $r_- = 0.5$, $\nu=1.2$, $n=0$, and $k = -1$. The real part of the frequency is such that $\omega_{\text{R}} \to \ReC  \big[(\omega_-)_0^{({\rm R})} \big]$ in \eqref{eq:rightfrequency} as $z_{\mathcal{M}} \to 1$ and $\omega_{\text{R}} \to k \Omega_{\mathcal{H}} = 0.2307$ as $z_{\mathcal{M}} \to 0$. The imaginary part of the frequency is such that $\omega_{\text{I}} \to \ImC  \big[(\omega_-)_0^{({\rm R})} \big]$ in \eqref{eq:rightfrequency} as $z_{\mathcal{M}} \to 1$.}
\end{center}
\end{figure}
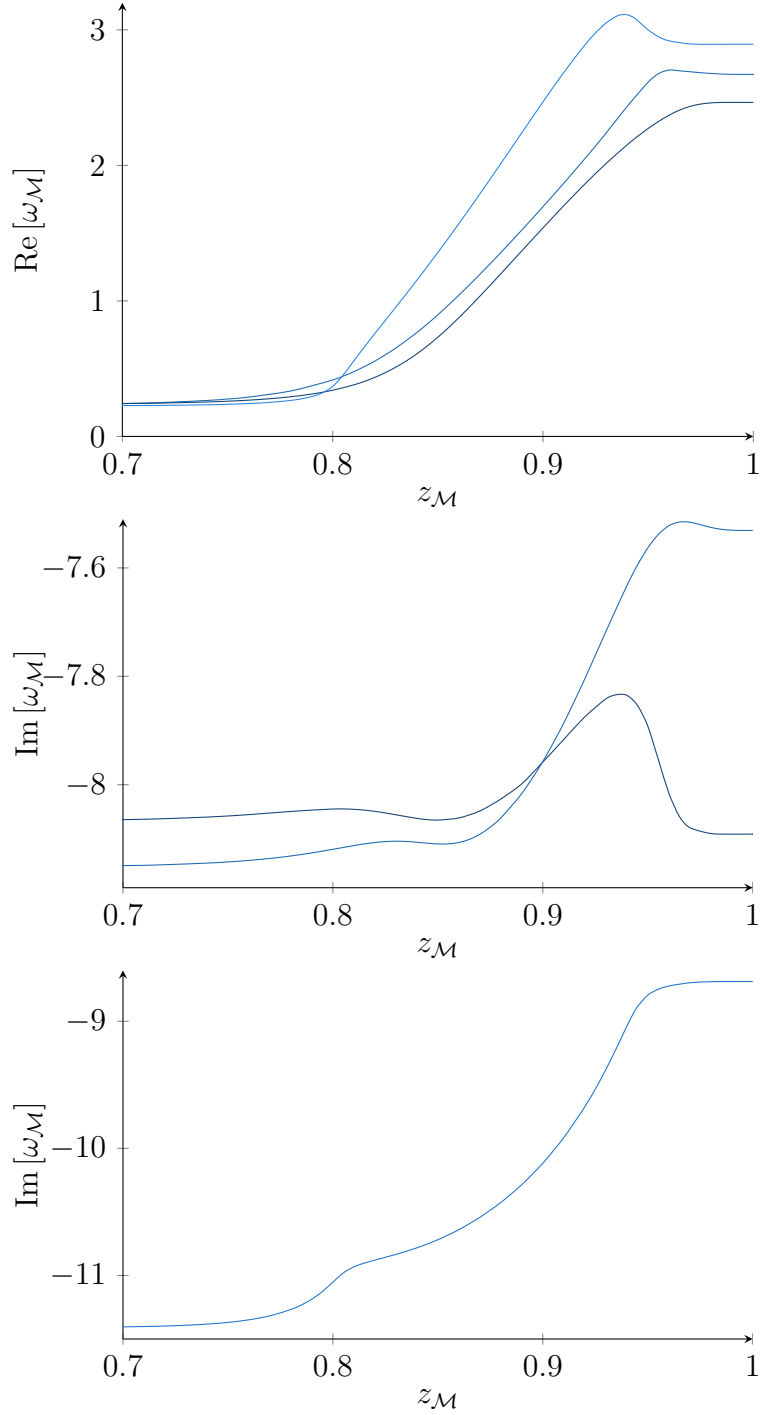
\begin{figure}[H]
\begin{center}
\begin{tikzpicture}
\begin{axis}[
	width=8.5cm,
	x post scale=1.2,
	xlabel={$z_{\mathcal{M}}$},
	ylabel={$\ReC \, [\omega_{\mathcal{M}}]$},
	axis lines=left,
	xmin=0.7, xmax=1,
	ymin=0, ymax=3.2,
	max space between ticks=50pt,
]
\addplot[smooth, no markers, blue3] 
	table {data/bsmodesmirror-1-1r.txt}
	[xshift=5pt, yshift=-2pt]
		node[pos=0.4,pin={[pin distance=35pt]150:{{\small $\nu=1.2$}}}]{}
	;
\addplot[smooth, no markers, blue1] 
	table {data/bsmodesmirror-3-1r.txt}
	[xshift=-5pt]
		node[pos=0.55,pin=300:{{\small $\nu=1.18$}}]{}
	;
\addplot[smooth, no markers, blue5] 
	table {data/bsmodesmirror-3-2r.txt}
	[xshift=5pt]
		node[pos=0.22,pin=160:{{\small $\nu=1.22$}}]{}
	;
\end{axis}
\end{tikzpicture}
\begin{tikzpicture}
\begin{axis}[
	width=7.5cm,
	x post scale=1.4,
	xlabel={$z_{\mathcal{M}}$},
	ylabel={$\ImC \, [\omega_{\mathcal{M}}]$},
	axis lines=left,
	xmin=0.7, xmax=1,
	ymin=-8.19, ymax=-7.51,
	max space between ticks=50pt,
]
\addplot[smooth, no markers, blue1] 
	table {data/bsmodesmirror-1-1i.txt}
	[yshift=-4pt]
		node[pos=0.7,pin=100:{{\small $\nu=1.2$}}]{}
	;
\addplot[smooth, no markers, blue3] 
	table {data/bsmodesmirror-3-1i.txt}
	[xshift=5pt]
		node[pos=0.3,pin=160:{{\small $\nu=1.18$}}]{}
	;
\end{axis}
\end{tikzpicture}
\begin{tikzpicture}
\begin{axis}[
	width=7.5cm,
	x post scale=1.4,
	xlabel={$z_{\mathcal{M}}$},
	ylabel={$\ImC \, [\omega_{\mathcal{M}}]$},
	axis lines=left,
	xmin=0.7, xmax=1,
	ymin=-11.5, ymax=-8.6,
	max space between ticks=50pt,
]
\addplot[smooth, no markers, blue4] 
	table {data/bsmodesmirror-3-2i.txt}
	[xshift=5pt]
		node[pos=0.45,pin=150:{{\small $\nu=1.22$}}]{}
	;
\end{axis}
\end{tikzpicture}
\caption[``Right'' frequencies as functions of the mirror's location for selected values of $\nu$.]{\label{fig:mirrornu} ``Right'' frequencies as functions of the mirror's location for selected values of $\nu$, with fixed $r_+ = 5$, $r_- = 0.5$, $n=0$, $k = -1$, and $m=0$. The real part of the frequency is such that $\omega_{\text{R}} \to \ReC  \big[(\omega_-)_0^{({\rm R})} \big]$ in \eqref{eq:rightfrequency} as $z_{\mathcal{M}} \to 1$ and $\omega_{\text{R}} \to k \Omega_{\mathcal{H}}$ as $z_{\mathcal{M}} \to 0$. $k \Omega_{\mathcal{H}}$ equals 0.2357, 0.2307, and 0.2260 in the cases $\nu=1.18$, $\nu=1.2$, and $\nu=1.22$, respectively. The imaginary part of the frequency is such that $\omega_{\text{I}} \to \ImC  \big[(\omega_-)_0^{({\rm R})} \big]$ in \eqref{eq:rightfrequency} as $z_{\mathcal{M}} \to 1$.}
\end{center}
\end{figure}
\begin{figure}[H]
\begin{center}
\begin{tikzpicture}
\begin{axis}[
	width=8.5cm,
	x post scale=1.2,
	xlabel={$z_{\mathcal{M}}$},
	ylabel={$\ReC \, [\omega_{\mathcal{M}}]$},
	axis lines=left,
	xmin=0.6, xmax=1,
	ymin=0, ymax=5.4,
	max space between ticks=50pt,
]
\addplot[smooth, no markers, blue5] 
	table {data/bsmodesmirror-1-1r.txt}
	[xshift=-7pt]
		node[pos=0.4,pin=300:{{\small $k=-1$}}]{}
	;
\addplot[smooth, no markers, blue3] 
	table {data/bsmodesmirror-4-1r.txt}
	[xshift=6pt]
		node[pos=0.45,pin=120:{{\small $k=-2$}}]{}
	;
\end{axis}
\end{tikzpicture}
\begin{tikzpicture}
\begin{axis}[
	width=8.5cm,
	x post scale=1.2,
	xlabel={$z_{\mathcal{M}}$},
	ylabel={$\ImC \, [\omega_{\mathcal{M}}]$},
	axis lines=left,
	xmin=0.6, xmax=1,
	ymin=-11.2, ymax=-7.6,
	max space between ticks=50pt,
]
\addplot[smooth, no markers, blue5] 
	table {data/bsmodesmirror-1-1i.txt}
	[yshift=3pt]
		node[pos=0.8,pin=300:{{\small $k=-1$}}]{}
	;
\addplot[smooth, no markers, blue3] 
	table {data/bsmodesmirror-4-1i.txt}
	[xshift=7pt]
		node[pos=0.6,pin=95:{{\small $k=-2$ }\hspace*{15pt}}]{}
	;
\end{axis}
\end{tikzpicture}
\caption[``Right'' frequencies as functions of the mirror's location for selected values of $k$.]{\label{fig:mirrork} ``Right'' frequencies as functions of the mirror's location for selected values of $k$, with fixed $r_+ = 5$, $r_- = 0.5$, $\nu=1.2$, $n=0$, and $m=0$. The real part of the frequency is such that $\omega_{\text{R}} \to \ReC  \big[(\omega_-)_0^{({\rm R})} \big]$ in \eqref{eq:rightfrequency} as $z_{\mathcal{M}} \to 1$ and $\omega_{\text{R}} \to k \Omega_{\mathcal{H}}$ as $z_{\mathcal{M}} \to 0$. $k \Omega_{\mathcal{H}}$ equals 0.2307 and 0.4615 in the cases $k=-1$ and $k=-2$, respectively. The imaginary part of the frequency is such that $\omega_{\text{I}} \to \ImC  \big[(\omega_-)_0^{({\rm R})} \big]$ in \eqref{eq:rightfrequency} as $z_{\mathcal{M}} \to 1$.}
\end{center}
\end{figure}
\begin{figure}[H]
\begin{center}
\begin{tikzpicture}
\begin{axis}[
	width=8.5cm,
	x post scale=1.2,
	xlabel={$z_{\mathcal{M}}$},
	ylabel={$\ReC \, [\omega_{\mathcal{M}}]$},
	axis lines=left,
	xmin=0.6, xmax=1,
	ymin=0, ymax=3.2,
	max space between ticks=50pt,
]
\addplot[smooth, no markers, blue5] 
	table {data/bsmodesmirror-1-1r.txt}
	[xshift=-6pt]
		node[pos=0.35,pin=320:{{\small $n=0$}}]{}
	;
\addplot[smooth, no markers, blue3] 
	table {data/bsmodesmirror-5-1r.txt}
	[xshift=5pt]
		node[pos=0.45,pin=160:{{\small $n=1$}}]{}
	;
\end{axis}
\end{tikzpicture}
\begin{tikzpicture}
\begin{axis}[
	width=8.5cm,
	x post scale=1.2,
	xlabel={$z_{\mathcal{M}}$},
	ylabel={$\ImC \, [\omega_{\mathcal{M}}]$},
	axis lines=left,
	xmin=0.6, xmax=1,
	ymin=-23.39, ymax=-22.21,
	max space between ticks=50pt,
]
\addplot[smooth, no markers, blue3] 
	table {data/bsmodesmirror-5-1i.txt}
	[xshift=5pt]
		node[pos=0.3,pin=120:{{\small $n=1$}}]{}
	;
\end{axis}
\end{tikzpicture}
\caption[``Right'' frequencies as functions of the mirror's location for selected values of $n$.]{\label{fig:mirrorn} ``Right'' frequencies as functions of the mirror's location for selected values of $n$, with fixed $r_+ = 5$, $r_- = 0.5$, $\nu=1.2$, $k=-1$, and $m=0$. The real part of the frequency is such that $\omega_{\text{R}} \to \ReC  \big[(\omega_-)_n^{({\rm R})} \big]$ in \eqref{eq:rightfrequency} as $z_{\mathcal{M}} \to 1$ and $\omega_{\text{R}} \to k \Omega_{\mathcal{H}} = 0.2307$ as $z_{\mathcal{M}} \to 0$. The imaginary part of the frequency is such that $\omega_{\text{I}} \to \ImC  \big[(\omega_-)_n^{({\rm R})} \big]$ in \eqref{eq:rightfrequency} as $z_{\mathcal{M}} \to 1$.}
\end{center}
\end{figure}
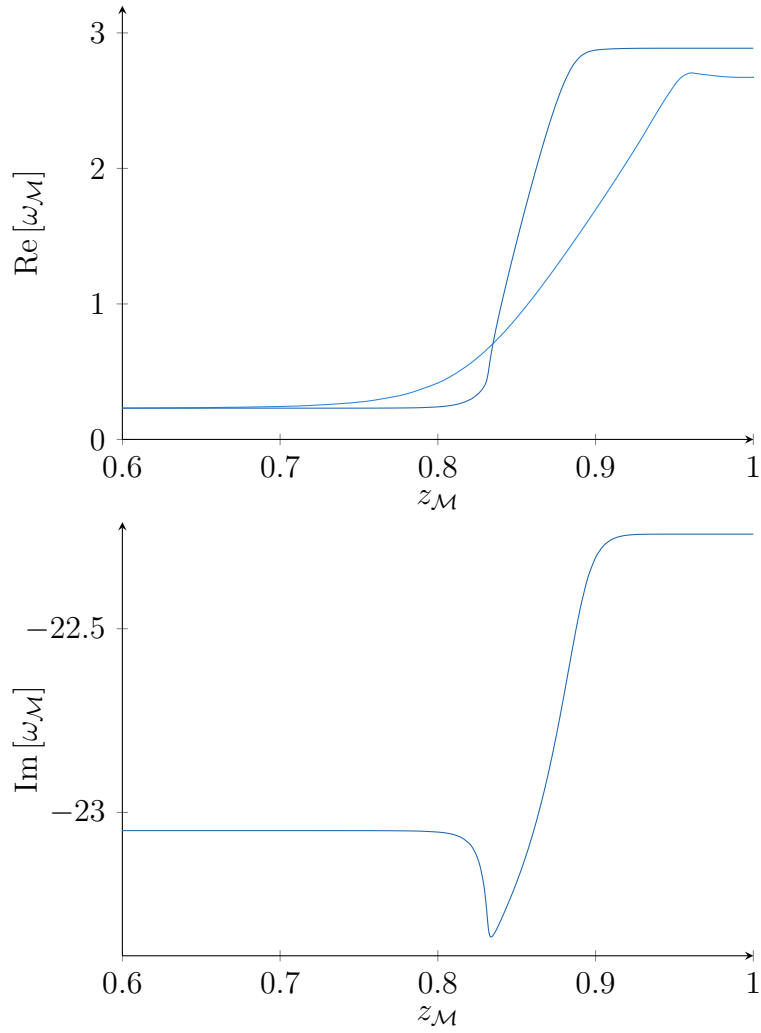
%

\newpage

First, note that the imaginary part of the eigenfrequencies is negative in all the presented cases, and therefore no superradiant instabilities are present. This again contrasts with the Kerr spacetime surrounded by a mirror, where even a massless scalar field has superradiant instabilities \cite{Cardoso:2004nk}. 

In regard to the size of the black hole, from Fig.~\ref{fig:mirrorsize} one observes that the real part of the frequency generally decreases as the horizon grows, while the imaginary part of the frequency increases in absolute value. The dependence on the scalar field mass as shown in Fig.~\ref{fig:mirrormass} is more complicated, but it is clear that the imaginary part of the frequency also increases in absolute value as the field mass increases. A similar conclusion can be drawn from Figs.~\ref{fig:mirrornu}--\ref{fig:mirrorn} concerning the warp factor $\nu$, the angular momentum number $k$ (in absolute value), and the overtone number $n$.

In order to understand these results, it is useful to keep in mind the effective potential picture described in section~\ref{eq:stability-asymptoticsolutions}. Note that in the current situation the frequencies take imaginary values, and hence this picture is not entirely accurate. Recall that the ``right'' bound state frequencies \eqref{eq:rightfrequency} have a real part that always exceeds $k \Omega_{\mathcal{H}}$ and, therefore, there are no superradiant bound state modes when the mirror is placed far from the event horizon. As seen previously, this can be explained by the fact that the effective potential does not develop a potential well near the horizon where the field mode could be trapped. However, as we move the mirror closer to the horizon, it is possible that a potential well can be artificially created, since the mirror works as an infinite potential wall. If we place the mirror close to the horizon, the real part of the frequency is approximately $k \Omega_{\mathcal{H}}$, due to the dragging of the inertial frames. In the general case in which the mirror is somewhere in between the horizon and infinity, we expect the real part of the frequency to be greater than $k \Omega_{\mathcal{H}}$ but smaller than the asymptotic value, with possibly an increasing profile as the mirror is moved towards infinity. This expectation is in good agreement with the numerical results. We thus conclude that the real part of the ``right'' frequency does not satisfy the superradiant condition irrespective of the mirror's position in the exterior region. 

One may ask why the mirror does not create an artificial potential well. The well might have been expected to arise in cases where the effective potential has a local maximum near the horizon and the mirror is placed close to the horizon. We find, however, that when the mirror approaches the maximum of the effective potential from the right, the real part of the frequency does not decrease quickly enough to create superradiant bound state modes. When the mirror is moved even closer to the horizon, the real part of the frequency has no other choice but to approach $k \Omega_{\mathcal{H}}$.

The dependence of the imaginary part of the frequency (and consequently the decay rate) on the several parameters of the system can be interpreted in the same way. If one increases the absolute values of $m^2$, $\nu$, $k$, and $n$, the effective potential is changed in such a way that the local maximum tends to disappear (as in Fig.~\ref{fig:potentialmass}) and, as a result, the field is more stable. On the other hand, the effective potential itself depends on the frequency of the field, and in this case the previous behaviour roughly occurs if we decrease the real part of the frequency. 

\begin{figure}[ht!]
\begin{center}
\begin{tikzpicture}
\begin{axis}[
	width=8.5cm,
	x post scale=1.2,
	xlabel={$z_{\mathcal{M}}$},
	ylabel={$\ReC \, [\omega_{\mathcal{M}}]$},
	axis lines=left,
	xmin=0, xmax=1,
	ymin=0, ymax=0.25,
	max space between ticks=50pt,
]
\addplot[smooth, no markers, blue3] 
	table {data/bsmodesmirror-6-1r.txt}
	;
\end{axis}
\end{tikzpicture}
\begin{tikzpicture}
\begin{axis}[
	width=8.5cm,
	x post scale=1.2,
	xlabel={$z_{\mathcal{M}}$},
	ylabel={$\ImC \, [\omega_{\mathcal{M}}]$},
	axis lines=left,
	xmin=0, xmax=1,
	ymin=-2.45, ymax=-1.05,
	max space between ticks=50pt,
]
\addplot[smooth, no markers, blue3] 
	table {data/bsmodesmirror-6-1i.txt}
	;
\end{axis}
\end{tikzpicture}
\caption[``Left'' frequencies as functions of the mirror's location.]{\label{fig:mirrorleft} ``Left'' frequencies as functions of the mirror's location for $r_+ = 5$, $r_- = 0.5$, $\nu=1.2$, $n=0$, $k=-1$ and $m=0$. The real part of the frequency is such that $\omega_{\text{R}} \to \ReC  \big[(\omega_-)_0^{({\rm L})} \big] = 0$ as $z_{\mathcal{M}} \to 1$ and $\omega_{\text{R}} \to k \Omega_{\mathcal{H}} = 0.2307$ as $z_{\mathcal{M}} \to 0$. The imaginary part of the frequency is such that $\omega_{\text{I}} \to \ImC  \big[(\omega_-)_0^{({\rm L})} \big]$ in \eqref{eq:rightfrequency} as $z_{\mathcal{M}} \to 1$.}
\end{center}
\end{figure}
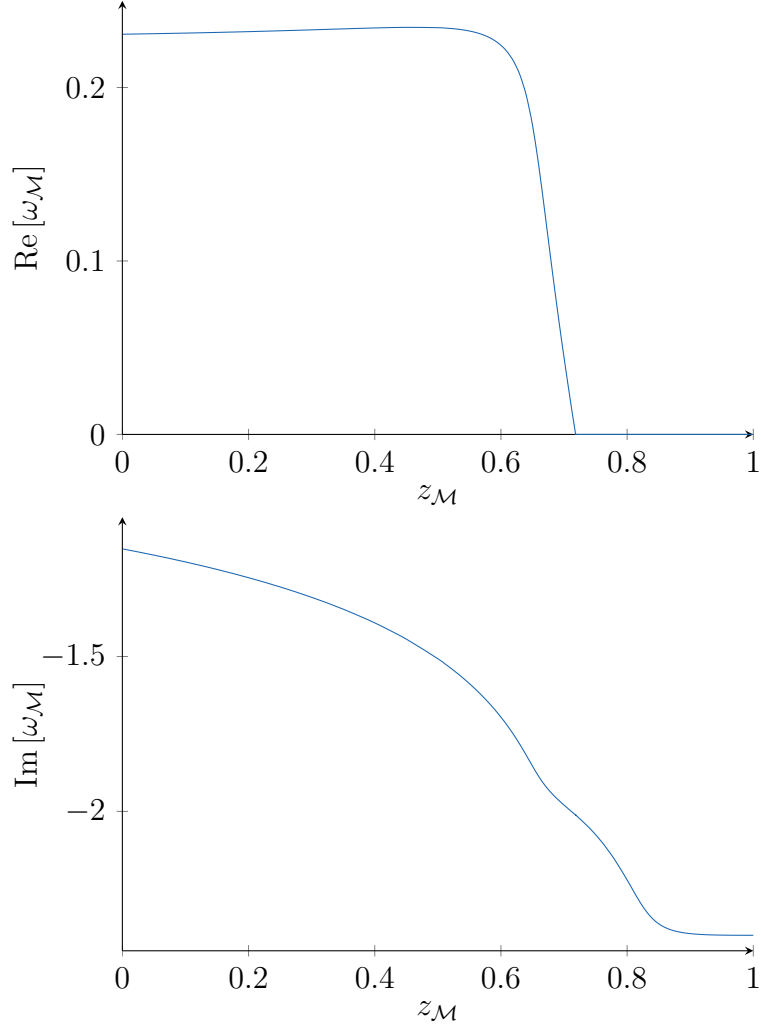

The analysis for the ``left'' frequencies \eqref{eq:leftfrequency} has some similarities but also some significant differences. The dependence of the frequencies on the parameters of the system is largely identical, but not on the mirror's position. As seen in section~\ref{sec:stability-QNandBS}, without the mirror the real part of the frequency is zero for the ``left'' bound state frequencies, and so there is no superradiance. As seen in Fig.~\ref{fig:mirrorleft}, when the mirror is brought in from infinity, this situation persists until a critical radius (call it $r_1$) beyond which the real part of the frequency sharply increases up to a value which is slightly greater than $k \Omega_{\mathcal{H}}$ (denote by $r_2$ the radius at which $\omega_{\text{R}} = k \Omega_{\mathcal{H}}$, such that $r_+ < r_2 < r_1$). When the mirror is placed at $r_{\mathcal{M}} \in (r_2,r_1)$ the bound state mode is indeed superradiant, but the imaginary part of the eigenfrequency is still negative. Again, this can be understood by analyzing the effective potential, which we recall depends on the frequency. The somewhat narrow interval of the mirror's position in which the real part of the frequency satisfies the superradiant condition is already past the local maximum of the effective potential, when it exists. Therefore, no potential well is created and thus no instabilities are present.


\section{Conclusions on the classical stability}
\label{sec:stability-conclusions}

In this chapter, we have investigated the classical linear mode stability of a massive scalar field on the background of a warped AdS${}_3$ black hole. The first main result, Theorem~\ref{thm:existenceofsuperradiance}, is that classical superradiance is present when physically motivated boundary conditions are imposed at infinity; the second main result, cf.~Theorem~\ref{thm-stability} and numerical results of section~\ref{sec:stability-mirror}, is that the black hole is nevertheless classically stable against the scalar mode perturbations, both with and without a stationary mirror in the exterior region. 

Taken together, these results are surprising at a first glance, as one might have expected the superradiant modes to create instabilities as in the (3+1)-dimensional Kerr spacetime. It was shown, however, that instabilities are not present, because the effective potential never develops a potential well near the horizon where the superradiant modes could be trapped. This stability might be a general characteristic of (2+1)-dimensional spacetimes, and it is a particularly interesting result as almost all of the research to date on the classical stability of black holes has addressed spacetimes in four or more dimensions. Compare, for instance, with the results from Ref.~\cite{Cardoso:2005vk}, in which it was shown that black branes of the type Kerr${}_d \times \mathbb{R}^p$ (where $p \in \mathbb{N}$ and Kerr${}_d$ is the Kerr black hole if $d=4$ or the Myers-Perry black hole \cite{Myers:1986un} if $d>4$) have superradiant instabilities if $d=4$ but not if $d > 4$.

Additionally, the analysis in section~\ref{sec:classical-superradiance} helps to clarify the role of boundary conditions in classical superradiance. The in and up modes described in section~\ref{sec:stability-basismodes} can be superradiant, whatever the choice of positive frequency, similarly to what happens in the Kerr spacetime. This differs from the situation in the BTZ and Kerr-AdS spacetimes, where superradiance is not present if reflective boundary conditions are chosen at infinity, which are motivated by their asymptotic structure. 

Having addressed the mode stability and the existence of classical superradiance for the spacelike stretched black hole, in the following chapter, we return to the quantum theory and address the computation of the renormalised vacuum polarisation, as explained in general in Chapter~\ref{chap:localobservables}.


\chapter{Computation of \texorpdfstring{$\mathbf{\langle \Phi^2(\boldsymbol{x}) \rangle}$}{<Phi2>} on a \texorpdfstring{WA\MakeLowercase{d}S$\mathbf{{}_3}$}{WAdS3} black hole}
\chaptermark{Computation of $\langle \Phi^2(\MakeLowercase{x}) \rangle$ on a WAdS${}_3$ black hole}
\label{chap:computation}

In this chapter, we apply the method described in Chapter~\ref{chap:localobservables} to compute the renormalised vacuum polarisation of a massive scalar field in the Hartle-Hawking state on a spacelike stretched black hole surrounded by a Dirichlet mirror. We present numerical results which demonstrate the numerical efficacy of the method.

This chapter is mostly based on \cite{Ferreira:2014ina}.

\section{Vacuum polarisation on a WAdS$\mathbf{{}_3}$ black hole}
\label{sec:computation-vacuumpolarisation}

\begin{figure}[t!]
\begin{center}
{\small
\def\svgwidth{0.4\textwidth}
\input{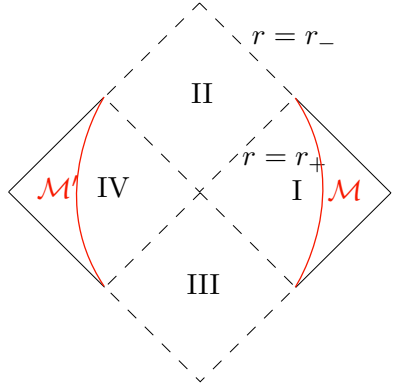}
}
\caption{\label{fig:CPdiagramWAdSBH} Carter-Penrose diagram of a non-extremal spacelike stretched black hole surround by mirrors.}
\end{center}
\end{figure}

As described in Section~\ref{section:qftcst-rotatingbhs}, in order to have a well defined, regular, isometry-invariant vacuum state (the Hartle-Hawking state) on a spacelike stretched black hole, we introduce a boundary $\mathcal{M}$ at a fixed radial coordinate $r = r_{\mathcal{M}}$ in region I and a similar boundary $\mathcal{M}'$ in region IV on which we impose Dirichlet boundary conditions (see Fig.~\ref{fig:CPdiagramWAdSBH}). We require that $r_{\mathcal{M}} \in (r_+, r_{\mathcal{C}})$, where $r=r_{\mathcal{C}}$ is the radial location of the speed of light surface, given by \eqref{eq:solsurface}. As before, we denote by $\widetilde{\text{I}}$ the portion of region I from the horizon up to the boundary and by the region $\widetilde{\text{IV}}$ the portion of region IV from the horizon up to the boundary.

The Hartle-Hawking state is defined in the union of regions 
$\widetilde{\text{I}}$, II, III and $\widetilde{\text{IV}}$, which we take to be the manifold $M$ of interest. Nevertheless, we are interested in computing the renormalised vacuum polarisation in region $\widetilde{\text{I}}$, as described in Chapter~\ref{chap:localobservables}. In this region, there is a timelike Killing vector field  and we introduce the co-rotating coordinate system 
$(\tilde{t}=t, \, r, \,\tilde{\theta} = \theta - \Omega_{\mathcal{H}} t)$, with $\Omega_{\mathcal{H}}$ given by \eqref{eq:omegaH}, such that the timelike Killing vector field is $\chi = \partial_{\tilde{t}}$ and the metric is given by
\begin{equation}
\dd s^2 = - N(r)^2 \, \dd\tilde{t}^2 + \frac{\dd r^2}{4 R(r)^2 N(r)^2} + R(r)^2 \left( \dd\tilde{\theta} + \big( N^{\theta}(r) + \Omega_{\mathcal{H}} \big) \dd\tilde{t} \right)^2 \, .
\end{equation}

In the following, we go through the steps of the method described in Chapter~\ref{chap:localobservables} to renormalise the vacuum polarisation for a massive scalar field.

\subsection{Scalar field and the Hartle-Hawking state}

We consider again a real massive scalar field $\Phi$ on a spacelike stretched black hole which satisfies the Klein-Gordon equation \eqref{eq:fieldequation2}
\begin{equation}
\left(\nabla^2 - m^2 \right) \Phi = 0 \, ,
\end{equation}
and mode solutions of the form
\begin{equation}
\Phi_{\tilde{\omega} k} (\tilde{t}, r, \tilde{\theta}) = e^{-i \tilde{\omega} \tilde{t} + i k \tilde{\theta}} \phi_{\tilde{\omega} k} (r) \, ,
\end{equation}
where $\tilde{\omega} \in \mathbb{R}$ and $k \in \mathbb{Z}$, cf.~Eq~\eqref{eq:modesolutions}.

We now repeat the construction of the L and R modes as in Section~\ref{sec:HHstateconstruction}, which are modes defined everywhere in $M$ and, from Proposition~\ref{prop:LRmodespositivefreq}, of positive frequency with respect to the affine parameters of the horizons. We take the one-particle Hilbert space $\mathscr{H}$ to consist of the L and R mode solutions and define the Hartle-Hawking state $|H \rangle$ as the vacuum state of $\mathscr{F}_{\rm s}(\mathscr{H})$, the Fock space associated with $\mathscr{H}$. The Feynman propagator $G^{\rm F}$ evaluated for the Hartle-Hawking state is then defined as in \eqref{eq:Feynmanpropdef}.

\subsection{Complex Riemannian section}

At this stage, as in Chapter~\ref{chap:localobservables}, we are faced with the challenge of explicitly computing the Feynman propagator as a sum over mode solutions of \eqref{eq:GFdiffeq}. For that, we consider the complex Riemannian section of region $\widetilde{\text{I}}$ of the black hole.

The complex Riemannian section of a stationary spacetime was defined in Definition~\ref{eq:complexRiemanniansection} and its metric for a (2+1)-dimensional rotating black hole was given by \eqref{eq:metriccomplexRiemanniansectiongen}. For the spacelike stretched black hole, the metric is
\begin{equation}
\dd s^2 = N(r)^2 \dd\tau^2 + \frac{\dd r^2}{4 R(r)^2 N(r)^2} + R(r)^2 \left( \dd\tilde{\theta} - i \, \big( N^{\theta}(r) + \Omega_{\mathcal{H}} \big) \dd\tau \right)^2 \, .
\label{eq:metriccomplexRiemanniansection}
\end{equation}
where we performed  a Wick rotation $\tilde{t} = -i \tau$, with $\tau \in \mathbb{R}$. We denote the complex Riemannian section of region I of the black hole by $I^{\mathbb{C}}$. This metric is regular at the horizon if $\tau$ is periodic with period $2\pi/\kappa_+$, where $\kappa_+$ is the surface gravity,
\begin{equation}
\kappa_+ = \frac{(\nu^2+3)(r_+-r_-)}{2\left(2\nu r_+ - \sqrt{(\nu^2+3)r_+ r_-}\right)}
= \frac{|\Omega_{\mathcal{H}}|}{4} (\nu^2+3)(r_+-r_-) \, .
\end{equation}

\subsection{Green's distribution}

Next, we find the Green's distribution $G$ associated with the Klein-Gordon equation in the complex Riemannian section, which satisfies the distributional equation \eqref{eq:GCfunctioneq},
\begin{equation}
\left( \nabla^2 - m^2 \right) G(x,x') = - \frac{\delta^3(x,x')}{\sqrt{g(x)}} = - 2 \delta(\tau-\tau') \delta(r-r') \delta(\tilde{\theta}-\tilde{\theta}') \, .
\end{equation}
As noted before, in the complex Riemannian section there is a unique solution to this equation which is regular at the horizon and which satisfies the Dirichlet boundary conditions at the boundary. 

Given the periodicity conditions of $\tau$ and $\tilde{\theta}$, one has
\begin{align}
\delta (\tau - \tau') &= \frac{\kappa_+}{2\pi} \sum_{n = -\infty}^{\infty} e^{i \kappa_+ n (\tau - \tau')} \, ,
\label{eq:deltatau} \\
\delta (\tilde{\theta} - \tilde{\theta}') &= \frac{1}{2\pi} \sum_{k = -\infty}^{\infty} e^{i k(\tilde{\theta} - \tilde{\theta}')} \, ,
\label{eq:deltatheta}
\end{align}
and, if we expand $G(x,x')$ as
\begin{equation}
G(x,x') = \frac{\kappa_+}{4\pi^2} \sum_{n = -\infty}^{\infty} e^{i \kappa_+ n (\tau - \tau')} \sum_{k = -\infty}^{\infty} e^{i k(\tilde{\theta} - \tilde{\theta}')} \, G_{nk}(r,r')
\label{eq:Greenfunction0}
\end{equation}
and use \eqref{eq:deltatau} and \eqref{eq:deltatheta}, one obtains a differential equation for $G_{nk}$,
\begin{multline}
\frac{\dd}{\dd r} \left( 4 R(r)^2 N(r)^2 \frac{\dd G_{nk}(r)}{\dd r} \right) - \frac{1}{R(r)^2 N(r)^2} \bigg[ R(r)^2 \left( \kappa_+ n + i k (N^{\theta}(r) + \Omega_{\mathcal{H}}) \right)^2 \\ + N(r)^2 \left(k^2 + m^2 R(r)^2 \right) \bigg] G_{nk}(r) = - 2 \delta(r-r') \, ,
\label{eq:KGBH3}
\end{multline}
cf.~\eqref{eq:radialGreenseq}. The solutions of this equation may be given as a product of solutions of the corresponding homogeneous equation. Using Appendix~\ref{app:hypergeometric}, a pair of independent solutions of the homogeneous equation is
\begin{align}
\phi^1_{n k} (z) &= z^{\alpha} (1-z)^{\beta} F(a,b;c;z) \, , \\
\phi^2_{n k} (z) &= z^{\alpha} (1-z)^{\beta} F(a,b;a+b+1-c;1-z) \, ,
\label{eq:exactsolution}
\end{align}
where we introduce the radial coordinate
\begin{equation}
z = \frac{r-r_+}{r-r_-} \, ,
\label{eq:definitionz}
\end{equation}
and where the parameters of the hypergeometric functions are given by
\begin{align}
a = \alpha + \beta + \gamma \, , \qquad
b = \alpha + \beta - \gamma \, , \qquad
c = 2\alpha + 1 \, ,
\label{eq:defabc}
\end{align}
with
\begin{subequations}
\begin{align}
\alpha &= \frac{|n|}{2} \, , \\
\beta &= \frac{1}{2} + \frac{\sqrt{3(\nu^2-1)}}{\nu^2+3} \sqrt{\frac{(\nu^2+3)^2}{12(\nu^2-1)} \left( 1 + \frac{4m^2}{\nu^2+3} \right) + (\kappa_+ n + i k \Omega_{\mathcal{H}})^2} \, , \\ 
\gamma & = \frac{2\nu r_- - \sqrt{r_+ r_-(\nu^2+3)}}{(\nu^2+3)(r_+ - r_-)} \sqrt{\left[\kappa_+ n + i k \left(N^{\theta}(r_-) + \Omega_{\mathcal{H}} \right) \right]^2} \, .
\end{align}
\label{eq:alphabetagamma}
\end{subequations}
Our convention for the the branch of the square roots  in \eqref{eq:alphabetagamma} is the one with non-negative real part.

Taking into account the boundary conditions, the regular solution at event horizon, $z=0$, is
\begin{equation}
p_{nk}(z) = \phi^1_{nk}(z) \, ,
\label{eq:pnk}
\end{equation}
whereas the Dirichlet solution near the mirror, $z=z_M$, is given by
\begin{equation}
q_{nk}(z) = \phi^2_{nk}(z) - \frac{\phi^2_{nk}(z_M)}{\phi^1_{nk}(z_M)} \phi^1_{nk}(z) \, .
\label{eq:qnk}
\end{equation}
The radial part of the Green's function, as in \eqref{eq:Gnkgeneral}, is then
\begin{equation}
G_{nk}(z,z') = C_{nk} \, p_{nk}(z_<) \, q_{nk}(z_>) \, ,
\label{eq:Gnk}
\end{equation}
where $z_< := \min \{ z, z' \}$, $z_> := \max \{ z, z' \}$ and $C_{nk}$ is the normalization constant determined by \eqref{eq:Cnkdefinition}. For convenience, we rewrite \eqref{eq:Greenfunction0} as
\begin{equation}
G(x,x') = \frac{|\Omega_{\mathcal{H}}|}{8\pi^2} \sum_{n = -\infty}^{\infty} e^{i \kappa_+ n (\tau - \tau')} \sum_{k = -\infty}^{\infty} e^{i k(\tilde{\theta} - \tilde{\theta}')} \, G_{nk}(r,r') \, .
\label{eq:Greenfunction1}
\end{equation}
Then, $C_{nk}$ is given by
\begin{equation}
C_{nk} = \frac{\Gamma(a) \Gamma(b)}{|n|! \, \Gamma(a+b-|n|)} \, . 
\label{eq:Cnk}
\end{equation}

\subsection{Hadamard renormalisation}

Having computed the Green's distribution $G$, it now remains to follow the Hadamard renormalisation procedure described in Section~\ref{sec:renormalisation-procedure} to subtract its short-distance divergences and obtain the renormalised vacuum polarisation,
\begin{equation} \label{eq:phi2wads}
\langle \Phi^2(x) \rangle = \lim_{x' \to x} \left[ G(x,x') - G_{\text{Had}}(x,x') \right] \, .
\end{equation}
where $G_{\text{Had}}$ is the Hadamard singular part,
\begin{equation}
G_{\text{Had}}(x,x') = \frac{1}{4\sqrt{2}\pi} \frac{1}{\sqrt{\sigma(x,x')}} + \mathcal{O}(\sigma^{1/2}) \, .
\end{equation}

To do that, we rewrite $G_{\text{Had}}(x,x')$ as a sum over mode solutions for the complex Riemannian section of the Minkowski spacetime, plus a term which is finite when the coincidence limit is taken, as given by \eqref{eq:GnkMink} and explained in detail in the text after Eq.~\eqref{eq:GnkMink} and in Appendix~\ref{app:Minkowski}.

For concreteness, assume that the points $x$ and $x'$ are angularly separated, i.e.~assume that the black hole metric is given in coordinates $(\tau,r,\tilde{\theta})$, whereas the Minkowski metric is given in coordinates $(\tau,\rho,\tilde{\theta})$, and let $x = (\tau, r, 0)$ and $x' = (\tau, r, \tilde{\theta})$, with $\tilde{\theta} > 0$, for the black hole case, and similarly for the Minkowski case.

The Synge's world function for the black hole is
\begin{equation}
\sigma (x,x') = \frac{1}{2} R(r)^2 \tilde{\theta}^2  + \mathcal{O}(\tilde{\theta}^3) \, .
\end{equation}
and, hence, the Hadamard singular part of the Green's distribution is
\begin{equation} \label{eq:HadamardsingBH}
G_{\text{Had}}(x,x') = \frac{1}{4\pi} \frac{1}{R(r) \tilde{\theta}} + \mathcal{O}(\tilde{\theta}) \, .
\end{equation}

For the Minkowski case, the Hadamard singular part is given by
\begin{equation} \label{eq:HadamardsingM}
G_{\text{Had}}^{\mathbb{M}}(x,x') = \frac{1}{4\pi} \frac{1}{\rho \tilde{\theta}} + \mathcal{O}(\tilde{\theta}) \, ,
\end{equation}
which, using the notation and definitions of Appendix~\ref{app:Minkowski}, can also be expressed as the mode sum,
\begin{equation}
G_{\text{Had}}^{\mathbb{M}}(x,x') = \frac{T_{\mathbb{M}}}{2\pi} \sum_{k=-\infty}^{\infty} \left( e^{i k \tilde{\theta}} \sum_{n=-\infty}^{\infty} \, \hat{G}_{n k}^{\mathbb{M}} (\rho,\rho') \right) - \hat{G}_{\text{reg}}^{\mathbb{M}}(x,x') + \mathcal{O}(\tilde{\theta}) \, ,
\end{equation}

At this stage, we set the leading terms of Hadamard singular parts \eqref{eq:HadamardsingBH} and \eqref{eq:HadamardsingM} to be equal up to a function $\gamma(r)>0$ by identifying the two radial coordinates as in \eqref{eq:gthetathetaidentification}, i.e.
\begin{equation} \label{eq:matchrhor}
\rho(r) = \gamma(r)^{-1} \, R(r) \, .
\end{equation}
Given this identification, we can now write
\begin{align} \label{eq:Gren}
G(x,x') - G_{\text{Had}}(x,x') 
&= \sum_{k=-\infty}^{\infty} e^{i k \tilde{\theta}} \sum_{n=-\infty}^{\infty} \, \left[ \frac{|\Omega_{\mathcal{H}}|}{8\pi^2} \, G_{nk}(r,r) - \frac{T_{\mathbb{M}}}{2\pi \gamma(r)} \, \hat{G}^{\mathbb{M}}_{nk}(\rho(r),\rho(r)) \right] \notag \\ 
&\quad + \gamma(r)^{-1} \, \hat{G}_{\text{reg}}^{\mathbb{M}}(x,x') + \mathcal{O}(\tilde{\theta}) \, .
\end{align}

We can now use Theorem~\ref{thm:matchingpolarisation} to guarantee that the double sum in the RHS of \eqref{eq:Gren} is convergent in the coincidence limit $\tilde{\theta} \to 0$ if the parameters of the Minkowski's Green's distribution are chosen to be
\begin{equation}
\gamma(z) = N(z) \, , \qquad T_{\mathbb{M}} = \frac{\kappa_+}{2\pi} \, , \qquad \Omega_{\mathbb{M}} = N^{\theta}(z) + \Omega_{\mathcal{H}} \, .
\label{eq:matchingWADS}
\end{equation}

We have all the necessary ingredients to compute the renormalised vacuum polarisation \eqref{eq:phi2wads}. The modes sums need to be computed numerically, as done in the next section.

\section{Numerical results}
\label{sec:computation-results}

%
%
\begin{figure}[t!]
\begin{center}
\begin{tikzpicture}
\begin{axis}[
	xlabel={$z/z_{\mathcal{M}}$},
	ylabel={$\langle \Phi^2(x) \rangle$},
	xmin=0, xmax=1,
	ymin=-0.85, ymax=0,
]
\addplot+[smooth, no markers, blue3] table {data/data2.txt};
\end{axis}
\end{tikzpicture}
\end{center}
\caption[Vacuum polarisation for the scalar field in the exterior region.]{\label{fig:numerics1} Vacuum polarization for the scalar field as a function of $z/z_{\mathcal{M}}$ for $\nu = 1.2$, $r_+ = 15$, $r_- = 1$, $r_{\mathcal{M}} = 62$ and $m = 1$.}
\end{figure}
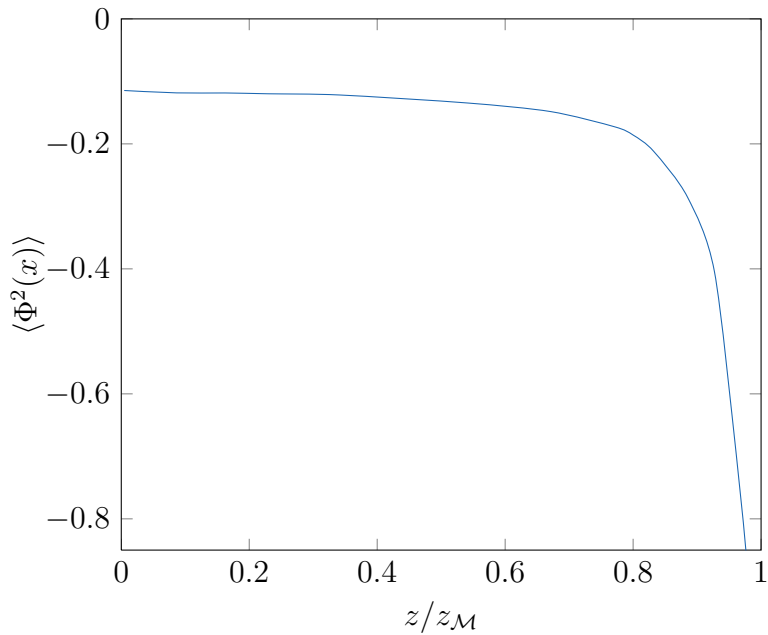

In this section, we present the numerical results for the computation of the renormalised vacuum polarisation of the scalar field in the Hartle-Hawking state in region $\widetilde{\text{I}}$ of the spacelike stretched black hole.

The numerical computation uses expressions \eqref{eq:Gren} and \eqref{eq:phi2wads} with the Minkowski parameters chosen as in \eqref{eq:matchingWADS}:
\begin{align}
\langle \Phi^2 (x) \rangle &= \sum_{k=-\infty}^{\infty} \sum_{n=-\infty}^{\infty} \, \left[ \frac{|\Omega_{\mathcal{H}}|}{8\pi^2} \, G_{nk}(z,z) - \frac{\kappa_+}{4\pi^2 N(z)} \, \hat{G}^{\mathbb{M}}_{nk}\left(\tfrac{R(z)}{N(z)},\tfrac{R(z)}{N(z)}\right)\Big|_{\Omega_{\mathbb{M}} = N^{\theta}(z)+\Omega_{\mathcal{H}}} \right] \notag \\ 
&\quad + \frac{1}{4\pi N(z)} \left[ {-m_{\mathbb{M}}} + \sum_{N \neq 0} \frac{e^{-m_{\mathbb{M}} \sqrt{\left(\frac{N}{T_{\mathbb{M}}}\right)^2 - 4 \tfrac{R^2(z)}{N^2(z)} \sinh^2 \left( \frac{\Omega_{\mathbb{M}} N}{2T_{\mathbb{M}}} \right) + i \epsilon \, \text{sgn}(\Omega_{\mathbb{M}} N)}}}{\sqrt{\left(\frac{N}{T_{\mathbb{M}}}\right)^2 - 4 \tfrac{R^2(z)}{N^2(z)} \sinh^2 \left( \frac{\Omega_{\mathbb{M}} N}{2T_{\mathbb{M}}} \right) + i \epsilon \, \text{sgn}(\Omega_{\mathbb{M}} N)}} \right] \, ,
\label{eq:vacuumpolcalculation}
\end{align}
with $\epsilon \to 0+$ indicating the choice of branch of the square root (see details in Appendix~\ref{app:Minkowski}).

As described previously, the sums in \eqref{eq:vacuumpolcalculation} are convergent. For the numerical evaluation of the sums, cutoffs are imposed appropriately. Note that the parameter $m_{\mathbb{M}}^2$ is not fixed and it is chosen in such a way to improve the numerical convergence of the double sum over $k$ and $n$.

The numerical results for selected values of the parameters are presented in Fig.~\ref{fig:numerics1}. In the plot, $\langle \Phi^2(x) \rangle$ is shown as a function of the normalized radial coordinate $z/z_{\mathcal{M}}$, where $z = (r-r_+)/(r-r_-)$. The plot is very similar to the one obtained in Ref.~\cite{Duffy:2002ss} for a scalar field in the (3+1)-dimensional Minkowski spacetime surrounded by a mirror with Dirichlet boundary conditions (note that ``rotating Minkowski spacetime'' is related to ``static Minkowski spacetime'' by a coordinate transformation, hence the results for $\langle \Phi^2(x) \rangle$ are the same for both cases).

Furthermore, note that $\langle \Phi^2(x) \rangle$ gets arbitrarily large and negative as the mirror is approached. This is to be expected, as we imposed that the Green's function $G(x,x')$ must vanish when $x$ is at the boundary, even when $x' \to x$, whereas the subtraction term still diverges when $x' \to x$ (see Section~4.3 of \cite{birrell1984quantum} for more details). 

We reemphasize that the result shown in Fig.~\ref{fig:numerics1} is the full renormalized vacuum polarization in the Hartle-Hawking state. To find the renormalized vacuum polarization in other Hadamard states of interest, such as the Boulware vacuum state, it would suffice to use the Hartle-Hawking state as a reference and just to calculate the difference, which is finite without further renormalization. For comparison, we note that in Kerr with a mirror the difference of the vacuum polarization in the Boulware and Hartle-Hawking states was found in \cite{Duffy:2005mz}, while the renormalized vacuum polarization in the individual states is still unknown.

\pagestyle{myheadings}

\newpage
\phantomsection
\addcontentsline{toc}{chapter}{Conclusions}
\addtocontents{toc}{\protect\vspace{-9pt}}

\chapter*{Conclusions}
\label{chap:conclusions}
\markboth{CONCLUSIONS}{CONCLUSIONS}

In this thesis we have developed a method to compute a class of renormalised local observables which includes the vacuum polarisation for a quantised matter field, in a given quantum state, on a rotating black hole spacetime. The rotating black hole is surrounded by a Dirichlet mirror, if necessary, such that the resulting exterior region possesses a timelike Killing vector field and on which a regular, isometry-invariant state for the matter field can be defined as a result. For simplicity, we have focused on the case of a massive scalar field on a (2+1)-dimensional rotating black hole, but the method can be straightforwardly extended to other types of fields and higher-dimensional rotating black holes.

The main results of this thesis were presented in Chapter~\ref{chap:localobservables}. Here, we have described the steps involved to explicitly renormalise and compute a local observable which is non-linear in the field operators, but which does not involve covariant derivatives of the field operators. We implement the renormalisation at the level of the Feynman propagator evaluated for the regular, isometry-invariant state, from which we subtract the singular, purely geometric part and after which the coincidence limit can be taken. For instance, for a scalar field $\Phi$, we have seen how the renormalised vacuum polarisation can be obtained by a careful use of the formula
\begin{equation*}
\langle \Phi^2(x) \rangle = -i \, \lim_{x' \to x} \left[ G^{\rm F}(x,x') - G_{\rm Had}(x,x') \right] \, ,
\end{equation*}
where $G^{\rm F}$ is the Feynman propagator evaluated for that quantum state and $G_{\rm Had}$ is its Hadamard singular part. This formula summarises the three main steps necessary to perform the computation:
\begin{enumerate}[label={(\roman*)}]
\item Expressing $G^{\rm F}(x,x')$ as a sum over mode solutions of the differential equation satisfied by $G^{\rm F}$. It is advantageous to consider the complex Riemannian section of the exterior region of the rotating black hole, on which the Green's distribution associated with the field equation is unique and can be obtained using standard techniques of the theory of Green's functions. We then analytically continue the result back to the original spacetime, where $G^{\rm F}(x,x')$ is written as a discrete sum over mode solutions. This step was described in Section~\ref{sec:quasi-euclidean-method}.
\item Expressing $G_{\rm Had}(x,x')$, which is known in closed form for any spacetime dimension, as a sum over mode solutions, so that the short-distance divergences of $G^{\rm F}(x,x')$ may be subtracted term by term. As we have seen in Section~\ref{sec:renormalisation-procedure}, we have done this by writing $G_{\rm Had}(x,x')$ as sum over mode solutions on the complex Riemannian section of Minkowski, for which the Green's distribution is known both in closed form and as a mode sum. We have then succeeded in expressing $[ G^{\rm F}(x,x') - G_{\rm Had}(x,x') ]$ as a mode sum (plus a regular term), which was made convergent in the coincidence limit by a natural choice of the parameters of the Minkowski Green's distribution.
\item Having guaranteed the convergence of the mode sum in $[ G^{\rm F}(x,x') - G_{\rm Had}(x,x') ]$ when $x' \to x$, we can safely take the coincidence limit and obtain the renormalised vacuum polarisation.
\end{enumerate}

A few remarks are in order. First, as we have emphasised in this thesis, in step (ii) above, the procedure involved in guaranteeing the convergence of the mode sum in $[ G^{\rm F}(x,x') - G_{\rm Had}(x,x') ]$ in the coincidence limit, where $G_{\rm Had}(x,x')$ is expressed as a sum over mode solutions on the complex Riemannian section of Minkowski, does \emph{not} require the knowledge of the mode solutions of the field equation in closed form, but only the first terms of the asymptotic expansion for large values of the quantum numbers. These were obtained in Section~\ref{sec:renormalisation-procedure} and Appendix~\ref{app:WKBexpansions}. This allows the method to be extended to the Kerr black hole and other higher-dimensional rotating black holes for which the mode solutions have to be constructed fully numerically. Hence, the implementation of our method for Kerr would seem feasible in principle, and it should prove interesting to attempt the implementation in practice.

Second, as seen in Section~\ref{sec:stress-energy-tensor}, this method is not suitable to renormalise local observables which involve covariant derivatives of the field operators, such as the expectation value of the stress-energy tensor, $\langle T_{ab}(x) \rangle$. For observables of this type, we were not able to subtract the short-distance divergences by expressing the singular terms as sums over mode solutions, or derivatives of mode solutions, on the complex Riemannian section of Minkowski. This is due to the fact that the shift function of the metric of Minkowski written in some rotating coordinate system is a constant in spacetime, whereas the shift function of the metric of a rotating black hole is a function of the radial coordinate in some coordinate system.

For the specific case of the (2+1)-dimensional warped AdS${}_3$ black hole considered in Part II of the thesis, one possibility is to consider the rotating BTZ black hole as a reference background, instead of Minkowski, since it is possible to explicitly compute the renormalised expectation value of the stress-energy tensor by using the fact that the rotating BTZ corresponds to AdS${}_3$ with discrete identifications. The calculation of the renormalised expectation value of the stress-energy tensor for the rotating BTZ was done in \cite{Steif:1993zv}. We hope to look into this case in the future.

For other rotating black holes, a more general method is required. Our method requires the knowledge of the Feynman propagator in both closed form and as a mode sum for a reference spacetime and the only available examples are Minkowski, AdS and dS. These are sufficient for the renormalisation of local observables on static black hole spacetimes, for which there are coordinate systems such that the shift function vanishes, but not for stationary, but non-static, black hole spacetimes. This remains as an open question, one among several important open questions concerning classical and quantum aspects of rotating black holes, especially Kerr, such as its non-linear stability as a solution of the Einstein equations.

\part{Appendices}
\setcounter{secnumdepth}{1}

\pagestyle{headings}

\appendix

\chapter{Complex Riemannian section of the Minkowski spacetime}
\chaptermark{Complex Riemannian section of Minkowski}
\label{app:Minkowski}

Consider (2+1)-dimensional rotating Minkowski spacetime. Choosing rotating, spherical coordinates $(\tilde{t},\rho,\tilde{\theta})$, its metric is
\begin{equation}
\dd s^2 = -\dd \tilde{t}^2 + \dd \rho^2 + \rho^2 \big( \dd \tilde{\theta} + \Omega_{\mathbb{M}} \, \dd \tilde{t} \big)^2 \, ,
\end{equation}
with $(\tilde{t},\rho,\tilde{\theta}) \sim (\tilde{t},\rho,\tilde{\theta}+2\pi)$ and $\Omega_{\mathbb{M}} \in \mathbb{R}$. In the complex Riemannian section, the metric is given by
\begin{equation} \label{eq:Minkmetric}
\dd s^2 = \dd \tau^2 + \dd \rho^2 + \rho^2 \big(\dd \tilde{\theta} - i \Omega_{\mathbb{M}} \, \dd \tau \big)^2 \, ,
\end{equation}
with $t = - i \tau$.

Note that in the real Lorentzian section, for $\Omega_{\mathbb{M}} \neq 0$, the Killing vector field $\chi = \partial_{\tilde{t}}$ becomes spacelike when $\rho > |\Omega_{\mathbb{M}}|^{-1}$. We restrict our attention to the part of the spacetime where $\rho < \rho_{\mathcal{M}}$, such that at $\rho = \rho_{\mathcal{M}} < |\Omega_{\mathbb{M}}|^{-1}$ there is a timelike boundary at which Dirichlet boundary conditions are imposed.

Moreover, we will require that
\begin{equation}
(\tau, \rho, \tilde{\theta}) \sim (\tau + T_{\mathbb{M}}^{-1}, \rho, \tilde{\theta}) \, ,
\end{equation}
where $T_{\mathbb{M}} > 0$ is to be interpreted as the temperature.

Consider the Klein-Gordon equation for a real scalar field of mass $m_{\mathbb{M}}$,
\begin{equation}
\left( \nabla^2 - m_{\mathbb{M}}^2 \right) \Phi(\tau,\rho,\tilde{\theta}) = 0 \, ,
\label{eq:KGeq}
\end{equation}
which in this coordinate system is given by
\begin{equation}
\left[ \frac{\partial^2}{\partial\tau^2} + \frac{1}{\rho} \frac{\partial}{\partial \rho} \left( \rho \frac{\partial}{\partial \rho} \right) + \frac{1-\rho^2 \Omega_{\mathbb{M}}^2}{\rho^2} \frac{\partial^2}{\partial\tilde{\theta}^2} + 2i \Omega_{\mathbb{M}} \partial_{\tau} \partial_{\tilde{\theta}} - m_{\mathbb{M}}^2 \right] \Phi (\tau,\rho,\tilde{\theta}) = 0 \, .
\end{equation}
Using the ansatz $\Phi(\tau,\rho,\theta) = e^{i \tilde{\omega} \tau + i k \tilde{\theta}} \phi(\rho)$ one gets
\begin{equation}
\frac{\dd^2}{\dd \rho^2} \phi(\rho) + \frac{1}{\rho} \frac{\dd}{\dd \rho} \phi(\rho) - \left( (\tilde{\omega}+ik\Omega_{\mathbb{M}})^2 + m_{\mathbb{M}}^2 + \frac{k^2}{\rho^2} \right) \phi(\rho) = 0 \, .
\label{eq:fieldeq}
\end{equation}
Two independent solutions are
\begin{equation}
\phi^1_{\tilde{\omega} k}(\rho) = I_k \! \left(\!  \sqrt{(\tilde{\omega} \! +ik\Omega_{\mathbb{M}})^2+m_{\mathbb{M}}^2 \! } \, \rho \! \right) \! , \;
\phi^2_{\tilde{\omega} k}(\rho) = K_k \! \left(\! \sqrt{(\tilde{\omega} \! +ik\Omega_{\mathbb{M}} \! )^2+m_{\mathbb{M}}^2} \, \rho \! \right) \! .
\end{equation}
where $I_k$ and $K_k$ are the modified Bessel functions and the principal branch of the square root is understood.

The Green's distribution $G^{\mathbb{M}}(x,x')$ associated with \eqref{eq:KGeq} satisfies the equation
\begin{equation}
\left( \nabla^2 - m^2 \right) G^{\mathbb{M}}(x,x') = - \frac{\delta^3(x,x')}{\sqrt{g(x)}} = - \frac{1}{\rho} \delta(\tau-\tau') \delta(\rho-\rho') \delta(\tilde{\theta}-\tilde{\theta}') \, .
\end{equation}

Given the periodicities of $\tau$ and $\tilde{\theta}$, $\tilde{\omega} = 2\pi T_{\mathbb{M}} n$, with $n \in \mathbb{Z}$, and $k \in \mathbb{Z}$. Thus,
\begin{align}
\delta (\tilde{\theta} - \tilde{\theta}') &= \frac{1}{2\pi} \sum_{k=-\infty}^{\infty} e^{i k (\tilde{\theta}-\tilde{\theta}')} \, , \\
\delta(\tau-\tau') &= T_{\mathbb{M}} \sum_{n=-\infty}^{\infty} e^{i n 2\pi T_{\mathbb{M}} (\tau - \tau')} \, .
\end{align}

If one now expands the Green's distribution $G^{\mathbb{M}}(x,x')$ as
\begin{equation}
G^{\mathbb{M}}(x,x') = \frac{T_{\mathbb{M}}}{2\pi} \sum_{n=-\infty}^{\infty} e^{i 2\pi T_{\mathbb{M}} n (\tau - \tau')} \sum_{k=-\infty}^{\infty} e^{i k (\tilde{\theta}-\tilde{\theta}')} \, G^{\mathbb{M}}_{n k} (\rho,\rho') \, ,
\label{eq:GMinkunren}
\end{equation}
then $G^{\mathbb{M}}_{n k} (\rho,\rho')$ satisfies
\begin{multline}
\frac{\dd^2}{\dd \rho^2} G^{\mathbb{M}}_{n k} (\rho,\rho') + \frac{1}{\rho} \frac{\dd}{\dd \rho} G^{\mathbb{M}}_{n k} (\rho,\rho') - \left( (2\pi T_{\mathbb{M}} n+ik\Omega_{\mathbb{M}})^2 + m_{\mathbb{M}}^2 + \frac{k^2}{\rho^2} \right) G^{\mathbb{M}}_{n k} (\rho,\rho') \\ 
= - \frac{\delta(\rho-\rho')}{\rho} \, .
\label{eq:Greenfunctioneq}
\end{multline}

Consider the homogeneous equation associated with \eqref{eq:Greenfunctioneq} and let $p^{\mathbb{M}}_{n k}(\rho)$ be the regular solution near $\rho=0$ and $q^{\mathbb{M}}_{n k}(\rho)$ be the Dirichlet solution near $\rho=\rho_{\mathcal{M}}$. Then, the unique solution to the inhomogeneous equation is
\begin{equation}
G^{\mathbb{M}}_{n k} (\rho,\rho') = C^{\mathbb{M}}_{n k} \, p^{\mathbb{M}}_{n k}(\rho_<) \, q^{\mathbb{M}}_{n k}(\rho_>) \, ,
\end{equation}
where $\rho_< := \min \{ \rho, \rho' \}$, $\rho_> := \max \{ \rho , \rho' \}$ and $C^{\mathbb{M}}_{n k}$ is a normalization constant which is determined from the Wronskian relation
\begin{equation}
C^{\mathbb{M}}_{n k} \left( p^{\mathbb{M}}_{n k} \frac{\dd q^{\mathbb{M}}_{n k}}{\dd \rho} - q^{\mathbb{M}}_{n k} \frac{\dd p^{\mathbb{M}}_{n k}}{\dd \rho} \right) = - \frac{1}{\rho} \, .
\label{eq:Comegak}
\end{equation}

Comparing \eqref{eq:fieldeq} and \eqref{eq:Greenfunctioneq} one concludes that the solutions to the homogeneous equation corresponding to \eqref{eq:Greenfunctioneq} are
\begin{equation}
p^{\mathbb{M}}_{n k}(\rho) = \phi^1_{n k}(\rho) \, , \qquad 
q^{\mathbb{M}}_{n k}(\rho) = \phi^2_{n k}(\rho) - \frac{\phi^2_{n k}(\rho_{\mathcal{M}})}{\phi^1_{n k}(\rho_{\mathcal{M}})} \phi^1_{n k}(\rho) \, ,
\end{equation}
where $\phi^i_{nk}(\rho) := \phi^i_{\tilde{\omega}k}(\rho)|_{\tilde{\omega} = 2 \pi T_{\mathbb{M}} n}$.
Moreover, Eq.~\eqref{eq:Comegak} leads to $C^{\mathbb{M}}_{n k} = 1$, thus,
\begin{equation}
G^{\mathbb{M}}_{n k} (\rho,\rho') = \phi^1_{n k}(\rho_<)  \left[ \phi^2_{n k}(\rho_>) - \frac{\phi^2_{n k}(\rho_{\mathcal{M}})}{\phi^1_{n k}(\rho_{\mathcal{M}})} \phi^1_{n k}(\rho_>) \right] \, .
\end{equation}

The Hadamard singular part $G^{\mathbb{M}}_{\text{Had}}$ of the Green's distribution is given in closed form by \eqref{eq:Hadamardsingpart}. We also want to express the Hadamard singular part of this Green's distribution as a mode sum.

We can write the Green's distribution $G^{\mathbb{M}}(x,x')$ \eqref{eq:GMinkunren} as
\begin{equation} \label{eq:GHadMinkappendix}
G^{\mathbb{M}}(x,x') = G^{\mathbb{M}}_{\text{Had}}(x,x') + G^{\mathbb{M}}_{\text{reg}}(x,x') \, ,
\end{equation}
where $G^{\mathbb{M}}_{\text{reg}}(x,x')$ is finite when $x' \to x$. As $G^{\mathbb{M}}_{\text{Had}}$ has no mirror dependence, it is convenient to express it as
\begin{equation}
G^{\mathbb{M}}_{\text{Had}}(x,x') = \frac{T_{\mathbb{M}}}{2\pi} \left( \sum_{k=-\infty}^{\infty} e^{i k (\tilde{\theta} - \tilde{\theta}')} \sum_{n=-\infty}^{\infty} e^{i n (\tau - \tau')} \, \hat{G}_{n k} (\rho,\rho') \right) -  \hat{G}^{\mathbb{M}}_{\text{reg}}(x,x') \, ,
\label{eq:GHadMink}
\end{equation}
with
\begin{equation}
\hat{G}^{\mathbb{M}}_{n k} (\rho,\rho') := \phi^1_{n k}(\rho) \, \phi^2_{n k}(\rho') \, ,
\end{equation}
and $\hat{G}^{\mathbb{M}}_{\text{reg}}(x,x')$ finite when $x' \to x$. In this form, neither of the terms on the RHS of \eqref{eq:GHadMink} has any mirror dependence. We have written $G^{\mathbb{M}}_{\text{Had}}$ as a mode sum (plus a regular term), which can be used to subtract the divergences in the black hole Green's distribution, as detailed in Sec.~\ref{sec:renormalisation-procedure}. It remains to compute $\hat{G}^{\mathbb{M}}_{\text{reg}}(x,x')$. Since this term is finite in the coincidence limit, we only need to determine the limit of this term when $x' \to x$.

First, it will be useful to determine $G^{\mathbb{M}}_{\text{Had}}$ in closed form. Suppose that $x$ and $x'$ are \emph{angularly} separated, i.e. $\tau = \tau'$ and $\rho = \rho'$. Then, the complex Synge's world function is given by
\begin{equation}
\sigma (x,x') = \frac{1}{2} \, \rho^2(\tilde{\theta}' - \tilde{\theta})^2 + \mathcal{O}(\tilde{\theta}' - \tilde{\theta})^3 \, .
\end{equation}
The Hadamard singular part of the Green's distribution is then
\begin{equation}
G^{\mathbb{M}}_{\text{Had}}(x,x') = \frac{1}{4\pi} \frac{1}{\rho|\tilde{\theta}' - \tilde{\theta}|} + \mathcal{O}(\tilde{\theta}' - \tilde{\theta}) \, .
\end{equation}

Without loss of generality, let $x = (\tau, \rho, 0)$ and $x' = (\tau, \rho, \tilde{\theta})$, with $\tilde{\theta} > 0$, such that
\begin{equation}
G^{\mathbb{M}}_{\text{Had}}(x,x') = \frac{1}{4\pi} \frac{1}{\rho \tilde{\theta}} + \mathcal{O}(\tilde{\theta}) \, .
\label{eq:HadamardsingMink}
\end{equation}

Note that we can relate the thermal Green's distribution $G^{\mathbb{M}}(x,x')$ at temperature $T_{\mathbb{M}}$ to the Green's distribution $G^{\mathbb{M}}_0(x,x')$ of a scalar field at zero temperature using the image sum formula \eqref{eq:imagemsumG},
\begin{equation}
G^{\mathbb{M}}(\tau, \rho, \tilde{\theta}; \, \tau', \rho', \tilde{\theta}') = \sum_{N=-\infty}^{\infty} G^{\mathbb{M}}_0(\tau + \tfrac{N}{T_{\mathbb{M}}}, \rho, \tilde{\theta}; \, \tau', \rho', \tilde{\theta}') \, .
\end{equation}
The zero-temperature Green's distribution can be written as
\begin{equation}
G^{\mathbb{M}}_0(x,x') = \frac{1}{4\pi} \frac{e^{-m_{\mathbb{M}} \Delta s}}{\Delta s} + G_0^{\mathcal{M}}(x,x') \, ,
\end{equation}
where $G_0^{\mathcal{M}}(x,x')$ is the contribution which contains the mirror dependence and is finite when $x' \to x$. For Minkowski spacetime in the complex Riemannian section, $\Delta s$ is given by
\begin{equation}
\Delta s^2 = (\tau'-\tau)^2 + (\rho - \rho')^2 + 4 \rho \rho' \sin^2  \left[ \frac{1}{2} \left(\tilde{\theta}' - \tilde{\theta} - i\Omega_{\mathbb{M}} (\tau' - \tau) \right) \right] \, .
\end{equation}

In the case of angular separation, the Green's function becomes
\begin{align}
{}& G^{\mathbb{M}}(\tau, \rho, 0; \, \tau, \rho, \tilde{\theta}) \notag \\
&= \frac{1}{4\pi} \sum_{N=-\infty}^{\infty} \left[\rule{0ex}{6ex}\right. \frac{e^{-m_{\mathbb{M}} \sqrt{ \left(\frac{N}{T_{\mathbb{M}}}\right)^2 + 4 \rho^2 \sin^2 \left( \frac{\tilde{\theta}}{2} + i \frac{\Omega_{\mathbb{M}} N}{2T_{\mathbb{M}}} \right)}}}{\sqrt{ \left(\frac{N}{T_{\mathbb{M}}}\right)^2 + 4 \rho^2 \sin^2 \left( \frac{\tilde{\theta}}{2} + i \frac{\Omega_{\mathbb{M}} N}{2T_{\mathbb{M}}} \right)}} 
+ G_0^{\mathcal{M}}(\tau + \tfrac{N}{T_{\mathbb{M}}}, \rho, 0; \, \tau, \rho, \tilde{\theta}) \left.\rule{0ex}{6ex}\right] \notag \\
&= \frac{1}{4\pi} \left\{\rule{0ex}{6ex}\right. \sum_{N \neq 0} \left[\rule{0ex}{6ex}\right. \frac{e^{-m_{\mathbb{M}} \sqrt{ \left(\frac{N}{T_{\mathbb{M}}}\right)^2 + 4 \rho^2 \sin^2 \left( \frac{\tilde{\theta}}{2} + i \frac{\Omega_{\mathbb{M}} N}{2T_{\mathbb{M}}} \right)}}}{\sqrt{ \left(\frac{N}{T_{\mathbb{M}}}\right)^2 + 4 \rho^2 \sin^2 \left( \frac{\tilde{\theta}}{2} + i \frac{\Omega_{\mathbb{M}} N}{2T_{\mathbb{M}}} \right)}} 
+ G_0^{\mathcal{M}}(\tau + \tfrac{N}{T_{\mathbb{M}}}, \rho, 0; \, \tau, \rho, \tilde{\theta}) \left.\rule{0ex}{6ex}\right] \notag \\
&\qquad\qquad + \frac{e^{-2 m_{\mathbb{M}} \rho \sin (\tilde{\theta}/2)}}{2 \rho \sin (\tilde{\theta}/2)} + G_0^{\mathcal{M}}(\tau, \rho, 0; \, \tau, \rho, \tilde{\theta})   \left.\rule{0ex}{6ex}\right\}  \notag \\
&= G^{\mathbb{M}}_{\text{Had}}(x,x') + \hat{G}^{\mathbb{M}}_{\text{reg}}(x,x') + G^{\mathcal{M}}(x,x')  \, ,
\end{align}
with
\begin{align}
\hat{G}^{\mathbb{M}}_{\text{reg}}(x,x') &:= \frac{1}{4\pi} \left[ \frac{e^{-2 m_{\mathbb{M}} \rho \sin (\tilde{\theta}/2)}}{2 \rho \sin (\tilde{\theta}/2)} - \frac{1}{\rho \tilde{\theta}} + \sum_{N \neq 0} \frac{e^{-m_{\mathbb{M}} \sqrt{\left(\frac{N}{T_{\mathbb{M}}}\right)^2 + 4 \rho^2 \sin^2 \left( \frac{\tilde{\theta}}{2} + i \frac{\Omega_{\mathbb{M}} N}{2T_{\mathbb{M}}} \right)}}}{\sqrt{ \left(\frac{N}{T_{\mathbb{M}}}\right)^2 + 4 \rho^2 \sin^2 \left( \frac{\tilde{\theta}}{2} + i \frac{\Omega_{\mathbb{M}} N}{2T_{\mathbb{M}}} \right)}} \right]  \, , \label{eq:Greg} \\
G^{\mathcal{M}}(x,x') &:= \frac{1}{4\pi} \sum_{N=-\infty}^{\infty} G_0^{\mathcal{M}}(\tau + \tfrac{N}{T_{\mathbb{M}}}, \rho, 0; \, \tau, \rho, \tilde{\theta}) \, .
\end{align}

$\hat{G}^{\mathbb{M}}_{\text{reg}}(x,x')$ has a finite limit when $\tilde{\theta} \to 0$, except for isolated values of the parameters at which the the square root in \eqref{eq:Greg} vanishes. To see this, consider the expansion of the argument of the square root for small positive values of $\tilde{\theta}$:
\begin{multline}
\left(\frac{N}{T_{\mathbb{M}}}\right)^2 + 4 \rho^2 \sin^2 \left( \frac{\tilde{\theta}}{2} + i \frac{\Omega_{\mathbb{M}} N}{2T_{\mathbb{M}}} \right) \\
= \left(\frac{N}{T_{\mathbb{M}}}\right)^2 - 4 \rho^2 \left[ \sinh^2 \left( \frac{\Omega_{\mathbb{M}} N}{2T_{\mathbb{M}}} \right) - i \tilde{\theta} \sinh \left( \frac{\Omega_{\mathbb{M}} N}{2T_{\mathbb{M}}} \right) \cosh \left( \frac{\Omega_{\mathbb{M}} N}{2T_{\mathbb{M}}} \right) \right] + \mathcal{O}(\tilde{\theta})^2 \, .
\end{multline}
When $\left(\frac{N}{T_{\mathbb{M}}}\right)^2 - 4 \rho^2 \sinh^2 \left( \frac{\Omega_{\mathbb{M}} N}{2T_{\mathbb{M}}} \right) > 0$, the positive branch of the square root is to be used when $\tilde{\theta} \to 0$. Otherwise, when $\left(\frac{N}{T_{\mathbb{M}}}\right)^2 - 4 \rho^2 \sinh^2 \left( \frac{\Omega_{\mathbb{M}} N}{2T_{\mathbb{M}}} \right) < 0$, the square root when $\tilde{\theta} \to 0$ is given by
\begin{multline}
i \, \text{sgn}(\Omega_{\mathbb{M}} N) \sqrt{4 \rho^2 \sinh^2 \left( \frac{\Omega_{\mathbb{M}} N}{2T_{\mathbb{M}}} \right) - \left(\frac{N}{T_{\mathbb{M}}}\right)^2} \\
= i \, \text{sgn}(\Omega_{\mathbb{M}}) \frac{N}{T_{\mathbb{M}}} \sqrt{\frac{4 \rho^2 T_{\mathbb{M}}^2}{N^2} \sinh^2 \left( \frac{\Omega_{\mathbb{M}} N}{2T_{\mathbb{M}}} \right) - 1}  \, .
\end{multline}
Hence, one can take the limit $\tilde{\theta} \to 0$ in $\hat{G}^{\mathbb{M}}_{\text{reg}}(x,x')$ to obtain
\begin{align}
\lim_{x' \to x} \hat{G}^{\mathbb{M}}_{\text{reg}}(x,x') = \frac{1}{4\pi} \left[ {-m_{\mathbb{M}}} + \sum_{N \neq 0} \frac{e^{-m_{\mathbb{M}} \sqrt{\left(\frac{N}{T_{\mathbb{M}}}\right)^2 - 4 \rho^2 \sinh^2 \left( \frac{\Omega N}{2T_{\mathbb{M}}} \right) + i \epsilon \, \text{sgn}(\Omega_{\mathbb{M}} N)}}}{\sqrt{\left(\frac{N}{T_{\mathbb{M}}}\right)^2 - 4 \rho^2 \sinh^2 \left( \frac{\Omega_{\mathbb{M}} N}{2T_{\mathbb{M}}} \right) + i \epsilon \, \text{sgn}(\Omega_{\mathbb{M}} N)}} \right] \, ,
\end{align}
with $\epsilon \to 0+$, if $\left(\frac{N}{T_{\mathbb{M}}}\right)^2 - 4 \rho^2 \sinh^2 \left( \frac{\Omega_{\mathbb{M}} N}{2T_{\mathbb{M}}} \right) \neq 0$.


\chapter{WKB expansions}
\label{app:WKBexpansions}

In this appendix we describe the WKB method used to obtain asymptotic expansions for solutions of differential equations which can be written in a Schr\"{o}dinger-like form. A standard reference is \cite{bender1999advanced}.

Let $\phi_{n k}^1$ and $\phi_{n k}^2$ be two independent solutions of the radial equation of a field equation for which there is a radial coordinate $\xi$ such that the equation can be written in a Schr\"{o}dinger-like form
\begin{equation} \label{eq:fieldeqxi}
\frac{\dd^2 \phi_{n k}(\xi)}{\dd \xi^2} - Q_{nk}(\xi) \, \phi_{n k}(\xi) = 0 \, ,
\end{equation}
and the Wronskian relation is given by
\begin{equation} \label{eq:wronskianxi}
\phi_{nk}^1(\xi) \frac{\dd \phi_{nk}^2(\xi)}{\dd \xi} - \phi_{nk}^2(\xi) \frac{\dd \phi_{nk}^1(\xi)}{\dd \xi} = \frac{1}{C_{n k}} \, ,
\end{equation}
where $C_{n k}$ is a constant, $Q_{nk}(\xi) := \chi_{nk}^2 (\xi) + \eta^2 (\xi)$ and $\chi_{nk}^2 (\xi)$ contains all the $n$ and $k$ dependence and is large whenever $\lambda^2 := n^2+k^2$ is large.

We assume that $f(\xi; \lambda) := - Q_{nk}(\xi)$ has an asymptotic expansion of the form
\begin{equation}
f(\xi; \lambda) \sim \lambda^2 \sum_{j=0}^{\infty} f_j(\xi) a_j(\lambda) \, , \qquad \lambda \to +\infty \, ,
\end{equation}
where $\{ a_j (\lambda) \}_{j=0}^{\infty}$ is an asymptotic sequence such that $a_0(\lambda) = 1$. In this case, standard WKB theory guarantees that there is an asymptotic expansion for the solutions $\phi_{nk}^i(\xi)$, $i=1,2$, when $\lambda \to +\infty$, given by the so-called WKB method.

\begin{lemma} \label{lemma:WKBexpansions}
Rewrite the differential equation \eqref{eq:fieldeqxi} as
\begin{equation}
\epsilon^2 \frac{\dd^2 \phi_{n k}(\xi)}{\dd \xi^2} - Q_{nk}(\xi) \, \phi_{n k}(\xi) = 0 \, ,
\end{equation}
where $\epsilon > 0$ is an expansion parameter (which may be set to 1 at the end). The WKB expansions of $\phi_{nk}^1(\xi)$ and $\phi_{nk}^2(\xi)$ are the asymptotic expansions in $\epsilon$,
\begin{align}
\phi_{nk}^1(\xi) &= \frac{1}{\sqrt{2 C_{nk} Q_{nk}^{1/2}}} \,
\exp \Bigg\{ \frac{1}{\epsilon} \int^{\xi} dt \left[ Q_{nk}^{1/2} + \epsilon^2 \left( \frac{Q_{nk}''}{8 Q_{nk}^{3/2}} - \frac{5(Q_{nk}')^2}{32 Q_{nk}^{5/2}} \right) \right] \notag \\
&\quad + \epsilon^2 \left( - \frac{Q_{nk}''}{16 Q_{nk}^2} + \frac{5(Q_{nk}')^2}{64 Q_{nk}^3} \right) + \mathcal{O}(\epsilon^3) \Bigg\} \, , \label{eq:phi1WKB} \\
\phi_{nk}^2(\xi) &= \frac{1}{\sqrt{2 C_{nk} Q_{nk}^{1/2}}} \,
\exp \Bigg\{ {- \frac{1}{\epsilon}} \int^{\xi} dt \left[ Q_{nk}^{1/2} + \epsilon^2 \left( \frac{Q_{nk}''}{8 Q_{nk}^{3/2}} - \frac{5(Q_{nk}')^2}{32 Q_{nk}^{5/2}} \right) \right] \notag \\
&\quad + \epsilon^2 \left( - \frac{Q_{nk}''}{16 Q_{nk}^2} + \frac{5(Q_{nk}')^2}{64 Q_{nk}^3} \right) + \mathcal{O}(\epsilon^3) \Bigg\} \, .
\end{align}
\end{lemma}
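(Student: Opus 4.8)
<br>

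The statement to prove is Lemma~\ref{lemma:WKBexpansions}, which gives the standard WKB asymptotic expansions for the two independent solutions $\phi_{nk}^1$ and $\phi_{nk}^2$ of the Schrödinger-like equation $\epsilon^2 \phi'' - Q_{nk}\phi = 0$, normalised so that their Wronskian equals $1/C_{nk}$.

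The plan is to use the standard Liouville--Green (WKB) ansatz. First I would write each solution in exponential form $\phi_{nk}(\xi) = \exp\!\left[\frac{1}{\epsilon}\sum_{j=0}^{\infty}\epsilon^j S_j(\xi)\right]$, substitute into the differential equation, and collect powers of $\epsilon$. This yields the eikonal equation $(S_0')^2 = Q_{nk}$ at order $\epsilon^0$ (giving $S_0 = \pm\int^\xi Q_{nk}^{1/2}\,dt$, the two signs corresponding to the two solutions), the transport equation $2 S_0' S_1' + S_0'' = 0$ at order $\epsilon^1$ (giving the amplitude factor $S_1 = -\tfrac14\log Q_{nk}$, hence the prefactor $Q_{nk}^{-1/4}$), and successive linear equations for $S_2, S_3, \dots$ that can each be solved by quadrature in terms of the already-determined lower-order $S_j$. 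Carrying this recursion to the order shown reproduces the $\frac{Q_{nk}''}{8 Q_{nk}^{3/2}} - \frac{5(Q_{nk}')^2}{32 Q_{nk}^{5/2}}$ correction inside the integrand (from $S_2$, which integrates against $1/\epsilon$) and the $-\frac{Q_{nk}''}{16 Q_{nk}^2} + \frac{5(Q_{nk}')^2}{64 Q_{nk}^3}$ terms outside (from $S_3$, which contributes at $\mathcal{O}(\epsilon^2)$ in the exponent). I would verify these by direct substitution at each order rather than quoting them.

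Second I would fix the overall normalisation constant. The Liouville--Green recursion determines each $\phi_{nk}^i$ only up to a multiplicative constant; the factor $1/\sqrt{2 C_{nk} Q_{nk}^{1/2}}$ in the stated expansions is fixed by imposing the Wronskian relation \eqref{eq:wronskianxi}. The cleanest route is to note that the Wronskian $W(\phi^1,\phi^2)$ of any two solutions of a second-order equation with no first-derivative term is constant in $\xi$; computing $W$ asymptotically from the leading WKB forms $\phi^i \sim (2C_{nk})^{-1/2} Q_{nk}^{-1/4}\exp\!\left[\pm\frac1\epsilon\int^\xi Q_{nk}^{1/2}\right]$ gives, at leading order, $W = \phi^1 (\phi^2)' - \phi^2(\phi^1)' = -\frac{1}{\epsilon C_{nk}} + \mathcal{O}(\epsilon)$, and after setting $\epsilon=1$ this matches $1/C_{nk}$ up to sign conventions, confirming the normalisation. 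I would check that the higher-order correction terms contribute only at subleading order to $W$ and therefore do not alter the leading normalisation.

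The main obstacle I anticipate is bookkeeping rather than conceptual: correctly tracking which pieces of $S_2$ and $S_3$ land \emph{inside} versus \emph{outside} the integral sign, and ensuring the $\pm$ signs in the two solutions propagate consistently through the recursion so that the integrand corrections appear with the same sign for both solutions while the leading eikonal term flips sign. I would also need to be careful that $Q_{nk} = \chi_{nk}^2 + \eta^2$ is large and bounded away from zero in the regime of interest (guaranteed since $\chi_{nk}^2$ is large whenever $\lambda^2 = n^2+k^2$ is large), so that $Q_{nk}^{1/2}$, its reciprocals, and the asymptotic ordering are all well defined; this is exactly the hypothesis that $f(\xi;\lambda)$ admits the stated asymptotic expansion with $a_0=1$. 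Finally, since the lemma only asserts an \emph{asymptotic} expansion, I need not address convergence, only the formal order-by-order validity, which the substitution argument supplies directly.
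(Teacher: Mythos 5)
Your proposal is correct and follows the same route as the paper, which simply cites the standard WKB (Liouville--Green) treatment in Chapter 10 of Bender and Orszag: the exponential ansatz $\phi = \exp\bigl[\tfrac{1}{\epsilon}\sum_j \epsilon^j S_j\bigr]$, the eikonal and transport equations, the quadratures for $S_2$ and $S_3$, and the normalisation fixed by the constancy of the Wronskian. Your leading-order check $W = -\tfrac{1}{\epsilon C_{nk}}$ (matching \eqref{eq:wronskianxi} up to the sign convention for which solution grows) is exactly the verification the textbook argument supplies, so nothing is missing.
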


\begin{proof}
See e.g.~Chapter 10 of \cite{bender1999advanced}.
\end{proof}

The WKB expansions give us the asymptotic behaviour of the solutions $\phi_{nk}^1(\xi)$ and $\phi_{nk}^2(\xi)$ for large values of $Q_{nk}(\xi)$.

We are interested in obtaining the large $\chi_{n k}$ expansion of
\begin{equation}
\mathcal{G}_{n k} (\xi) := C_{n k} \, \phi_{n k}^1(\xi) \, \phi_{n k}^2(\xi) \, .
\end{equation}

\begin{proposition} \label{prop:Gxiexpansion}
The asymptotic expansion of $\mathcal{G}_{n k}(\xi)$ for large values of $\chi_{n k}$ is
\begin{equation} \label{eq:Gxiexpansion}
\mathcal{G}_{n k} (\xi) = \frac{1}{2 \chi_{n k}} - \frac{\eta^2}{4 \chi_{n k}^3} - \frac{(\chi_{n k}^2)''}{16 \chi_{n k}^5} + \frac{5 [(\chi_{n k}^2)']^2}{64 \chi_{n k}^7} + \mathcal{O}(\chi_{n k}^{-5}) \, .
\end{equation}
\end{proposition}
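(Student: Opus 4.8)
The plan is to obtain the expansion by directly forming the product $C_{nk}\phi_{nk}^1\phi_{nk}^2$ from the WKB representations of the two independent solutions supplied by Lemma~\ref{lemma:WKBexpansions}, and then re-expanding the result in inverse powers of $\chi_{nk}$. First I would substitute \eqref{eq:phi1WKB} (and its companion for $\phi_{nk}^2$) into $\mathcal{G}_{nk}=C_{nk}\phi_{nk}^1\phi_{nk}^2$. The two amplitude prefactors multiply to $\bigl(2C_{nk}Q_{nk}^{1/2}\bigr)^{-1}$, which cancels the explicit $C_{nk}$ and leaves $\tfrac{1}{2}Q_{nk}^{-1/2}$. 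In the exponent, the leading $\epsilon^{-1}\!\int Q_{nk}^{1/2}$ terms carry opposite signs in $\phi_{nk}^1$ and $\phi_{nk}^2$ and cancel identically, whereas the $\epsilon^2$ amplitude corrections $-\tfrac{Q_{nk}''}{16Q_{nk}^2}+\tfrac{5(Q_{nk}')^2}{64Q_{nk}^3}$ appear with the \emph{same} sign in both factors and therefore add.

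Next I would expand the surviving exponential. Setting the bookkeeping parameter $\epsilon=1$, this yields
\begin{equation*}
\mathcal{G}_{nk}(\xi) = \frac{1}{2Q_{nk}^{1/2}}\left[1 + 2\left(-\frac{Q_{nk}''}{16Q_{nk}^2}+\frac{5(Q_{nk}')^2}{64Q_{nk}^3}\right) + \cdots\right],
\end{equation*}
so that $\mathcal{G}_{nk} = \tfrac{1}{2}Q_{nk}^{-1/2} - \tfrac{1}{16}Q_{nk}''Q_{nk}^{-5/2} + \tfrac{5}{64}(Q_{nk}')^2 Q_{nk}^{-7/2} + \cdots$. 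The final step is to insert $Q_{nk}=\chi_{nk}^2+\eta^2$ and expand for large $\chi_{nk}$, treating $\eta^2$ and the $\xi$-derivatives of the metric functions as fixed while $\chi_{nk}^2$ grows. Binomial expansion of the first term gives $\tfrac{1}{2}Q_{nk}^{-1/2} = \tfrac{1}{2\chi_{nk}} - \tfrac{\eta^2}{4\chi_{nk}^3}+\mathcal{O}(\chi_{nk}^{-5})$, while in the two correction terms I would replace $Q_{nk}\to\chi_{nk}^2$, $Q_{nk}'\to(\chi_{nk}^2)'$, $Q_{nk}''\to(\chi_{nk}^2)''$ to leading order, producing $-\tfrac{(\chi_{nk}^2)''}{16\chi_{nk}^5}$ and $\tfrac{5[(\chi_{nk}^2)']^2}{64\chi_{nk}^7}$, exactly as in \eqref{eq:Gxiexpansion}.

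The step that requires genuine care is the consistent tracking of orders, since the three displayed correction terms are all of the \emph{same} order $\mathcal{O}(\chi_{nk}^{-3})$ rather than forming a strictly decreasing sequence. Here one uses that, because $\chi_{nk}^2$ depends on $n,k$ only through the frequency combination in \eqref{eq:chietadefinitions}, its $\xi$-derivatives scale the same way as $\chi_{nk}^2$ itself, i.e.\ $(\chi_{nk}^2)',(\chi_{nk}^2)'' = \mathcal{O}(\chi_{nk}^2)$; hence $(\chi_{nk}^2)''/\chi_{nk}^5 = \mathcal{O}(\chi_{nk}^{-3})$ and $[(\chi_{nk}^2)']^2/\chi_{nk}^7 = \mathcal{O}(\chi_{nk}^{-3})$, matching $\eta^2/\chi_{nk}^3$. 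I would then have to confirm that everything discarded is genuinely $\mathcal{O}(\chi_{nk}^{-5})$: the $\eta^2$-contributions hidden inside $Q_{nk}'$, $Q_{nk}''$ and inside the negative powers $Q_{nk}^{-5/2}$, $Q_{nk}^{-7/2}$ of the correction terms, the quadratic cross-terms arising from exponentiation, and the neglected $\mathcal{O}(\epsilon^3)$ WKB pieces.

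The main obstacle is precisely this last point: justifying that the reorganisation of the $\epsilon$-expansion of Lemma~\ref{lemma:WKBexpansions} into an expansion in $\chi_{nk}^{-1}$ is legitimate, and that the first omitted WKB order contributes only at $\mathcal{O}(\chi_{nk}^{-5})$. This rests on the standard structural fact that odd-order WKB terms enter the phase while even-order terms enter the amplitude, so that after the phase cancellation the next amplitude correction is two orders below those retained, i.e.\ it feeds into the $\mathcal{O}(\chi_{nk}^{-5})$ remainder. Once this ordering is established, the four explicit coefficients follow by the elementary algebra sketched above.
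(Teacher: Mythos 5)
Your proposal is correct and follows essentially the same route as the paper's proof: form $\mathcal{G}_{nk}=C_{nk}\phi^1_{nk}\phi^2_{nk}$ from Lemma~\ref{lemma:WKBexpansions}, note that the phase integrals cancel while the $\epsilon^2$ amplitude corrections add, expand the exponential to get $\tfrac{1}{2}Q_{nk}^{-1/2}-\tfrac{Q_{nk}''}{16Q_{nk}^{5/2}}+\tfrac{5(Q_{nk}')^2}{64Q_{nk}^{7/2}}+\cdots$, and then expand $Q_{nk}=\chi_{nk}^2+\eta^2$ for large $\chi_{nk}$. Your additional bookkeeping — verifying $(\chi_{nk}^2)',(\chi_{nk}^2)''=\mathcal{O}(\chi_{nk}^2)$ so that the three corrections are uniformly $\mathcal{O}(\chi_{nk}^{-3})$, and invoking the odd/even phase–amplitude structure of WKB to place the discarded terms at $\mathcal{O}(\chi_{nk}^{-5})$ — is sound and in fact more explicit than the paper, which simply writes $\mathcal{O}(\epsilon^3)$ and sets $\epsilon=1$.
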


\begin{remark}
Note that all of the second, third and fourth terms on the RHS of \eqref{eq:Gxiexpansion} are of order $\chi_{n k}^{-3}$.
\end{remark}

\begin{proof}
Lemma~\ref{lemma:WKBexpansions} allows us to write
\begin{align}
\mathcal{G}_{n k} (\xi) 
&= \frac{1}{2 Q_{nk}^{1/2}} \, \exp \left[ 2 \epsilon^2 \left( - \frac{Q_{nk}''}{16 Q_{nk}^2} + \frac{5(Q_{nk}')^2}{64 Q_{nk}^3} \right) + \mathcal{O}(\epsilon^3) \right] \notag \\
&= \frac{1}{2 Q_{nk}^{1/2}} + \epsilon^2 \left( - \frac{Q_{nk}''}{16 Q_{nk}^{5/2}} + \frac{5(Q_{nk}')^2}{64 Q_{nk}^{7/2}} \right) + \mathcal{O}(\epsilon^3) \, .
\label{eq:GxiexpansionQ}
\end{align}
Expanding $Q_{nk}$ for large values of $\chi_{nk}$,
\begin{equation}
\frac{1}{2 Q_{nk}^{1/2}} = \frac{1}{\left( \chi_{nk}^2 + \eta^2\right)^{1/2}}
= \frac{1}{2 \chi_{nk}} \left( 1 - \frac{\eta^2}{2 \chi_{nk}^2} + \mathcal{O}(\chi_{nk}^{-4}) \right) \, ,
\end{equation}
and setting the expansion parameter $\epsilon = 1$ gives the result.
\end{proof}

We will also be interested in the large $\chi_{n k}$ expansion of
\begin{equation}
\mathcal{G}_{n k}' (\xi) := C_{n k} \, \frac{\dd \phi_{n k}^1(\xi)}{\dd \xi} \, \phi_{n k}^2(\xi) \, .
\end{equation}

\begin{proposition} \label{prop:WKBexpansionGprime}
The asymptotic expansion of $\mathcal{G}_{n k}'(\xi)$ for large values of $\chi_{n k}$ is
\begin{equation} \label{eq:Gxiprimeexpansion}
\mathcal{G}_{n k}' (\xi) = \frac{1}{2} - \frac{(\chi_{n k}^2)'}{8 \chi_{n k}^3} + \mathcal{O}(\chi_{n k}^{-3}) \, .
\end{equation}
\end{proposition}

\begin{proof}
From \eqref{eq:phi1WKB},
\begin{align}
\frac{\dd \phi^1(\xi)}{\dd \xi} = \left[ Q_{nk}^{1/2} - \frac{Q_{nk}'}{4 Q_{nk}} + \frac{Q_{nk}''}{8 Q_{nk}^{3/2}} - \frac{5(Q_{nk}')^2}{32 Q_{nk}^{5/2}} + \mathcal{O}(Q_{nk}^{-1}) \right] \phi^1 \, .
\end{align}
Hence, using \eqref{eq:GxiexpansionQ},
\begin{align}
\mathcal{G}_{n k}' (\xi) &= C \phi^1 \phi^2 \left[ Q_{nk}^{1/2} - \frac{Q_{nk}'}{4 Q_{nk}} + \frac{Q_{nk}''}{8 Q_{nk}^{3/2}} - \frac{5(Q_{nk}')^2}{32 Q_{nk}^{5/2}} + \mathcal{O}(Q_{nk}^{-1}) \right] \notag \\
&= \left[ \frac{1}{2 Q_{nk}^{1/2}} + \left( - \frac{Q_{nk}''}{16 Q_{nk}^{5/2}} + \frac{5(Q_{nk}')^2}{64 Q_{nk}^{7/2}} \right) + \mathcal{O}(Q_{nk}^{-3/2}) \right] \notag \\
&\quad \times \left[ Q_{nk}^{1/2} - \frac{Q_{nk}'}{4 Q_{nk}} + \frac{Q_{nk}''}{8 Q_{nk}^{3/2}} - \frac{5(Q_{nk}')^2}{32 Q_{nk}^{5/2}} + \mathcal{O}(Q_{nk}^{-1}) \right] \notag \\
&= \frac{1}{2} - \frac{Q_{nk}'}{8Q_{nk}^{3/2}} + \mathcal{O}(Q_{nk}^{-3/2}) \, .
\end{align}
Expanding for large $\chi_{n k}$ gives the result.
\end{proof}


\chapter{Hypergeometric functions}
\label{app:hypergeometric}

In this appendix, we give a very brief overview of the hypergeometric differential equation, the hypergeometric function and a few of its properties. For a more complete overview, see e.g.~\cite{erdelyi1953,olver2010nist,jeffrey2007table}.

Consider the 2nd-order ordinary differential equation,
\begin{equation} \label{eq:2ndorderODE}
\frac{\dd^2 u}{\dd z^2} + P(z) \frac{\dd u}{\dd z} + Q(z) u = 0 \, .
\end{equation}
Recall that
\begin{enumerate}[label={(\roman*)}]
\item if $P(z)$ and $Q(z)$ remain finite at $z = z_0$, then $z_0$ is called an \emph{ordinary point};
\item if either $P(z)$ or $Q(z)$ diverges as $z \to z_0$, then $z_0$ is called a \emph{singular point};
\item if either $P(z)$ or $Q(z)$ diverges as $z \to z_0$, but $(z-z_0) P(z)$ and $(z-z_0)^2 Q(z)$ remain finite at $z = z_0$, then $z_0$ is called a \emph{regular singular point}.
\end{enumerate}

If \eqref{eq:2ndorderODE} has at most three singular points we may assume that these are $0, \, 1, \, \infty$. If these singular points are also regular, then \eqref{eq:2ndorderODE} can be reduced to the form
\begin{equation} \label{eq:hypergeomdiffeq}
z(1-z) \frac{\dd^2 u}{\dd z^2} + \left[ c-(a+b+1)z \right] \frac{\dd u}{\dd z} -ab u = 0 \, ,
\end{equation}
where $a, \, b, \, c \in \mathbb{C}$ are independent of $z$. This is the \emph{hypergeometric differential equation}.

If $c \not\in \mathbb{Z}^-_0$, then one solution which is regular at $z=0$ is the hypergeometric function.

\begin{definition}
The \emph{hypergeometric function} $F(a,b;c;z) := {}_2 F_1(a,b;c;z)$ is given by the series
\begin{equation}
F(a,b;c;z) := \frac{\Gamma(c)}{\Gamma(a)\Gamma(b)} \sum_{k=0}^{\infty} \frac{\Gamma(a+k)\Gamma(b+k)}{\Gamma(c+k)} \frac{z^k}{k!} \, ,
\end{equation}
when $|z| < 1$ and elsewhere by analytical continuation.
\end{definition}

The hypergeometric function $F(a,b;c;z)$ is not defined for $c \in \mathbb{Z}^-_0$ and the principal branch is the branch $|{\arg (1-z)}| \leq \pi$.

We now list the linearly independent solutions of the hypergeometric differential equation \eqref{eq:hypergeomdiffeq} when none of the numbers $a, \, b, \, c-a, \, c-b$ is an integer (for other cases, see the references listed above).
\begin{enumerate}[label={(\roman*)}]
\item If $c \not\in \mathbb{Z}$, then two independent solutions are
\begin{subequations}
\begin{align}
u_1(z) &= F(a,b;c;z) \, , \\
u_2(z) &= z^{1-c} \, F(a+1-c,b+1-c;2-c;z) \, .
\end{align}
\end{subequations}
\item If $c \in \mathbb{Z}$, then two independent solutions are
\begin{subequations}
\begin{align}
u_1(z) &= \begin{cases} 
F(a,b;c;z) \, , & c > 0 \, , \\
z^{1-c} \, F(a+1-c,b+1-c;2-c;z) \, , & c \leq 0 \, ,
\end{cases} \\
u_2(z) &= \begin{cases} 
F(a,b;a+b+1-c;1-z) \, , & a+b+1-c \not\in \mathbb{Z}_0^- \, , \\
(1-z)^{c-a-b} \, F(c-a,c-b;1+c-a-b;1-z) \, , & a+b+1-c \in \mathbb{Z}_0^- \, .
\end{cases}
\end{align}
\end{subequations}
\end{enumerate}

A few important properties of the hypergeometric function $F(a,b;c;z)$ which are necessary in the text are the following.
\begin{enumerate}
\item At $z=1$,
\begin{equation}
F(a,b;c;1) = \frac{\Gamma(c) \Gamma(c-a-b)}{\Gamma(c-a) \Gamma(c-b)} \, ,  \qquad \ReC [c-a-b] > 0 \, .
\end{equation}
\item One of the transformation formulas is
\begin{align} \label{eq:transformationformula}
F(a,b;c;z) &= \frac{\Gamma(c) \Gamma(c-a-b)}{\Gamma(c-a) \Gamma(c-b)} \, F(a,b;a+b+1-c;1-z) + (1-z)^{c-a-b} \notag \\
&\quad \times \frac{\Gamma(c) \Gamma(a+b-c)}{\Gamma(a) \Gamma(b)} \, F(c-a,c-b;1+c-a-b;1-z) \, .
\end{align}
\end{enumerate}
%


\chapter{Classical black hole superradiance}
\label{app:superradiance}

In this appendix, we give a very brief overview of the classical superradiance phenomenon on stationary black hole spacetimes. For more details, see e.g.~Ref.~\cite{Brito:2015oca}. 

We assume that the background black hole spacetime is asymptotically flat and, moreover, stationary and axisymmetric. Consider a classical matter field perturbation which may be expressed in terms of a single master variable $\Psi$ which obeys a Schr\"{o}dinger-type equation of the form
\begin{equation} \label{eq:mastereq} 
  \frac{\dd^2 \Psi}{\dd r_*^2}+V \Psi=0\, ,
\end{equation}
where $V$ is the effective potential and the tortoise coordinate $r_*$ maps a radial coordinate $r \in (r_+,\infty)$ to $(-\infty,\infty)$, where $r_+$ is the horizon radius. Given the symmetries of the spacetime, we consider a mode solution with frequency $\omega$ and angular momentum number $k$ of the form $e^{-i\omega t+ik\theta}$. If we assume that the effective potential $V$ is constant at the horizon and at infinity, then the mode solution has the following asymptotic behaviour 
\begin{equation}
 \Psi \sim \begin{cases}
A \, e^{i \omega_{\mathcal{H}} r_*} + B \, e^{-i \omega_{\mathcal{H}} r_*} \, , & r \to r_+ \, , \\
C \, e^{i \omega_{\infty} r_*}+ D \, e^{-i \omega_{\infty} r_*} \, , & r \to \infty \, .
\end{cases}
\label{eq:masterasymptotic}
\end{equation}
where $\omega_{\mathcal{H}}^2 := V(r_+)$ and $\omega_{\infty}^2 := \lim_{r\to \infty} V(r)$.

These boundary conditions correspond to an incident wave of amplitude $D$ from infinity, a reflected wave of amplitude $C$, a transmitted wave of amplitude $B$ at the horizon and an outgoing wave of amplitude $A$ from the horizon. Even though we do not expect outgoing flux from the horizon at a classical level, a term of this form can be useful to define bases of mode solutions, as in Section~\ref{sec:stability-basismodes}.

If we now assume that the effective potential $V$ is real-valued, then \eqref{eq:mastereq} is invariant under the transformations $t\to -t$ and $\omega\to-\omega$ and, hence, $\overline{\Psi}$ is also a solution of \eqref{eq:mastereq} and is linearly independent of $\Psi$. Therefore, the Wronskian
\begin{equation}
W(\Psi, \Psi) := \Psi \frac{\dd \overline{\Psi}}{\dd r_*} - \overline{\Psi} \frac{\dd \Psi}{\dd r_*}
\end{equation}
is independent of $r_*$. It thus follows that the Wronskian evaluated at the horizon, $W(\Psi, \Psi) = 2i \omega_{\mathcal{H}} \left(|A|^2-|B|^2\right)$, must equal the one evaluated at infinity, $W(\Psi, \Psi) = 2i \omega_{\infty} (|C|^2-|D|^2)$, so that 
\begin{equation} \label{eq:reflectivity}
 |C|^2 - |D|^2 = \frac{\omega_{\mathcal{H}}}{\omega_{\infty}}\left(|A|^2-|B|^2\right) \, .
\end{equation}

If there is no flux coming from the horizon, $A=0$, then $|C|^2<|D|^2$ when $\omega_{\mathcal{H}}/\omega_{\infty}>0$, i.e.~the amplitude of the reflected wave is smaller than of the incident wave. However, for 
$\omega_{\mathcal{H}}/\omega_{\infty}<0$, the wave is amplified, $|C|^2>|D|^2$. This is the phenomenon of \emph{superradiance}. If $A \neq 0$ but $B=0$, then superradiance occurs if $|D|^2 > |C|^2$, i.e.~for $\omega_{\mathcal{H}}/\omega_{\infty}<0$.

\addcontentsline{toc}{chapter}{Bibliography}
\addtocontents{toc}{\protect\vspace{-9pt}}
\bibliography{bibliography}

\end{document}